\numberwithin{equation}{section}
\newtheorem{lem}{Lemma}[section]
\newtheorem{thm}{Theorem}[section]
\newtheorem{ass}{Assumption}
\renewcommand{\citep}[1]{\citeauthor{#1}, \citeyear{#1}}
\newcommand{\diag}{\text{diag}}
\newcommand{\convP}{\stackrel{p}{\longrightarrow}}
\newcommand{\convD}{\rightsquigarrow}
\newcommand{\N}{\mathcal{N}}
\newcommand{\eps}{\varepsilon}
\renewcommand{\epsilon}{\varepsilon}
\DeclareMathOperator*{\argmax}{arg\,max}
\DeclareMathOperator*{\argmin}{arg\,min}
\newcommand{\hidefromtoc}{%
  \setcounter{oldtocdepth}{\value{tocdepth}}%
  \addtocontents{toc}{\protect\setcounter{tocdepth}{-10}}%
}
\newcommand{\unhidefromtoc}{%
  \addtocontents{toc}{\protect\setcounter{tocdepth}{\value{oldtocdepth}}}%
}
\newcounter{oldtocdepth}
\theoremstyle{definition}
\newtheorem{rem}{Remark}[section]
\newcommand*{\rom}
[1]{\expandafter\@slowromancap\romannumeral #1@}
\title{A Dimension-Agnostic Bootstrap Anderson-Rubin Test For Instrumental Variable Regressions\thanks{\scriptsize{We are grateful to Marine Carrasco, Xiaohong Chen, Xu Cheng, Firmin Doko Tchatoka, Liyu Dou, Jean-Marie Dufour, Qingliang Fan, Jiti Gao, Kazuhiko Hayakawa, Hiroaki Kaido, Jia Li, Yifan Li, Ruixuan Liu, Whitney Newey, Ryo Okui, Donald Poskitt, Katsumi Shimotsu, Xiaoxia Shi, Kevin Song, Rami Tabri, Jun Yu, Kai Zhang, Takahsi Yamagata, and all the participants at the CUHK Econometrics Workshop, SMU Econometrics Workshop, the 2024 Asian Meeting of the Econometric Society in China, the 33rd Australia New Zealand Econometric Study Group Meeting, the Workshop on Recent Advances in Econometrics, the 18th and 19th International Symposium on Econometric Theory and Applications, the 59th Annual Meetings of the Canadian Economics Association, the 2025 World Congress of the Econometric Society, and 2025 Symposium for High Dimensional Econometrics and Machine Learning for their valuable comments. Zhang acknowledges the financial support from the NSFC under grant No. 72133002. This research is supported by the Ministry of Education, Singapore under its MOE Academic Research Fund Tier 2 (Project ID: MOE-000767-00). Any opinions, findings and conclusions or recommendations expressed in this material are those of the author(s) and do not reflect the views of the Ministry of Education, Singapore. All errors are our own.}}}
\author{Dennis Lim\thanks{Singapore Management University.\ E-mail~address: dennis.lim.2019@phdecons.smu.edu.sg.} \and Wenjie Wang\thanks{Division of Economics, School of Social Sciences, Nanyang Technological University.
		HSS-04-65, 14 Nanyang Drive, Singapore 637332. 
		E-mail address: wang.wj@ntu.edu.sg. The corresponding author.}   \and Yichong Zhang\thanks{
		Singapore Management University.\ E-mail~address: yczhang@smu.edu.sg.} 
}
\begin{document}
	\maketitle
	\begin{abstract}
		% The conventional and jackknife Anderson-Rubin (AR) tests and tests of overidentifying restrictions
		
		\footnotesize{Weak-identification-robust tests for instrumental variable (IV) regressions are typically developed separately depending on whether the number of IVs is treated as fixed or increasing with the sample size, forcing researchers to make a stance on the asymptotic behavior, which is often ambiguous in practice. This paper proposes a bootstrap-based, dimension-agnostic Anderson–Rubin (AR) test that achieves correct asymptotic size regardless of whether the number of IVs is fixed or diverging, and even accommodates cases where the number of IVs exceeds the sample size. By incorporating ridge regularization, our approach reduces the effective rank of the projection matrix and yields regimes where the limiting distribution of the AR statistic can be a weighted chi-squared, a normal, or a mixture of the two. Strong approximation results ensure that the bootstrap procedure remains uniformly valid across all regimes, while also delivering substantial power gains over existing methods by exploiting rank reduction.}
	\end{abstract}
	
	\textbf{Keywords:} Anderson-Rubin Test, Weak Identification, Bootstrap, High Dimension, Quadratic Form, Ridge Regularization
	
	\vspace{2mm}
	\textbf{JEL Classification:} C12, C36, C55
	
	\hidefromtoc
	
	%\newpage
	\section{Introduction}
	
	Weak and numerous instruments remain persistent concerns in instrumental variable (IV) regressions across various fields. Surveys by \cite{Andrews-Stock-Sun(2019)} and \citeauthor{lee2021} (\citeyear{lee2021}) find that a considerable number of IV regressions in the \textit{American Economic Review} report first-stage F-statistics below 10. In addition, empirical studies often involve many instruments, such as the 180 IVs used by \cite{Angrist-Krueger(1991)} to examine the effect of schooling on wages. In ``judge design'' studies, the number of instruments (number of judges) is typically proportional to the sample size (\citep{MS22}).\footnote{E.g., see \cite{kling2006}, \cite{doyle2007}, \cite{dahl2014}, \cite{dobbie2018}, \cite{sampat2019}, \cite{agan2023}, \cite{frandsen2023}, \cite{chyn2024} and the references therein.} Similar patterns of many IVs occur in Fama-MacBeth regressions (\citep{fama1973}; \citep{shanken1992}), shift-share IVs (\citep{Goldsmith(2020)}), wind-direction IVs (\citep{deryugina2019mortality}; \citep{bondy2020crime}), granular IVs (\citep{gabaix2024}),  local average treatment effect estimation (\citep{blandhol2022}; \citep{boot2024}; \citep{sloczynski2024should}), and Mendelian randomization (\citep{davey2003}; \citep{davies2015}).

	However, existing weak-identification-robust inference methods for IV regressions are either based on an asymptotic framework in which the number of instruments $K$ is treated as fixed\footnote{E.g., see \cite{Staiger-Stock(1997)}, \cite{Stock-Wright(2000)}, \citet{Kleibergen(2002), Kleibergen(2005)}, \cite{Moreira(2003)}, \cite{Andrews-Cheng(2012)}, \cite{Andrews-Mikusheva(2016)}, \cite{Andrews(2018)}, \cite{Andrews-Guggenberger(2019)}, \cite{Moreira-Moreira(2019)}, among others.} or on an alternative one that allows $K_n$ to diverge to infinity with the sample size $n$.\footnote{E.g., see \cite{Andrews-Stock(2007)}, \cite{Newey-Windmeijer(2009)}, \cite{anatolyev2011}, \cite{crudu2021}, \cite{MS22}, \cite{matsushita2024}, \cite{LWZ(2023)}, \cite{DKM24}, among others.} These methods compare distinct test statistics with distinct critical values so that procedures formulated under the fixed-$K$ asymptotics generally do not have correct size control under the diverging-$K$ asymptotics and vice versa. An empirical researcher is, therefore, forced to take a stance on the asymptotic regime of the number of instruments to implement them, which can be ambiguous in many empirical applications. For example, when $K$ is moderate compared with $n$ (e.g., $K=10$ and $n=200$), it is unclear which test the researcher should use. Furthermore, as we will see below, a third asymptotic regime may arise with the use of regularization, making dimension-robust inference even more challenging.

	Motivated by this issue, we propose a bootstrap-based, dimension-agnostic AR test. First, by deriving strong approximations for the proposed test statistic and its bootstrap counterpart (under both the null and alternative hypotheses), we show that the new bootstrap test has a correct asymptotic size, regardless of whether the number of IVs $K$ is fixed or diverging. Our proof, which relies on the Lindeberg swapping strategy, contributes a general result on the strong approximation for quadratic forms with independent and heteroskedastic errors. Additionally, our (conditional) strong approximation derivation for bootstrap statistics involving quadratic forms is novel and may be of independent interest. Second, by employing a ridge-regularized projection matrix, our AR test remains valid in high-dimensional cases where $K$ exceeds the sample size $n$. 
	%making it robust across all asymptotic regimes considered in the literature. 
	Third, the characterization of the errors in strong approximation offers a theoretically sound basis for selecting the ridge regularizer without taking a stance on the specific regime of $K$. Our choice of the regularizer also helps to reduce the rank of the projection matrix, which can potentially improve the power performance of the test. Fourth, we show that depending on the asymptotic behavior of both $K$ and $K_\lambda$ (the effective rank of the regularized projection matrix), the limit distribution of the test statistic can be (1) normal, (2) weighted chi-squared, or (3) a mixture of weighted chi-squared and normal distributions. Given the strong approximation result, our bootstrap inference remains uniformly valid regardless of the asymptotic regimes, and we further provide its power properties under each scenario. Fifth, the strong approximation and uniform inference results are all established when the number of control variables is allowed to diverge at the same rate or even faster than $\sqrt{n}$. Sixth, simulation experiments and an empirical application to the dataset of \cite{card(2009)} confirm the excellent size and power properties of our bootstrap test compared to alternative methods.    
	
	\textbf{Relation to the literature:} 
	For weak-identification-robust inference based on the classical AR test, \cite{Andrews-Stock(2007)} showed its validity under many instruments, but %their IV model is homoskedastic and 
	requires the number of instruments to diverge more slowly than the cube root of the sample size $n$ ($K^3/n \rightarrow 0$). \cite{Newey-Windmeijer(2009)} proposed a GMM-AR test under many (weak) moment conditions but imposed the same rate condition on $K$. \cite{anatolyev2011} constructed a modified AR test that allows $K$ to be proportional to $n$ but requires homoskedastic errors, and \cite{Kaffo-Wang(2017)} proposed a bootstrap version of their test.
	For estimation with many instruments, \cite{carrasco2012}, \cite{carrasco2015, carrasco2016efficient}, \cite{hansen2014}, and \cite{carrasco2017} proposed regularization approaches for two-stage least squares, limited information maximum likelihood, and jackknife IV (\citep{Angrist(1999)}) estimators. Furthermore, \cite{carrasco2016} first proposed a ridge-regularized AR test that allows for $K$ being larger than $n$ with homoskedastic errors. \cite{Bun-Farbmacher-Poldermans(2020)} compared the centered and uncentered GMM-AR test and identified a missing degrees-of-freedom correction when $K/n \rightarrow 0$. Recently, \cite{crudu2021} and \cite{MS22} proposed jackknifed versions of the AR test under many instruments and general heteroskedasticity. 
	\cite{DKM24} developed a ridge-regularized version of the jackknife AR test, which is further robust to the scenario where $K$ diverges faster than the sample size. 
	However, the jackknife AR tests are based on standard normal critical values that require $K$ to diverge; thus, they may not have the correct size under fixed $K$. \citet[Section 2.3]{tuvaandorj2024} established the validity of a permutation AR test under heteroskedasticity and diverging $K$, requiring $K^3/n \rightarrow 0$. In contrast to the above methods, our test remains valid with heteroskedastic errors uniformly across a broad asymptotic regime for $K$, spanning from fixed to diverging faster than the sample size.

	Furthermore, \cite{belloni2012} proposed a Lasso-based method for selecting optimal instruments, valid under high-dimensional IVs and heteroskedasticity, but requiring strong identification and sparse first-stage regressions. However, \cite{WZ23} showed that both Lasso and debiased Lasso linear regressions can suffer from significant omitted variable bias, even when the coefficient vector is sparse and the sample size exceeds the number of controls. In such cases, the ``long regression,'' which includes all regressors, often outperforms the Lasso-based methods. \cite{kolesar-muller-roelsgaard(2025)} similarly recommended using the ``long regression" unless the number of regressors is comparable to or exceeds the sample size. 
	\cite{belloni2012} also proposed a weak-identification-robust sup-score test that is dimension-agnostic and does not rely on sparsity. Similar to \cite{DKM24}, our simulation study shows that the power of our ridge-regularized bootstrap AR test matches the sup-score test when IVs have strong but sparse signals while offering substantially more power when the signal is weak but dense. \cite{N23} introduced a jackknife version of the \cite{Kleibergen(2002)}'s K test and combined it with the sup-score test, but his method relies on a sparse $\ell_1$-regularized estimation of $\rho(Z_i)$, the conditional correlation between the endogenous variable and the outcome error. Without the sparsity assumption, the estimation of $\rho(Z_i)$ may be inconsistent when the dimension of $Z_i$ is large. \cite{boot-ligtenberg(2023)} developed a dimension-robust AR test based on continuous updating, but relied on an invariance assumption. In contrast to the aforementioned approaches, our bootstrap inference procedure accommodates many instruments and heteroskedastic errors, yet does not rely on invariance or sparsity assumptions.

	Our paper also relates to the literature on bootstrap inference for IV regressions. It is found in this literature that when implemented appropriately, bootstrap approaches may substantially improve the inference accuracy for IV models, including the cases where IVs may be rather weak.\footnote{E.g., see \cite{Davidson-Mackinnon(2008), Davidson-Mackinnon(2010), Davidson-Mackinnon(2014b)}, \cite{Moreira-Porter-Suarez(2009)}, \cite{Wang-Kaffo(2016)}, \cite{Finlay-Magnusson(2019)}, \cite{roodman2019}, \cite{young2022}, and \cite{Wang-Zhang2024}, among others.}
	%{\color{red}{Furthermore, \cite{Hall-Horowitz(1996)}, \cite{Andrews(2002)}, \cite{Inoue-Shintani(2006)}, and \cite{Dovonon-Goncalves(2017)} established the validity of various nonparametric and block bootstrap approaches for the GMM test of overidentification under the fixed-$K$ asymptotics.}} 
	However, no existing study has uniformly established the bootstrap validity with regard to the number of IVs. We fill this gap by deriving strong approximation results for both the test statistic and its bootstrap counterpart. The strong approximation for the AR statistic is related to the analysis of quadratic forms by \cite{HS01}. Additionally, our results of (conditional) strong approximation for bootstrap statistics with a quadratic form are, based on our best knowledge, new to the literature.

	% {\color{red}{Our results, therefore, may also be useful in other applications that share a similar structure, such as tests of overidentifying restrictions\footnote{E.g., see \cite{anatolyev2011}, \cite{lee2012}, \cite{chao2014}, \cite{kolesar2018}, \cite{carrasco2022}, and \cite{fan2024}.} and nonparametric model specification tests.\footnote{E.g., see \cite{HS01}, \cite{guerre2005data},\cite{gao2008bandwidth}, and \cite{li2016consistent}.}
			% }}   
	
	Our test also remains valid even when the number of control variables diverges at a rate of $\sqrt{n}$ or faster, provided it remains of a smaller order than $n$, regardless of whether $K$ is fixed or diverging. As pointed out by \cite{Chao(2023)} and \cite{mikusheva2024weak}, the presence of many controls can introduce additional bias in jackknife IV estimators and AR tests. This phenomenon, often referred to as the quadratic barrier (see \cite{CJM18}; \cite{LSMPW24}), poses a major challenge for inference. To address this, we design a debiasing procedure for the AR statistic following the construction in \cite{CJN18}. Furthermore, to achieve valid bootstrap inference under many controls, we explicitly account for the impact of debiasing on the dispersion of the AR statistic by appropriately adjusting the bootstrap statistic.   
	
	Lastly, \citet{Anatolyev-Solvsten(2023)} proposed an analytical dimension-agnostic $F$ test for linear regressions by analyzing the asymptotic behavior of quadratic forms under two distinct regimes: (1) a fixed number of restrictions, resulting in a weighted chi-squared limiting distribution, and (2) a growing number of restrictions, yielding a normal limiting distribution. Their $F$-test is, in principle, applicable to our setting by testing zero restrictions on the IV coefficients in a linear regression under the null, and it is more general in two respects: (1) it accommodates control variables whose dimension can be of the same order as the sample size $n$, and (2) it allows for testing general linear restrictions. However, our bootstrap inference offers several key advantages. First, although we require the number of controls to be of a smaller order than $n$, we allow the number of instruments $K$ to exceed $n$, a case not covered by their framework. %Moreover, our bootstrap procedure is simpler to implement and numerically more stable than their analytical approach, which relies on a leave-three-out variance estimator.\footnote{In our setting, where the number of controls is $o(n)$, their variance estimator could potentially be simplified.} 
	Second, our use of ridge regularization reduces the rank of the projection matrix and gives rise to a third asymptotic regime, where $K$ diverges but a Lindeberg-type condition for asymptotic normality fails, resulting in a limiting distribution that is a mixture of weighted chi-squared and normal variables, akin to the regime analyzed in \citet[Sections 6 and 7]{KSS2020} and \cite{YGZ24}. This regime does not arise in \citet{Anatolyev-Solvsten(2023)} due to the absence of regularization. Analytical inference in this setting requires knowledge of the number of dominant eigenvalues, which can be a challenging task. In contrast, our bootstrap approach circumvents such difficulty by directly employing the strong approximation and remains uniformly valid across all three regimes. In our simulations, when the number of instruments is proportional to the sample size, the use of ridge regularization places the test statistic in the third asymptotic regime. Our bootstrap inference procedure has excellent size control even in this challenging setting, and further provides substantial power gains compared to alternative methods because of the rank reduction.
	%the $F$-test of \citet{Anatolyev-Solvsten(2023)} and the jackknife AR tests of \citet{crudu2021}, \citet{MS22}, and \cite{DKM24}.}

\textbf{Structure of the paper:} Section \ref{section:setup} makes precise the model setup and provides the testing procedure for our dimension-robust AR test statistic. 
%It further motivates and introduces the robust critical-value for our test statistic. 
Sections \ref{section:strong_approx} and \ref{sec: strong_approx_boot} provide the strong approximation results under both null and alternative for our test statistic and its bootstrap counterpart, respectively.    
We derive the power properties of our test under the fixed-$K$ and diverging-$K$ asymptotics, respectively, in Section \ref{section: asy_power}. 
Section \ref{section: monte-carlo-simulation} presents the results of Monte Carlo simulations and Section \ref{empirical-application} applies our test to an empirical application. Proofs of the theorems are given in the Supplemental Appendix, along with additional lemmas and simulation results. 
%Specifically, this section demonstrates that our test consistently differentiates the null from the alternative under strong identification, for both fixed and diverging instruments. Furthermore, that our test have exact asymptotic size-control for both fixed and diverging instruments is also shown. As an additional result, we derive the exact distribution of a generic Jackknifed-AR statistic under fixed $K$ setting in this section. Note that the number of instruments is assumed to be less than the sample size in sections \ref{section:setup}--\ref{section:asymptotic_size} in order to simplify our discussion. Section \ref{section:rank_deficiency} relaxes this and allow the number of instruments to be possibly larger than the sample-size. In particular, this section discusses the case of instruments being rank-deficient, and includes high-dimensional instruments as a special case. Section \ref{subsec:simulation} provides simulation results for our power-curve based on calibrated data, which lends itself to our theory. Section \ref{empirical-application} provides an application of our theory to empirical data.  In Appendix \ref{two_estimators} we provide details on the two estimators satisfying \eqref{estimator_assumption}. In Appendix \ref{section:limit_problems} we discuss general limit problems under fixed and diverging instruments. Appendix \ref{section:appendix_rank_deficiency} provides more detail on the rank-deficiency procedure of Section \ref{section:rank_deficiency}.

\textbf{Notations:} We denote by $[n]$ the set $\{1,\cdots,n\}$, and use $||A||_{op}$ and $||A||_F$ to refer to the operator and Frobenius norms of a matrix $A$, respectively.

\section{Setup and Testing Procedure} \label{section:setup}
\subsection{Setup}
Consider the linear instrumental variable regression
\begin{align}
	&  \widetilde{Y}_i = \widetilde{X}_i \beta + W_i^{\top} \Gamma + \widetilde e_i \notag \\
	&  \widetilde{X}_i = \widetilde{\Pi}_i + \widetilde{v}_i,
	\label{eq:model}
\end{align}
where $\widetilde X_i$ denotes a scalar endogenous variable and $W_i \in \mathbb{R}^{d_w}$ denotes the exogenous control variables. In addition, we have $K$-dimensional instrumental variables (IVs) denoted as $\widetilde Z_i$, and $ \widetilde{\Pi}_i \equiv \mathbb{E}( \widetilde{X}_i | \widetilde{Z}_i, W_i)$. 
%and $\widetilde\pi_i \equiv \mathbb{E}( \widetilde{Y}_i - \widetilde{X}_i \beta - W_i \Gamma | \widetilde{Z}_i, W_i)$.
We stack $\widetilde Z_i^\top$ up and denote the resulting $n \times K_n$ matrix $\widetilde Z$. We define $\widetilde Y \in \mathbb R^n$, $\widetilde X \in \mathbb R^n$, $\widetilde \Pi \in \mathbb R^n$, $\widetilde v \in \mathbb R^n$, and $ W \in \mathbb R^{n \times d_w}$ in the same manner. Throughout the paper, we also allow $d_w$ to diverge to infinity but at rate that is slower than the sample size $n$, i.e., $d_w = o(n)$. We further require $W$ to be of full rank so that its projection matrix $P_W = W(W^\top W)^{-1} W^\top$ is well defined.  We allow, but do not require, $K_n$ to increase with $n$. Specifically, the dimension of $Z$ can be fixed, grow proportional to, or even faster than $n$.

We focus on the model with a scalar endogenous variable for two reasons. First, in many empirical applications of IV regressions, there is only one endogenous variable (as can be seen from the surveys by \cite{Andrews-Stock-Sun(2019)} and \cite{lee2021}). Second, the strong approximation results derived in Sections \ref{section:strong_approx} and \ref{sec: strong_approx_boot} extend directly to the general case of full-vector inference with multiple endogenous variables. 
Additionally, for the dimension-robust subvector inference, one may use a projection approach (\citep{Dufour-Taamouti(2005)}) after implementing our test on the whole vector of endogenous variables.\footnote{Alternative subvector inference methods for IV regressions (e.g., see \cite{GKMC(2012)}, \cite{Andrews(2017)}, and \cite{GKM(2019), GKM(2021)}) provide a power improvement over the projection approach under fixed $K$. However, whether they can be applied to the current setting is unclear.  
	Also, \cite{Wang-Doko(2018)} and \cite{Wang(2020)} show that bootstrap tests based on the standard subvector AR statistic may not be robust to weak identification even under fixed $K$ and conditional homoskedasticity.}
%However, we also emphasize that our results do not speak to the subvector AR test, which is, overall, an underdeveloped area in the literature. See Remark XXX for further discussions. We refer to seminal works by XX for more discussion. The subvector weak-identification-robust inference is outside the scope of this paper.} 

To proceed, we first partial out the exogenous control variables $W$ from our IV regressions. Specifically, we stack up $(Y_i,X_i,e_i,\Pi_i,v_i)$ to $(Y,X,e,\Pi,v)$, which are defined as $Y = M_W \widetilde{Y}$, $X = M_W \widetilde{X}$, $\Pi = M_W \widetilde{\Pi}$, %$\pi = M_W \widetilde{\pi}$, 
$e = M_W \widetilde{e}$, and $v = M_W \widetilde{v}$, where $M_W = I_n - P_W$ and $I_n$ is an $n \times n$ identity matrix. In addition, we define  $Z = M_W \widetilde{Z}$. Then, \eqref{eq:model} can be rewritten as 
\begin{align}
& Y_i = X_i \beta + e_i \notag  \\
&  X_i = \Pi_i + v_i.
\label{model_2}
\end{align}
%{\color{red} We note that $\pi_i$ measures the degree of misspecification of the linear IV regression: with the presence of $\pi_i$, the IVs are not exclusive. In this paper, we focus on two types of hypotheses. First, we are interested in testing the value of the structural parameter $\beta$ by assuming correct specification, i.e., $\pi_i = 0$ for all $i$, but allowing for weak identification. The null and alternative hypotheses can be written as
%\begin{align}
%H_0: \beta = \beta_0 \quad versus \quad H_1: \beta \neq \beta_0.
%\label{null_hypothesis}
%\end{align}
%is of some fixed finite dimension, 
%simultaneously under both fixed and diverging numbers of instruments. %To this end, 
%Second, we are interested in testing the linear IV specification, i.e., $\pi_i=0$ for all $i$, by assuming that under the null, $\beta$ can be consistently estimated, i.e., strong identification. Specifically, we can form the null and alternative hypotheses of the test of overidentifying restrictions as
%\begin{align}
%H^*_0: ||\pi||^2_2=0 \quad versus \quad H^*_1: ||\pi||^2_2 \neq 0.
%\label{null_hypothesis_pi}
%\end{align}}

% In the following, we develop a bootstrap AR test that is dimension-agnostic, meaning that it achieves a correct asymptotic size and has desirable power properties no matter the dimension of IVs, $K_n$, is fixed or diverging to infinity at a rate that is slower, proportional, or faster than the sample size. 
Throughout our analysis, we treat $(Z,W)$ as fixed, which is equivalent to taking all expectations and probability measures conditionally on $(Z,W)$.

% a  test of overidentifying restrictions. Specifically, note that under a correct specification of the model in \eqref{model_2}, $\pi_i=0$ for all $i$. %Specifically, our test remains valid no matter the instruments are strong or weak, and no matter $K_n$ is fixed or diverge to infinity as $n \rightarrow \infty$. % In this paper, we aim to obtain tests that guarantee a correct size control irrespective of asymptotic frameworks with regard to $K_n$, no matter $K_n$ is fixed or diverging to infinity with the sample size. 

\subsection{Test Statistic}
Given that we allow $K_n$ to be greater than $n$, the matrix $Z^\top Z$ is not necessarily invertible. Therefore, we define $P_\lambda = Z (Z^\top Z + \lambda I_{K_n})^{-1}Z^\top$ as the ridge-regularized projection matrix of $Z$ with some ridge penalty $\lambda$ that will be chosen based on $Z$ only. As we treat the instruments and control variables as fixed, so are the ridge-regularizer $\lambda$ and matrix $P_\lambda$. The $(i,j)$ element of $P_{\lambda}$ is denoted as $P_{\lambda,ij}$. Further denote $e_i(\beta_0) = Y_i - X_i \beta_0$. Then, our dimension-agnostic AR test statistic is written as 
\begin{align}
	\widehat{Q}(\beta_0) & = \frac{\sum_{i \in [n]} \sum_{j \in [n], j \neq i}e_{i}(\beta_0) P_{\lambda,ij} e_{j}(\beta_0)}{\sqrt{K_\lambda}} - \frac{\sum_{i,j \in [n]^2} \kappa_{ij} e_j^2(\beta_0) A_{\lambda,ii}}{\sqrt{K_\lambda}},
	\label{Q_hat_statistic_definition}
\end{align}
where $\kappa = (M_W \circ M_W)^{-1}$,\footnote{Here $\circ$ denotes the Hadamard product and $M_W \circ M_W$ is invertible as long as $d_w < n/2$ as shown by \cite{CJN18}.} $A_{\lambda,ii} = 2 P_{\lambda,ii} P_{W,ii} - B_{\lambda,ii}$,
$B_{\lambda,jk} = \sum_{i \in [n]} P_{W,ik} P_{W,ij} P_{\lambda,ii} = [P_W D_{\lambda} P_W]_{jk}$, 
\begin{align*}
	D_{\lambda} = \diag(P_{\lambda,11}, \cdots, P_{\lambda,nn}) = \diag(P_{\lambda}), 
\end{align*}
\begin{align}\label{eq:K_lambda}
	K_\lambda = \sum_{i \in [n]} \sum_{j \in [n], j \neq i} \Xi_{\lambda,ij}^2,
\end{align}
and
\begin{align*}
	\Xi_{\lambda,ij} =  \begin{cases}
		P_{\lambda,ij} + (P_{\lambda,ii} + P_{\lambda,jj}) P_{W,ij} - B_{\lambda,ij} & \quad i \neq j \\
		0 & \quad i = j
	\end{cases}.
\end{align*}
In particular, we can regard $K_\lambda$ as the effective rank under the ridge regularization.

We note that under the null (i.e., $\beta = \beta_0$), the first quadratic term of $\widehat Q(\beta_0)$ in \eqref{Q_hat_statistic_definition} does not have an exact zero mean due to partialling out controls. This bias is not asymptotically negligible when the dimension of the controls ($d_w$) is of the order $\sqrt{n}$ or greater. The second term of $\widehat Q(\beta_0)$ in \eqref{Q_hat_statistic_definition}, inspired by the variance estimator proposed by \cite{CJN18}, is used to correct such a bias. 

The regularizer $\lambda$ is chosen as 
\begin{align}\label{eq:lambda}                 \lambda = \max\left\{ \theta \in [0,\overline \theta]: \left(\frac{\max_{i \in [n]}P_{\theta,ii}^2}{K_\theta}\right) \left(1+\sum_{i \in [n]} P_{W,ii}^2 \right)\leq c_1,  \frac{\max_{i \in [n]}\sum_{j \in [n], j \neq i}\Xi_{\theta,ij}^2}{K_\theta} \leq \frac{c_2}{\sqrt{n} } \right\},
\end{align}
where $P_{\theta} = Z (Z^\top Z + \theta I_{K_n})^{-1}Z^\top$, $P_{\theta, ij}$ is the $(i,j)$ entry of $P_{\theta}$,  $\overline \theta = ||Z^{\top}Z||_{op}$, $K_{\theta}$ is defined in \eqref{eq:K_lambda} with $\lambda$ replaced by $\theta$, while $c_1$ and $c_2$ are two positive constants chosen by the researcher such that $c_1$ is sufficiently small. We view $\frac{c}{0} = + \infty$ for any $c>0$. In practice, we use $c_1 = 0.1$ and $c_2 = 1$.\footnote{We have done extensive simulations and find that the results of our test are not sensitive to the specific choice of $c_1$ and $c_2$. The simulation results with alternative choices of $c_1$ and $c_2$ are reported in the Supplemental Appendix.} If there is no $\lambda$ that satisfies both inequalities in \eqref{eq:lambda}, then we choose 
\begin{align*}
	\lambda = \argmin_{\theta  \in [0,\bar \theta] }\left(\frac{\max_{i \in [n]}P_{\theta,ii}^2}{K_\theta}\right) \left(1+\sum_{i \in [n]} P_{W,ii}^2 \right).
\end{align*}

\begin{rem} Several remarks regarding the choice of the regularizer are in order. 
	First, we select the regularizer $\lambda$ as the largest value over the interval $[0,\overline \theta]$ that both 
	\begin{align*}
		\frac{\max_{i \in [n]}P_{\lambda,ii}^2}{K_\lambda}    \left(1+\sum_{i \in [n]} P_{W,ii}^2 \right) \quad \text{and} \quad \frac{\max_{i \in [n]}\sum_{j \in [n], j \neq i}\Xi_{\lambda,ij}^2}{K_\lambda}
	\end{align*}
	remain small. These are two critical conditions for ensuring the validity of our strong approximation results for both the test statistic and the bootstrap critical value. We will discuss the theoretical properties of these two terms in detail below. Moreover, since the choice of $\lambda$ depends solely on the instruments, which are treated as fixed (i.e., non-random, or conditioned upon), it does not introduce any model selection bias. 
	
	%    In the revised version, the regularizer $\lambda$ is chosen as 
	%\begin{align} 
	%\lambda = \max\begin{Bmatrix}
		%\theta \in [0,\overline \theta]: \left(\frac{\max_{i \in [n]}P_{\theta,ii}^2}{K_\theta}\right) \left(1+\sum_{i \in [n]} P_{W,ii}^2 \right)\leq c_1,  \frac{\max_{i \in [n]}\sum_{j \in [n], j \neq i}\Xi_{\theta,ij}^2}{K_\theta} \leq \frac{c_2}{\sqrt{n} } 
		%\end{Bmatrix},
		%\end{align}where we recommend choosing $\overline \theta = ||Z^{\top}Z||_{op}$, $c_1 = 0.1$, and $c_2 =1$. 
		Second, given that the conditions for strong approximation are satisfied, we choose the regularizer $\lambda$ as large as possible over $[0, \bar\theta]$. Such a choice is inspired by \cite{carrasco2012}, \cite{carrasco2015, carrasco2016efficient}, and \cite{carrasco2017}, who showed that their proposed regularized IV estimators can be more efficient than those without regularization by employing a sufficiently large value of the regularizer relative to the overall instrument strength (concentration parameter).\footnote{For example, see Proposition 1 of \cite{carrasco2012}, Proposition 2 of \cite{carrasco2015}, and Proposition 2 of \cite{carrasco2016efficient}, in which regularized IV estimators are shown to achieve the semiparametric efficiency bound under homoskedastic errors, given a sufficiently large value of the regularizer relative to the concentration parameter.}  
		%achieve the semiparametric efficiency bound under homoskedastic errors when $n/(\lambda ||\Pi||^2_2) \rightarrow 0$.
		%\footnote{This corresponds to $1/(\alpha \mu^2_n) \rightarrow 0$ in their papers ($\alpha$ and $\mu^2_n$ correspond to $\lambda/n$ and $||\Pi||^2_2$, respectively, in our setting).}
		
		Third, our choice of the upper bound for $\lambda$ as $\overline \theta = ||Z^{\top}Z||_{op}$ is motivated by the fact that the ridge regularization transforms the eigenvalues of $Z^\top Z$. Specifically, consider the case where $K_n \leq n$ and the singular value decomposition of $Z$ as 
		$Z = \mathcal U \mathcal S \mathcal V^\top,$ where $\mathcal U \in \Re^{n \times K_n}$ with $\mathcal U^\top \mathcal U = I_{K_n}$, $S = \diag(s_1,\cdots,s_{K_n})$ is a diagonal matrix of non-zero singular values in descending order, $\mathcal V \in \Re^{K_n \times K_n}$, and $\mathcal V^\top \mathcal V = I_{K_n}$. Then, the regularized projection matrix is given by
		\begin{align*}
			P_\lambda = \mathcal U \diag\left(\frac{s_1^2}{s_1^2 + \lambda},\cdots,\frac{s_{K_n}^2}{s_{K_n}^2 + \lambda} \right) \mathcal U^\top.
		\end{align*}
		If for some $k \in [K_n]$, the ratio $s_k/s_1$ is close to zero, then choosing $\lambda$ on the order of $s_1^2 = ||Z^{\top}Z||_{op}$ will cause the $k$-th singular value of $P_\lambda$ (i.e., $\frac{s_k^2}{s_k^2 + \lambda}$) to be close to zero. Intuitively, a large $\lambda$ attenuates the contributions of directions associated with small singular values, effectively reducing the rank of $P_\lambda$ and helping to improve the power performance of our test. We will give more details on this point in Section \ref{section: asy_power} (e.g., see Remark \ref{rem: power}).

	\end{rem}

	% Moreover, the definitions of $K_\lambda$ and $P_{\lambda,ij}$ imply that both $\frac{\max_{i \in [n]}P_{\lambda,ii}^2}{K_\lambda} $ and {\color{red}{$\frac{\max_{i \in [n]}\sum_{j \in [n], j \neq i}P_{\lambda,ij}^2}{K_\lambda}$}} lie within $[0,1]$ and are invariant to the scaling of $Z_i$.

	\subsection{Bootstrap Critical Value}\label{sec:bootstrap CV}
	To implement the dimension-agnostic test, we propose to use bootstrap critical values. Specifically, let $\{\eta_i\}_{i \in [n]}$ be an independent sequence of random variables with zero mean and unit variance that are generated independently from the samples. 
	Our bootstrap AR test statistic is denoted as $\widehat{Q}^*(\beta_0)$ and defined as 
	\begin{align}\label{eq:Qhat*}
		\widehat{Q}^*(\beta_0) = \frac{\sum_{i \in [n]} \sum_{j \in [n], j \neq i} \eta_{i} e_{i}(\beta_0)\Xi_{\lambda,ij} \eta_{j} e_{j}(\beta_0)}{\sqrt{K_\lambda}}.
	\end{align}
	Then, the bootstrap critical value is denoted as $\widehat{\mathcal C}^*_{\alpha}(\beta_0)$ and defined as the $(1-\alpha)$-th percentile of $\widehat{Q}^*(\beta_0)$ conditional on data, where $\alpha$ is the nominal level of rejection under the null. We reject the null hypothesis of $\beta = \beta_0$ if $\widehat{Q}(\beta_0) > \widehat{\mathcal C}^*_{\alpha}(\beta_0).$ 
	%{\color{red}{Comment on bias correction term for the bootstrap statistic.}}
	
	\begin{rem}\label{rem: boot}
		Unlike the first term of $\widehat Q(\beta_0)$ defined in \eqref{Q_hat_statistic_definition}, we use $\Xi_\lambda$ instead of $P_\lambda$ to define the bootstrap AR statistic. Note that under the null, $e(\beta_0) = M_W \tilde e$, whose elements are not independent from each other. When the dimension of controls $d_w$ diverges at a rate $\sqrt{n}$ or higher, such a cross-sectional dependence is not asymptotically negligible. However, the bootstrap multipliers $\{\eta_i\}_{i \in [n]}$ are independent and, thus, unable to mimic the dependence. Instead, we explicitly account for this difference by adjusting the middle matrix $P_\lambda$ in the original statistic to $\Xi_\lambda$, so that valid bootstrap inference can be achieved under many controls. Additionally, we impose the null on the bootstrap data generating process, following the recommendations in the literature of bootstrap for IV regressions or non-homoskedastic errors, such as \cite{Cameron(2008)}, 
		\cite{Davidson-Mackinnon(2010)}, \cite{Roodman-Nielsen-MacKinnon-Webb(2019)}, and \cite{mackinnon2023fast}, among others.    
	\end{rem}

	\begin{rem}\label{rem:intuition}
		As pointed out by \citet{anatolyev2011} and \cite{MS22}, 
		when $K$ is fixed, no regularization is used, and the errors are homoskedastic, the test statistic admits the usual re-centered chi-squared approximation:
		\begin{align*}
			\frac{\widehat{Q}(\beta_0)}{c_n} \convD \frac{\chi^2_K - K}{\sqrt{2K}}
		\end{align*}
		for some normalization scalar $c_n$ computed under homoskedasticity. 
		
		Furthermore, \citet{MS22} noted that this re-centered chi-squared distribution converges quickly to the standard normal distribution as $K$ increases. This suggests that critical values based on $\frac{\chi^2_K - K}{\sqrt{2K}}$ remain valid whether $K$ is fixed or diverging, making it a dimension-agnostic strong approximation for (the re-scaled) $\widehat{Q}(\beta_0)$ under homoskedasticity. In this paper, we extend this idea to the heteroskedastic setting by deriving a weighted re-centered chi-squared approximation for $\widehat{Q}(\beta_0)$ and establishing conditions under which a bootstrap critical value yields valid inference uniformly across different asymptotic regimes. In doing so, we also accommodate a diverging number of controls and ridge regularization, which allows the number of instruments $K$ to exceed the sample size and provides power improvement as well.
	\end{rem}
	
	\begin{rem}
		Our proposed bootstrap test is AR-based. It is possible to extend our dimension-agnostic inference procedure to score-based Lagrangian Multiplier (LM) tests provided that the first-stage residual $\tilde v$ is consistently estimable. Given the consistency of residuals, we conjecture that our bootstrap inference remains valid for score-based statistics, including the cases where the effect of the endogenous variable $\tilde X$ may be heterogeneous and the structural equation \eqref{eq:model} is thus misspecified.\footnote{In such settings of heterogeneous treatment effects, especially when the number of instruments diverges with the sample size, researchers typically assume that the reduced-form models for both endogenous variables $\tilde Y$ and $\tilde X$ are linear; see, for example, \cite{kolesar2018}, \cite{EK18}, \cite{boot2024}, and \cite{yap2024inference}. In such cases, we require the consistency of reduced-form residuals for the bootstrap validity.} Specifically, this may require restricting the dimension of $(W,\tilde Z)$ to be of a smaller order of $n$,  imposing some sparsity conditions, and/or assuming that the reduced form regressions for $(\tilde Y, \tilde X)$  are approximately linear. One advantage of our AR-based inference procedure is that it imposes minimal assumptions on the first stage. For instance, we do not have any restriction on $\tilde\Pi$, aligned closely with the setting in \cite{MS22}.
	\end{rem}

	%Similarly, we define the bootstrap test statistic for overidentifying restrictions as 
	%\begin{align}\label{eq:Qhat*}
	%\widehat{Q}^*(\hat\beta) = \frac{\sum_{i \in [n]} \sum_{j \in [n], j \neq i} \eta_{i} e_{i}(\hat \beta)P_{\lambda,ij} \eta_{j} e_{j}(\hat \beta)}{\sqrt{K_\lambda}},
	%\end{align}
	%and reject the null hypothesis of $||\pi||^2_2 = 0$ if 
	%\begin{align*}
	%    \widehat{Q}(\hat \beta) > \widehat{\mathcal C}^*_{\alpha}(\hat %\beta), 
	%\end{align*}
	%where $\widehat{\mathcal C}^*_{\alpha}(\hat \beta)$ denotes the $(1-\alpha)$-th percentile of $\widehat{Q}^*(\hat \beta)$ conditional on data.
	
	%%%%%%%%%%%%%%%%%%%%%%%%%%  strong approximation  %%%%%%%%%%%%%%%%%
	\section{Strong Approximation of the Test Statistic} \label{section:strong_approx}
	This section is concerned with the conditions under which the null distribution of the test statistic defined in \eqref{Q_hat_statistic_definition} can be approximated by its bootstrap counterpart, no matter whether the dimension $K_n$ of the IVs is fixed or diverging with the sample size. We make the following assumptions on the data-generating process (DGP) to establish this result. 
	
	\begin{ass}
		\begin{enumerate}
			\item Suppose \eqref{eq:model} holds in which $W$ and $Z$ are treated as fixed, $\{\tilde e_i,\tilde v_i\}_{i \in [n]}$ are independent, mean zero, but potentially heteroskedastic. 
			\item There exist constants $C \in (0,\infty)$ and $q > 6$ such that $\max_{i \in [n]} \mathbb  E (\tilde e_{i}^{2q} + X_i^{2q}) \leq C$. 
			\item Let $\tilde \sigma_i^2 = \mathbb E \tilde e_i^2$. Then, there exist constants $\infty > \bar c > \underline c>0$ such that 
			$$\bar c \geq \max_{i \in [n]} \tilde \sigma_i^2 \geq  \min_{i \in [n]} \tilde \sigma_i^2 \geq \underline c.$$ 
			\item The matrix $W^\top W$ is invertible and $\max_{i \in [n]} P_{W,ii} = o(1)$, where $P_{W,ii}$ denotes the i-th diagonal element of the projection matrix $P_W$. 
			\item Suppose $p_n = \max_{i \in [n]} \frac{  \sum_{j \in [n], j \neq i} \Xi_{\lambda,ij}^2}{K_\lambda}$ and ${p_n'} = \max_{i \in [n]} \frac{ P_{\lambda,ii}^2}{K_\lambda}$.       Then, we have $p_n n^{3/q} = o(1)$ and ${p_n'} (1+\sum_{i \in [n]} P_{W,ii}^2) = o(1)$.    
			% \item $K_\lambda \geq c>0$ for some constant $c>0$. 
			\item Suppose that $\{\eta_i\}_{i \in [n]}$ are i.i.d. and independent of data, have mean zero, unit variance, and sub-Gaussian tail in the sense that $\inf\left\{u>0: \mathbb{E}\exp\left(\frac{|\eta|}{u}\right)^2\leq 2\right\} \leq C <\infty$ for some fixed constant $C \in (0,\infty)$. 
		\end{enumerate} \label{ass:reg}
	\end{ass}
	
	Assumptions \ref{ass:reg}.1--\ref{ass:reg}.3 are standard regularity conditions. Assumption \ref{ass:reg}.4 allows the dimension of control variables to diverge at a rate that is slower than the sample size, i.e., $d_w = o(n)$. The impact of partialling out $W$ from both $Y$ and $X$ becomes asymptotically negligible only when $d_w = o(\sqrt{n})$, reflecting a broader phenomenon commonly referred to as the \textit{quadratic barrier}. See, for example, \citet{CJM18} and \citet{LSMPW24} for further discussions. We overcome this barrier and establish bootstrap validity by carefully debiasing the AR statistic and further adjusting the middle matrix of the bootstrap quadratic form (as noted in Remark \ref{rem: boot}). 
	To the best of our knowledge, this is the mildest rate condition regarding the number of controls established for bootstrap inference with high-dimensional IVs (without imposing a sparsity assumption).
	We note that analytical inference remains feasible even when $d_w$ is proportional to $n$, as demonstrated in \citet{Anatolyev-Solvsten(2023)}. However, in such a high-dimensional control setting, our current bootstrap inference procedure may fail to control size. At present, it is unclear whether any valid resampling-based inference method exists in this regime, let alone one that remains valid uniformly over the dimensions of both $Z$ and $W$. We leave this important question for future research. In the following, we provide further comparisons between analytical and bootstrap inference approaches in Remarks \ref{rem:AS} and \ref{rem:KSS}.
	
	% Assumptions \ref{ass:reg}.1--\ref{ass:reg}.3 are standard regularity conditions. Assumptions \ref{ass:reg}.4 allows for the dimension of controls to diverge with the sample size. The effect of partialling out $W$ from both $Y$ and $X$ is asymptotically negligible only when $d_w = o(\sqrt{n})$, which is known as the \textit{quadratic barrier}. See, for example, \cite{CJM18} and \cite{LSMPW24} for more discussions. We breach this barrier by carefully debiasing and adjust the middle matrix for the quadratic form. It is possible to make analytical inference when $d_w$ is proportional to $n$, as illustrated by \cite{Anatolyev-Solvsten(2023)}. However, the current bootstrap procedure may be unable to control size in this case. It is also unclear whether there exist any valid bootstrap or resampling inference methods. This is left for future research. 

	Assumption \ref{ass:reg}.5 requires that $p_n$ and ${p_n'}$ vanish sufficiently fast. Consider the case without ridge regularization (i.e., $\lambda=0$) and where the projection matrix is well-defined (i.e., $K_n < n$). If the diagonal elements of $P_\lambda$ (with $\lambda = 0$) are well-balanced in the sense that $P_{\lambda,ii} = K_n/n$, then we have 
	\begin{align*}
		K_\lambda \geq C K_n \quad \text{and} \quad \max_{i \in [n]}    \sum_{j \in [n], j \neq i} \Xi_{\lambda,ij}^2 \leq C K_n/n.
	\end{align*}
	This implies $p_n = O(n^{-1})$ and ${p_n'} = O(n^{-1})$. Importantly, we note that these results hold  regardless of whether $K_n$ is fixed or increasing with $n$. If $P_W$ is also well-balanced such that $\max_{i \in [n]} P_{W,ii} \leq Cd_w/n$, then
	\begin{align*}
		{p_n'} (1 + \sum_{i \in [n]}P_{W,ii}^2) \leq C(1/n +d_w^2/n^2) = o(1)
	\end{align*}
	as long as $d_w = o(n)$. In the minimum, even we only have ${p_n'} =o(1)$, if $d_w = O(\sqrt{n})$, then $\sum_{i \in [n]}P_{W,ii}^2 = O(1)$, which still guarantees that ${p_n'} (1 + \sum_{i \in [n]}P_{W,ii}^2)=o(1)$.
	
	These calculations imply that our inference procedure remains valid even when the number of control variables diverges at the rate $\sqrt{n}$ or faster. Moreover, in high-dimensional settings where $K_n > n$, our ridge-regularized approach with the choice of $\lambda$ in \eqref{eq:lambda} ensures that Assumption \ref{ass:reg}.5 holds provided $q>6$.
	
	Finally, Assumption \ref{ass:reg}.6 requires the bootstrap weights $\eta_i$ to have sub-Gaussian tails. In practice, we recommend using standard normal or Rademacher random variables, both of which satisfy this condition.
	\vspace{0.1in}
	
	% {\color{red} With ridge-regularization, \cite{dovi-kock-mavroeidis(2023)}  required a lower bound on $\sum_{i \in [n]}\sum_{j \neq i}P_{\lambda, ij}^2/K_{\lambda}$ for their asymptotic normality result. Without ridge-regularization, \cite{crudu2021} and \cite{MS22} require a similar (stronger) condition that $\max_{i}P_{ii} \leq 1-\delta$. Our strong approximation result avoids these requirements by showing that the approximation error is controlled by $p_n$ and ${p_n'}$. }

	To proceed, we need to introduce some more notation. 
	%{\color{blue}{For the dimension-agnostic AR test, we follow the literature on weak-identification-robust inference (especially \cite{crudu2021}, \cite{MS22}, and \cite{LWZ(2023)}) and focus on testing \eqref{null_hypothesis} given the model in \eqref{model_2} is correctly specified.}} 
	Define $\Delta =  \beta - \beta_0$, $\tilde \tau_i = \mathbb E(\tilde e_i \tilde v_i)$, $\tilde \varsigma_i^2 = \mathbb E \tilde v_i^2$, $\breve e_i(\beta_0) = \tilde e_i(\beta_0) + \Pi_i \Delta$, and $\tilde e_i(\beta_0) = \tilde e_i +  \tilde v_i \Delta $. Then, we denote  
	\begin{align*}
		& \tilde \sigma_{i}^{2}(\beta_0) = Var( \breve e_i(\beta_0)) = \mathbb E \tilde e_i^2(\beta_0)  = \tilde \sigma_i^2 + 2 \Delta \tilde \tau_i + \Delta^2\tilde \varsigma_i^2, \\
		& \breve \sigma_{i}^{2}(\beta_0) = \mathbb E \breve e_i^2(\beta_0)  = \tilde \sigma_i^2 + 2 \Delta \tilde \tau_i + \Delta^2(\tilde \varsigma_i^2 + \Pi_{i}^2) = \tilde \sigma_{i}^{2}(\beta_0) + \Pi_i^2 \Delta^2.
	\end{align*}
	In addition, let 
	\begin{align}
		Q(\beta_0) & = \frac{\sum_{i \in [n]} \sum_{j \in [n], j \neq i} (g_i\tilde \sigma_i(\beta_0)+\Pi_i \Delta) \Xi_{\lambda,ij} (g_j\tilde \sigma_j(\beta_0)+\Pi_j \Delta) }{\sqrt{K_\lambda}}  \label{eq:Q} 
	\end{align}
	and 
	\begin{align}
		& Q^*(\beta_0) = \frac{\sum_{i \in [n]} \sum_{j \in [n], j \neq i} g_{i}\breve \sigma_{i}(\beta_0) \Xi_{\lambda,ij} g_{j} \breve \sigma_{j}(\beta_0)}{\sqrt{K_\lambda}}, \label{eq:Q^*}
	\end{align}
	where $\{g_{i}\}_{i \in [n]}$ are i.i.d. standard normal random variables that are generated independent of data. The following theorem shows that our proposed AR test statistic $\widehat Q(\beta_0)$ can be strongly approximated by $Q(\beta_0) + C(\Delta)$ in Kolmogorov distance, where 
	\begin{align*}
		C(\Delta)  =   \frac{\sum_{i \in [n]} \sum_{j \in [n], j \neq i} \Pi_i  \left(P_{\lambda,ij} - \Xi_{\lambda,ij}\right) \Pi_j \Delta^2 }{\sqrt{K_\lambda}} . 
	\end{align*}
	Furthermore, Theorem \ref{thm:Fhat-F} in the next section shows the bootstrap statistic $\widehat Q^*(\beta_0)$ can be strongly approximated by $Q^*(\beta_0)$ in Kolmogorov distance conditionally on data. 
	% Both strong approximations hold under the null and alternative in \eqref{null_hypothesis}. 
	Note that $Q^*(\beta_0)$ is equal to $Q(\beta_0)$ under the null hypothesis.\footnote{Under the null, we have $\Delta=0$ and $\tilde \sigma_i(\beta_0) = \breve \sigma_i(\beta_0)$.} 
	
	\begin{thm}
		Suppose Assumption \ref{ass:reg} holds, and $||\Pi||_2^2 \Delta^2 / \min\left(K_\lambda^{1/2},K_\lambda^{2/3}\right)$ is bounded. Then, we have
		\begin{align*}
			\sup_{y\in \Re}\left|\mathbb P(\widehat Q(\beta_0) \leq y)- \mathbb P(Q(\beta_0) + C(\Delta)\leq y)\right|  = o(1).
		\end{align*}\label{thm:main_null}
	\end{thm}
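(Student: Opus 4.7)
The plan is to establish the theorem via a Lindeberg swapping strategy that compares the law of $\widehat Q(\beta_0)$ with the Gaussian-quadratic target $Q(\beta_0)$ coordinate by coordinate. First, because $Z = M_W \widetilde Z$ implies $M_W P_\lambda = P_\lambda = P_\lambda M_W$, the bilinear form $e(\beta_0)^\top P_\lambda e(\beta_0)$ is invariant under replacing $e_i(\beta_0)$ by its unprojected analog $\widetilde e_i(\beta_0) := \widetilde e_i + \widetilde v_i \Delta + \Pi_i \Delta$, and the only discrepancy is the diagonal correction $\sum_i P_{\lambda,ii}\bigl(\widetilde e_i(\beta_0)^2 - e_i(\beta_0)^2\bigr)/\sqrt{K_\lambda}$. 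Using $\widetilde e_i(\beta_0) - e_i(\beta_0) = W_i^\top(W^\top W)^{-1}W^\top\widetilde e(\beta_0) = O_p(n^{-1/2})$ uniformly (Assumptions \ref{ass:reg}.2--\ref{ass:reg}.4) together with $\sum_i P_{\lambda,ii} \leq n\sqrt{q_n K_\lambda}$ and $\sum_i P_{\lambda,ii}^2 \leq n q_n K_\lambda$ (Assumption \ref{ass:reg}.5), this correction is $o_p(1)$. Then I split $\widetilde e_i(\beta_0) = \breve e_i(\beta_0) + \Pi_i \Delta$, with $\breve e_i(\beta_0) = \widetilde e_i + \widetilde v_i \Delta$ independent across $i$ and mean-zero of variance $\breve\sigma_i^2(\beta_0)$, and write
\[
\widehat Q(\beta_0) = T\bigl(\breve e_1(\beta_0),\ldots,\breve e_n(\beta_0)\bigr) + \frac{\Delta^2}{\sqrt{K_\lambda}}\sum_{i \neq j} \Pi_i P_{\lambda,ij} \Pi_j + o_p(1),
\]
where
\[
T(x_1,\ldots,x_n) := \frac{1}{\sqrt{K_\lambda}}\Bigl(\sum_{i\neq j} x_i P_{\lambda,ij} x_j + 2\Delta \sum_{j} c_j x_j\Bigr),\quad c_j := \sum_{i\neq j} P_{\lambda,ij}\Pi_i.
\]
The Gaussian target $Q(\beta_0)$ obeys the same decomposition with $\breve e_i(\beta_0)$ replaced by $g_i \breve\sigma_i(\beta_0)$, and the deterministic quadratic-in-$\Pi$ piece is common to both, so only $T$ needs to be compared.

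The core step is the Lindeberg swap applied to $T$, which is linear in each coordinate $x_k$ since $i = j$ is excluded. For a $C^3$ smoothing $\phi_h$ of $\mathbf{1}\{\cdot\le y\}$ at bandwidth $h$, I interpolate from $(\breve e_1,\ldots,\breve e_n)$ to $(g_1\breve\sigma_1,\ldots,g_n\breve\sigma_n)$ by swapping one coordinate at a time. Taylor-expanding in $x_k$, the first- and second-order terms vanish in expectation because $\breve e_k$ and $g_k\breve\sigma_k$ share mean zero and variance $\breve\sigma_k^2(\beta_0)$, so the cumulative discrepancy is bounded by the third-order remainder
\[
\sum_k \|\phi_h'''\|_\infty \cdot \mathbb{E}|A_k|^3 \cdot \bigl(\mathbb{E}|\breve e_k|^3 + \mathbb{E}|g_k\breve\sigma_k|^3\bigr),\quad A_k := \partial T/\partial x_k,
\]
which depends only on $x_{-k}$. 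A Rosenthal-type bound yields $\mathbb{E}|A_k|^3 \lesssim p_n^{3/2} + |\Delta c_k|^3/K_\lambda^{3/2}$, and the global remainder is controlled by Assumption \ref{ass:reg}.5 combined with truncating $\breve e_k$ at level $\tau\asymp n^{1/q}$, which is made possible by Assumption \ref{ass:reg}.3 with $q>6$ and the rate $p_n n^{3/q} = o(1)$. The assumed boundedness of $||\Pi||_2^2\Delta^2/\sqrt{K_\lambda}$ ensures, via $\sum_j c_j^2 \lesssim ||\Pi||_2^2$, that the linear-shift contribution $\Delta c_k$ inside $A_k$ cannot dominate the quadratic swap.

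Finally, I convert the smooth-test-function comparison into a Kolmogorov-distance bound by balancing the smoothing scale $h$ against the anti-concentration term $\sup_y \mathbb{P}(|Q(\beta_0) - y| \leq h)$; since $Q(\beta_0)$ is a polynomial of degree at most two in independent standard Gaussians, an anti-concentration inequality of Carbery--Wright type delivers $O(h)$ uniformly in $y$, and the balance produces the claimed $o_p(1)$ Kolmogorov bound. The main obstacle is the Lindeberg swap itself: with only the first two moments matched, the third-order remainder is only borderline summable, so the proof hinges on a careful truncation argument that jointly leverages the $p_n$-, $q_n$-, and $q$-moment bounds; in addition, carrying the linear shift $\Delta c_k$ alongside the quadratic form without degrading the swap is a further subtlety and is precisely why the theorem imposes $||\Pi||_2^2\Delta^2/\sqrt{K_\lambda} = O(1)$.
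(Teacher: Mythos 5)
Your proposal is correct and follows essentially the same route as the paper's proof: reduce $\widehat Q(\beta_0)$ to the unprojected statistic via the diagonal correction (using $P_\lambda W=0$, $\|\hat\gamma(\beta_0)\|_2=O_P(n^{-1/2})$ and the $q_n$ bound), then run a coordinate-wise Lindeberg swap of $\breve e_k(\beta_0)$ against $g_k\breve\sigma_k(\beta_0)$ in a smoothed indicator with matched first two moments, bound the third-order remainder by Marcinkiewicz--Zygmund together with the $p_n$ condition and the boundedness of $\|\Pi\|_2^2\Delta^2/\sqrt{K_\lambda}$, and finally convert to Kolmogorov distance by balancing the smoothing bandwidth against anti-concentration of $Q(\beta_0)$ (the paper's Lemma \ref{lem:anti-concentration}). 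Two harmless inaccuracies: a degree-two Gaussian polynomial gives anti-concentration of order $h^{1/2}$ (the paper obtains $h^{(1-\zeta)/2}$), not $O(h)$; and the $n^{1/q}$ truncation with the rate $p_n n^{3/q}=o(1)$ is not needed for this theorem (the paper reserves it for the bootstrap statistic), since summing $\bigl(\sum_{i\neq k}P_{\lambda,ik}^2/K_\lambda\bigr)^{3/2}$ over $k$ already yields the $O(p_n^{1/2})$ remainder, whereas summing your per-$k$ bound $p_n^{3/2}$ naively would not suffice.
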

	
	\begin{rem}
		We note that $Q(\beta_0)$ is implicitly indexed by the sample size $n$, which explains why we call it a strong approximation rather than a limit of our AR statistic $\widehat Q(\beta_0)$. 
		Second, as noted in Remark \ref{rem: boot}, the cross-sectional dependence between the elements of $e(\beta_0)$ is not asymptotically negligible when $d_w$ diverges at a rate $\sqrt{n}$ or higher. 
		On the other hand, $\{g_{i}\}_{i \in [n]}$ in $Q(\beta_0)$ and $Q^*(\beta_0)$ are i.i.d. standard normal random variables. We account for this by adjusting $P_\lambda$ in the original statistic to $\Xi_\lambda$ to \eqref{eq:Q}-\eqref{eq:Q^*}. Third, we can see that 
		\begin{align}\label{eq:noncentrality}
			\mathbb E Q(\beta_0) + C(\Delta) = \frac{\sum_{i \in [n]} \sum_{j \in [n], j \neq i} \Pi_i P_{\lambda,ij} \Pi_j \Delta^2 }{\sqrt{K_\lambda}}, 
		\end{align}
		which is the non-centrality parameter for the AR statistic under the alternative. 
	\end{rem}
	
	\begin{rem}\label{rem:AS}
		The strong approximation $Q(\beta_0) + C(\Delta)$ encompasses three asymptotic regimes in a unified framework:
		(1) when both $K$ and $K_\lambda$ are bounded, 
		$Q(\beta_0) + C(\Delta)$ asymptotically follows a weighted non-central chi-squared distribution; 
		(2) when both $K$ and $K_\lambda$ diverge so that a Lindeberg-type condition holds, it converges in distribution to a normal random variable; and (3) 
		%when the number of IVs diverges but the Lindeberg-type condition fails, 
		when $K$ diverges but $K_\lambda$ is bounded, it converges to a mixture of a weighted sum of non-central chi-squared distributions and a normal distribution. These three regimes are discussed separately by \citet[Sections 4, 5, and 6]{KSS2020} in the setting of estimation of variance components. For testing linear restrictions, \cite{Anatolyev-Solvsten(2023)} proposed an analytical inference procedure that is valid under regimes (1) and (2). However, in their setting, where ridge regularization is not employed, the third regime does not arise. 
		A key advantage of our bootstrap procedure and the associated strong approximation results is that they are valid irrespective of the asymptotic regime, including the challenging case with a mixture of distributions. We will provide further details on the regimes in Section \ref{section: asy_power}.

		% The strong approximation $Q(\beta_0) + C(\Delta)$ covers three asymptotic limits in a unified manner: (1) the number of IVs is fixed so that $Q(\beta_0) + C(\Delta)$ asymptotically behaves as a weighted non-central chi-squared random variable, (2) the number of IVs diverges and some Lindeberg-type condition holds so that $Q(\beta_0) + C(\Delta)$ asymptotically behaves like a normal random variable, and (3) the number of IVs diverges but the Lindeberg-type condition fails so that $Q(\beta_0) + C(\Delta)$ asymptotically behaves like a mixture of a weighted non-central chi-squared random variable and a normal random variable.  
	\end{rem}

	\begin{rem}
		Theorem \ref{thm:main_null} is valid under both the null and alternative hypotheses. The nature of the alternatives depends on the magnitude of $||\Pi||_2^2 \Delta^2 / \min\left(K_\lambda^{1/2},K_\lambda^{2/3}\right)$. When  $K_\lambda$ is bounded, we have weak (strong) identification when the concentration parameter $||\Pi||_2^2$ is bounded (diverging). When $K_\lambda$ is diverging, as shown by \cite{MS22}, weak (strong) identification arises when the concentration parameter $||\Pi||_2^2/\sqrt{K_\lambda}$ is bounded (diverging). Under either regime with regard to $K_{\lambda}$, Theorem \ref{thm:main_null} accommodates (i) fixed alternatives under weak identification and (ii) local alternatives scaled by the square root of the concentration parameter under strong identification.
	\end{rem}
	
	\section{Strong Approximation of the Bootstrap Statistic}\label{sec: strong_approx_boot}
	This section concerns the strong approximation of the bootstrap statistic defined in Section \ref{sec:bootstrap CV} in Kolmogorov distance conditionally on data. 
	The approximation in Theorem \ref{thm:Fhat-F} is the same as that for the original statistic under the null hypothesis, as established in Theorem \ref{thm:main_null}, which directly implies that the proposed test with bootstrap critical values achieves a correct asymptotic size. Such a result holds no matter whether the dimension of IVs $K_n$ is fixed or diverging to infinity. 
	
	\begin{thm}
		Let $\mathcal D$ denote all observations in our sample. Suppose Assumption \ref{ass:reg} holds and $\Delta$ is bounded. Then, we have
		\begin{align*}
			\sup_{y \in \Re} |\mathbb P( \widehat Q^*(\beta_0) \leq y | \mathcal D) - \mathbb P(  Q^*(\beta_0) \leq y) | = o_P(1).
		\end{align*}
		%{\color{blue}{Further suppose that $K>1$, then 
				%\begin{align*}
				%\sup_{y \in \Re} |\mathbb P( \widehat Q^*(\hat\beta) \leq y | \mathcal D) - \mathbb P(  Q^*(\tilde \beta) \leq y) | = o_P(1).
				%\end{align*}}}
				\label{thm:Fhat-F}
			\end{thm}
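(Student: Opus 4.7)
The plan is to bridge the conditional distribution of $\widehat Q^*(\beta_0)$ given $\mathcal D$ and the unconditional distribution of $Q^*(\beta_0)$ via two successive comparisons: first a conditional Lindeberg swap of the bootstrap multipliers $\eta_i$ for i.i.d.\ standard normals $g_i$, then a Gaussian-to-Gaussian comparison between two quadratic forms whose coefficient matrices differ entrywise but agree in their distributional summaries. For the first stage, observe that conditionally on $\mathcal D$, $\widehat Q^*(\beta_0)$ is a zero-diagonal quadratic form in the i.i.d.\ mean-zero unit-variance sub-Gaussian multipliers $\eta_i$, with matrix entries $a_{ij} = e_i(\beta_0) e_j(\beta_0) P_{\lambda,ij}/\sqrt{K_\lambda}$ for $i \neq j$. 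Let $\tilde Q^*(\beta_0)$ denote the same quadratic form with $\eta_i$ replaced by i.i.d.\ standard normals $g_i$ independent of $\mathcal D$. A conditional Lindeberg argument for quadratic forms, paralleling the one used in the proof of Theorem \ref{thm:main_null}, yields
\begin{align*}
\sup_{y \in \Re}\bigl|\mathbb P(\widehat Q^*(\beta_0)\leq y \mid \mathcal D) - \mathbb P(\tilde Q^*(\beta_0)\leq y \mid \mathcal D)\bigr| = o_P(1),
\end{align*}
provided data-dependent control quantities such as $\sum_{i \neq j} a_{ij}^4$ and $\max_i \sum_{j \neq i} a_{ij}^2$ are $o_P(1)$. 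This is established by combining the moment bounds on $\tilde e_i$ and $X_i$ (Assumption \ref{ass:reg}.3, together with $\Delta$ bounded, giving $\max_i e_i^{2m}(\beta_0) = O_P(n^{m/q})$) with the rate conditions $p_n n^{3/q} = o(1)$ and $q_n = o(1)$ from Assumption \ref{ass:reg}.5 and the sub-Gaussian tail of $\eta_i$ from Assumption \ref{ass:reg}.6.

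For the second stage, let $A$ denote the (data-dependent) coefficient matrix of $\tilde Q^*(\beta_0)$ and $B$ its deterministic analogue obtained by replacing $e_i(\beta_0)$ with $\tilde\sigma_i(\beta_0)$, so that $Q^*(\beta_0) = g^\top B g$. Both matrices have zero diagonal, so both Gaussian quadratic forms are centered, and their laws are fully pinned down by the sequences $\{\operatorname{tr}(A^k)\}_{k \geq 2}$ and $\{\operatorname{tr}(B^k)\}_{k \geq 2}$ (equivalently, by the characteristic functions $\det(I - 2itA)^{-1/2}$ and $\det(I - 2itB)^{-1/2}$). The key observation is that, by independence of $\{e_i(\beta_0)\}_i$ and the identity $\mathbb E[e_i^2(\beta_0)] = \tilde\sigma_i^2(\beta_0)$, one has $\mathbb E[\operatorname{tr}(A^k)] = \operatorname{tr}(B^k)$ for every $k \geq 2$; a direct variance computation, using Assumption \ref{ass:reg}.5 to bound the resulting sums over closed walks in $P_\lambda$, then yields $\operatorname{tr}(A^k) - \operatorname{tr}(B^k) = o_P(1)$ in the appropriate scaling. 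Combining this spectral-level closeness with a smoothing inequality and a Carbery--Wright-type anti-concentration bound for Gaussian quadratic forms delivers
\begin{align*}
\sup_{y \in \Re}\bigl|\mathbb P(\tilde Q^*(\beta_0)\leq y \mid \mathcal D) - \mathbb P(Q^*(\beta_0)\leq y)\bigr| = o_P(1),
\end{align*}
which together with the first stage completes the proof.

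The main obstacle is the second stage. A naive pathwise coupling fails because the $L^2$ distance between $\tilde Q^*(\beta_0)$ and $Q^*(\beta_0)$ is of order unity: the entries of $A-B$ are themselves $O_P(1)$, so the comparison must be carried out at the level of spectral summaries rather than pathwise. Moreover, the argument must remain uniformly valid across both the fixed-$K$ regime, where $Q^*(\beta_0)$ is a weighted sum of finitely many centered $\chi^2_1$ random variables whose distribution depends sensitively on the matrix eigenvalues, and the diverging-$K$ regime, where $Q^*(\beta_0)$ is approximately Gaussian. Establishing this uniform validity via the trace/characteristic-function machinery is precisely the novel technical contribution that the authors emphasize as new to the literature.
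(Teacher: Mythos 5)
Your overall architecture (a conditional Lindeberg swap plus smoothing and anti-concentration) is in the right spirit, and your Stage 1 (replacing the multipliers $\eta_i$ by Gaussians with the residuals held fixed) is workable, but the proposal has two concrete problems. First, you work throughout with $e_i(\beta_0)=\tilde e_i(\beta_0)-W_i^\top\hat\gamma(\beta_0)$ and invoke ``independence of $\{e_i(\beta_0)\}_i$'' and the identity $\mathbb E[e_i^2(\beta_0)]=\tilde\sigma_i^2(\beta_0)$; neither holds, because the partialled-out residuals share the estimation error $\hat\gamma(\beta_0)$ across $i$. The paper's proof first replaces $\widehat Q^*(\beta_0)$ by the quadratic form in $\eta_i\tilde e_i(\beta_0)$, using $\|\hat\gamma(\beta_0)\|_2=O_P(n^{-1/2})$ together with the anti-concentration lemma; your argument needs the same preliminary reduction. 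Second, and more seriously, your Stage 2 identity $\mathbb E[\tr(A^k)]=\tr(B^k)$ for every $k\geq 2$ is false for $k\geq 4$: closed walks that revisit an index (for example $i\to j\to i\to j\to i$, contributing $e_i^4(\beta_0)e_j^4(\beta_0)P_{\lambda,ij}^4/K_\lambda^2$) bring in fourth and higher moments of the errors, which do not reduce to products of the $\tilde\sigma_i^2(\beta_0)$. One may hope such terms are negligible, but then Stage 2 becomes an approximate moment-matching argument, and the passage from ``$\tr(A^k)-\tr(B^k)=o_P(1)$ for each $k$'' to a bound on the Kolmogorov distance that is uniform across the fixed-$K$ regime (a weighted chi-square law, sensitive to individual eigenvalues) and the diverging-$K$ regime is precisely the hard part of the theorem; you assert it via ``a smoothing inequality and a Carbery--Wright-type bound'' without carrying it out.

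By contrast, the paper performs a single Lindeberg pass replacing $\eta_k\tilde e_k(\beta_0)$ by $g_k\tilde\sigma_k(\beta_0)$ coordinate by coordinate, so the second-moment mismatch appears explicitly as $\sum_{k\in[n]}H_{k,y}\,(\tilde e_k^2(\beta_0)-\tilde\sigma_k^2(\beta_0))$ with $H_{k,y}$ predictable with respect to the filtration generated by the data; this term is controlled uniformly in $y$ by a martingale (Freedman) inequality combined with truncation at the level $n^{1/q}$, a Hanson--Wright bound for the tails of $\lambda_k$ when $|y|$ is large, and a discretization in $y$, after which third-moment (Marcinkiewicz--Zygmund) bounds and the weighted-chi-square anti-concentration lemma finish the argument. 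Your proposal replaces this quantitative, uniform-in-$y$ control of the second-moment mismatch with an unproven spectral comparison, so as written there is a genuine gap at the heart of the proof.
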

			\begin{rem}\label{rem:41}
				Theorem \ref{thm:Fhat-F} remains valid under both the null and alternative hypotheses. In contrast to Theorem \ref{thm:main_null}, it accommodates fixed alternatives even in the presence of strong identification (without requiring $||\Pi||_2^2 \Delta^2 / \min\left(K_\lambda^{1/2},K_\lambda^{2/3}\right)$ to be bounded). This distinction originates from the fact that the alternative hypothesis $\Delta$ affects $Q(\beta_0)$ and $Q^*(\beta_0)$ differently -- introducing non-centrality bias in the former and variance in the latter (e.g., see the non-centrality bias in \eqref{eq:noncentrality} and the definition of $Q^*(\beta_0)$ in \eqref{eq:Q^*}, respectively). The distinction also underpins the power of our dimension-agnostic AR test, which will be analyzed in detail in Section \ref{section: asy_power}.\footnote{Similar phenomenon of an inflated variance under the alternative hypothesis also occurs with the jackknife AR tests using analytical variance estimators that impose the null hypothesis (e.g., \cite{crudu2021}, \cite{MS22}, and \cite{DKM24}) so that the resulting tests can be robust to weak identification.}
			\end{rem}
			
			\begin{rem}
				Theorem \ref{thm:Fhat-F} can be viewed as a general result of strong approximation for the bootstrap version of the quadratic forms. The proof extends the Lindeberg swapping strategy mentioned in the previous section. Indeed, compared with Theorem \ref{thm:main_null}, it is substantially more involved to establish Theorem \ref{thm:Fhat-F} because the second moment of the bootstrap statistic $\widehat Q^*(\beta_0)$ conditional on data is random and does not exactly match that of its strong approximation (i.e., $Q^*(\beta_0)$). We rely on the concentration inequalities for quadratic forms (i.e., Hanson-Wright inequality) and linear forms of martingale difference sequence to bound the approximation error in Kolmogorov distance due to the mismatch of the second moments. This technique seems new to the literature and may be of independent interest. 
			\end{rem}

			\begin{rem}
				In addition, we observe from \eqref{eq:Q}-\eqref{eq:Q^*} that $Q(\beta_0)$ and $Q^*(\beta_0)$ have the same marginal distribution under the null hypothesis ($\Delta=0$). This means that, under the null, the bootstrap statistic closely approximates the test statistic in Kolmogorov distance when conditioned on the data, whether $K_n$ is fixed or diverging.  
				This equivalence forms the basis for our bootstrap test to achieve the correct size. To rigorously validate this assertion, the following regularity condition is required.
			\end{rem}
			
			\begin{ass}\label{ass:den}
				Denote $\mathcal C_{\alpha}(\beta_0) = \inf\{y \in \Re: 1-\alpha \leq F_{\beta_0}(y)\}$, where $F_{\beta_0}(y) = \mathbb P(Q(\beta_0) \leq y)$. Let the $\eps$-neighborhood around $\mathcal C_{\alpha}(\beta_0)$ be $\mathbb B(\mathcal C_{\alpha}(\beta_0), \eps)$. Then, under the null, the density of $Q(\beta_0)$ exists in the neighborhood $\mathbb B(\mathcal C_{\alpha}(\beta_0), \eps)$ and is denoted as $f_{n}(\cdot)$.  In addition, there exits an $\eps>0$ such that $\liminf_{n \rightarrow \infty}    \inf_{y \in \mathbb B(\mathcal C_{\alpha}(\beta_0), \eps)}f_{n}(y) \geq \underline c >0,$
				where $\underline c>0$ is a fixed constant.  
			\end{ass}
			
			\begin{rem}
				We discuss three asymptotic regimes for power analysis in Section \ref{section: asy_power}. In each regime, the limiting distribution of $Q(\beta_0)$ is either normal, weighted chi-squared, or a mixture of the two. This ensures that Assumption \ref{ass:den} holds automatically in all three regimes.
			\end{rem}

			\begin{thm}\label{cor:size}
				Suppose we are under the null hypothesis $\beta = \beta_0$ and Assumptions \ref{ass:reg} and \ref{ass:den} hold. Then, we have
				\begin{align*}
					\mathbb P(\widehat{Q}(\beta_0) > \widehat{\mathcal C}^*_{\alpha}(\beta_0)) \rightarrow \alpha. 
				\end{align*}
			\end{thm}

			% {\color{red}I suggest removing the following remark.}
			% \begin{rem}\label{rem: many-reg} 
				% Although we focus on developing a dimension-robust AR test in this paper, 
				% the strong approximation results for a quadratic form (i.e., Theorem \ref{thm:main_null}) and its bootstrap counterpart (i.e., Theorem \ref{thm:Fhat-F}) with general heteroskedasticity may have many other potential applications. For example, in the context of linear regression models, \cite{kolesar-muller-roelsgaard(2025)} proposed a test procedure of the sparsity assumption, which compares the sum of squared Lasso or post-Lasso residuals with that of OLS residuals from a ``long regression.'' They let the number of regressors diverge with the sample size to obtain a standard normal approximation for the $F$-statistic. %(e.g., see Lemma 3 of Section 5.2 in their paper). 
				% We conjecture that it is possible to develop a dimension-robust version of the test of sparsity by applying our strong approximations, given the same structure of the test statistics. Our results may also be useful in other applications that share a similar structure, such as tests of overidentifying restrictions\footnote{See, e.g., \cite{anatolyev2011}, \cite{lee2012}, \cite{chao2014}, \cite{kolesar2018}, \cite{carrasco2022}, and \cite{fan2024}.} 
				% and nonparametric model specification tests.\footnote{See, e.g., \cite{HS01}, \cite{guerre2005data}, \cite{gao2008bandwidth}, and \cite{li2016consistent}.}
				% \end{rem}
			
			\section{Asymptotic Power}\label{section: asy_power}
			In this section, we discuss the power of the bootstrap inference by focusing on three separate cases: (I) both $K$ and $K_\lambda$ diverge, (II) $K$ diverges but $K_\lambda $ is bounded, and (III) both $K$ and $K_\lambda$ are bounded.  
			%Given that $K_\lambda$ is bounded from above by $\min(K_n,n)$, case (1) corresponds to the case where $K_n$ diverges, potentially even faster than $n$. Alternatively, when $K_n = K$ is fixed,  $K_\lambda$ must be bounded and, thus, belongs to case (2). 
			
			%%%%%%%%%%%%%%%%%%%%%% Power for diverging K
			\subsection{The Case with Diverging $K$ and $K_\lambda$}\label{subsec: power_case1}
			
			%Let $$\Psi(\beta_0) = \frac{2 \sum_{i \in [n]} \sum_{j \in [n], j \neq i}  \breve %\sigma^2_i(\beta_0) P_{\lambda,ij}^2 \breve \sigma_j^2(\beta_0) }{K_\lambda },$$ 
			%\begin{align*}
			%    \breve \sigma_i^2(\beta_0) = Var(\tilde e_i + \tilde v_i \Delta) = \tilde %\sigma_i^2 + 2 \Delta \tilde \tau_i + \Delta^2 \tilde \varsigma_i^2, 
			%\end{align*}
			
			To proceed, we let  
			$\Psi(\beta_0) = 2 \sum_{i \in [n]} \sum_{j \in [n], j \neq i}  \tilde \sigma^2_i(\beta_0) \Xi_{\lambda,ij}^2 \tilde \sigma_j^2(\beta_0)/K_\lambda$, 
			%{\color{blue}{\Psi(\tilde \beta) = \frac{2 \sum_{i \in [n]} \sum_{j \in [n], j \neq i}  \breve \sigma^2_i(\tilde \beta) P_{\lambda,ij}^2 \breve \sigma_j^2(\tilde \beta) }{K_\lambda }, \quad
					%\breve \sigma_i^2(\tilde \beta) = Var(\tilde e_i + \Delta_{\tilde\beta} \tilde v_i ) = \tilde \sigma_i^2 + 2 \Delta_{\tilde\beta} \tilde \tau_i + \Delta^2_{\tilde\beta} \tilde \varsigma_i^2.}} \\
			%\end{align*}
			where $\tilde \sigma_i^2(\beta_0) = Var(\tilde e_i (\beta_0) ) = \tilde \sigma_i^2 + 2 \Delta \tilde \tau_i + \Delta^2 \tilde \varsigma_i^2$.
			
			\begin{ass}\label{ass:alt_divK}
				\begin{enumerate}
					\item $K \rightarrow \infty$, $K_{\lambda} \rightarrow \infty$,  and $||\Pi||_2^2 \Delta^2/\sqrt{K_{\lambda}}$ is bounded. 
					\item $\Delta$ and $\max_{i \in [n]}|\Pi_i|$ are bounded. 
					\item $\Psi^{-1/2}(\beta_0) \frac{ \sum_{i \in [n]} \sum_{j \in [n], j \neq i} \Pi_i P_{\lambda,ij} \Pi_j \Delta^2 }{\sqrt{K_\lambda} } \rightarrow \mu(\beta_0)$.  
				\end{enumerate}    
			\end{ass}
			
			\begin{thm}\label{thm:power_divK}
				Suppose Assumptions \ref{ass:reg} and \ref{ass:alt_divK} hold. Then, we have
				\begin{align*}
					\mathbb P(\widehat Q(\beta_0) >  \widehat C^*_\alpha(\beta_0)) \rightarrow \mathbb P \left( \N(\mu(\beta_0),1) >  z_\alpha \right),    
				\end{align*}
				where $\N(\mu,1)$ is a normal random variable with mean $\mu$ and unit variance and $z_\alpha$ is the $(1-\alpha)$ quantile of a standard normal random variable. 
			\end{thm}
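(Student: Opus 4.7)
The plan is a Slutsky argument resting on two ingredients. First, the strong approximations from Theorems \ref{thm:main_null} and \ref{thm:Fhat-F} let us replace, in Kolmogorov distance, $\widehat Q(\beta_0)$ and the conditional law of $\widehat Q^*(\beta_0)$ by their Gaussian surrogates $Q(\beta_0)$ and $Q^*(\beta_0)$. Second, under $K_\lambda \to \infty$ I will show $Q(\beta_0)/\sqrt{\Psi(\beta_0)} \convD \N(\mu(\beta_0),1)$ and $Q^*(\beta_0)/\sqrt{\Psi(\beta_0)} \convD \N(0,1)$. Combining these reduces the test to comparing an asymptotically normal statistic against an asymptotically deterministic critical value.

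\textbf{Decomposition of $Q(\beta_0)$.} Write $Q(\beta_0) = A_n + B_n + C_n$, where
\begin{align*}
A_n &= K_\lambda^{-1/2}\sum_{i\neq j} g_i \breve\sigma_i(\beta_0) P_{\lambda,ij} g_j \breve\sigma_j(\beta_0), \\
B_n &= 2\Delta\, K_\lambda^{-1/2}\sum_{i\neq j} g_i \breve\sigma_i(\beta_0) P_{\lambda,ij} \Pi_j,
\end{align*}
and $C_n = \Delta^2 K_\lambda^{-1/2}\sum_{i\neq j}\Pi_i P_{\lambda,ij}\Pi_j$ is deterministic. A direct second-moment calculation gives $\text{Var}(A_n) = \Psi(\beta_0)$ exactly, while Assumption \ref{ass:alt_divK}.3 yields $C_n/\sqrt{\Psi(\beta_0)}\to\mu(\beta_0)$. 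Using $||P_\lambda||_{op}\leq 1$, boundedness of $\max_i \breve\sigma_i^2(\beta_0)$, and $||\Pi||_2^2\Delta^2 = O(\sqrt{K_\lambda})$ from Assumption \ref{ass:alt_divK}.1, one has $\text{Var}(B_n)\leq C\Delta^2 ||\Pi||_2^2/K_\lambda = O(K_\lambda^{-1/2})$, which is $o(\Psi(\beta_0))$ under the non-degeneracy $\liminf_n \Psi(\beta_0) > 0$ implicit in Assumption \ref{ass:alt_divK}.3. Hence $B_n/\sqrt{\Psi(\beta_0)} = o_p(1)$.

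\textbf{Chaos CLT and quantile convergence.} Write $A_n = g^\top M g/\sqrt{K_\lambda}$ with symmetric $M = D P_\lambda^0 D$, $D = \diag(\breve\sigma_i(\beta_0))$, and $P_\lambda^0 = P_\lambda - \diag(P_\lambda)$; diagonalizing $M$ gives $A_n = K_\lambda^{-1/2}\sum_k \lambda_k (Z_k^2 - 1)$ with iid $Z_k \sim \N(0,1)$. The estimates $\max_k |\lambda_k| \leq ||M||_{op} \leq 2\max_i \breve\sigma_i^2(\beta_0) \leq C$ and $\sum_k \lambda_k^2 = \text{tr}(M^2) = K_\lambda \Psi(\beta_0)/2$ yield $\max_k \lambda_k^2/\sum_k \lambda_k^2 = O(K_\lambda^{-1}) \to 0$, so the Lindeberg CLT for weighted chi-squares gives $A_n/\sqrt{\Psi(\beta_0)} \convD \N(0,1)$, and hence $Q(\beta_0)/\sqrt{\Psi(\beta_0)} \convD \N(\mu(\beta_0),1)$. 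The same CLT applied to $Q^*(\beta_0)$ gives $Q^*(\beta_0)/\sigma_n^\ast \convD \N(0,1)$, where $(\sigma_n^\ast)^2 = (2/K_\lambda)\sum_{i\neq j} P_{\lambda,ij}^2 \tilde\sigma_i^2(\beta_0) \tilde\sigma_j^2(\beta_0)$; the identity $\tilde\sigma_i^2(\beta_0) - \breve\sigma_i^2(\beta_0) = \Delta^2 \Pi_i^2$ together with the same operator-norm bound produces $(\sigma_n^\ast)^2 - \Psi(\beta_0) = O(K_\lambda^{-1/2})$, so $Q^*(\beta_0)/\sqrt{\Psi(\beta_0)} \convD \N(0,1)$ as well. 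Combining this unconditional limit with the conditional Kolmogorov closeness of Theorem \ref{thm:Fhat-F} and the continuity of the standard normal cdf yields $\widehat C^*_\alpha(\beta_0)/\sqrt{\Psi(\beta_0)} \convP \mathcal Z(1-\alpha)$; a final Slutsky step applied to $\widehat Q(\beta_0)/\sqrt{\Psi(\beta_0)}$ and $\widehat C^*_\alpha(\beta_0)/\sqrt{\Psi(\beta_0)}$ delivers the claim.

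\textbf{Main obstacle.} The technical heart is twofold: establishing the chaos CLT for $A_n$, which reduces to $||M||_{op}/||M||_F = O(K_\lambda^{-1/2})$ (automatic here because $||P_\lambda||_{op}\leq 1$ and $K_\lambda \to \infty$), and matching the bootstrap variance $(\sigma_n^\ast)^2$ to $\Psi(\beta_0)$ despite the $\Delta^2\Pi_i^2$ mismatch---this is precisely where the noncentrality control $||\Pi||_2^2\Delta^2 = O(\sqrt{K_\lambda})$ from Assumption \ref{ass:alt_divK}.1 is essential. Translating conditional Kolmogorov closeness of $\widehat Q^*(\beta_0)$ into convergence of its $(1-\alpha)$ quantile is then routine given the strictly positive continuous density of the $\N(0,1)$ limit at $\mathcal Z(1-\alpha)$.
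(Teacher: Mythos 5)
Your proposal is correct and follows essentially the same route as the paper: reduce via the strong approximations of Theorems \ref{thm:main_null} and \ref{thm:Fhat-F}, decompose $Q(\beta_0)$ into the Gaussian quadratic form, the cross term (variance $O(\|\Pi\|_2^2\Delta^2/K_\lambda)=o(1)$), and the deterministic noncentrality term, prove the CLT for the quadratic form by diagonalization and a Lindeberg/Lyapunov check using $\|M\|_{op}\leq C$ and $\mathrm{tr}(M^2)=K_\lambda\Psi(\beta_0)/2$, and match the bootstrap variance to $\Psi(\beta_0)$ via $\tilde\sigma_i^2(\beta_0)-\breve\sigma_i^2(\beta_0)=\Delta^2\Pi_i^2$ before the final Slutsky/quantile-convergence step. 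No substantive difference from the paper's argument.
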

			
			\begin{rem}\label{rem:lindeberg}
				Let us denote $(\varpi_1,\cdots,\varpi_n)$ as the eigenvalues of the matrix
				\begin{align*}
					\diag(\tilde \sigma_1(\beta_0),\cdots,\tilde \sigma_n(\beta_0))  \Xi_{\lambda}  \diag(\tilde \sigma_1(\beta_0),\cdots,\tilde \sigma_n(\beta_0)),
				\end{align*}
				ordered such that $|\varpi_1| \geq |\varpi_2| \geq \cdots \geq |\varpi_n|$. From the proof of Theorem \ref{thm:power_divK}, we note that under the null,
				\begin{align*}
					\widehat Q(\beta_0) = \frac{\sum_{i=1}^n (g_i^2 -1)\varpi_i}{\sqrt{K_{\lambda}}}  + o_P(1),
				\end{align*}
				where $\{g_i\}_{i \in [n]}$ is an i.i.d. sequence of standard normal random variables.  Furthermore, we have
				\begin{align*}
					\varpi_1 = O(1) \quad \text{and} \quad \sum_{i \in [n]} \varpi_i^2 \geq \underline c K_\lambda,
				\end{align*}
				for some constant $\underline c > 0$. This implies when $K_\lambda \to \infty$,
				\begin{align*}
					\frac{\varpi_1^2}{\sum_{i \in [n]} \varpi_i^2 } = o(1),
				\end{align*}
				which is a Lindeberg-type condition that guarantees asymptotic normality of the test statistic, as established in Theorem \ref{thm:power_divK}.
			\end{rem}
			
			\begin{rem}\label{rem: power}
				When $d_w = o(\sqrt{n})$, Theorem \ref{thm:power_divK} holds if we replace $\Xi_{\lambda}$ by $P_{\lambda}$ in the definition of $\Psi(\beta_0)$ as the effect of partialling out controls is asymptotically negligible. If $K_n < n$ and we set $\lambda = 0$ so that $P_\lambda = P$ (i.e., without ridge regularization), the local power of our test is asymptotically equivalent to that of the jackknife AR tests proposed by \cite{crudu2021} and \cite{MS22}.\footnote{\cite{crudu2021} and \cite{MS22} proposed different variance estimators for the jackknife AR statistic. Under local alternatives characterized under Assumption \ref{ass:alt_divK}, the two variance estimators are asymptotically equivalent.} 
				% On the other hand, simulations in Section \ref{section: monte-carlo-simulation} suggest that our test with the proposed regularizer in \eqref{eq:lambda} can achieve substantial power improvement over the jackknife AR tests under many IVs.\footnote{Additionally, it may be interesting to study if our regularizer can improve the power of other tests with quadratic forms such as those in Remark \ref{rem: many-reg}. We leave this line of investigation for future research.} 
				
				In general, the regularizer $\lambda$ can affect the power through $\mu(\beta_0)$, which depends on $P_{\lambda}$ and $K_\lambda$. 
				Specifically, following Remark \ref{rem:41}, we note that the alternative $\Delta$
				affects the limiting distribution through (1) the non-centrality bias of the test statistic $\widehat Q(\beta_0)$, given by 
				\begin{align}\label{eq: non-central}
					\frac{ \sum_{i \in [n]} \sum_{j \in [n], j \neq i} \Pi_i P_{\lambda,ij} \Pi_j \Delta^2 }{\sqrt{K_\lambda} } \equiv \mathbb{C}_\lambda \Delta^2, 
				\end{align}
				where $\mathbb{C}_\lambda$ denotes the concentration parameter under the ridge regularization, and (2) the variance of the statistic, captured by $\Psi(\beta_0) = \tilde \sigma_i^2 + 2 \Delta \tilde \tau_i + \Delta^2 \tilde \varsigma_i^2$. Both components contribute to the mean $\mu(\beta_0)$ of the limiting distribution.
				
				Furthermore, we note that \eqref{eq: non-central} also motivates our choice of the regularizer $\lambda$. 
				In particular, as we restrict the upper bound $\bar\theta$ for the regularizer to be $||Z^\top Z||_{op}$, the ridge regularization $\lambda I_{K_n}$ will not dominate $Z^\top Z$. Then it is plausible that the numerator of the concentration parameter, i.e., $\sum_{i,j \in [n]^2, i \neq j}\Pi_i P_{\lambda, ij} \Pi_j$ does not change order for the range of $\lambda$ we consider. On the other hand, as $\lambda$ increases, the effective rank $K_\lambda$ decreases, which typically causes the non-centrality in \eqref{eq: non-central} to increase and thus lead to power improvement. Notice that it is possible for 
				$\mathbb{C}_\lambda$ in \eqref{eq: non-central} to achieve a higher order of magnitude than the concentration parameter without ridge regularization 
				\begin{align}\label{eq: non-central-no-ridge}
					\frac{ \sum_{i \in [n]} \sum_{j \in [n], j \neq i} \Pi_i P_{ij} \Pi_j }{\sqrt{K}}, 
				\end{align}
				as long as $K_\lambda = o(K)$ (e.g., $K$ diverges but $K_\lambda$ is fixed). 
				Such an advantage of regularization has been pointed out in previous studies, such as \cite{carrasco2015, carrasco2016efficient} and \cite{carrasco2017}.
				In the next section, we further study in detail the case where $K_\lambda$ is bounded while $K$ diverges.
			\end{rem}
			
			%{\color{blue}{
					%\begin{ass}\label{ass:alt_divK_J}
					%\begin{enumerate}
					%    \item $K_{\lambda} \rightarrow \infty$, and $(||\pi||_2^2 + ||\Pi||_2^2 \Delta_{\tilde\beta}^2)/\sqrt{K_{\lambda}}$ is bounded. 
					%    \item $\Delta_{\tilde\beta}$, $\max_{i \in [n]}|\pi_i|$, and $\max_{i \in [n]}|\Pi_i|$ are bounded. 
					%    \item $\Psi^{-1/2}(\tilde \beta) \frac{ \sum_{i \in [n]} \sum_{j \in [n], j \neq i} (\pi_i + \Delta_{\tilde\beta} \Pi_i) P_{\lambda,ij} (\pi_j + \Delta_{\tilde\beta}\Pi_j) }{\sqrt{K_\lambda} } \rightarrow \mu(\tilde\beta)$.  
					%\end{enumerate}    
					%\end{ass}
					
					% \begin{thm}\label{thm:power_divK_J}
						% Suppose Assumptions \ref{ass:reg} and \ref{ass:alt_divK_J} hold. Then, we have
						% \begin{align*}
							%   \mathbb P(\widehat Q(\hat\beta) >  \widehat C^*_\alpha(\hat\beta)) \rightarrow \mathbb P \left( \N(\mu(\tilde\beta),1) > \mathcal Z(1-\alpha) \right).    
							% \end{align*}
						
						% \end{thm}
					
					\subsection{The Case with Diverging $K$ but Bounded $K_\lambda$}\label{subsec:power_case2} 
					Following Remark \ref{rem:lindeberg}, we now consider the case where $K_\lambda$ remains bounded, resulting in the failure of the Lindeberg-type condition for asymptotic normality.
					
					\begin{ass}\label{ass:alt_divK2}
						\begin{enumerate}
							\item Suppose there exists a fixed positive integer $R$ such that 
							\begin{align*}
								\frac{\varpi_i}{ (\sum_{j \in [n]} \varpi_i^2)^{1/2} } \rightarrow r_i \neq 0, \quad \forall i = 1,\cdots,R, \quad \text{and} \quad \frac{\varpi_{R+1}^2}{ \sum_{i \in [n]} \varpi_i^2 } = o(1). 
							\end{align*}
							\item Denote $(\varpi_1^*,\cdots,\varpi_n^*)$ as the eigenvalues of the matrix
							\begin{align*}
								\diag(\breve \sigma_1(\beta_0),\cdots,\breve \sigma_n(\beta_0))  \Xi_{\lambda}  \diag(\breve \sigma_1(\beta_0),\cdots,\breve \sigma_n(\beta_0)),
							\end{align*}
							ordered such that $|\varpi_1^*| \geq |\varpi_2^*| \geq \cdots \geq |\varpi_n^*|$. Suppose there exists a fixed positive integer $R^*$ such that 
							\begin{align*}
								\frac{\varpi_{i}^*}{ \left(\sum_{j \in [n]} \varpi_i^{*2} \right)^{1/2} } \rightarrow r_i^* \neq 0, \quad \forall i = 1,\cdots,R^*, \quad \text{and} \quad \frac{\varpi_{R^*+1}^{*2}}{ \sum_{i \in [n]} \varpi_i^{*2} } = o(1). 
							\end{align*}
							\item Suppose        $||\Pi||_2^2 \Delta^2/\sqrt{K_{\lambda}}$, $\Delta$, and $\max_{i \in [n]}|\Pi_i|$ are bounded. 
							\item $\Psi^{-1/2}(\beta_0) \frac{ \sum_{i \in [n]} \sum_{j \in [n], j \neq i} \Pi_i P_{\lambda,ij} \Pi_j \Delta^2 }{\sqrt{K_\lambda} } \rightarrow \mu(\beta_0)$ and $\frac{\breve \Psi(\beta_0)}{\Psi(\beta_0)} \rightarrow \psi(\beta_0)>0$, where   $\breve \Psi(\beta_0) = 2 \sum_{i \in [n]} \sum_{j \in [n], j \neq i}  \breve \sigma^2_i(\beta_0) \Xi_{\lambda,ij}^2 \breve \sigma_j^2(\beta_0)/K_\lambda$.
						\end{enumerate}
					\end{ass}
					
					\begin{thm}\label{thm:power_divK_fixedKlambda}
						Suppose Assumptions \ref{ass:reg} and \ref{ass:alt_divK2} hold. Then,  we have
						\begin{align*}
							\mathbb P(\widehat Q(\beta_0) >  \widehat C^*_\alpha(\beta_0)) \rightarrow \mathbb P \left(\chi(\{r_i\}_{i \in [R]}) + \mu(\beta_0) >   \psi^{1/2}(\beta_0) \mathcal C_\alpha (\{r_i^*\}_{i \in [R^*]}) \right), 
						\end{align*}
						where the random variable $\chi(\{r_i\}_{i \in [R]})$  has the distribution
						$$\frac{\sum_{ i \in [R]} (g_i^2 -1) r_i}{\sqrt{2}} +  \left(1-\sum_{i \in [R]}r_i^2 \right)^{1/2} g_{R+1},$$
						with $\{g_i\}_{i \in [R+1]}$ being i.i.d. standard normal random variables, and $\mathcal C_\alpha (\{r_i\}_{i \in [R]})$ is the $(1-\alpha)$-th quantile of $\chi(\{r_i\}_{i \in [R]})$.
					\end{thm}
					
					\begin{rem}\label{rem:KSS}
						When the Lindeberg-type condition fails due to $K_\lambda$ being bounded, the limiting distribution of our test statistic becomes a mixture of a weighted sum of chi-squared random variables and a standard normal random variable. This is similar to the scenario described in \citet[Sections 6 and 7]{KSS2020} and \cite{YGZ24}. Analytical inference in this regime is difficult, as it requires estimating the number of dominant eigenvalues $R$ driving the asymptotic distribution or reporting (the union of) confidence intervals corresponding to consecutive values of $R$ (see, e.g., Section 7.2 of \cite{KSS2020}).
						A key advantage of our bootstrap inference procedure is that it does not require prior knowledge of the number of dominant eigenvalues $R$ or associated weights $\{r_i\}_{i \in [R]}$, since it is valid regardless of the asymptotic regime. In our simulations in Section \ref{section: monte-carlo-simulation} with $K = 160$, with our choice of $\lambda$, we observe one dominant eigenvalue ($R= 1$) and $r_1 = \sqrt{0.948}$.
						Furthermore, in Section \ref{empirical-application}, we observe that $K_\lambda$ is equal to 2.015 and 1.550, respectively, for the specification with 38 and 342 IVs, suggesting that this regime applies to our empirical application of \cite{card(2009)}'s dataset as well.
						
						%{\color{orange}{Add empirical example of third regime.}}
					\end{rem}
					
					\begin{rem}
						As mentioned earlier, the alternative $\Delta$ affects the location and scale of the test statistic and the bootstrap critical value, represented by $\mu(\beta_0)$ and $\psi(\beta_0)$, respectively. When $K_\lambda$ diverges, the scale effect becomes asymptotically negligible, as indicated by $\psi(\beta_0) = 1$ in Theorem \ref{thm:power_divK}. However, when $K_\lambda$ is bounded, $\psi(\beta_0)$ may differ from one, and the scale effect remains relevant in the limiting distribution. 
					\end{rem}
					
					\begin{rem}
						Furthermore, we note that the ridge-regularized concentration parameter $\mathbb{C}_\lambda$ in \eqref{rem: power} can achieve a higher divergence rate than that without regularization, given that $K_\lambda$ is bounded while $K$ diverges. In particular, as established by \citet[Theorems 1 and 4]{MS22}, $\sum_{i \in [n]}\sum_{j \in [n], j\neq i}\Pi_i P_{ij}\Pi_j /\sqrt{K} \rightarrow \infty$ is required for the jackknife AR test (without regularization) to be consistent. By contrast, with the help of regularization, our test only requires 
						$\sum_{i \in [n]}\sum_{j \in [n], j\neq i}\Pi_i P_{\lambda,ij}\Pi_j \rightarrow \infty$ to be consistent if $K_\lambda$ is bounded.  
					\end{rem}

					\subsection{The Case with Bounded  $K$ and $K_\lambda$} \label{subsection:power_fixed_instruments}
					In this section, we consider the power property of our bootstrap AR test in the asymptotic framework that the dimension of $Z$ (i.e., $K_n = K$) is fixed. To rigorously state the regularity conditions, we recall the singular value decomposition of $Z$ as
					$Z = \mathcal U \mathcal S \mathcal V^\top,$ 
					where $\mathcal U \in \Re^{n \times n}$, $\mathcal U^\top \mathcal U = I_n$, $\mathcal S = [ S_0, 0_{K, n-K}]^\top$, $S_0$ is a diagonal matrix of non-zero singular values, $0_{K, n-K} \in \Re^{K \times (n-K)}$ is a matrix of zeros, $\mathcal V \in \Re^{K \times K}$, and $\mathcal V^\top \mathcal V = I_K$. Denote $\mathcal U = [\mathcal U_1,\mathcal U_2]$ such that $\mathcal U_1 \in \Re^{n \times K}$, $\mathcal U_2 \in \Re^{n \times (n-K)}$, $\mathcal U_1^\top \mathcal U_1 = I_K$, $\mathcal U_1^\top \mathcal U_2 = 0_{K,n-K}$, and $\mathcal U_2^\top \mathcal U_2 = I_{n-K}$. Further denote $\Omega(\beta_0) \equiv \mathcal U_1^\top  \diag(\tilde \sigma_1^2(\beta_0),\cdots,\tilde \sigma_n^2(\beta_0) ) \mathcal U_1$ and the eigenvalue decomposition 
					\begin{align*}
						\lim_{n \rightarrow \infty}\frac{   \Omega^{1/2}(\beta_0) S_0 (S_0^2 + \lambda I_K)^{-1} S_0  \Omega^{1/2}(\beta_0) }{\sqrt{K_\lambda}} = \mathbb U \diag(\omega_1,\cdots,\omega_K) \mathbb U^\top.
					\end{align*}
					Last, denote $\nu(\beta_0) = \lim_{n \rightarrow \infty} \mathbb U^\top \Omega^{-1/2}(\beta_0) \Delta \mathcal U_1^\top \Pi$. 
					
					\begin{ass}\label{ass:alt_fixK}
						\begin{enumerate}
							\item Suppose the IVs $Z$ have a fixed dimension $K$, $\left(\max_{i \in [n]}P_{\lambda,ii}\right) d_w  = o(1)$, and $K_\lambda \geq c$ for some constant $c>0$. 
							\item $\max_{i \in [n]}||\mathcal{U}_{1,i}||_2 = o(1)$, where $\mathcal{U}_{1,i}^\top \in \Re^{1 \times K}$ is the $i$-th row of $\mathcal{U}_1$. 
							\item $||\Pi||_2^2 \Delta^2 /\sqrt{K_\lambda}$ is bounded.
						\end{enumerate}
					\end{ass}
					
					The following theorem establishes our AR test's power property in the fixed $K_\lambda$ scenario. 
					\begin{thm}\label{thm:power_fixK}
						Suppose Assumptions  \ref{ass:reg} and  \ref{ass:alt_fixK} hold. Then, we have
						\begin{align*}
							\mathbb P(\widehat Q(\beta_0) >  \widehat C^*_\alpha(\beta_0)) \rightarrow \mathbb P \left(\sum_{k\in [K]} \omega_k \chi_k^2(\nu_k^2(\beta_0)) > \mathcal C_{\omega}(1-\alpha) \right), 
						\end{align*}
						where $\omega = (\omega_1,\cdots,\omega_K)$, $\{ \chi_k^2(\nu_k^2(\beta_0))\}_{k \in [K]}$ is a sequence of independent non-central chi-squared random variables with one degree of freedom and noncentrality parameter $\nu_k^2(\beta_0)$, $\nu_k(\beta_0)$ is the $k$-th element of $\nu(\beta_0)$, $\mathcal C_{\omega}(1-\alpha)$ is the $(1-\alpha)$ quantile of a weighted chi-squared random variable $\sum_{k \in [K]} \omega_k \chi^2_k$, and $\{\chi^2_k\}_{k \in [K]}$ is a sequence of i.i.d. centered chi-squared random variables with one degree of freedom. 
					\end{thm}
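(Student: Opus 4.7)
The plan is to combine the strong approximations in Theorems \ref{thm:main_null} and \ref{thm:Fhat-F} with a direct computation of the limiting distributions of $Q(\beta_0)$ and $Q^*(\beta_0)$ in the fixed-$K$ regime, where the low rank of $P_\lambda$ reduces the analysis to a $K$-dimensional problem. First, Theorem \ref{thm:main_null} lets me replace $\widehat Q(\beta_0)$ by $Q(\beta_0)$ in Kolmogorov distance, and Theorem \ref{thm:Fhat-F} lets me replace $\widehat{\mathcal C}^*_\alpha(\beta_0)$ by the $(1-\alpha)$-quantile of $Q^*(\beta_0)$ (assuming the limit distribution of $Q^*$ has a positive density at its $(1-\alpha)$-quantile, in analogy with Assumption \ref{ass:den}). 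It therefore suffices to characterize the limits of $Q(\beta_0)$ and of the quantile of $Q^*(\beta_0)$.

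Next, using the SVD $Z = \mathcal U \mathcal S \mathcal V^\top$ I write $P_\lambda = \mathcal U_1 A \mathcal U_1^\top$ with $A = S_0(S_0^2 + \lambda I_K)^{-1} S_0$. Setting $D = \diag(\breve\sigma_i(\beta_0))$ and $\tilde D = \diag(\tilde\sigma_i(\beta_0))$, the two statistics become
\begin{align*}
Q(\beta_0) = \frac{\xi^\top A \xi}{\sqrt{K_\lambda}} - R_n, \qquad Q^*(\beta_0) = \frac{\zeta^\top A \zeta}{\sqrt{K_\lambda}} - R_n^*,
\end{align*}
where $\xi = \mathcal U_1^\top(Dg + \Delta \Pi) \sim \N(\Delta \mathcal U_1^\top \Pi, \Omega(\beta_0))$, $\zeta = \mathcal U_1^\top \tilde D g \sim \N(0,\tilde\Omega(\beta_0))$ with $\tilde\Omega(\beta_0) = \mathcal U_1^\top \diag(\tilde\sigma_i^2(\beta_0))\mathcal U_1$, and $R_n, R_n^*$ are the diagonal corrections $\sum_i P_{\lambda,ii}(g_i\breve\sigma_i+\Delta\Pi_i)^2/\sqrt{K_\lambda}$ and $\sum_i P_{\lambda,ii}g_i^2 \tilde\sigma_i^2/\sqrt{K_\lambda}$. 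Using $\max_i P_{\lambda,ii}=o(1)$ together with Chebyshev, both concentrate on the common deterministic sequence $\sum_i P_{\lambda,ii}(\breve\sigma_i^2 + \Delta^2\Pi_i^2)/\sqrt{K_\lambda} = \sum_i P_{\lambda,ii}\tilde\sigma_i^2/\sqrt{K_\lambda}$, because $\tilde\sigma_i^2 = \breve\sigma_i^2 + \Delta^2\Pi_i^2$; hence $R_n - R_n^* = o_P(1)$ and the diagonal pieces will cancel when $Q$ is compared to the $Q^*$-quantile.

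For the main quadratic forms I diagonalize. Writing $\xi = \Delta\mathcal U_1^\top \Pi + \Omega^{1/2}Z_0$ with $Z_0 \sim \N(0,I_K)$ and using the assumed eigendecomposition $\Omega^{1/2}A\Omega^{1/2}/\sqrt{K_\lambda}\to \mathbb U \diag(\omega)\mathbb U^\top$, the transformation $\tilde Z_0 = \mathbb U^\top Z_0 \sim \N(0,I_K)$ gives
\begin{align*}
\xi^\top A \xi / \sqrt{K_\lambda} \rightsquigarrow \sum_{k\in[K]} \omega_k (\tilde Z_{0,k} + \nu_k(\beta_0))^2 \stackrel{d}{=} \sum_{k \in [K]} \omega_k \chi_k^2(\nu_k^2(\beta_0)),
\end{align*}
where the cross and mean terms are identified using the definition of $\nu(\beta_0)$. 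The analogous calculation for $\zeta^\top A\zeta/\sqrt{K_\lambda}$ yields $\sum_k \tilde\omega_k \chi_k^2$, with $\tilde\omega$ the eigenvalues of $\tilde\Omega^{1/2}A\tilde\Omega^{1/2}/\sqrt{K_\lambda}$. Under the bound $\|\Pi\|_2^2\Delta^2/\sqrt{K_\lambda}=O(1)$ with $K_\lambda$ bounded, one controls $\tilde\Omega-\Omega = \Delta^2 \mathcal U_1^\top \diag(\Pi_i^2)\mathcal U_1$, which has operator norm $O(\Delta^2)$, and shows $\tilde\omega \to \omega$ in the appropriate regime. Combined with the cancellation of the diagonal corrections, this yields $\widehat{\mathcal C}^*_\alpha(\beta_0) \to \mathcal C_\omega(1-\alpha)$ and
\begin{align*}
\mathbb P(\widehat Q(\beta_0) > \widehat{\mathcal C}^*_\alpha(\beta_0)) \to \mathbb P\Bigl(\sum_{k\in[K]} \omega_k \chi_k^2(\nu_k^2(\beta_0)) > \mathcal C_\omega(1-\alpha)\Bigr),
\end{align*}
via a continuous-mapping argument for the joint convergence of $(Q(\beta_0),\widehat{\mathcal C}^*_\alpha(\beta_0))$.

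The hardest step will be the precise reconciliation of $\tilde\Omega$ with $\Omega$ in the limit, because a priori the bootstrap's quadratic form is built from $\tilde\sigma_i(\beta_0)$ while the stated weights $\omega$ are defined through $\breve\sigma_i(\beta_0)$. Verifying that this difference is genuinely negligible under Assumption \ref{ass:alt_fixK} (rather than only under a vanishing-$\Delta$ local alternative) will require exploiting the structure of $\mathcal U_1$ together with Assumption \ref{ass:alt_fixK}.2 and the fact that $\max_i|\Pi_i|$ is bounded. A secondary subtlety is ensuring continuity of the limiting weighted-chi-squared distribution at the $(1-\alpha)$-quantile so that the $o_P(1)$ Kolmogorov-distance bound in Theorem \ref{thm:Fhat-F} upgrades to convergence of the bootstrap quantile.
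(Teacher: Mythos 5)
Your route is the same as the paper's: reduce $\widehat Q(\beta_0)$ and $\widehat{\mathcal C}^*_\alpha(\beta_0)$ to $Q(\beta_0)$ and the quantile of $Q^*(\beta_0)$ via Theorems \ref{thm:main_null} and \ref{thm:Fhat-F}, add back the diagonal and show (using $\max_i P_{\lambda,ii}=o(1)$ and Chebyshev) that both diagonal corrections concentrate on the common deterministic shift $\sum_i \tilde\sigma_i^2(\beta_0)P_{\lambda,ii}/\sqrt{K_\lambda}$, pass to the $K$-dimensional quadratic forms through the SVD $P_\lambda=\mathcal U_1 S_0(S_0^2+\lambda I_K)^{-1}S_0\mathcal U_1^\top$, and diagonalize to obtain $\sum_k\omega_k\chi_k^2(\nu_k^2(\beta_0))$ for $Q$ and a weighted centered chi-square for $Q^*$. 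Your closing caveat about continuity of the limiting weighted chi-square at its $(1-\alpha)$ quantile is harmless: since $A$ is positive semidefinite and $K_\lambda\geq c>0$ forces at least one strictly positive weight, the limit law is continuous and strictly increasing there, so no analogue of Assumption \ref{ass:den} is needed.

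The one place where your proposal stops short is exactly the step you flag: showing that the bootstrap weights, i.e.\ the eigenvalues of $\tilde\Omega^{1/2}(\beta_0)S_0(S_0^2+\lambda I_K)^{-1}S_0\tilde\Omega^{1/2}(\beta_0)/\sqrt{K_\lambda}$, agree in the limit with $\omega$. Your observation that $\tilde\Omega(\beta_0)-\Omega(\beta_0)=\Delta^2\,\mathcal U_1^\top\diag(\Pi_i^2)\mathcal U_1$ has operator norm $O(\Delta^2)$ does not close this, because under Assumption \ref{ass:alt_fixK} $\Delta$ is not required to vanish (only $\|\Pi\|_2^2\Delta^2/\sqrt{K_\lambda}$ is bounded), so $O(\Delta^2)$ is merely $O(1)$ while you need $o(1)$. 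The paper closes the gap as follows: the operator-norm difference of the two scaled matrices is bounded by a constant times $\|\tilde\Omega^{1/2}(\beta_0)-\Omega^{1/2}(\beta_0)\|_{op}/\sqrt{K_\lambda}$; the matrix square-root perturbation inequality $\|A^{1/2}-B^{1/2}\|\leq\|A-B\|^{1/2}$ for positive semidefinite $A,B$ (\citet[Theorem X.1.1]{bhatia2013}) transfers this to $\tilde\Omega(\beta_0)-\Omega(\beta_0)=\Delta^2\sum_i\mathcal U_{1,i}\mathcal U_{1,i}^\top\Pi_i^2$, whose norm is then controlled by combining $\max_i\|\mathcal U_{1,i}\|_2=o(\sqrt n)$ (Assumption \ref{ass:alt_fixK}.2) with the bound on $\Pi^\top\Pi\Delta^2/\sqrt{K_\lambda}$ (Assumption \ref{ass:alt_fixK}.3), yielding that the two limiting matrices $A$ and $A^*$ coincide, hence $\tilde\omega_k=\omega_k$ exactly rather than merely approximately. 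So you identified the right ingredients ($\mathcal U_1$-structure and Assumption \ref{ass:alt_fixK}.2) but the argument itself — in particular the square-root perturbation step that converts a bound on $\tilde\Omega-\Omega$ into one on $\tilde\Omega^{1/2}-\Omega^{1/2}$ — is the missing piece you would need to supply.
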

					
					Next, we demonstrate that when $K$ is fixed, our dimension-agnostic AR test is (asymptotically) admissible within a specific class of tests, which includes the standard (heteroskedasticity-robust) AR test designed for fixed $K$. 
					%This result implies that the power of our dimension-agnostic AR test is not uniformly dominated by the standard AR test. Conversely, as the standard AR test is also admissible in this class, it is not uniformly dominated by our test either.
					Let 
					\begin{align}\label{eq:mathcalGhat}
						\widehat {\mathcal G} (\beta_0) = \mathbb U \hat \Omega^{-1/2}(\beta_0) \mathcal {U}_1^\top e(\beta_0), 
					\end{align}
					and $\widehat {\mathcal G}_k (\beta_0)$ be the $k$-th element of $\widehat {\mathcal G} (\beta_0)$, where $\hat \Omega(\beta_0)$ is a consistent estimator of $\Omega(\beta_0)$. We observe that, in the scenario where $K$ is fixed, the standard AR test rejects if 
					\begin{align}\label{eq:ARstd}
						\widehat {\mathcal G}^\top (\beta_0) \widehat {\mathcal G} (\beta_0) = \sum_{k\in [K]} \widehat {\mathcal G}_k^2 (\beta_0) > \mathcal C_{\iota_K}(1-\alpha),
					\end{align}
					where $\iota_K$ is a $K$-dimensional vector of ones and $\mathcal C_{\iota_K}(1-\alpha)$ is just the $(1-\alpha)$ quantile of the centered chi-squared random variable with $K$ degrees of freedom. On the other hand, our bootstrap AR test is asymptotically equivalent to a test that rejects if
					\begin{align*}
						\sum_{k\in [K]} \omega_k \widehat {\mathcal G}_k^2 (\beta_0)  > \mathcal C_{\omega}(1-\alpha).  
					\end{align*}
					In addition, the proof of Theorem \ref{thm:power_fixK} shows 
					\begin{align*}
						\left( \widehat {\mathcal G}_1^2 (\beta_0),\cdots,\widehat {\mathcal G}_K^2 (\beta_0) \right) \convD \left( \chi_1^2(\nu_1^2(\beta_0)),\cdots,\chi_K^2(\nu_K^2(\beta_0)) \right). 
					\end{align*}
					
					We consider the class $\Phi_\alpha$ of tests $\phi(\cdot)$ defined as 
					\begin{align*}
						\Phi_{\alpha} = \begin{Bmatrix}
							\phi(\cdot): \Re^K \mapsto [0,1], \quad \mathbb{E}\phi(\chi_1^2(\nu_1^2(\beta_0)),\cdots,\chi_K^2(\nu_K^2(\beta_0))) \leq \alpha, \\
							\text{when $\nu_k^2(\beta_0) = 0$, $k=1,\cdots,K$,}\\
							\text{the set of discontinuities of $\phi(\cdot)$ has zero Lebesgue measure}
						\end{Bmatrix}.
					\end{align*}
					Both the standard and bootstrap AR tests control size, and thus, belong to this class.  The power of any test $\phi(\cdot) \in \Phi_\alpha$ is determined by $\nu(\beta_0) \in \Re^K$.

					% We also note that 
					% $
					% \left( \widehat {\mathcal G}_1^2 (\beta_0),\cdots,\widehat {\mathcal G}_K^2 (\beta_0) \right) \convD \left( \chi_1^2(\nu_1^2(\beta_0)), \cdots, \chi_K^2(\nu_K^2(\beta_0)) \right),
					% $
					% where $\{ \chi_k^2(\nu_k^2(\beta_0)\}_{k \in [K]}$  is defined in Theorem \ref{thm:power_fixK}. This means the power of any test $\phi(\cdot) \in \Phi_\alpha$ is determined by $\nu(\beta_0) \in \Re^K$. 
					
					\begin{thm}\label{cor:admissible}
						Suppose Assumptions  \ref{ass:reg} and  \ref{ass:alt_fixK} hold. In addition,  let $\widehat {\mathcal G}(\beta_0)$ be defined in \eqref{eq:mathcalGhat}, $\hat \Omega(\beta_0) \convP \Omega(\beta_0)$, and $0<c \leq \lambda_{\min}\left(\Omega(\beta_0) \right) \leq  \lambda_{\max}\left(\Omega(\beta_0) \right) \leq C < \infty.$
						Then, our bootstrap test $\phi_0 = 1\{ \widehat{Q}(\beta_0) > \widehat{\mathcal C}^*_{\alpha}(\beta_0)\}$ is asymptotically admissible w.r.t. $\Phi_\alpha$ in the sense that if there exists a test $\phi^* \in \Phi_\alpha$ such that for all values of $\nu(\beta_0) \in \Re^K$, 
						\begin{align*}
							\lim_{n \rightarrow \infty}  \mathbb E \phi^*(\widehat {\mathcal G}_1^2 (\beta_0),\cdots,\widehat {\mathcal G}_K^2 (\beta_0)) \geq  \lim_{n \rightarrow \infty}  \mathbb E \phi_0, 
						\end{align*}
						then we must have 
						\begin{align*}
							\lim_{n \rightarrow \infty}  \mathbb E \phi^*(\widehat {\mathcal G}_1^2 (\beta_0),\cdots,\widehat {\mathcal G}_K^2 (\beta_0)) =  \lim_{n \rightarrow \infty}  \mathbb E \phi_0, 
						\end{align*}
						for all $\nu(\beta_0) \in \Re^K$. 
						
					\end{thm}

					\begin{rem}
						It is reasonable to assume there exists a consistent estimator $\hat \Omega(\beta_0)$ for $\Omega(\beta_0)$, which is a $K \times K$ matrix with $K$ fixed. 
					\end{rem}

					\begin{rem}
						Because the standard AR test defined in \eqref{eq:ARstd} belongs to $\Phi_\alpha$, Theorem \ref{cor:admissible} implies our bootstrap test $\phi_0$ is not dominated by the standard AR test for all alternatives. In fact, the standard AR test is also admissible among the tests in $\Phi_\alpha$ so that it is not dominated by $\phi_0$ either. However, our bootstrap test is dimension-robust, while the standard AR test does not have the correct size under the regimes in Sections \ref{subsec: power_case1} and \ref{subsec:power_case2}.
					\end{rem}
					
					\begin{rem}
						Under strong identification against local alternatives, the K test proposed by \cite{Kleibergen(2002)} is the uniformly most powerful unbiased test when the number of IVs is treated as fixed and, thus, dominates both the standard AR and our test. However, the K test is not dimension-robust, similar to the standard AR test. In fact, \cite{LWZ(2023)} proposed a counterpart of the K test in the setting of many weak instruments with heteroskedastic errors (but it may be invalid under a fixed number of IVs). Furthermore, both the K test and its many-weak-IV counterpart have power ditches, and thus, no power against certain fixed alternatives, even under strong identification (e.g., see Section 3.1 of \cite{A16} and Lemma 2.3 of \cite{LWZ(2023)}). 
						% On the other hand, our bootstrap AR test is dimension-agnostic and does not have the non-monotone power issue.
					\end{rem}
					
					\begin{rem}
						\cite{N23} proposed a dimension-robust version of the K test, which de-correlates the endogenous variable $X_i$ and outcome error $e_i$ conditionally on $Z_i$. This approach requires consistently estimating the conditional correlation $\rho(Z_i) = \mathbb E(X_i e_i|Z_i)$. However, when the dimension of $Z_i$ is large, in general, $\rho(Z_i)$ cannot be consistently estimated. Instead, \cite{N23} imposes a sparsity condition and estimates $\rho(Z_i)$ by an $\ell_1$-regularized regression. According to his simulations, the dimension-robust K test can also suffer from the power ditch issue due to the (null-imposed) decorrelation. Unlike \citeauthor{N23}'s (\citeyear{N23}) procedure, our test achieves robustness against the dimension of IVs without imposing any additional structure. Furthermore, if one is comfortable with imposing the sparsity assumption on $\rho(Z_i)$, then it is possible to combine our test and \citeauthor{N23}'s (\citeyear{N23}) K test (e.g., by constructing a dimension-robust version of the conditional linear combination test in \cite{LWZ(2023)}, which is efficient under strong identification and also solves the power ditch issue). 
						%{\color{red}{We compare with \citeauthor{N23}'s (\citeyear{N23}) test in the simulation.}}
						
					\end{rem}

					%%%%%%%%%%%%%%%%%%%%%%%%%%%%% asymptotic power

					%%%%%%%%%%%%%%% Simulations
					\section{Monte Carlo Simulations} \label{section: monte-carlo-simulation}
					This section investigates the finite sample size and power performance of existing tests and our proposed test. %under two different data-generating processes (DGPs). 
					To begin, we explicitly define these tests and their corresponding critical values. In addition, following \cite{belloni2012} and \cite{DKM24}, upon obtaining a given instrument set $Z$, we standardize it by $\frac{1}{n} \sum_{i=1}^n Z_{ij}^2 = 1$, for
					$j=1,...,K.$
					Note that the tests described in section \ref{subsection:description_of_tests} below are based on the standardized $Z$. 
					Throughout the simulations, we set the number of Monte Carlo and bootstrap replications equal to $5,000$ and $10,000$ respectively, and set the nominal level $\alpha=0.05$. 
					
					\subsection{Description of Tests} \label{subsection:description_of_tests}
					Specifically, we consider the following eleven tests: 
					
					\begin{enumerate}
						\item[(1)] BS:  Our bootstrap test based on \eqref{Q_hat_statistic_definition} and \eqref{eq:Qhat*}, which rejects $H_0$ whenever 
						$\widehat{Q}(\beta_0) > \widehat{\mathcal C}^*_{\alpha}(\beta_0),$
						and we let the upper bound defined in \eqref{eq:lambda} be $\overline{\theta} \equiv || Z^{\top}Z||_{op}$;
						\item[(2)] JAR$_{\rm std}$: The jackknife AR test based on \cite{crudu2021}'s standard variance estimator for diverging $K$, which rejects $H_0$ whenever 
						\begin{align*}
							\frac{1}{\sqrt{\widehat{\Phi}^{std}(\beta_0)} \sqrt{K}} \sum_{i \in [n]} \sum_{j \in [n], j \neq i} P_{ij} e_i(\beta_0) e_j(\beta_0) > q_{1-\alpha} \left(\mathcal{N}(0,1)\right),
						\end{align*}
						where $\widehat{\Phi}^{std}(\beta_0) := \frac{2}{K}\sum_{i \in [n]}\sum_{j \neq i}P_{ij}^2e_i^2(\beta_0)e_j^2(\beta_0)$ and $P_{ij}$ denotes the $(i,j)$ element of $P:= Z(Z^{\top}Z)^{-1}Z^{\top}$;\footnote{Note that this statistic is slightly different from the one proposed by \cite{crudu2021}, in that they replace $P_{ij}$ by $C_{ij}$, where $C$ is defined in Section 3.2 of their paper.}
						
						\item[(3)] JAR$_{\rm cf}$: \cite{MS22}'s jackknife AR test, which is based on a cross-fit variance estimator for diverging $K$ and rejects $H_0$ whenever
						\begin{align*}
							\frac{1}{\sqrt{\widehat{\Phi}^{cf}(\beta_0)} \sqrt{K}} \sum_{i \in [n]} \sum_{j \in [n], j \neq i} P_{ij} e_i(\beta_0) e_j(\beta_0) > q_{1-\alpha} \left(\mathcal{N}(0,1)\right), 
						\end{align*}
						where $\widehat{\Phi}^{cf}(\beta_0) := \frac{2}{K}\sum_{i \in [n]}\sum_{j \neq i}\frac{P_{ij}^2}{M_{ii}M_{jj} + M_{ij}^2}[e_i(\beta_0)M_ie(\beta_0)][e_j(\beta_0)M_je(\beta_0)]$, $M=I_n-P$, and $M_i$ denotes the $i$th row of $M$;\footnote{In the simulations, the cross-fit variance estimator $\widehat{\Phi}^{cf}(\beta_0)$ can be negative at times. To ensure the JAR$_{\rm cf}$ test is well-defined, we set the variance estimator to be $\max\left(\widehat{\Phi}^{cf}(\beta_0),\frac{1}{\sqrt{n \log(n)}}\right)$.}
						
						\item[(4)] AR: The classical heteroskedasticity-robust AR test for fixed $K$, rejecting $H_0$ whenever
						\begin{align*}
							J_n^{\top}(\beta_0) \widehat{\Omega}_n(\beta_0)^{-1} J_n(\beta_0) > q_{1-\alpha}(\chi^2_K), \;\; 
						\end{align*}
						where $J_n(\beta_0) := n^{-1/2} Z^{\top} e(\beta_0)$ and $\widehat{\Omega}_n(\beta_0):= n^{-1} Z^{\top} \{ diag(e_1^2(\beta_0),...,e_n^2(\beta_0) ) \} Z$;
						
						\item[(5)] RJAR: The ridge-regularized jackknife AR test for diverging $K$ proposed by \cite{DKM24}, which rejects $H_0$ whenever
						\begin{align*}
							\frac{1}{\sqrt{\widehat{\Phi}_{\gamma_n^*}(\beta_0)} \sqrt{r_n}} \sum_{i \in [n]} \sum_{j \in [n], j \neq i} P_{\gamma_n^*,ij} e_i(\beta_0) e_j(\beta_0) > q_{1-\alpha} \left( \mathcal{N}(0,1) \right),
						\end{align*}
						where $P_{\gamma_n^*,ij}$ denotes the $(i,j)$ element of $P_{\gamma_n^*}:= Z(Z^{\top}Z + \gamma_n^* I_K)^{-1}Z^{\top}$, $r_n := rank(Z)$, $\widehat{\Phi}_{\gamma_n^*}(\beta_0) := \frac{2}{r_n} \sum_{i \in [n]}\sum_{j \neq i}(P_{\gamma_n^*,ij})^2e_i^2(\beta_0)e_j^2(\beta_0)$,
						$\gamma_n^* := \max \argmax_{\theta \in \Gamma_n} \sum_{i \in [n]} \sum_{j \neq i} P_{\theta,ij}^2$,
						and $  \Gamma_n := \{  \gamma_n \in \mathbb{R}: \gamma_n \geq 0 \text{ if } r_n = K, \text{ and } \gamma_n \geq 1 \text{ if } r_n < K   \} $;
						
						\item[(6)] BCCH: \cite{belloni2012}'s sup-score test, which rejects $H_0$ whenever
						\begin{align*}
							\max_{1 \leq j \leq K} \frac{ \left| \sum_{i \in [n]} e_i(\beta_0) Z_{ij} \right| }{ \sqrt{\sum_{i \in [n]} e_i^2(\beta_0) Z_{ij}^2 }} > c_{_{BCCH}} q_{1-\alpha/(2K)}(\mathcal{N}(0,1)),
						\end{align*}
						where we let $c_{_{BCCH}} = 1.1$, following \cite{belloni2012}'s recommendation;
						
						\item[(7)] CT: The ridge-regularized AR test proposed by \cite{carrasco2016}, which rejects $H_0$ whenever
						\begin{align*}
							\frac{n e(\beta_0)^{\top}P_{0.05} e(\beta_0)}{e(\beta_0)^{\top}(I_n - P_{0.05})e(\beta_0)}  > \widehat{\mathcal C}^*_{\alpha, CT}(\beta_0),
						\end{align*}
						where $\widehat{\mathcal C}^*_{\alpha, CT}(\beta_0)$ denotes the bootstrap critical value discussed in Section 3 of their paper, and the choice of the fixed scalar 0.05 for the regularizer (which does not depend on $n$) follows that used in the simulations of \cite{DKM24}.\footnote{\cite{carrasco2016} show that under homoskedastic errors, their test statistic converges to an infinite sum of weighted $\chi^2_1$ distributions. For inference, they proposed a residual bootstrap procedure, which is based on the empirical distribution of residuals.}
						
						\item[(8)] LM: The jackknife LM test for diverging $K$ proposed by \cite{matsushita2024}, which rejects $H_0$ whenever
						\begin{align*}
							\frac{1}{\sqrt{\widehat\Psi(\beta_0)} \sqrt{K}} \sum_{i \in [n]} \sum_{j \neq i} P_{ij}X_i  e_j(\beta_0) > q_{1-\alpha}\left( \mathcal{N}(0,1)  \right),
						\end{align*}
						where $\widehat\Psi(\beta_0) := \frac{1}{K} \left( \sum_{i \in [n],j \neq i} P_{ij} X_j e_i^2(\beta_0)+\sum_{i \in [n],j \neq i} P_{ij}^2 X_i X_j e_i(\beta_0)e_j(\beta_0) \right);$
						%\begin{align*}
						%    \Psi(\beta_0) := \frac{\sum_{i \in [n],j \neq i} P_{ij} X_j e_i^2(\beta_0)+\sum_{i \in [n],j \neq i} P_{ij}^2 X_i X_j e_i(\beta_0)e_j(\beta_0)}{K};
						%\end{align*}
						
						\item[(9)] AS: The dimension-robust $F$ test proposed by \cite{Anatolyev-Solvsten(2023)}, which rejects $H_0$ whenever 
						\begin{align*}
							F > \widehat{C}_{\alpha, AS},
						\end{align*}
						where $F$ and $\widehat C_{\alpha,AS}$ denote the $F$-test statistic and the critical value described in Sections 2.1 and 2.3, respectively, in \cite{Anatolyev-Solvsten(2023)};\footnote{The code for their test can be found at \url{https://github.com/mikkelsoelvsten/manyRegressors/blob/master/R/LOFtest.R}. Translating our model to that of \cite{Anatolyev-Solvsten(2023)}, our structural equation of \eqref{eq:model} can be given by
							%\begin{align*}\label{eq:AS}
							$\tilde Y - \tilde X \beta = W \Gamma + \tilde Z \Theta + \tilde e,$
							%\end{align*}
							where the AS test corresponds to testing $\Theta = 0_{K_n}$ under the null hypothesis $\beta = \beta_0$.
							%and $W \in \mathbb{R}^{n \times d_w}$. 
							In terms of the notation in Section 2 of their paper, $y_i = \widetilde{Y}_i - \widetilde X_i \beta_0$ and $x_i = (W_i^\top, Z_i^\top)^\top$ with $H^{AS}_0:
							\textbf{R} \pmb{\beta}^{AS} = \textbf{q},$ where $\pmb{\beta}^{AS} =(\Gamma^\top,\Theta^\top)^\top$,
							$\textbf{R} = \left[ \begin{array}{cc}
								\mathbf{0}_{K \times d_w} & I_K
							\end{array} \right]$,
							and 
							$\textbf{q} = \mathbf{0}_{K \times 1}$.
						}
						
						\item[(10)] Empirical: The bootstrap test based on our test statistic in \eqref{Q_hat_statistic_definition} but with its critical value generated by the empirical distribution of $e_i(\beta_0)$ instead, which rejects $H_0$ whenever 
						$$\widehat{Q}(\beta_0) > \widetilde{\mathcal C}^*_{\alpha}(\beta_0),$$
						where $\widetilde{\mathcal C}^*_{\alpha}(\beta_0)$ is the $(1-\alpha)$-th percentile of $\widetilde{Q}^*(\beta_0)$ conditional on data, 
						%$(1-\alpha)$-quantile of the collection $\{\widetilde{Q}_\ell(\beta_0)\}_{\ell \in [B]}$ for some large integer $B$,\footnote{We set $B=10,000$ in simulations} 
						$ \widetilde{Q}^*(\beta_0):= \frac{1}{\sqrt{K_\lambda}}
						\sum_{i \in [n]} \sum_{j \in [n], j \neq i} e^*_{i}(\beta_0)\Xi_{\lambda,ij} e^*_{j}(\beta_0)$,
						%\frac{\sum_{i \in [n]} \sum_{j \in [n], j \neq i}\widetilde{e}_{i,\ell}(\beta_0) P_{\lambda,ij} \widetilde{e}_{j,\ell}(\beta_0)}{\sqrt{K_\lambda}} - \frac{\sum_{i,j \in [n]^2} \kappa_{ij} \widetilde{e}_{j,\ell}^2(\beta_0) A_{\lambda,ii}}{\sqrt{K_\lambda}}
						and $\{e^*_{i}(\beta_0)\}_{i \in [n]}$ is drawn from the empirical distribution of $\{e_i(\beta_0)\}_{i \in [n]}$. We use the same regularizer as that for the BS test in (1).
						%We let the upper bound defined in \eqref{eq:lambda} be $\overline{\theta} \equiv || Z^{\top}Z||_{op}$;
						
						\item[(11)] JK: The jackknife K test proposed by \cite{N23}, which rejects $H_0$ whenever
						\begin{align*}
							JK(\beta_0) > q_{1-\alpha}(\chi^2_{1}),
						\end{align*}
						with $JK(\beta_0)$ defined in (2.5) of his paper.
					\end{enumerate}

					%%%%%%%%%%%%%%%%%%%%%%%%%%%%
					\subsection{Simulations Based on \cite{Haus2012}} 
					\label{subsec: simulation_hausman}
					In this section, we consider a model based on the DGP given by \cite{Haus2012}, with a sample size $n=200$ and a heteroskedastic error structure. %{\color{orange}{In the Supplemental Appendix, we report further simulation results based on the DGP given by \cite{DKM24}.}} Consider 
					\begin{align*}
						&Y =  X \beta + W \mathbb{\Gamma} + D_{z1} e, \quad X = Z \pi + U_2, \quad  \beta_0 = 0, \quad \mathbb{\Gamma}=\left(\frac{1}{\sqrt{d_W}},...,\frac{1}{\sqrt{d_W}}\right)^\top \in \mathbb{R}^{d_W}, \quad   \\
						& \text{where} \;\; W = 
						\begin{bmatrix}
							1 & w_{12} & w_{13} & \cdots & w_{1,d_W} \\
							1 & w_{22} & w_{23} & \cdots & w_{2,d_W} \\
							\vdots & \vdots & \vdots & \ddots & \vdots \\
							1 & w_{n2} & w_{n3} & \cdots & w_{n,d_W}
						\end{bmatrix},
						\quad 
						w_{ij} \stackrel{\text{i.i.d.}}{\sim} \mathcal{N}(0,1) \;\; \text{for} \;\; j \ge 2, \quad d_W = 15, \quad  \\
						& D_{z1} = diag(\sqrt{1+z_{11}^2},\sqrt{1+z_{21}^2},...,\sqrt{1+z_{n1}^2}), \\
						& e_i = \rho U_{2i} + \sqrt{\frac{1-\rho^2}{\phi^2 + 0.86^4} } \left( \phi v_{1i} + 0.86 v_{2i} \right), \; v_{1i} \sim z_{1i}(Beta(0.5,0.5) - 0.5), \; v_{2i} \sim \mathcal{N}(0,0.86^2),\\
						& z_{i1} \sim \mathcal{N}(0.5,1), \quad U_{2i} \sim exponential(0.2)-5, \quad \phi = 0.3, \quad \text{and} \quad \rho = 0.3.
					\end{align*}
					We assume that the errors are independent across $i$. 
					%The diagonal matrix $D_{z_1}$ allows the error term $e$ to be dependent on $z_1$ but at the same time has its variance bounded away from zero. %in the event some elements of $z_1$ are close to zero. 
					We vary the number of instruments $K \in \{2,10,40,160,300\}$ and $\beta \in [-2,2]$ to investigate the size and power properties of the eleven tests under both fixed and diverging $K$ settings.
					Specifically, the $i$th instrument observation for $K \geq 10$ is given by
					$$Z_i^{\top} = (z_{i1},z_{i1}^2,z_{i1} \mathbb{1}(z_{i1}<q_{25}), z_{i1} \mathbb{1}(q_{25} \leq z_{i1} < q_{50}), z_{i1} \mathbb{1}(q_{50} \leq z_{i1} < q_{75}) ,z_{i1}D_{i1},...,z_{i1}D_{i,K-5}),$$
					where $q_{\alpha}$ is the $\alpha$-percentile of $\{ z_{i1}\}_{i \in [n]}$, $D_{ik} \in \{0,1\}$ is a dummy variable that is independent across $(i,k)$ with $\mathbb{P}(D_{ik} = 1) = 1/2$, so that $Z_i \in \mathbb{R}^K$. Furthermore, for the case with $K=2$, we let 
					$$Z_i^\top := (z_{i1},z_{i1}^2).$$ 
					We define $\mu^2 := n \pi^{\top}\pi,$
					and consider $\mu^2 = 72$ for $K=2$, while $\mu^2 = 8$ for $K \geq 10$, following \cite{Haus2012}.\footnote{Specifically, for $K=2$, we let $\pi = \frac{0.6}{\sqrt{K}} \iota_K$; for $K \geq 10$ we let $\pi = \frac{0.2}{\sqrt{K}} \iota_K$. We allow $\mu^2$ to be larger for $K=2$ to demonstrate a non-negligible power; otherwise, all the tests would have a trivial power.} %We provide a discussion of the size and power results below.
					
					\begin{table}[ht]
						\caption{\textbf{Null Rejection Probabilities}}
						\vspace{3mm}
						\begin{center}
							\resizebox{\textwidth}{!}{  % <<< BEGIN RESIZE
								\begin{tabular}{ c| c c c c c c c c c c c c} 
									\hline 
									& $\rm RJAR$ & $\rm JAR_{\rm std}$ & $\rm JAR_{\rm cf}$  & $\rm AR$ & $\rm AS$ & $\rm BCCH$ & $\rm CT$  & $\rm Empirical$ & $\rm LM$ & $\rm JK$ & $\rm BS$ \\  
									& (5\%) & (5\%) &  (5\%) & (5\%) & (5\%) & (5\%) & (5\%)  & (5\%) & (5\%) & (5\%) & (5\%) \\ [0.5ex] 
									\hline
									$K=2$ & \bf{0.107} & \bf{0.107} & \bf{0.124} & 0.065 & 0.069 & 0.03 &  0.060 & \bf{0.156} & 0.064 & \bf{0.097} & 0.068 \\ 
									\hline
									$K=10$ & \bf{0.108} & \bf{0.108} & \bf{0.132} & 0.046 & 0.053 & 0.014 & 0.061 & \bf{0.118} & 0.044 & \bf{0.096} & 0.055\\ [1ex] 
									\hline
									$K=40$ & \bf{0.187} & \bf{0.187} & \bf{0.248} & \bf{0.014} & 0.049 & \bf{0.007} & 0.064 & \bf{0.122} & 0.068 & \bf{0.107} & 0.061 \\ [1ex] 
									\hline
									$K=160$ & 0.078 & 0.078 & \bf{0.916} & \bf{0.000} & \bf{0.006} & \bf{0.001} & \bf{0.635} & \bf{0.125} & \bf{0.478} & 0.066 & 0.060 \\ [1ex] 
									\hline
									$K=300$ & \bf{0.995} & -- & -- & -- & -- & \bf{0.001} & \bf{1.000} & \bf{0.120} & -- & 0.059 & 0.061 \\ [1ex] 
									\hline
								\end{tabular}
							}  % <<< END RESIZE
							\\
						\end{center}
						{\footnotesize{\textbf{Note:} We set the nominal level $\alpha = 0.05$. We highlight values with more than 3\% size distortions (under- or over-rejections). We round to 3 decimal places.}}
						\label{table:size}
					\end{table}

					\vspace{2mm}
					\textbf{Size Properties:} Table \ref{table:size} reports the null rejection probabilities of the eleven tests across different $K$. We make several observations below. First, the $\rm RJAR$, $\rm JAR_{std}$, $\rm JAR_{\rm cf}$, $\rm CT$, $\rm Empirical$, $\rm LM$ and $\rm JK$ tests suffer from remarkable over-rejections under some or all values of $K$. 
					Second, the classical $\rm AR$ test for fixed $K$ and the $\rm AS$ test control size for all values of $K$, but become conservative when $K$ is large. 
					Similarly, we observe that the $\rm BCCH$ test is relatively conservative across different numbers of IVs. 
					Indeed, we will see from Figures \ref{figure:Hausman_final_a_b}-\ref{figure:Hausman_final_c_d} 
					that these tests tend to suffer from power decline when $K$ becomes large.  %but suffers from substantial power decline when $K \leq 40$. 
					By contrast, our proposed dimension-robust $\rm BS$ test largely resolves the size-distortion issues for all values of $K$ considered.
					%This could be due to the $\rm CT$ test being constructed for homoskedastic errors, while the $\rm BCCH$ test requires sparsity, which does not hold for large $K$. 
					Overall, our $\rm BS$ test has the best size properties among the eleven tests. 
					
					\vspace{2mm}
					\textbf{Power Properties:} Figures \ref{figure:Hausman_final_a_b} and \ref{figure:Hausman_final_c_d} report the power curves for $10,40,160,$ and $300$ IVs, respectively. The power curve for $2$ IVs is reported in Figure \ref{figure:Hausman_K_1} in the Supplemental Appendix. Several remarks are in order. First, $\rm JAR_{\rm std}$ and $\rm RJAR$ have the same power curves for $K \in \{2, 10, 40, 160\}$, because RJAR's chosen regularizer $\gamma_n^*$ equals zero under the current DGP. Additionally, the power of $\rm JAR_{\rm std}$ and $\rm RJAR$ become relatively low as the number of IVs becomes large (e.g., $K=40,160$, and also $K=300$ for $\rm RJAR$). Second, the power curves of $\rm JAR_{\rm cf}$ are similar to those of $\rm JAR_{\rm std}$, but with higher rejections under the null. 
					Third, for the cases with a larger $K$ (e.g., $K=40,160$), the power of the classical $\rm AR$, $\rm CT$, $\rm LM$, $\rm AS$ and $\rm JK$ tests is relatively low (some also suffer from serious size distortions).
					Fourth, the $\rm JK$ test has relatively low power with $10$ and $40$ IVs, but relatively good power with 160 and 300 IVs. Its size distortions are also small when the number of IV is large.  
					Fifth, for the current DGP, $\rm BCCH$ typically has good power performance for $\beta <0$ but its power can be relatively low for $\beta > 0$. 
					Overall, our BS test has the best power properties, with its power curves much higher than the other test in many cases. 
					%Fifth, the power curves of the BS test are much higher than the other tests when $K=40, 160,$ and $300$. 
					
					% Fifth,  the power curves of different tests are similar for a smaller number of instruments (e.g., $K=1,10$). For the case of $10$ IVs, given that the regularizer $\gamma_n^* = 0$, the power curve for $\rm RJAR$ and $\rm JAR_{std}$ are equal. At $\beta = -0.2,-0.4$, the power of $\rm BS$ is around $53\%$ and $99\%$, while $\rm RJAR$ is around $36\%$ and $95\%$ respectively. The power for both tests equal one for $\beta < -0.4$.
					%{\color{red} {\bf K=2?} 
						\textbf{Regularizers:} Recall $\overline{\theta} = ||Z^{\top}Z||_{op}$.
						When $K = 2, 10, 40$ and $160$, we have $\lambda = \overline{\theta} = 200$; $\lambda = 256$, $\overline{\theta} = 1233.927$; $\lambda = \overline{\theta} = 3665.477$; $\lambda = \overline{\theta}= 13790.551$, respectively.  
						When $K = 300$, $\lambda = \overline{\theta}= 125589.052$, while $\gamma_n^* = 41$. For $K \in \{2,10,40,160\}$, we have $\gamma_n^* = 0$ under this DGP. 
						
						% \vspace{5mm}
						% \noindent \textbf{Details of Power Curves:} 
						
						\begin{figure}[!]    \includegraphics[width=1\textwidth,height = 18cm]{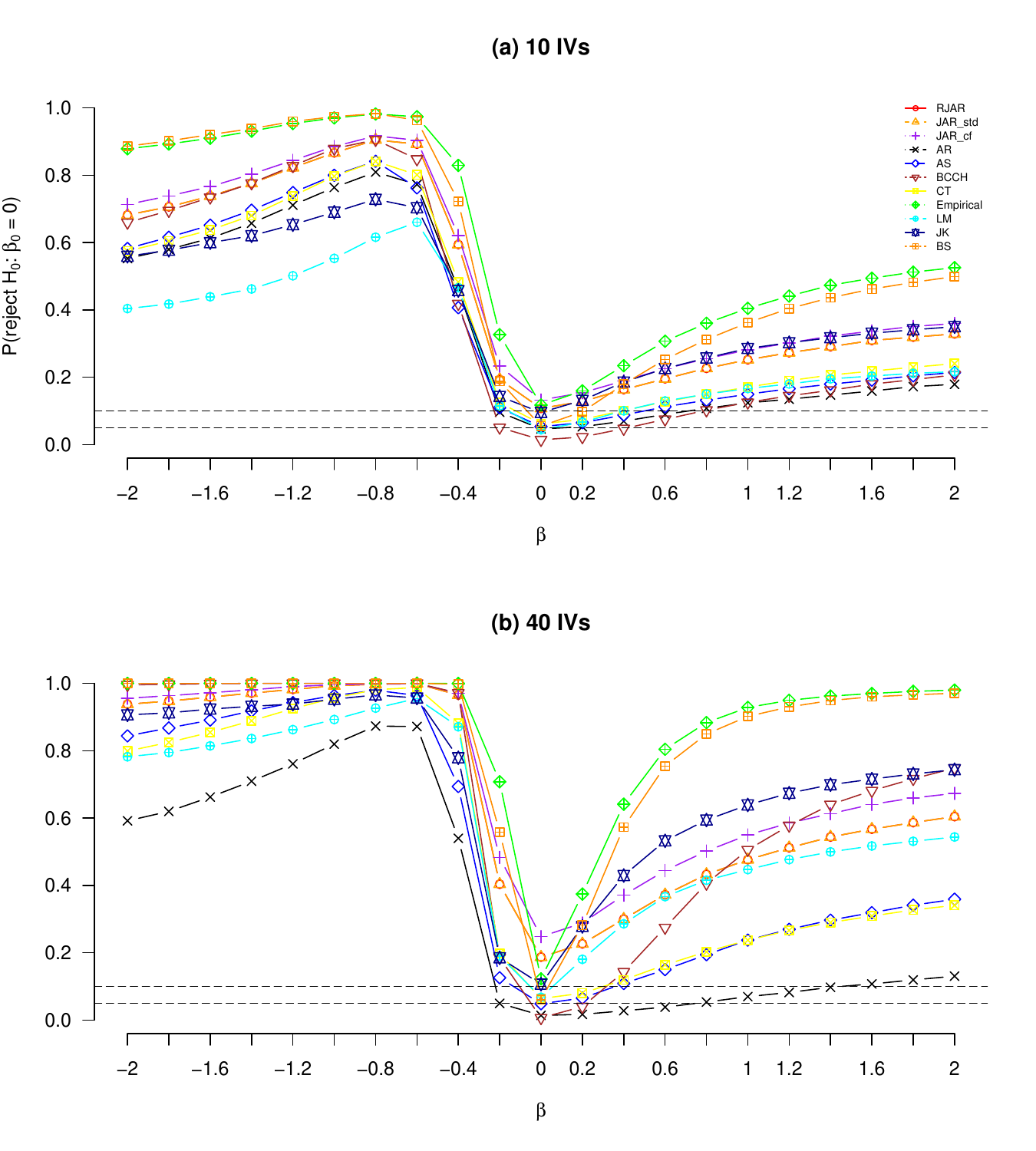}
							\caption{Power curves with $10$ and $40$ IVs}
							\footnotesize{\textbf{Note:} The red curve with a hollow circle represents $\rm RJAR$; the orange curve with an upward triangle represents $\rm JAR_{\rm std}$; the purple curve with a cross represents $\rm JAR_{\rm cf}$; the black curve with X represents $\rm AR$; the blue curve with diamond represents $\rm AS$; the brown curve with inverted triangle represents $\rm BCCH$; the yellow curve with a filled square represents $\rm CT$; the green curve with a filled diamond represents $\rm Empirical$; the cyan curve with a filled circle represents $\rm LM$; the dark-blue curve with hexagram represents $\rm JK$; the dark-orange curve with the $+$ in the square-box represents $\rm BS$.    
								The horizontal dotted black lines represent the $5\%$ and $10\%$ levels.}
							\label{figure:Hausman_final_a_b}
						\end{figure}

						\begin{figure}[!]    \includegraphics[width=1\textwidth,height = 18cm]{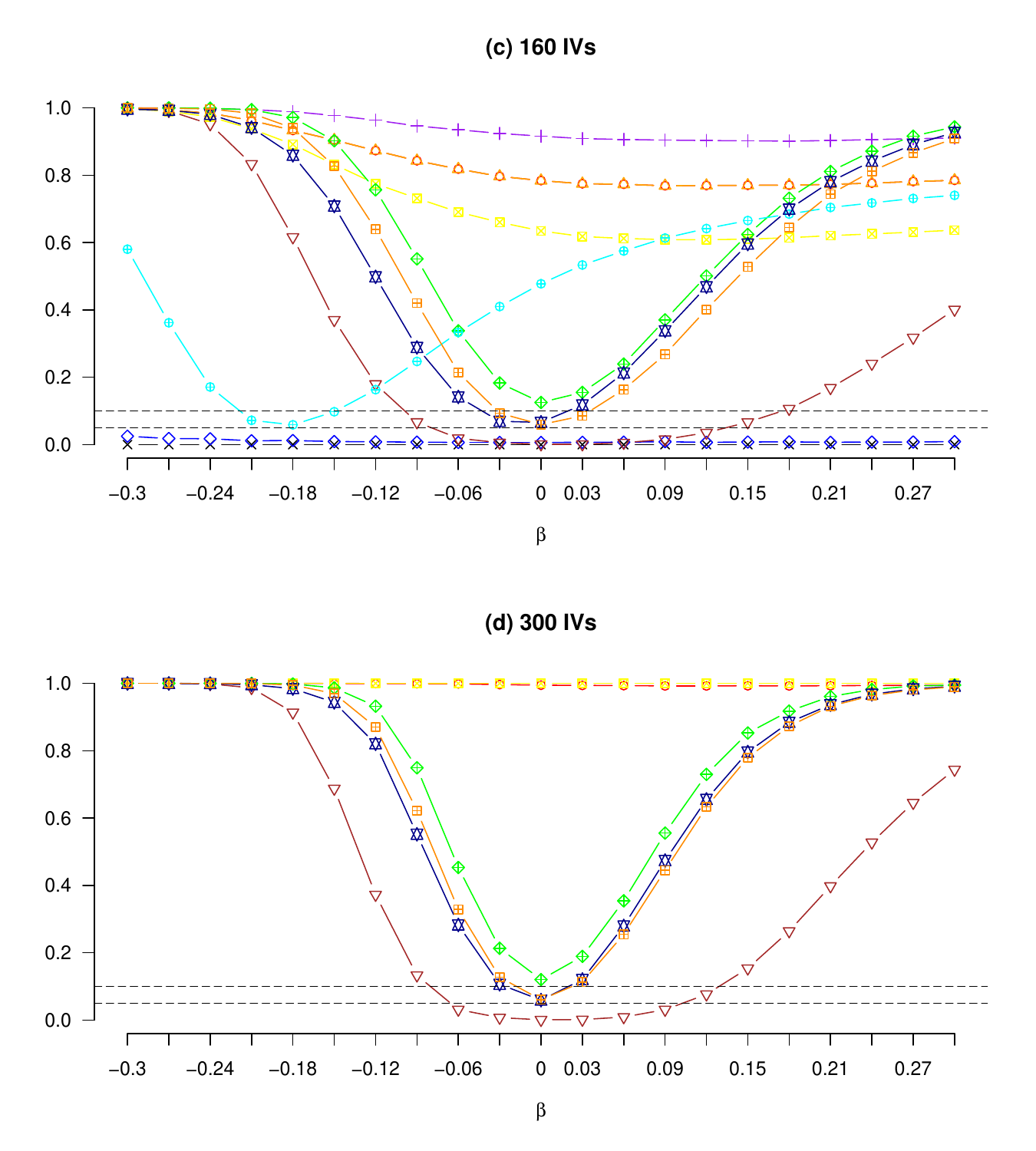}
							\caption{Power curves with $160$ and $300$ IVs}
							\footnotesize{
								\textbf{Note:} The red curve with a hollow circle represents $\rm RJAR$; the orange curve with an upward triangle represents $\rm JAR_{\rm std}$; the purple curve with a cross represents $\rm JAR_{\rm cf}$; the black curve with X represents $\rm AR$; the blue curve with diamond represents $\rm AS$; the brown curve with inverted triangle represents $\rm BCCH$; the yellow curve with a filled square represents $\rm CT$; the green curve with a filled diamond represents $\rm Empirical$; the cyan curve with a filled circle represents $\rm LM$; the dark-blue curve with hexagram represents $\rm JK$; the dark-orange curve with the $+$ in the square-box represents $\rm BS$.    
								The horizontal dotted black lines represent the $5\%$ and $10\%$ levels.}
							\label{figure:Hausman_final_c_d}
						\end{figure}

						\section{Empirical Application}\label{empirical-application}
						In this section, we consider an empirical application of IV regressions with underlying specifications based on \cite{card(2009)}, \cite{Goldsmith(2020)}, and \cite{DKM24}. Specifically, we consider a single cross-section of data in year $2000$ across $124$ locations (cities) by using the following model:
						\begin{align*}
							Y_{ls} = \beta_s X_{ls} + \Gamma_s^{\top} W_l + e_{ls},
						\end{align*}
						where $\beta_s$ is the coefficient of interest and can be interpreted as the (negative) inverse elasticity of substitution between immigrants and natives in the relevant skill group $s$. In addition, $Y_{ls}$ denotes the difference between the residual log wages\footnote{As discussed in \cite{card(2009)}, residual log wages are log wages after controlling for education, age, gender, race, and ethnicity of the U.S. workforce.} for immigrant and native men in skill group $s \in \{h,c\}$ (high school or college equivalent) and location (city) $l = 1,...,124$. The vector of location-level controls is denoted as $W_l$; in this application we include the following controls: $(1)$ log of city size, $(2)$ college shares, $(3)$ manufacturing shares in both 
						$(i)$ $1980$ and $(ii)$ $1990$, $(4)$ mean wage residuals for $(i)$ all natives and $(ii)$ all immigrants in $1980$, together with $(5)$ a constant (so that there are $9$ controls available for each city, i.e., $W_l \in \mathbb{R}^9$).\footnote{See Table 6 in \cite{card(2009)} for more details on the controls.} $X_{ls}$ denotes the log ratio of immigrant to native hours worked in skill
						group $s$ of both men and women in the city $l$. %{\color{red} Too many controls? $L^2 \asymp n$}
						
						The ratio $X_{ls}$ is potentially endogenous because unobserved city-specific factors may shift the relative demand for immigrant workers, leading to higher relative wages and higher relative employment shares, thereby confounding the estimation of the inverse elasticity of substitution. To overcome this issue, \cite{card(2009)} suggests using the ratio of the total number of immigrants from foreign country $m$ in city $l$ to the total number of immigrants from country $m$ as an instrument. The rationale for such a choice is that existing immigrant enclaves are likely to attract additional immigrant labor through social and cultural channels unrelated to labor market outcomes. To define the instruments, we can exploit settlement patterns at some initial period (possibly together with the arrival rate of immigrants from specific countries in subsequent periods) to determine the inflow of immigrants in each location. Specifically, we let $N_{lm,1980}$ be the number of immigrants from country $m = 1...,38$ settling into location $l$ in $1980$ and let $N_{l,1980}$ be the total number of immigrants in location $l$ in $1980$, respectively. In addition, we let $P_{l,2000}$ denote the population size of location $l$ in $2000$, including both immigrants and natives. 
						
						To proceed, we consider four sets of potential instruments for $X_{ls}$. 
						The definition of the first two sets of instruments follows from \citet[Section 5]{DKM24}. Specifically, we let the instruments for each location $l$ be given by $z_{l,1980} := \left\{ z_{lm,1980} \right\}_{m=1}^{38} \equiv \left\{ \frac{N_{lm,1980}}{N_{l,1980}} \times \frac{1}{P_{l,2000}} \right\}_{m=1}^{38} \in \mathbb{R}^{38 \times 1}$, so that our first set of instruments can be written as $Z_{38} := (z_{1,1980},...,z_{124,1980} )^{\top} \in \mathbb{R}^{124 \times 38}$. For the second set of instruments, we let each of the 38 IVs interact with the $9$ controls described above, so that $z_{l,1980} \in \mathbb{R}^{342 \times 1}$ (i.e., each $z_{lm,1980}$ is interacted with $9$ controls). Then, our second set of instruments is defined as $Z_{342} \in \mathbb{R}^{124 \times 342}$. 
						Furthermore, given that our proposed bootstrap test is dimensional-robust, we consider a case with $K=1$ for our third and fourth sets of instruments for the high school and college skill groups, respectively. Specifically, following \cite{Goldsmith(2020)}, we construct Bartik-type instruments, which are given by $Z_{Bartik,s} = \{B_{ls}\}_{l=1}^{124} \in \mathbb{R}^{124 \times 1}$ for $s \in \{h,c\}$, where $B_{ls}:= \sum_{m=1}^{38}  z_{l,1980} \times g_{ms}$ and $g_{ms}$ is the number of immigrants from country $m$ in skill group $s$ arriving in US from $1990$ to $2000$. Note that while the Bartik IVs depend on the skill group $s$ (i.e., $Z_{Bartik,h}$ and $Z_{Bartik,c}$), the first and second sets of instruments (i.e., $Z_{38}$ and $Z_{342}$) do not depend on $s$.
						%As in  and \cite{Goldsmith(2020)},   For our third set,  Data is taken from  as well as \cite{dovi-kock-mavroeidis(2023)}. {\color{red}{Similar to $Z_{bartik,s}$, $Z_{38}$ and $Z_{342}$ also depend on $s$?}}
						
						The empirical results of our bootstrap test and those in \cite{DKM24} are given in Tables \ref{table:bartik}--\ref{table:Z_342}. For the Bartik instruments (i.e., $K=1$), the result for $\rm JAR_{\rm cf}$ is not reported because the cross-fit variance estimator is negative. Table \ref{table:bartik} shows the $95\%$ confidence intervals (CIs) with the Bartik IVs. $Z_{Bartik,h}$ and $Z_{Bartik,c}$ are applied separately to their respective skill groups. The regularizers for methods $\rm RJAR$ and $\rm BS$ are $\gamma_n^* = 0$ and $\lambda = 0$, respectively, with ${p_n'} = 0.077$ and $p_n = 0.216$.\footnote{When $K=1$, ${p_n'}(\lambda)$ and $p_n(\lambda)$ are independent of $\lambda$, and we set $\lambda = 0$.} 
						$K_\lambda$ for BS is equal to $0.457$ and $0.253$ for high-school and college workers, respectively.  
						In addition, the number beneath each CI represents its relative length compared to the $\rm BS$ CI. For $K=1$, all CIs have similar lengths. Methods $\rm RJAR$ and $\rm JAR_{std}$ have shorter CIs, but this is because these methods may not control size when $K$ is fixed and tend to over-rejects under the null, as observed in our simulation studies. Among the CIs that are theoretically valid for small $K$ (i.e., $\rm AR$, $\rm BS$, and $\rm BCCH$), the $\rm BS$ CIs are the shortest across both skill groups.
						
						Table \ref{table:Z_38} reports the $95\%$ CIs for high-school and college workers, respectively, with $K = 38$; the set of instruments used for both skill groups is $Z_{38}$. The regularizers for methods $\rm RJAR$ and $\rm BS$ are $\gamma_n^* = 0$ and $\lambda = 13.8$, respectively, with $p_n = 0.022$ and ${p_n'} = 0.089$. 
						$K_\lambda$ for BS is equal to $2.015$ for 38 IVs. We find that $\rm BS$ has the shortest CI for college workers, while $\rm JAR_{\rm cf}$ has the shortest CI for high-school workers. But based on our simulation studies, $\rm JAR_{\rm cf}$ may over-reject under the null, which can result in shorter CIs. Furthermore, the $\rm BS$ CIs are shorter than their $\rm BCCH$ counterparts for both high-school and college workers. 
						
						% Note that $\max_{i \in [n]} P_{ii} = 0.994$ for $K=38$, which could potentially affect the size/coverage properties of $\rm RJAR, JAR_{std},$ and $\rm JAR_{\rm cf}$, as the validity of these methods requires that $\max_{i \in [n]} P_{ii} \leq 1-\delta$ for some $0 < \delta < 1$. {\color{red}{RJAR does not require this condition? e.g., 2. in Proposition 3.1 of RJAR?}}
						
						%potentially explaining why their confidence set is shorter than our bootstrap test. 
						%Furthermore, the $\rm BCCH$ test has a confidence interval close to our $\rm BS$ test for both \textbf{college} and \textbf{high-school} workers, suggesting a sparse first stage of both \textbf{college} and \textbf{high-school} workers, i.e., a few nationalities being highly predictive of inflows of immigrant labor. This conclusion slightly differs from \cite{DKM24} in that our $\rm BS$ test agrees with the sparse test (i.e., $\rm BCCH$ test) for high-school workers. 
						
						Table \ref{table:Z_342} shows the $95\%$ CIs with $K = 342$; the set of instruments used for both skill groups is $Z_{342}$. The regularizers for methods $\rm RJAR$ and $\rm BS$ are $\gamma_n^* = 5.3$ and $\lambda = 67.4$, respectively, with $p_n = 0.016$ and ${p_n'} = 0.089$. $K_\lambda$ for BS is equal to $1.550$ for 342 IVs. For both high-school and college workers, $\rm CT$ rejects all null hypotheses and thus results in empty confidence intervals, potentially due to heteroskedastic errors. $\rm BS$ again has the shortest confidence interval for college workers, and is of similar length with $\rm RJAR$ for high-school workers. Finally, $\rm BCCH$ has relatively wide CIs compared with $\rm BS$ and $\rm RJAR$.
						
						%{\color{red} The $K_\lambda$ for our test are as follows: 
							%\begin{enumerate}
							%     \item $1$ IV: High school = 0.4575087
							%     \item $1$ IV college = 0.253039
							%     \item $38$ IVs = 2.015289
							%     \item $342$ IVs = 1.550956
							% \end{enumerate}} 
					
					\begin{table}[H]
						\caption{$1$ IV} \centering
						\rule{0pt}{1ex} \\
						\resizebox{0.9\textwidth}{!}{
							\begin{tabular}{c c c c c c c c c c c c c c c c c}
								\hline  \multirow[c]{3}{*}{} & \multicolumn{6}{c}{ \textbf{High-School Workers} } \\ 
								\hline \hline  \multicolumn{1}{c}{ $\rm RJAR$ } & \multicolumn{1}{c}{ $\rm JAR_{std}$ } & \multicolumn{1}{c}{ $\rm AR$ } & \multicolumn{1}{c}{ $\rm BS$ } & \multicolumn{1}{c}{ $\rm BCCH$ } & \multicolumn{1}{c}{ $\rm CT$ } \\
								\hline   \bf{[-0.040, -0.012]} & \bf{[-0.040, -0.012]} & \bf{[-0.041, -0.010]} & \bf{[-0.041, -0.010]} & \bf{[-0.043, -0.008]} & \bf{[-0.041, -0.010]} \\
								$(0.903)$  & $(0.903)$ & $(1.000)$ & $(1.000)$ & $(1.129)$ &  $(1.000)$  &  \\ 
								\hline
								\rule{0pt}{1ex} \\
								\multirow[c]{3}{*}{} & \multicolumn{6}{c}{ \textbf{College Workers} } \\ 
								\hline \hline  \multicolumn{1}{c}{ $\rm RJAR$ } & \multicolumn{1}{c}{ $\rm JAR_{std}$ } & \multicolumn{1}{c}{ $\rm AR$ } & \multicolumn{1}{c}{ $\rm BS$ } & \multicolumn{1}{c}{ $\rm BCCH$ } & \multicolumn{1}{c}{ $\rm CT$ }  \\
								\hline   \bf{[-0.094, -0.043]} & \bf{[-0.094, -0.043]} &\bf{[-0.097, -0.040]} & \bf{[-0.097, -0.041]} & \bf{[-0.101, -0.037]} & \bf{[-0.097, -0.040]}  \\
								$(0.927)$ & $(0.927)$ & $(1.036)$ & $(1.000)$ & $(1.127)$ & $(1.054)$   \\ \hline 
							\end{tabular}
						}
						
						\raggedright{{ \small \textbf{Note:} $95\%$ confidence intervals with $Z_{Bartik,h}$ and $Z_{Bartik,c}$ as the instrument for high-school and college workers, respectively.}}
						\label{table:bartik}
					\end{table}
					
					\begin{table}[H]
						\caption{$38$ IVs}
						\centering
						\rule{0pt}{1ex} \\
						\resizebox{\textwidth}{!}{
							\begin{tabular}{c c c c c c c c c c c c c c c c c c c c c}
								\hline  \multirow[c]{3}{*}{} & \multicolumn{7}{c}{ \textbf{High-School Workers} } \\ 
								\hline \hline \multicolumn{1}{c}{ $\rm RJAR$ } & \multicolumn{1}{c}{ $\rm JAR_{std}$ } & \multicolumn{1}{c}{ $\rm JAR_{cf}$ } & \multicolumn{1}{c}{ $\rm AR$ } & \multicolumn{1}{c}{ $\rm BS$ } & \multicolumn{1}{c}{ $\rm BCCH$ } & \multicolumn{1}{c}{ $\rm CT$ } \\ \hline  \bf{[-0.082, -0.015]}
								& \bf{[-0.082, -0.015]}
								& \bf{[-0.077, -0.018]}
								& \bf{[-0.114, 0.007]}
								& \bf{[-0.074, -0.014]}
								& \bf{[-0.073, -0.003]}
								& \bf{[-0.094, -0.007]}  \\
								$(1.117)$  & $(1.117)$ & $(0.983)$ & $(2.017)$ &  $(1.000)$ & $(1.167)$ & $(1.450)$ &   \\ 
								\hline 
								\rule{0pt}{1ex} \\
								\multirow[c]{3}{*}{} & \multicolumn{7}{c}{ \textbf{College Workers} } \\ 
								\hline \hline \multicolumn{1}{c}{ $\rm RJAR$ } & \multicolumn{1}{c}{ $\rm JAR_{std}$ } & 
								\multicolumn{1}{c}{ $\rm JAR_{cf}$ } & \multicolumn{1}{c}{ $\rm AR$ } & \multicolumn{1}{c}{ $\rm BS$ } & \multicolumn{1}{c}{ $\rm BCCH$ } & \multicolumn{1}{c}{ $\rm CT$ } \\
								\hline  \bf{[-0.12, 0.01]}
								& \bf{[-0.12, 0.01]}
								& \bf{[-0.12, 0.007]}
								& \bf{[-0.12, 0.028]}
								& \bf{[-0.117, -0.029]}
								& \bf{[-0.12, -0.015]}
								& \bf{[-0.12, 0.019]}\\
								$(1.477)$ & $(1.477)$ & $(1.443)$ & $(1.682)$ & $(1.000)$ & $(1.193)$ & $(1.580)$ &  \\ 
								\hline
							\end{tabular}
						}
						\raggedright{\textbf{Note:} $95\%$ confidence intervals with $Z_{38}$ as instruments.}
						\label{table:Z_38}
					\end{table}
					
					\begin{table}[H]
						\caption{$342$ IVs}
						\centering
						\rule{0pt}{1ex} \\
						\resizebox{0.5\textwidth}{!}{
							\begin{tabular}{c c c c c c c c c c c  }
								\hline  \multirow[c]{3}{*}{} & \multicolumn{4}{c}{ \textbf{High-School Workers} } \\ 
								\hline \hline & \multicolumn{1}{c}{ $\rm RJAR$ } &  \multicolumn{1}{c}{ $\rm BS$ } & \multicolumn{1}{c}{ $\rm BCCH$ } & \multicolumn{1}{c}{ $\rm CT$ } \\
								\hline  & \bf{[-0.077, -0.008]}
								& \bf{[-0.071, -0.013]}
								& \bf{[-0.084, 0.004]}
								& \bf{$\varnothing$} \\
								& $(1.190)$ & $(1.000)$ & $(1.517)$ & $(\varnothing)$  \\
								\hline 
								\rule{0pt}{1ex} \\
								\multirow[c]{3}{*}{} & \multicolumn{4}{c}{ \textbf{College Workers} } \\ 
								\hline \hline & \multicolumn{1}{c}{ $\rm RJAR$ } 
								& \multicolumn{1}{c}{ $\rm BS$ } & \multicolumn{1}{c}{ $\rm BCCH$ } & \multicolumn{1}{c}{ $\rm CT$ } \\
								\hline  & \bf{[-0.111, 0.009]}
								& \bf{[-0.118, -0.027]}
								& \bf{[-0.12, -0.003]}
								& \bf{$\varnothing$} \\
								& $(1.319)$ & $(1.000)$ & $(1.286)$ & $(\varnothing)$  \\
								\hline
							\end{tabular}
						}
						
						\raggedright{\textbf{Note:} $95\%$ confidence intervals with $Z_{342}$ as instruments.}
						\label{table:Z_342}
					\end{table}
					
					%%%%%%%%%%%%%%%%%%%%%%%%%%% Citations
					
					%\newpage
					
					% %%%%%%%%%%%%%%%%%%%%%%%%%%%%   Appendix
					
					% \vspace{0.5in}
					% %\newpage
					% \appendix
					% \noindent{\huge\textbf{Appendix}}
					
					% \unhidefromtoc % Let the Table of content start from here, in Appendix
					% %\tableofcontents
					% %\newpage
%%%%%%%%%%%%%%%%%%%%%%%%%%% Citations

%\newpage
\bibliographystyle{chicago}
\singlespacing
\bibliography{mybibliography}

%%%%%%%%%%%%%%%%%%%%%%%%%%%%   Appendix
\newpage
\appendix
\noindent{\huge\textbf{Appendix}}

\unhidefromtoc % Let the Table of content start from here, in Appendix
\tableofcontents

\newpage

\section{Proof of Theorem \ref{thm:main_null}}\label{sec:pf_main_null}
Recall that
\begin{align*}
	Q(\beta_0)  = \frac{\sum_{i \in [n]} \sum_{j \in [n], j \neq i} (g_i\tilde \sigma_i(\beta_0) + \Delta \Pi_i) \Xi_{\lambda,ij} (g_j \tilde \sigma_j(\beta_0) + \Delta \Pi_j)}{\sqrt{K_\lambda}},
\end{align*}
% \begin{align*}
	%     Q(\beta_0) & = \frac{\sum_{i \in [n]} \sum_{j \in [n], j \neq i} g_i\breve \sigma_i(\beta_0) \Xi_{\lambda,ij} g_j\breve \sigma_j(\beta_0) }{\sqrt{K_\lambda}} + \frac{2\sum_{i \in [n]} \sum_{j \in [n], j \neq i} \Pi_i \Delta P_{\lambda,ij} g_j\breve \sigma_j(\beta_0) }{\sqrt{K_\lambda}} \\
	% & + \frac{\sum_{i \in [n]} \sum_{j \in [n], j \neq i} \Pi_i  P_{\lambda,ij} \Pi_j \Delta^2 }{\sqrt{K_\lambda}},
	% \end{align*}
% where $B_{\lambda,jk} = \sum_{i \in [n]} P_{W,ik} P_{W,ij} P_{\lambda,ii} = [P_W D P_W]_{jk}$, 
% \begin{align*}
	% D = \diag(P_{\lambda,11}, \cdots, P_{\lambda,nn}) = \diag(P_{\lambda}), 
	% \end{align*}
% and $A_{\lambda,ii} = 2 P_{\lambda,ii} P_{W,ii} - B_{\lambda,ii}$. 

% and 
% \begin{align}
	% \widehat{Q}(\beta_0) = \widehat{Q}:= \frac{\sum_{i \in [n]} \sum_{j \in [n], j \neq i}e_i P_{\lambda,ij} e_j}{\sqrt{K_\lambda}},
	% \end{align}
% where $e_i$ is just $e_i(\beta_0)$ under the null (i.e., $\Delta = 0$) and defined as 
% \begin{align}\label{eq:e}
	%     e_i = \tilde e_i - W^\top_i \hat \gamma_e \quad \text{and} \quad \hat \gamma_e =  (W^\top W)^{-1} W^\top \tilde e.
	% \end{align}
We further define 
\begin{align*}
	\breve Q(\beta_0) & := \frac{\sum_{i \in [n]} \sum_{j \in [n], j \neq i} \breve e_i(\beta_0) \Xi_{\lambda,ij} \breve e_j(\beta_0) }{\sqrt{K_\lambda}}    
\end{align*}
% \begin{align*}
	%     \breve Q(\beta_0) & := \frac{\sum_{i \in [n]} \sum_{j \in [n], j \neq i} \tilde e_i(\beta_0) \Xi_{\lambda,ij} \breve e_j(\beta_0) }{\sqrt{K_\lambda}} + \frac{2\sum_{i \in [n]} \sum_{j \in [n], j \neq i} \Pi_i \Delta P_{\lambda,ij} \breve e_j(\beta_0) }{\sqrt{K_\lambda}} \\
	% & + \frac{\sum_{i \in [n]} \sum_{j \in [n], j \neq i} \Pi_i  P_{\lambda,ij} \Pi_j \Delta^2 }{\sqrt{K_\lambda}},
	% \end{align*}
where $\breve e_i(\beta_0) = \tilde e_i(\beta_0)  + \Delta \Pi_i$, $\tilde e_i(\beta_0) = \tilde e_i + \Delta \tilde v_i$, $B_{\lambda,jk} = \sum_{i \in [n]} P_{W,ik} P_{W,ij} P_{\lambda,ii} = [P_W D_{\lambda} P_W]_{jk}$, and 
\begin{align*}
	\Xi_{\lambda,ij} =     P_{\lambda,ij} + (P_{\lambda,ii} + P_{\lambda,jj}) P_{W,ij} - B_{\lambda,ij}.
\end{align*}

The proof is divided into three sub-steps. In the first step, we prove that 
\begin{align}\label{eq:Qhat-Qtilde_f}
	| \widehat{Q}(\beta_0) - \breve Q(\beta_0) - C(\Delta)| = o_P(1),   
\end{align}
where $C(\Delta)$ is a deterministic function of $\Delta$ defined in \eqref{eq:CDelta}.

In the second step, we prove that 
\begin{align}\label{eq:Qtilde-Q_f}
	\sup_{y \in \Re} |\mathbb P(\breve Q(\beta_0) \leq y ) - \mathbb P(Q(\beta_0) \leq y )| = o_P(1).
\end{align}
In the last step, we combine \eqref{eq:Qhat-Qtilde_f} and \eqref{eq:Qtilde-Q_f} to derive the final result. 

\medskip

\noindent \textbf{\Large{Step 1: Show \eqref{eq:Qhat-Qtilde_f}}}

Recall $e_i(\beta_0) = \breve e_i(\beta_0) - W_i^{\top} \hat \gamma(\beta_0),$
where $\hat \gamma(\beta_0)= (W^\top W)^{-1} (W^\top \tilde e(\beta_0))$. This implies
\begin{align*}
	e_i(\beta_0)e_{j}(\beta_0) - \breve e_i(\beta_0) \breve e_{j}(\beta_0) &= (\breve e_i(\beta_0) - W_i^{\top} \hat \gamma(\beta_0))(\breve e_{j}(\beta_0) - W_{j}^{\top} \hat \gamma(\beta_0)) - \breve e_i(\beta_0) \breve e_{j}(\beta_0)\\
	& = - \breve e_i(\beta_0) W_{j}^{\top} \hat \gamma(\beta_0) - \breve e_{j}(\beta_0) W_i^{\top} \hat \gamma (\beta_0) + \hat \gamma^\top (\beta_0)W_i  W_{j}^{\top} \hat \gamma (\beta_0). 
\end{align*}

By Lemma \ref{lem:gamma}(4) and the fact that $\sum_{j \in [n]} P_{\lambda, ij} W_j^\top = 0$, we have
\begin{align}\label{eq:step1_1}
	\widehat{Q}(\beta_0) & = \frac{\sum_{i \in [n]} \sum_{j \in [n], j \neq i}e_{i}(\beta_0) P_{\lambda,ij} e_{j}(\beta_0)}{\sqrt{K_\lambda}} - \frac{\sum_{i,j \in [n]^2} \kappa_{ij} e_j^2(\beta_0) A_{\lambda,ii}}{\sqrt{K_\lambda}} \notag \\
	& = \frac{\sum_{i \in [n]} \sum_{j \in [n], j \neq i} \breve e_i(\beta_0) P_{\lambda,ij} \breve e_j(\beta_0) }{\sqrt{K_\lambda}} - \frac{2\sum_{i \in [n]} \sum_{j \in [n], j \neq i}  \breve e_i(\beta_0)P_{\lambda,ij}  W_{j}^{\top} \hat \gamma (\beta_0)}{\sqrt{K_\lambda}} \notag \\
	& + \frac{\sum_{i \in [n]} \sum_{j \in [n], j \neq i}  (W_i^{\top} \hat \gamma (\beta_0)) P_{\lambda,ij} W_{j}^{\top} \hat \gamma(\beta_0) }{\sqrt{K_\lambda}} - \frac{\sum_{i \in [n]} \tilde \sigma_i^2(\beta_0) A_{\lambda,ii}}{\sqrt{K_\lambda}} + o_P(1) \notag  \\
	& =  \frac{\sum_{i \in [n]} \sum_{j \in [n], j \neq i} \breve e_i(\beta_0) P_{\lambda,ij} \breve e_j(\beta_0) }{\sqrt{K_\lambda}} +  \frac{2\sum_{i \in [n]}  \breve e_i(\beta_0)P_{\lambda,ii}  W_{i}^{\top} \hat \gamma(\beta_0)}{\sqrt{K_\lambda}} \notag \\
	& - \frac{\sum_{i \in [n]}   (W_i^{\top} \hat \gamma (\beta_0))^2P_{\lambda,ii} }{\sqrt{K_\lambda}} - \frac{\sum_{i \in [n]} \tilde \sigma_i^2(\beta_0) A_{\lambda,ii}}{\sqrt{K_\lambda}} + o_P(1). 
\end{align}
We note that
\begin{align*}
	W_{i}^{\top} \hat \gamma(\beta_0) = \sum_{j \in [n]} P_{W,ij} \tilde e_j(\beta_0),
\end{align*}
and thus,
\begin{align*}
	\frac{\sum_{i \in [n]}  \breve e_i(\beta_0)P_{\lambda,ii}  W_{i}^{\top} \hat \gamma(\beta_0)}{\sqrt{K_\lambda}} & = \frac{\sum_{i,j \in [n]^2}  \Pi_i \Delta P_{\lambda,ii} P_{W,ij} \tilde e_j(\beta_0)}{\sqrt{K_\lambda}} + \frac{\sum_{i \in [n]}  \tilde e_i^2(\beta_0) P_{\lambda,ii} P_{W,ii}}{\sqrt{K_\lambda}} \\
	& + \frac{\sum_{i,j \in [n]^2,i \neq j}  \tilde e_i(\beta_0) P_{\lambda,ii} P_{W,ij} \tilde e_j(\beta_0)}{\sqrt{K_\lambda}},
\end{align*}
where 
\begin{align*}
	Var \left(\frac{\sum_{i,j \in [n]^2}  \Pi_i \Delta P_{\lambda,ii} P_{W,ij} \tilde e_j(\beta_0)}{\sqrt{K_\lambda}} \right) & \lesssim \frac{ \sum_{j \in [n]} \left( \sum_{i \in [n]}  \Pi_i \Delta P_{\lambda,ii} P_{W,ij} \right)^2 }{ K_\lambda} \\
	& = \frac{ \sum_{i,k \in [n]^2} \Pi_i \Delta P_{\lambda,ii} P_{W,ik} \Pi_k \Delta P_{\lambda,kk} }{ K_\lambda} \\ 
	& \lesssim \frac{ \sum_{i \in [n]} \Pi_i^2 \Delta^2 P_{\lambda,ii}^2}{ K_\lambda} \\
	& \lesssim \frac{\max_{i \in [n]} P_{\lambda,ii}^2 }{ \sqrt{K_\lambda}} \frac{||\Pi||_2^2 \Delta^2 }{ \sqrt{K_\lambda}} \\
	& \lesssim {p_n'}^{1/2}  \frac{||\Pi||_2^2 \Delta^2 }{ \sqrt{K_\lambda}}  = o(1)
\end{align*}
and 
\begin{align*}
	Var \left( \frac{\sum_{i \in [n]}  \tilde e_i^2(\beta_0) P_{\lambda,ii} P_{W,ii}}{\sqrt{K_\lambda}} \right) & \lesssim \frac{ \sum_{i \in [n]} P_{\lambda, ii}^2 P_{W,ii}^2}{ K_{\lambda}} \lesssim {p_n'} (\sum_{i \in [n]} P_{W,ii}^2)= o(1).
\end{align*}

This implies 
\begin{align}\label{eq:step1_2}
	& \frac{\sum_{i \in [n]}  \breve e_i(\beta_0)P_{\lambda,ii}  W_{i}^{\top} \hat \gamma(\beta_0)}{\sqrt{K_\lambda}}  \notag  \\
	& = \mathbb E \left(\frac{\sum_{i \in [n]}  \tilde e_i^2(\beta_0) P_{\lambda,ii} P_{W,ii}}{\sqrt{K_\lambda}} \right) + \frac{\sum_{i,j \in [n]^2,i \neq j}  \tilde e_i(\beta_0) P_{\lambda,ii} P_{W,ij} \tilde e_j(\beta_0)}{\sqrt{K_\lambda}} + o_P(1) \notag  \\
	& = \frac{\sum_{i \in [n]} P_{\lambda, ii} P_{W,ii} \tilde \sigma_i^2(\beta_0) }{\sqrt{K_{\lambda}}} + \frac{\sum_{i,j \in [n]^2,i \neq j}  \tilde e_i(\beta_0) P_{\lambda,ii} P_{W,ij} \tilde e_j(\beta_0)}{\sqrt{K_\lambda}}+ o_P(1).
\end{align}

In addition, we have
\begin{align*}
	\frac{\sum_{i \in [n]}   (W_i^{\top} \hat \gamma (\beta_0))^2P_{\lambda,ii} }{\sqrt{K_\lambda}} & = \frac{\sum_{i \in [n]} ( \sum_{j \in [n]} P_{W,ij} \tilde e_j(\beta_0)  )^2 P_{\lambda,ii}  }{\sqrt{K_\lambda}} \\
	& = \frac{\sum_{i,j,k \in [n]^3}  \tilde e_k(\beta_0) P_{W,ik} P_{W,ij} P_{\lambda,ii} \tilde e_j(\beta_0) }{\sqrt{K_\lambda}} \\
	& = \frac{\sum_{j,k \in [n]^2, j \neq k}  \tilde e_j(\beta_0) B_{\lambda,jk} \tilde e_k(\beta_0) } {\sqrt{K_\lambda}}+ \frac{\sum_{j \in [n]}  \tilde e_j^2(\beta_0) B_{\lambda,jj}}{\sqrt{K_\lambda}}
\end{align*}
and 
\begin{align*}
	Var \left( \frac{\sum_{j \in [n]}  \tilde e_j^2(\beta_0) B_{\lambda,jj}}{\sqrt{K_\lambda}}\right) & \lesssim \frac{\sum_{j \in [n]}  B_{\lambda,jj}^2 }{K_\lambda} \\
	& = \frac{ \sum_{j \in [n]} (\sum_{i \in [n]} P_{W,ij}^2 P_{\lambda,ii})^2}{K_\lambda} \\
	& \lesssim \frac{ \left(\max_{i \in [n]} P_{\lambda,ii}^2\right) \left( \sum_{j \in [n]} P_{W,jj}^2 \right)}{ K_{\lambda}} \\
	& \lesssim {p_n'} (\sum_{j \in [n]} P_{W,jj}^2) = o(1),
\end{align*}
which implies 
\begin{align}\label{eq:step1_3}
	\frac{\sum_{i \in [n]}   (W_i^{\top} \hat \gamma (\beta_0))^2P_{\lambda,ii} }{\sqrt{K_\lambda}} & = \frac{\sum_{j,k \in [n]^2, j \neq k}  \tilde e_j(\beta_0) B_{\lambda,jk} \tilde e_k(\beta_0) } {\sqrt{K_\lambda}}+ \frac{\sum_{j \in [n]}  \tilde \sigma_j^2(\beta_0) B_{\lambda,jj}}{\sqrt{K_\lambda}} + o_P(1).
\end{align}

Combining \eqref{eq:step1_1}, \eqref{eq:step1_2}, and \eqref{eq:step1_3}, we have
\begin{align*}
	\widehat{Q}(\beta_0) & =  \frac{\sum_{i \in [n]} \sum_{j \in [n], j \neq i} \breve e_i(\beta_0) P_{\lambda,ij} \breve e_j(\beta_0) }{\sqrt{K_\lambda}} +  \frac{2\sum_{i \in [n]}  \breve e_i(\beta_0)P_{\lambda,ii}  W_{i}^{\top} \hat \gamma(\beta_0)}{\sqrt{K_\lambda}}  - \frac{\sum_{i \in [n]}   (W_i^{\top} \hat \gamma (\beta_0))^2P_{\lambda,ii} }{\sqrt{K_\lambda}} \notag  \\
	& - \frac{\sum_{i \in [n]} \tilde \sigma_i^2(\beta_0) A_{\lambda,ii}}{\sqrt{K_\lambda}} + o_P(1) \\
	& = \frac{\sum_{i \in [n]} \sum_{j \in [n], j \neq i} \tilde e_i(\beta_0) \Xi_{\lambda,ij} \tilde e_j(\beta_0) }{\sqrt{K_\lambda}} + \frac{2\sum_{i \in [n]} \sum_{j \in [n], j \neq i} \Pi_i \Delta P_{\lambda,ij} \tilde e_j(\beta_0) }{\sqrt{K_\lambda}} \\
	& + \frac{\sum_{i \in [n]} \sum_{j \in [n], j \neq i} \Pi_i  P_{\lambda,ij} \Pi_j \Delta^2 }{\sqrt{K_\lambda}} + o_P(1) \\
	& = \breve Q(\beta_0) + \frac{2\sum_{i \in [n]} \sum_{j \in [n], j \neq i} \Pi_i \Delta \left(P_{\lambda,ij} - \Xi_{\lambda,ij}\right) \tilde e_j(\beta_0) }{\sqrt{K_\lambda}} \\
	& + \frac{\sum_{i \in [n]} \sum_{j \in [n], j \neq i} \Pi_i  \left(P_{\lambda,ij} - \Xi_{\lambda,ij}\right) \Pi_j \Delta^2 }{\sqrt{K_\lambda}} + o_P(1)\\
	& = \breve Q(\beta_0)  + C(\Delta) + o_P(1),
	% & = \frac{\sum_{i \in [n]} \sum_{j \in [n], j \neq i} \breve e_i(\beta_0) P_{\lambda,ij} \tilde e_j(\beta_0) }{\sqrt{K_\lambda}} + \frac{2\sum_{i,j \in [n]^2,i \neq j}  \tilde e_i(\beta_0) P_{\lambda,ii} P_{W,ij} \breve e_j(\beta_0)}{\sqrt{K_\lambda}} - \frac{\sum_{j,k \in [n]^2, j \neq k}  \breve e_j(\beta_0) B_{\lambda,jk} \breve e_k(\beta_0) } {\sqrt{K_\lambda}}  + o_P(1),
\end{align*}
where the last line is by Lemma \ref{lem:gamma}(5) and 
\begin{align}\label{eq:CDelta}
	C(\Delta)  =   \frac{\sum_{i \in [n]} \sum_{j \in [n], j \neq i} \Pi_i  \left(P_{\lambda,ij} - \Xi_{\lambda,ij}\right) \Pi_j \Delta^2 }{\sqrt{K_\lambda}} . 
\end{align}

\medskip

\noindent \textbf{\Large{Step 2: Show \eqref{eq:Qtilde-Q_f}}}

For a set $A_y = (-\infty,y)$, define 
\begin{align*}
	h_{n,y}(x):= \max \left(0, 1- \frac{d(x,A_y^{3 \delta_n})}{\delta_n} \right) \quad \text{and} \quad f_{n,y}(x) := \mathbb{E} h_{n,y}(x + h_n \mathcal{N}),   
\end{align*}
where $A_y^{3\delta_n}$ is the $3\delta_n$-enlargement of $A_y$, $\mathcal{N}$ has a standard normal distribution, $\delta_n:= C_h h_n$ for some $C_h > 1$, and $h_n = p_n^{1/(7-\zeta)}$ for an arbitrary constant $\zeta \in (0,1)$. 

Applying \cite{P01}[Theorem 10.18] with $\eps$, $\sigma$, $\delta$, $A$, $f(\cdot)$, $g(\cdot)$ in the theorem replaced by $B$, $h_n$, $\delta_n$, $A_y$, $f_{n,y}(\cdot)$, and $g_{n,y}(\cdot)$ in our notation, respectively,\footnote{Theorem 10.18 in \cite{P01} was also employed by \cite{CCK14} in their analysis.} we have $f_{n,y}(\cdot)$ is twice-continuously differentiable such that for all $x,y,v$, and for $\delta_n > h_n$, 
\begin{align*}
	(1-B)1\{x \in A_y\} \leq f_{n,y}(x) \leq B + (1-B)1\{x \in A_y^{3\delta_n}\}
\end{align*}
and 
\begin{align*}
	\left| f_{n,y}(x+v) - f_{n,y}(x) - v \partial f_{n,y}(x) - \frac{1}{2} v^2 \partial^2 f_{n,y}(x)  \right| \leq C_f |v|^3,
\end{align*}
where 
$B = \left(\frac{1+ a}{e^a } \right)^{1/2}$, $1+a = \delta_n^2/h_n^2$, and $C_f = (h_n^2 \delta_n)^{-1}$. Because we set $\delta_n = C_h h_n$, $\delta_n> h_n$ is equivalent to $C_h > 1$. In addition, 
\begin{align*}
	1+a = \delta_n^2/h_n^2 = C_h^2, 
\end{align*}
which implies 
\begin{align*}
	a= C_h^2 -1 \quad \text{and} \quad B =     \left(\frac{C_h^2}{\exp(C_h^2 -1)} \right)^{1/2}. 
\end{align*}
To highlight the dependence of $B$ on $C_h$, we rewrite it as $B(C_h)$. Therefore, under our notation, \citet[Theorem 10.18]{P01} implies for $C_h > 1$ and $\delta_n = C_h h_n$, 
\begin{align}
	\left| f_{n,y}(x+v) - f_{n,y}(x) - v \partial f_{n,y}(x) - \frac{1}{2} v^2 \partial^2 f_{n,y}(x)  \right| \leq \frac{|v|^3}{\delta_n h_n^2 },
	\label{eq:char_1}
\end{align}
\begin{align}
	(1-B(C_h)) 1\{x \in A_y \} \leq f_{n,y}(x) \leq B(C_h) + (1-B(C_h)) 1\{x \in A_y^{3 \delta_n} \},
	\label{eq:char_2_A}
\end{align}
where $B(C_h) := \left( \frac{C_h^2}{\exp(C_h^2 - 1)} \right)^{1/2}$ and 
\begin{align}\label{eq:f''}   
	\partial^2 f_{n,y}(x) = h_n^{-2} \mathbb E g_{n,y}(x+h_n \N)(\N^2 -1). 
\end{align}

By \eqref{eq:char_2_A}, we have
\begin{align*}
	\mathbb P(\breve Q(\beta_0) \leq y ) - \mathbb P(Q(\beta_0) \leq y ) & \leq (1-B(C_h))^{-1} \mathbb E( f_{n,y}(\breve Q(\beta_0) ) ) - \mathbb P(Q(\beta_0) \leq y ) \notag \\
	& \leq (1-B(C_h))^{-1} \left| \mathbb E( f_{n,y}(\breve Q(\beta_0)) ) - \mathbb E( f_{n,y}(Q(\beta_0)) )\right| \notag  \\
	& + \frac{B(C_h)}{1-B(C_h)} + \mathbb P(Q(\beta_0) \leq y+3\delta_n ) - \mathbb P(Q(\beta_0) \leq y ) \notag \\
	& \leq (1-B(C_h))^{-1} \sup_{y \in \Re} \left| \mathbb E( f_{n,y}(\breve Q(\beta_0)) ) - \mathbb E( f_{n,y}(Q(\beta_0)) )\right| \notag \\
	& + \frac{B(C_h)}{1-B(C_h)} + \sup_{y \in \Re } \mathbb P(|Q(\beta_0)-y| \leq 3\delta_n ).
\end{align*}

Similarly, we have 
\begin{align*}
	\mathbb P(Q(\beta_0) \leq y ) - \mathbb P(\breve Q(\beta_0) \leq y )     & \leq (1-B(C_h))^{-1} \sup_{y \in \Re} \left| \mathbb E( f_{n,y}(\breve Q(\beta_0)) ) - \mathbb E( f_{n,y}(Q(\beta_0)) )\right| \notag \\
	& + \frac{B(C_h)}{1-B(C_h)} + \sup_{y \in \Re } \mathbb P(|Q(\beta_0)-y| \leq 3\delta_n ).
\end{align*}
which implies 
\begin{align*}
	\sup_{y \in \Re} \left\vert \mathbb P(\breve Q(\beta_0) \leq y ) - \mathbb P(Q(\beta_0) \leq y ) \right\vert &  \leq (1-B(C_h))^{-1} \sup_{y \in \Re} \left| \mathbb E( f_{n,y}(\breve Q(\beta_0)) ) - \mathbb E( f_{n,y}(Q(\beta_0)) )\right| \notag \\
	& + \frac{B(C_h)}{1-B(C_h)} + \sup_{y \in \Re } \mathbb P(|Q(\beta_0)-y| \leq 3\delta_n ).
\end{align*}

For any $\eps>0$, we choose $C_h$ to be sufficiently large so that $B(C_h)/(1-B(C_h)) = \eps$, or equivalently, $B(C_h) = \eps/(1+\eps)$. This is possible because $B(u)$ is a monotone decreasing function on $u>1$ and $\lim_{u \rightarrow \infty} B(u) = 0$. 

Throughout, we omit the dependence of $C_h$ on $\eps$ for notation simplicity. Then, we have
\begin{align}\label{eq:Qtilde-Q}
	\sup_{y \in \Re} \left\vert \mathbb P(\breve Q(\beta_0) \leq y ) - \mathbb P(Q(\beta_0) \leq y ) \right\vert &  \leq (1+\eps) \sup_{y \in \Re} \left| \mathbb E( f_{n,y}(\breve Q(\beta_0)) ) - \mathbb E( f_{n,y}(Q(\beta_0)) )\right| \notag \\
	& + \eps + \sup_{y \in \Re } \mathbb P(|Q(\beta_0)-y| \leq 3\delta_n ).  
\end{align}

Next, we first bound $\sup_{y \in \Re }\left| \mathbb E( f_{n,y}(\breve Q (\beta_0)) ) - \mathbb E( f_{n,y}(Q (\beta_0)) )\right| $. Let 
\begin{align}\label{eq:G}
	&\mathcal{G}_n(\{a_i\}_{i \in [n]}):= \frac{\sum_{i \in [n]} \sum_{j \in [n], j \neq i} \{a_i  \Xi_{\lambda,ij} a_j \} }{\sqrt{K_\lambda}}, 
\end{align}
and $\breve g_i(\beta_0) = g_i \tilde \sigma_i (\beta_0) + \Delta \Pi_i $, where $\{g_i\}_{i \in [n]}$ are i.i.d. standard normal random variables. Then, we can rewrite $\breve Q (\beta_0)$ and $Q (\beta_0)$ as $\breve Q (\beta_0) = \mathcal{G}_n(\{\breve e_i(\beta_0) \}_{i \in [n]})$ and $Q(\beta_0) = \mathcal{G}_n(\{\breve g_i (\beta_0)\}_{i \in [n]})$, respectively. 

For each $k \in [n]$, define
\begin{align*}
	& s_{k} := \frac{\sum_{i <k} \sum_{j <k, j \neq i}   \breve e_i (\beta_0) \Xi_{\lambda,ij} \breve e_j (\beta_0)  }{\sqrt{K_\lambda}} + \frac{\sum_{i > k} \sum_{j >k, j \neq i}  \breve g_i (\beta_0) \Xi_{\lambda,ij} \breve g_j (\beta_0)  }{\sqrt{K_\lambda}} \\
	& + \frac{2\sum_{i <k} \sum_{j >k}   \breve e_i (\beta_0) \Xi_{\lambda,ij} \breve g_j (\beta_0)  }{\sqrt{K_\lambda}},
	\\
	& \mathcal{S}_k := 2 \breve e_k (\beta_0) \left(\frac{ \sum_{i <k} \Xi_{\lambda,ki} \breve e_i (\beta_0) + \sum_{i >k} \Xi_{\lambda,ki} \breve g_i (\beta_0)}{\sqrt{K_\lambda}}\right), \\
	& \breve{\mathcal{S}}_k :=  2 \breve g_k (\beta_0) \left(\frac{ \sum_{i <k} \Xi_{\lambda,ki} \breve e_i (\beta_0) + \sum_{i >k} \Xi_{\lambda,ki} \breve g_i (\beta_0)}{\sqrt{K_\lambda}}\right), 
\end{align*}
so that 
\begin{align*}
	& \mathcal{G}_n(\breve e_1 (\beta_0),..., \breve e_k (\beta_0) , \breve g_{k+1}(\beta_0),\cdots, \breve g_n(\beta_0)) = \mathcal{S}_k + s_k, \quad \text{and} \\
	& \mathcal{G}_n(\breve e_1 (\beta_0),\cdots,\breve e_{k-1} (\beta_0), \breve g_k(\beta_0),\cdots, \breve g_n (\beta_0)) = \breve{\mathcal{S}}_k + s_k.
\end{align*}
By letting  $\mathcal{I}_k$ be the $\sigma$-field generated by $\{\breve g_i(\beta_0), \breve e_i(\beta_0) \}_{i < k} \cup \{ \breve g_i (\beta_0), \breve e_i(\beta_0) \}_{i >k}$, we have $s_k \in \mathcal I_k$, 
\begin{align*}
	&  \mathbb{E}(\mathcal{S}_k| \mathcal{I}_k) = \mathbb{E}(\breve{\mathcal{S}}_k| \mathcal{I}_k), \\
	& \mathbb{E} (\mathcal{S}_k^2 | \mathcal{I}_k ) =  \mathbb{E} (\breve{\mathcal{S}}_k^2 | \mathcal{I}_{k} )= 4\breve \sigma_k^2 (\beta_0) \left[\frac{\sum_{i < k} P_{\lambda,ki} \breve e_i(\beta_0) + \sum_{i >k} P_{\lambda,ki} \breve g_i(\beta_0)}{\sqrt{K_\lambda}}  \right]^2.
\end{align*}
This implies 
\begin{align*}
	\mathbb E \breve{\mathcal{S}}_k \partial f_{n,y}(s_k) = \mathbb E \mathcal{S}_k \partial f_{n,y}(s_k) \quad \text{and} \quad \mathbb E \frac{ \breve{\mathcal{S}}_k^2}{2} \partial^2 f_{n,y}(s_k)  = \mathbb E \frac{ \mathcal{S}_k^2}{2} \partial^2 f_{n,y}(s_k) .
\end{align*}

By telescoping, we have
\begin{align}
	&\left| \mathbb{E} (f_{n,y}(\breve Q (\beta_0)) )- \mathbb{E} (f_{n,y}(Q (\beta_0)))  \right| \notag \\
	& = \left| \sum_{k \in [n]}  \begin{pmatrix}
		&         \mathbb{E}\left[f_{n,y}(\mathcal{G}_n(\breve e_1 (\beta_0),...,\breve e_k (\beta_0), \breve g_{k+1} (\beta_0),\cdots, \breve g_n (\beta_0))) \right] \\
		& - \mathbb{E}\left[ f_{n,y}(\mathcal{G}_n(\breve e_1 (\beta_0),...,\breve e_{k-1} (\beta_0),\breve g_k (\beta_0),..., \breve g_n (\beta_0))) \right]
	\end{pmatrix} \right| \notag \\
	& \leq  \sum_{k \in [n]}  \left|  \mathbb{E}\left[f_{n,y}(\mathcal{S}_k + s_k) \right] - \mathbb{E}\left[ f_{n,y}(\breve{\mathcal{S}}_k + s_k) \right] \right| \notag \\
	& \leq \sum_{k \in [n]}  \left|  \mathbb{E}\left[f_{n,y}(\mathcal{S}_k + s_k) \right] - \mathbb{E} f_{n,y}(s_k) - \mathbb E \mathcal{S}_k \partial f_{n,y}(s_k) - \mathbb E \frac{\mathcal{S}_k^2}{2} \partial^2 f_{n,y}(s_k)  \right| \notag \\
	& + \sum_{k \in [n]}  \left|  \mathbb{E}\left[f_{n,y}(\breve{\mathcal{S}}_k + s_k) \right] - \mathbb{E} f_{n,y}(s_k) - \mathbb E \breve{\mathcal{S}}_k \partial f_{n,y}(s_k) - \mathbb E \frac{ \breve{\mathcal{S}}_k^2}{2} \partial^2 f_{n,y}(s_k)  \right| \notag \\
	& \leq \sum_{k\in [n]}\frac{\mathbb E(|\mathcal{S}_k|^3 +|\breve{\mathcal{S}}_k|^3 )}{ C_h h_n^3},
	\label{eq:telescope_1}
\end{align}
where we define $\mathcal{G}_n(\breve g_1 (\beta_0),...,\breve g_n (\beta_0), \breve g_{n+1} (\beta_0)) \equiv \mathcal{G}_n(\breve g_1 (\beta_0),...,
\breve g_n (\beta_0))$ and $\mathcal{G}_n(\breve g_{0} (\beta_0),\breve e_1 (\beta_0),...,\breve e_n (\beta_0)) \equiv \mathcal{G}_n(\breve e_1 (\beta_0),...,\breve e_n (\beta_0))$. As the RHS of \eqref{eq:telescope_1} does not depend on $y$, we have
\begin{align}\label{eq:telescope_2}
	\sup_{y \in \Re }   \left| \mathbb{E} (f_{n,y}(\breve Q(\beta_0) ) )- \mathbb{E} (f_{n,y}(Q (\beta_0)))  \right| \leq   \sum_{k\in [n]}\frac{\mathbb E(|\mathcal{S}_k|^3 +|\breve{\mathcal{S}}_k|^3 )}{ C_h h_n^3}. 
\end{align}
Recall $\tilde e_i(\beta_0) = \tilde e_i + \tilde v_i \Delta $, $\tilde g_i(\beta_0) = g_i \tilde \sigma_i(\beta_0)$, and
\begin{align*}
	\theta_{k,i} = \begin{cases}
		\Xi_{\lambda, ki} \tilde e_i(\beta_0)  \quad i < k\\
		\Xi_{\lambda, ki} \tilde g_i (\beta_0)  \quad i > k.
	\end{cases}
\end{align*}

Then, we have
\begin{align*}
	\breve e_i(\beta_0) = \tilde e_i(\beta_0) + \Delta \Pi_i, \quad     \breve g_i(\beta_0) = \tilde g_i(\beta_0) + \Delta \Pi_i, \quad
	\mathcal{S}_k = 2 \breve e_k (\beta_0)\frac{\sum_{i \in [n], i \neq k} (\theta_{k,i} + \Xi_{\lambda,ki} \Delta \Pi_i)}{\sqrt{K_\lambda}}
\end{align*}
and
\begin{align}\label{eq:Delta_k}
	\mathbb E (|\mathcal{S}_k|^3 ) & \lesssim \frac{1}{K_\lambda^{3/2}} \mathbb E\left|\sum_{i \in [n],i \neq k} (\theta_{k,i} + \Xi_{\lambda,ki} \Delta \Pi_i )\right|^3  \notag \\
	& \lesssim \frac{1}{K_\lambda^{3/2}} \mathbb E\left|\sum_{i \in [n],i \neq k} \theta_{k,i}\right|^3 + \frac{1}{K_\lambda^{3/2}} \left|\sum_{i \in [n],i \neq k} \Xi_{\lambda,ki} \Delta \Pi_i \right|^3.
\end{align}

Note that $\{\theta_{k,i}\}_{i \in [n],i \neq k}$ is a sequence of independent mean zero random variables.  Then, by Marcinkiewicz-Zygmund inequality, we have 
\begin{align}\label{eq:strong_approx_inter_2_1}
	\mathbb{E} \left(\left|\sum_{i \in [n], i \neq k} \theta_{k,i} \right|^3  \right) 
	& \leq C \mathbb{E} \left[  (\sum_{i \in [n], i \neq k} \theta_{k,i}^2)^{3/2}   \right] \leq C \left[\mathbb{E} \left( (\sum_{i \in [n], i \neq k} \theta_{k,i}^2)^2  \right) \right]^{3/4} \notag\\ 
	& \leq C \left[\sum_{i \in [n],i \neq k} \sum_{j \in [n],j \neq k} \Xi_{\lambda,ik}^2\Xi_{\lambda,jk}^2  \right]^{3/4} \notag \\
	& \leq C \left[\sum_{i \in [n],i \neq k}  \Xi_{\lambda,ik}^2 \right]^{3/2}. 
\end{align}

This implies 
\begin{align*}
	\sum_{k\in [n]}\frac{\mathbb{E} \left(\left|\sum_{i \in [n], i \neq k} \theta_{k,i} \right|^3  \right) }{K_\lambda^{3/2}} & \lesssim    \sum_{k\in [n]}\frac{\left[\sum_{i \in [n], i \neq k} \Xi_{\lambda,ik}^2 \right]^{3/2}}{ K_\lambda^{3/2}} \\
	& \lesssim \frac{\max_{k \in [n]} \left[\sum_{i \in [n], i \neq k} \Xi_{\lambda,ik}^2 \right]^{1/2}}{K_\lambda^{1/2}} = O\left(p_n^{1/2}\right).
\end{align*}
% where the equality is by Lemma \ref{lem:gamma}(6). 
% where the equality is due to Lemma \ref{lem:qN}. 

For the second term on the RHS of \eqref{eq:Delta_k}, we have
\begin{align*}
	\max_{k \in [n]} \left|\sum_{i \in [n],i \neq k} \Xi_{\lambda,ki} \Delta \Pi_i \right|/K_{\lambda}^{1/2} & \leq     \max_{k \in [n]} \left[\sum_{i \in [n],i \neq k} \Xi_{\lambda,ki}^2 \right]^{1/2} K_{\lambda}^{-1/2} |\Delta| \left\Vert \Pi \right\Vert_2 \\
	& \leq p_n^{1/2} |\Delta| \left\Vert \Pi \right\Vert_2. 
\end{align*}
% where the second inequality is due to Lemma \ref{lem:qN}. 

Therefore, we have
\begin{align*}
	& \sum_{k\in [n]}  \frac{\left|\sum_{i \in [n],i \neq k} \Xi_{\lambda,ki} \Delta \Pi_i \right|^3}{K_\lambda^{3/2}}  \leq  p_n^{1/2} |\Delta| \left\Vert \Pi \right\Vert_2 \sum_{k\in [n]}   \frac{\left|\sum_{i \in [n],i \neq k} \Xi_{\lambda,ki} \Delta \Pi_i \right|^2}{K_\lambda} \\
	& = p_n^{1/2} |\Delta| \left\Vert \Pi \right\Vert_2 \frac{\Pi^\top \Xi_\lambda^2 \Pi \Delta^2 }{K_\lambda} 
	\leq p_n^{1/2} \left( \frac{\left\Vert \Pi \right\Vert_2^2 \Delta^2}{ K_\lambda^{2/3}} \right)^{3/2}  = O(p_n^{1/2}), 
\end{align*}
which implies 
\begin{align*}
	\sum_{k\in [n]}\frac{\mathbb E(|\mathcal{S}_k|^3 )}{ C_h h_n^3}  =  O\left(\frac{p_n^{1/2}}{ h_n^3}\right).
\end{align*}
Similarly, we have
\begin{align*}
	\sum_{k\in [n]}\frac{\mathbb E(|\breve{\mathcal{S}}_k|^3 )}{ C_h h_n^3} =   O\left(\frac{p_n^{1/2}}{ h_n^3}\right), 
\end{align*}
and thus, 
\begin{align}\label{eq:f_11}
	\sup_{y \in \Re }\left| \mathbb E( f_{n,y}(\breve Q (\beta_0)) ) - \mathbb E( f_{n,y}(Q (\beta_0)) )\right|  = O\left(\frac{p_n^{1/2}}{ h_n^3}\right).
\end{align}

In addition, by Lemma \ref{lem:anti-concentration}, we have
\begin{align}\label{eq:term3_1}
	\sup_{y \in \Re } \mathbb P(|Q(\beta_0)-y| \leq 3\delta_n ) \leq C_\zeta (3C_h)^{(1-\zeta)/2} h_n^{(1-\zeta)/2}
\end{align}
for any $\zeta \in (0,1)$ and $C_{\zeta} \in (0,\infty)$ that only depends on $\zeta$ and $\underline c$ in Assumption \ref{ass:reg}.3.

Then, combining \eqref{eq:Qtilde-Q}, \eqref{eq:f_11}, and \eqref{eq:term3_1}, we have
\begin{align*}
	& \sup_{y \in \Re} \left\vert \mathbb P(\breve Q(\beta_0) \leq y ) - \mathbb P(Q(\beta_0) \leq y ) \right\vert \\
	&  \leq (1+\eps) \sup_{y \in \Re} \left| \mathbb E( f_{n,y}(\breve Q(\beta_0)) ) - \mathbb E( f_{n,y}(Q(\beta_0)) )\right|  + \eps + \sup_{y \in \Re } \mathbb P(|Q(\beta_0)-y| \leq 3\delta_n ) \\  
	& \leq O\left( \frac{p_n^{1/2}} {h_n^3}\right) + \eps + C_\zeta (3C_h)^{(1-\zeta)/2} h_n^{(1-\zeta)/2}. 
\end{align*}

By letting $n \rightarrow \infty$, we have
\begin{align*}
	& \limsup_{n \rightarrow \infty}   \sup_{y \in \Re} \left\vert \mathbb P(\breve Q(\beta_0) \leq y ) - \mathbb P(Q(\beta_0) \leq y ) \right\vert \leq  \eps. 
\end{align*}
Because $\eps$ is arbitrary, we have
\begin{align}\label{eq:step2_final_1}
	\sup_{y \in \Re }  \left| \mathbb P(Q (\beta_0) \leq y )-\mathbb P(\breve Q (\beta_0) \leq y ) \right| = o(1).
\end{align}

% Note that 
% \begin{align*}
	%     e_i = M_{W,i} \tilde e_i = \tilde e_i - W_i^{\top} \hat \gamma, 
	% \end{align*}
% where $M_{W,i}$ is the $i$th row of $M_W$ and $\hat \gamma (\beta_0)= (W^\top W)^{-1}(W^\top \tilde e)$. 

% By Lemma \ref{lem:Qhat-Qtilde}

\medskip

\noindent \textbf{\Large{Step 3: Concluding the Proof}}  

For any sufficiently small $\eps>0$, we have
\begin{align*}
	&  \mathbb P(\widehat{Q} (\beta_0) \leq y ) - \mathbb P(Q (\beta_0) + C(\Delta)\leq y ) \\
	& \leq \mathbb P(\widehat{Q} (\beta_0) \leq y, |\widehat{Q} (\beta_0) - \breve Q (\beta_0) - C(\Delta)| \leq \eps ) - \mathbb P(Q (\beta_0) + C(\Delta)\leq y ) +  \mathbb P( |\widehat{Q} (\beta_0) - \breve Q (\beta_0) - C(\Delta)| \geq \eps ) \\
	& \leq  \mathbb P(\breve Q (\beta_0) + C(\Delta) \leq y + \eps ) - \mathbb P(Q (\beta_0)+C(\Delta) \leq y ) +  \mathbb P( |\widehat{Q} (\beta_0) - \breve Q (\beta_0) - C(\Delta)| \geq \eps ) \\
	& \leq \sup_{y \in \Re}\left| \mathbb P(\breve Q (\beta_0) \leq y ) - \mathbb P(Q (\beta_0)\leq y ) \right| + \sup_{y \in \Re} \mathbb P(|Q (\beta_0) -y|\leq \eps)+ \sup_{y \in \Re}\mathbb P( |\widehat{Q} (\beta_0) - \breve Q (\beta_0)-C(\Delta)| \geq \eps ).
\end{align*}
Similarly, we can show that 
\begin{align*}
	&  \mathbb P(\widehat{Q} (\beta_0) > y ) - \mathbb P(Q (\beta_0) + C(\Delta)> y ) \\
	& \leq \mathbb P(\widehat{Q} (\beta_0) > y, |\widehat{Q} (\beta_0) - \breve Q (\beta_0)-C(\Delta)| \leq \eps ) - \mathbb P(Q (\beta_0) + C(\Delta)> y ) +  \mathbb P( |\widehat{Q} (\beta_0) - \breve Q (\beta_0)-C(\Delta)| \geq \eps ) \\
	& \leq  \mathbb P(\breve Q (\beta_0) + C(\Delta)> y - \eps ) - \mathbb P(Q (\beta_0) + C(\Delta) > y ) +  \mathbb P( |\widehat{Q} (\beta_0) - \breve Q (\beta_0)-C(\Delta)| \geq \eps ) \\
	& \leq \sup_{y \in \Re}\left| \mathbb P(\breve Q (\beta_0) \leq y ) - \mathbb P(Q (\beta_0)\leq y ) \right| + \sup_{y \in \Re} \mathbb P(|Q (\beta_0) -y|\leq \eps)+ \sup_{y \in \Re}\mathbb P( |\widehat{Q} (\beta_0) - \breve Q (\beta_0)-C(\Delta)| \geq \eps ),
	% & = C_\zeta \eps^{(1-\zeta)/2} + o_P(1), 
\end{align*}
or equivalently, 
\begin{align*}
	& \mathbb P(Q (\beta_0) + C(\Delta) \leq y ) -    \mathbb P(\widehat{Q} (\beta_0) \leq y ) \\
	& \leq \sup_{y \in \Re}\left| \mathbb P(\breve Q (\beta_0) \leq y ) - \mathbb P(Q (\beta_0)\leq y ) \right| + \sup_{y \in \Re} \mathbb P(|Q (\beta_0) -y|\leq \eps)+ \sup_{y \in \Re}\mathbb P( |\widehat{Q} (\beta_0) - \breve Q (\beta_0)-C(\Delta)| \geq \eps ).
\end{align*}

Combining the two results, we have 
\begin{align*}
	& \sup_{y \in \Re}    \left| \mathbb P(Q (\beta_0) + C(\Delta)\leq y ) - \mathbb P(\widehat{Q} (\beta_0) \leq y ) \right| \\
	& \leq \sup_{y \in \Re}\left| \mathbb P(\breve Q (\beta_0) \leq y ) - \mathbb P(Q (\beta_0)\leq y ) \right| + \sup_{y \in \Re} \mathbb P(|Q (\beta_0) -y|\leq \eps)+ \sup_{y \in \Re}\mathbb P( |\widehat{Q} (\beta_0) - \breve Q (\beta_0)-C(\Delta)| \geq \eps ) \\
	&\leq C_\zeta \eps^{(1-\zeta)/2} + o(1), 
\end{align*}
where the last inequality is by Lemma \ref{lem:anti-concentration} and the above two steps. 

As $\eps$ is arbitrary, by letting $\eps$ shrink to zero, we obtain the desired result that, 
\begin{align}\label{eq:Qhat-Q_f}
	\sup_{y\in \Re}\left|\mathbb P(\widehat Q(\beta_0) \leq y)- \mathbb P(Q(\beta_0) + C(\Delta) \leq y)\right| = o(1).    
\end{align}

\section{Proof of Theorem \ref{thm:Fhat-F}}\label{sec:pf_Fhat-F}
Throughout this section, we rely on the following notation: 
$M_n = n^{1/q}$, $h_n = (p_n n^{3/q})^{1/(7-\zeta)}$, where $\zeta$ is an arbitrary constant that belongs to the interval $(0,1)$, $\delta_n = C_h h_n$ for some constant $C_h$ that is fixed and defined later, and
$$t_n = (M_n^2 h_n^{-4} p_n \log(n))^{1/2} + p_n M_n^2 h_n^{-2} \log(n).$$
By Assumption \ref{ass:reg}.5, we have
\begin{align*}
	t_n =  \left[\left(p_n n^{\frac{2-2 \zeta}{q(3-\zeta)}}\right)^{\frac{3-\zeta}{7-\zeta}} \log (n)\right]^{1/2} +  \left(p_n n^{\frac{8-2 \zeta}{q(5-\zeta)}}\right)^{\frac{5-\zeta}{7-\zeta}} \log (n) = o(1).
\end{align*}

The constants $(c,C)$ below are independent of $n$ but may take different values in different contexts.  We also note that, in this section, we do not require the null hypothesis to hold. 
We aim to bound the Kolmogorov distance between $\widehat{Q}^*(\beta_0)$ and $Q^*(\beta_0)$ given data $\mathcal D$, and the definitions of $\widehat{Q}^*(\beta_0)$ and $Q^*(\beta_0)$ can be found in \eqref{eq:Qhat*} and \eqref{eq:Q^*}, respectively. Further, define
\begin{align}\label{eq:Q*}
	\breve Q^*(\beta_0)  = \frac{\sum_{i \in [n]} \sum_{j \in [n], j \neq i} \eta_i \breve e_i(\beta_0)\Xi_{\lambda,ij} \eta_j \breve e_{j}(\beta_0)}{\sqrt{K_\lambda}},
\end{align}
where $\{\eta_i\}_{i \in [n]}$ is the same as those in the definition of $\widehat{Q}^*(\beta_0)$.
Then, we have
\begin{align}\label{eq:Fhat-F}
	& \sup_{y \in \Re} |\mathbb P( \widehat Q^*(\beta_0) \leq y | \mathcal D) - \mathbb P(  Q^*(\beta_0) \leq y) | \notag \\
	& \leq \sup_{y \in \Re}|\mathbb P(\widehat{Q}^*(\beta_0) \leq y | \mathcal D) - \mathbb P(\breve Q^*(\beta_0)\leq y | \mathcal D)| + \sup_{y \in \Re}|\mathbb P(\breve Q^*(\beta_0) \leq y | \mathcal D) - \mathbb P(Q^*(\beta_0) \leq y )|,
\end{align}
where we use the fact that $Q^*(\beta_0)$ is independent of data $\mathcal D$ by construction. 

\medskip

\noindent \textbf{\Large{Step 1: Bound $\sup_{y \in \Re}|\mathbb P(\widehat{Q}^*(\beta_0) \leq y | \mathcal D) - \mathbb P(\breve Q^*(\beta_0)\leq y | \mathcal D)|$}}

Recall $e_i(\beta_0) = \breve e_i(\beta_0) - W_i^{\top} \hat \gamma(\beta_0),$ 
where $\breve e_i(\beta_0) = \tilde e_i + (\Pi_i + \tilde v_i)\Delta = \tilde e_i(\beta_0) + \Pi_i \Delta$ and  $\hat \gamma(\beta_0)= (W^\top W)^{-1} (W^\top \tilde e(\beta_0))$. This implies
\begin{align*}
	e_i(\beta_0)e_{j}(\beta_0) - \breve e_i(\beta_0) \breve e_{j}(\beta_0) &= (\breve e_i(\beta_0) - W_i^{\top} \hat \gamma(\beta_0))(\breve e_{j}(\beta_0) - W_{j}^{\top} \hat \gamma(\beta_0) ) - \breve e_i(\beta_0) \breve e_{j}(\beta_0)\\
	& = - \breve e_i(\beta_0) W_{j}^{\top} \hat \gamma (\beta_0)- \breve e_{j}(\beta_0) W_i^{\top} \hat \gamma (\beta_0) + \hat \gamma^\top (\beta_0) W_i  W_{j}^{\top} \hat \gamma (\beta_0),  
\end{align*}
and thus, 
\begin{align}\label{eq:Qhat*-Q*}
	\widehat{Q}^*(\beta_0) - \breve Q^*(\beta_0) & = - \frac{2\sum_{i \in [n]} \sum_{j \in [n], j \neq i} \eta_i \breve e_i(\beta_0)\Xi_{\lambda,ij} \eta_j W_{j}^{\top} \hat \gamma (\beta_0)}{\sqrt{K_\lambda}} \notag \\
	& + \frac{\sum_{i \in [n]} \sum_{j \in [n], j \neq i} \eta_i (W_i^{\top} \hat \gamma(\beta_0))\Xi_{\lambda,ij} \eta_j W_{j}^{\top} \hat \gamma (\beta_0)}{\sqrt{K_\lambda}}.
\end{align}
For the first term on the RHS of \eqref{eq:Qhat*-Q*}, we have 
\begin{align}\label{eq:var1}
	&    \mathbb E \left[ \left(\frac{\sum_{i \in [n]} \sum_{j \in [n], j \neq i} \eta_i \breve e_i(\beta_0)\Xi_{\lambda,ij} \eta_j W_{j}^{\top} \hat \gamma (\beta_0)}{\sqrt{K_\lambda}} \right)^2  \mid \mathcal D \right] \notag \\
	& = \frac{2\sum_{i \in [n]} \sum_{j \in [n], j \neq i} \breve e^2_i(\beta_0) \Xi^2_{\lambda,ij} ( W_{j}^{\top} \hat \gamma (\beta_0))^2}{K_\lambda} \notag \\
	& = \frac{2\sum_{i \in [n]} \sum_{j \in [n], j \neq i} \breve e^2_i(\beta_0) \Xi^2_{\lambda,ij} ( \sum_{k \in [n]} P_{W,jk} \tilde e_k(\beta_0))^2}{K_\lambda}
\end{align}
and
\begin{align*}
	&    \mathbb E \frac{\sum_{i \in [n]} \sum_{j \in [n], j \neq i} \breve e^2_i(\beta_0) \Xi^2_{\lambda,ij} ( \sum_{k \in [n]}P_{W,jk} \tilde e_k(\beta_0))^2}{K_\lambda} \\
	& =  \mathbb E \frac{\sum_{i,j,k \in [n]^3, j \neq i} \breve e^2_i(\beta_0) \Xi^2_{\lambda,ij} P_{W,jk}^2 \tilde e_k(\beta_0)^2}{K_\lambda} \\
	& \lesssim  \frac{\sum_{i,j,k \in [n]^3, j \neq i}  \Xi^2_{\lambda,ij} P_{W,jk}^2 }{K_\lambda} \\
	& \lesssim \max_{i \in [n]} P_{W,ii},
\end{align*}
which implies 
\begin{align}\label{eq:Qhat*-Q*1}
	\mathbb E \left[ \left(\frac{\sum_{i \in [n]} \sum_{j \in [n], j \neq i} \eta_i \breve e_i(\beta_0)\Xi_{\lambda,ij} \eta_j W_{j}^{\top} \hat \gamma (\beta_0)}{\sqrt{K_\lambda}} \right)^2  \mid \mathcal D \right] = O_P\left(\max_{i \in [n]} P_{W,ii} \right).
\end{align}

For the second term on the RHS of \eqref{eq:Qhat*-Q*}, we note that 
\begin{align*}
	& Var \left(    \frac{\sum_{i \in [n]} \sum_{j \in [n], j \neq i} \eta_i (W_i^{\top} \hat \gamma(\beta_0))\Xi_{\lambda,ij} \eta_j W_{j}^{\top} \hat \gamma (\beta_0)}{\sqrt{K_\lambda}} \mid \mathcal D \right) \\
	&=  \frac{2\sum_{i \in [n]} \sum_{j \in [n], j \neq i} (\sum_{l\in [n]} P_{W,il} \tilde e_l(\beta_0))^2 \Xi_{\lambda,ij}^2 (\sum_{k \in [n]} P_{W,jk} \tilde e_k(\beta_0))^2}{K_\lambda} 
\end{align*}
and 
\begin{align*}
	& \mathbb E \frac{\sum_{i \in [n]} \sum_{j \in [n], j \neq i} (\sum_{l\in [n]} P_{W,il} \tilde e_l(\beta_0))^2 \Xi_{\lambda,ij}^2 (\sum_{k \in [n]} P_{W,jk} \tilde e_k(\beta_0))^2}{K_\lambda}   \\
	& \lesssim \mathbb E \frac{\sum_{i,j,k,l \in [n]^4, j \neq i} P_{W,il}^2 \tilde e_l^2(\beta_0) \Xi_{\lambda,ij}^2 P_{W,jk}^2 \tilde e_k^2(\beta_0)}{K_\lambda}  \\
	& + \mathbb E \frac{\sum_{i,j,k,l \in [n]^4, j \neq i} P_{W,il}P_{W,jl} \tilde e_l^2(\beta_0) \Xi_{\lambda,ij}^2 P_{W,ik} P_{W,jk} \tilde e_k^2(\beta_0)}{K_\lambda} \\
	& \lesssim \frac{ \sum_{i,j \in [n]^2, i \neq j}(P_{W,ii} P_{W,jj}+P_{W,ij}^2 ) \Xi_{\lambda,ij}^2}{K_\lambda} \\
	& \lesssim \left(\max_{i \in [n]}P_{W,ii}^2 \right),
\end{align*}
where we use the fact that 
\begin{align*}
	P_{W,ij}^2 = \left(\sum_{l \in [n]} P_{W,il} P_{W,jl}\right)^2 \lesssim  \left( \sum_{l \in [n]}   P_{W,il}^2 \right) \left( \sum_{l \in [n]}   P_{W,jl}^2 \right) \lesssim P_{W,ii} P_{W,jj}. 
\end{align*}

Therefore, for any sequence $\eps_n \downarrow 0$, we have
\begin{align*}
	& \mathbb P( |\widehat{Q}^*(\beta_0) - \breve Q^*(\beta_0)| \geq \eps_n \mid \mathcal D) \\
	& \leq \frac{\mathbb E \left[\left(\widehat{Q}^*(\beta_0) - \breve Q^*(\beta_0)\right)^2 \mid \mathcal D \right]}{ \eps_n^2} \\
	& \leq \frac{2 \mathbb E\left[ \left( \frac{2\sum_{i \in [n]} \sum_{j \in [n], j \neq i} \eta_i \breve e_i(\beta_0)P_{\lambda,ij} \eta_j W_{j}^{\top} \hat \gamma(\beta_0)}{\sqrt{K_\lambda}}\right)^2 \mid \mathcal D \right]}{\eps_n^2} \\
	& +  \frac{2 \mathbb E\left[ \left( \frac{\sum_{i \in [n]} \sum_{j \in [n], j \neq i} \eta_i (W_i^{\top} \hat \gamma (\beta_0))P_{\lambda,ij} \eta_j W_{j}^{\top} \hat \gamma (\beta_0)}{\sqrt{K_\lambda}}\right)^2 \mid \mathcal D\right]}{\eps_n^2}  = O_P\left( \frac{\max_{i \in [n]}P_{W,ii}}{ \eps_n^2} \right).
\end{align*}

In addition, we have
\begin{align*}
	&  \mathbb P(\widehat{Q}^*(\beta_0) \leq y | \mathcal D) - \mathbb P(\breve Q^*(\beta_0)\leq y | \mathcal D) \\
	& \leq \mathbb P(\widehat{Q}^*(\beta_0) \leq y, |\widehat{Q}^*(\beta_0) - \breve Q^*(\beta_0)| \leq \eps_n | \mathcal D) - \mathbb P(\breve Q^*(\beta_0)\leq y | \mathcal D) +  \mathbb P( |\widehat{Q}^*(\beta_0) - \breve Q^*(\beta_0)| \geq \eps_n \mid \mathcal D) \\
	& \leq  \mathbb P(\breve Q^*(\beta_0) \leq y + \eps_n | \mathcal D) - \mathbb P(Q^*(\beta_0)\leq y | \mathcal D) +  \mathbb P( |\widehat{Q}^*(\beta_0) - Q^*(\beta_0)| \geq \eps_n \mid \mathcal D). 
\end{align*}
In the same manner, we have
\begin{align*}
	&  \mathbb P(\widehat{Q}^*(\beta_0) > y | \mathcal D) - \mathbb P(\breve Q^*(\beta_0) > y | \mathcal D) \\
	& \leq \mathbb P(\widehat{Q}^*(\beta_0) > y, |\widehat{Q}^*(\beta_0) - \breve Q^*(\beta_0)| \leq \eps_n | \mathcal D) - \mathbb P(\breve Q^*(\beta_0)> y | \mathcal D) +  \mathbb P( |\widehat{Q}^*(\beta_0) - \breve Q^*(\beta_0)| \geq \eps_n \mid \mathcal D) \\
	& \leq  \mathbb P(\breve Q^*(\beta_0) > y - \eps_n | \mathcal D) - \mathbb P(\breve Q^*(\beta_0)> y | \mathcal D) +  \mathbb P( |\widehat{Q}^*(\beta_0) - \breve Q^*(\beta_0)| \geq \eps_n \mid \mathcal D), 
\end{align*} 
which implies 
\begin{align*}
	& \mathbb P(\breve Q^*(\beta_0) \leq y | \mathcal D) - \mathbb P(\widehat{Q}^*(\beta_0) \leq y | \mathcal D)  \\
	& \leq  \mathbb P(\breve Q^*(\beta_0) \leq y | \mathcal D) - \mathbb P(\breve Q^*(\beta_0) \leq y - \eps_n | \mathcal D)  +  \mathbb P( |\widehat{Q}^*(\beta_0) - \breve Q^*(\beta_0)| \geq \eps_n \mid \mathcal D).
\end{align*} 

Combining the above two bounds, we have
\begin{align*}
	& |\mathbb P(\widehat{Q}^*(\beta_0) \leq y | \mathcal D) - \mathbb P(\breve Q^*(\beta_0)\leq y | \mathcal D)| \\
	& \leq \mathbb P( |\widehat{Q}^*(\beta_0) - \breve Q^*(\beta_0)| \geq \eps_n \mid \mathcal D) +  |\mathbb P(\breve Q^*(\beta_0) \leq y + \eps_n | \mathcal D) - \mathbb P(\breve Q^*(\beta_0)\leq y -\eps_n| \mathcal D)| \\
	& \leq \sup_{y \in \Re} 2|\mathbb P(\breve Q^*(\beta_0) \leq y | \mathcal D) - \mathbb P(Q^*(\beta_0) \leq y )| +   \mathbb P(|Q^*(\beta_0) - y|\leq \eps_n \mid \mathcal D) + \mathbb P( |\widehat{Q}^*(\beta_0) - \breve Q^*(\beta_0)| \geq 2\eps_n \mid \mathcal D).
\end{align*}

Taking $\sup_{y \in \Re}$ on both sides, we have
\begin{align*}
	& \sup_{y \in \Re}   |\mathbb P(\widehat{Q}^*(\beta_0) \leq y | \mathcal D ) - \mathbb P(\breve Q(\beta_0) \leq y | \mathcal D)| \\
	& \leq \sup_{y \in \Re} 2|\mathbb P(\breve Q^*(\beta_0) \leq y | \mathcal D) - \mathbb P(Q^*(\beta_0) \leq y )| + \sup_{y \in \Re} \mathbb P(|Q^*(\beta_0) - y|\leq \eps_n \mid \mathcal D) \\
	& + \mathbb P( |\widehat{Q}^*(\beta_0) - \breve Q^*(\beta_0)| \geq 2\eps_n \mid \mathcal D) \\
	& \lesssim  \sup_{y \in \Re} |\mathbb P(\breve Q^*(\beta_0) \leq y | \mathcal D) - \mathbb P(Q^*(\beta_0) \leq y )| + \eps_n^{(1-\zeta)/2} + O_P\left( \frac{\max_{i \in [n]}P_{W,ii}}{ \eps_n^2} \right) ,
\end{align*}
where $\zeta$ is an arbitrary constant in $(0,1)$ and the last inequality is by Lemma \ref{lem:anti-concentration}. 

By choosing 
$\eps_n =\left(\max_{i \in [n]} P_{W,ii}\right)^{\frac{2}{5-\zeta}}$,
we have
\begin{align}\label{eq:Fhat-F2}
	& \sup_{y \in \Re}|\mathbb P(\widehat{Q}^*(\beta_0) \leq y | \mathcal D) - \mathbb P(\breve Q^*(\beta_0)\leq y | \mathcal D)| \notag \\
	& \lesssim  \sup_{y \in \Re} 2|\mathbb P(\breve Q^*(\beta_0) \leq y | \mathcal D) - \mathbb P(Q^*(\beta_0) \leq y )| + O_P\left( \left(\max_{i \in [n]} P_{W,ii}\right)^{\frac{1-\zeta}{5-\zeta}} \right) \notag \\
	& =  \sup_{y \in \Re} 2|\mathbb P(\breve Q^*(\beta_0) \leq y | \mathcal D) - \mathbb P(Q^*(\beta_0) \leq y )| + o_P\left(1\right).
\end{align}

\medskip

\noindent \textbf{\Large{Step 2: Bound $\sup_{y \in \Re} |\mathbb P(\breve Q^*(\beta_0) \leq y | \mathcal D) - \mathbb P(Q^*(\beta_0) \leq y )|$.}} 

For a set $A_y = (-\infty,y)$, we define $g_{n,y}(x):= \max \left(0, 1- \frac{d(x,A_y^{3 \delta_n})}{\delta_n} \right)$ and $f_{n,y}(x) := \mathbb{E} g_{n,y}(x + h_n \mathcal{N})$, where $A_y^{3\delta_n}$ is the $3\delta_n$-enlargement of $A_y$, the random variable $\mathcal{N}$ has a standard normal distribution, $\delta_n:= C_h h_n$ for some $C_h > 1$ to be determined later, and $h_n = (p_n n^{3/q})^{\frac{1}{7-\zeta}} = o(1)$. %{\color{red}{We note that \eqref{eq:char_1}--\eqref{eq:f''} still hold.}} 

Applying \cite{P01}[Theorem 10.18] with $\eps$, $\sigma$, $\delta$, $A$, $f(\cdot)$, $g(\cdot)$ in the theorem replaced by $B$, $h_n$, $\delta_n$, $A_y$, $f_{n,y}(\cdot)$, and $g_{n,y}(\cdot)$ in our notation, respectively,\footnote{Theorem 10.18 in \cite{P01} was also employed by \cite{CCK14} in their analysis.} we have $f_{n,y}(\cdot)$ is twice-continuously differentiable such that for all $x,y,v$, and for $\delta_n > h_n$, 
\begin{align*}
	(1-B)1\{x \in A_y\} \leq f_{n,y}(x) \leq B + (1-B)1\{x \in A_y^{3\delta_n}\}
\end{align*}
and 
\begin{align*}
	\left| f_{n,y}(x+v) - f_{n,y}(x) - v \partial f_{n,y}(x) - \frac{1}{2} v^2 \partial^2 f_{n,y}(x)  \right| \leq C_f |v|^3,
\end{align*}
where 
$B = \left(\frac{1+ a}{e^a } \right)^{1/2}$, $1+a = \delta_n^2/h_n^2$, and $C_f = (h_n^2 \delta_n)^{-1}$. Because we set $\delta_n = C_h h_n$, $\delta_n> h_n$ is equivalent to $C_h > 1$. In addition, 
\begin{align*}
	1+a = \delta_n^2/h_n^2 = C_h^2, 
\end{align*}
which implies 
\begin{align*}
	a= C_h^2 -1 \quad \text{and} \quad B =     \left(\frac{C_h^2}{\exp(C_h^2 -1)} \right)^{1/2}. 
\end{align*}
To highlight the dependence of $B$ on $C_h$, we rewrite it as $B(C_h)$. Therefore, under our notation, \citet[Theorem 10.18]{P01} implies for $C_h > 1$ and $\delta_n = C_h h_n$, 
\begin{align}
	\left| f_{n,y}(x+v) - f_{n,y}(x) - v \partial f_{n,y}(x) - \frac{1}{2} v^2 \partial^2 f_{n,y}(x)  \right| \leq \frac{|v|^3}{\delta_n h_n^2 } = \frac{|v|^3}{C_h h_n^3 },
	\label{eq:char_1}
\end{align}
\begin{align}
	(1-B(C_h)) 1\{x \in A_y \} \leq f_{n,y}(x) \leq B(C_h) + (1-B(C_h)) 1\{x \in A_y^{3 \delta_n} \},
	\label{eq:char_2}
\end{align}
where $B(C_h) := \left( \frac{C_h^2}{\exp(C_h^2 - 1)} \right)^{1/2}$ and 
\begin{align}\label{eq:f''}   
	\partial^2 f_{n,y}(x) = h_n^{-2} \mathbb E g_{n,y}(x+h_n \N)(\N^2 -1). 
\end{align}

By \eqref{eq:char_2}, we have
\begin{align*}
	\mathbb P(\breve Q^*(\beta_0) \leq y | \mathcal D) - \mathbb P(Q^*(\beta_0) \leq y ) & \leq (1-B(C_h))^{-1} \mathbb E( f_{n,y}(\breve Q^*(\beta_0)) | \mathcal D) - \mathbb P(Q^*(\beta_0) \leq y ) \notag \\
	& \leq (1-B(C_h))^{-1} \left| \mathbb E( f_{n,y}(\breve Q^*(\beta_0)) | \mathcal D) - \mathbb E( f_{n,y}(Q^*(\beta_0)) )\right| \notag  \\
	& + \frac{B(C_h)}{1-B(C_h)} + \mathbb P(Q^*(\beta_0) \leq y+3\delta_n ) - \mathbb P(Q^*(\beta_0) \leq y ) \notag \\
	& \leq (1-B(C_h))^{-1} \left| \mathbb E( f_{n,y}(\breve Q^*(\beta_0)) | \mathcal D) - \mathbb E( f_{n,y}(Q^*(\beta_0)) )\right| \notag \\
	& + \frac{B(C_h)}{1-B(C_h)} + \sup_{y \in \Re} \mathbb P(|Q^*(\beta_0)-y| \leq 3\delta_n),
\end{align*}
where we use the fact that $Q^*(\beta_0)$ is independent of $\mathcal D$. Similarly, we have
\begin{align*}
	\mathbb P(Q^*(\beta_0) \leq y ) - \mathbb P(\breve Q^*(\beta_0) & \leq y | \mathcal D)  \leq (1-B(C_h))^{-1} \left| \mathbb E( f_{n,y}(\breve Q^*(\beta_0)) | \mathcal D) - \mathbb E( f_{n,y}(Q^*(\beta_0)) )\right| \notag \\
	& + \frac{B(C_h)}{1-B(C_h)} + \sup_{y \in \Re} \mathbb P(|Q^*(\beta_0)-y| \leq 3\delta_n),
\end{align*}
which implies 
\begin{align*}
	\sup_{y \in \Re} \left\vert \mathbb P(\breve Q^*(\beta_0) \leq y | \mathcal D) - \mathbb P(Q^*(\beta_0) \leq y ) \right\vert  
	& \leq (1-B(C_h))^{-1} \sup_{y \in \Re} \left| \mathbb E( f_{n,y}(\breve Q^*(\beta_0)) | \mathcal D) - \mathbb E( f_{n,y}(Q^*(\beta_0)) )\right| \notag \\
	& + \frac{B(C_h)}{1-B(C_h)} + \sup_{y \in \Re} \mathbb P(|Q^*(\beta_0)-y| \leq 3\delta_n).
\end{align*}

For any $1>\eps>0$, we choose $C_h$ to be sufficiently large so that $B(C_h)/(1-B(C_h)) = \eps$, or equivalently, $B(C_h) = \eps/(1+\eps)$. This is possible because $B(u)$ is a monotone decreasing function on $u>1$ and $\lim_{u \rightarrow \infty} B(u) = 0$. 

For the rest of the proof, we omit the dependence of $C_h$ on $\eps$ for notation simplicity. Then, we have
\begin{align}\label{eq:Qstar-Q}
	\sup_{y \in \Re} \left\vert \mathbb P(\breve Q^*(\beta_0) \leq y | \mathcal D) - \mathbb P(Q^*(\beta_0) \leq y ) \right\vert  
	& \leq (1+\eps) \sup_{y \in \Re} \left| \mathbb E( f_{n,y}(\breve Q^*(\beta_0)) | \mathcal D) - \mathbb E( f_{n,y}(Q^*(\beta_0)) )\right| \notag \\
	& + \eps + \sup_{y \in \Re} \mathbb P(|Q^*(\beta_0)-y| \leq 3\delta_n). 
\end{align}

Next, we aim to bound $\sup_{y \in \Re}\left| \mathbb E( f_{n,y}(\breve Q^*(\beta_0)) | \mathcal D) - \mathbb E( f_{n,y}(Q^*(\beta_0)) )\right|$ on the RHS of \eqref{eq:Qstar-Q}. Define
\begin{align*}
	&\mathcal{G}_n(\{a_i\}_{i \in [n]}):= \frac{\sum_{i \in [n]} \sum_{j \in [n], j \neq i} \{a_i  \Xi_{\lambda,ij} a_j \} }{\sqrt{K_\lambda}}.
\end{align*}
We further define $\breve \eta_i = \eta_i \breve e_i(\beta_0)$ and $\breve g_i = g_i \breve \sigma_i(\beta_0)$, where $\breve e_i(\beta_0) = \tilde e_i + \Delta(\Pi_i + \tilde v_i)$, $\breve \sigma_i^2(\beta_0) = \mathbb E \breve e_i^2(\beta_0)$, $\{\eta_i\}_{i\in [n]}$ is an i.i.d. sequence of random variables with zero mean and unit variance as defined in Assumption \ref{ass:reg}, and $\{g_i\}_{i \in [n]}$ is an i.i.d. sequence of standard normal random variables. 

Under these definitions, we can rewrite $\breve Q^*(\beta_0)$ and $Q^*(\beta_0)$ as $\breve Q^*(\beta_0) = \mathcal{G}_n(\{\breve \eta_i\}_{i \in [n]})$ and $Q^*(\beta_0) = \mathcal{G}_n(\{\breve g_i\}_{i \in [n]})$, respectively. 

For each $k \in [n]$, define
\begin{align*}
	& s_k := \frac{\sum_{i <k} \sum_{j <k, j \neq i}  \{ \breve \eta_{i} \Xi_{\lambda,ij} \breve \eta_{j} \} }{\sqrt{K_\lambda}} + \frac{\sum_{i > k} \sum_{j >k, j \neq i}  \{\breve g_{i} \Xi_{\lambda,ij} \breve g_{j} \} }{\sqrt{K_\lambda}} \\
	& + \frac{2\sum_{i <k} \sum_{j >k}  \{ \breve \eta_{i} \Xi_{\lambda,ij} \breve g_{j} \} }{\sqrt{K_\lambda}}
	\\
	& \mathcal{S}_k := 2 \breve \eta_k \left(\frac{ \sum_{i <k}  \Xi_{\lambda,ki} \breve \eta_{i}+ \sum_{i >k}  \Xi_{\lambda,ki} \breve g_{i}}{\sqrt{K_\lambda}}\right) \\
	& \breve{\mathcal{S}}_k :=  2 \breve g_k\left(\frac{ \sum_{i <k}  \Xi_{\lambda,ki} \breve \eta_{i}+ \sum_{i >k}  \Xi_{\lambda,ki} \breve g_{i}}{\sqrt{K_\lambda}}\right) 
\end{align*}
so that $\mathcal{G}_n(\breve \eta_{1},...,\breve \eta_k, \breve g_{k+1},\cdots, \breve g_{n}) = \mathcal{S}_k + s_k$ and $\mathcal{G}_n(\breve \eta_{1},\cdots,\breve \eta_{k-1}, \breve g_{k},\cdots, \breve g_{n}) = \breve{\mathcal{S}}_k + s_k$. By letting  $\mathcal{I}_k$ be the $\sigma$-field generated by $\{\breve g_{i}, \breve \eta_{i} \}_{i < k} \cup \{ \breve g_{i}, \breve \eta_{i} \}_{i >k}$, we have
\begin{align*}
	&  \mathbb{E}(\mathcal{S}_k| \mathcal{I}_k,\mathcal D) = \mathbb{E}(\breve{\mathcal{S}}_k| \mathcal{I}_k, \mathcal D) \\
	& \mathbb{E} (\mathcal{S}_k^2 | \mathcal{I}_k,\mathcal D ) = 4\breve e_k^2(\beta_0)\left[\frac{\sum_{i <k-1}  \Xi_{\lambda,ki} \breve \eta_{i} + \sum_{i >k}  \Xi_{\lambda,ki} \breve g_{i}}{\sqrt{K_\lambda}}  \right]^2, \\
	& \mathbb{E} (\breve{\mathcal{S}}_k^2 | \mathcal{I}_{k},\mathcal D ) = 4\breve \sigma^2_k(\beta_0)\left[\frac{\sum_{i <k-1}  \Xi_{\lambda,ki} \breve \eta_{i} + \sum_{i >k}  \Xi_{\lambda,ki} \breve g_{i}}{\sqrt{K_\lambda}}  \right]^2.
\end{align*}

By telescoping, we have
\begin{align}
	&\left| \mathbb{E} (f_{n,y}(\breve Q^*(\beta_0) ) | \mathcal D)- \mathbb{E} (f_{n,y}(Q^*(\beta_0))| \mathcal D)  \right| \notag \\
	& = \left| \sum_{k \in [n]}  \mathbb{E}\left[f_{n,y}(\mathcal{G}_n(\breve \eta_{1},\cdots,\breve \eta_k, \breve g_{k+1},\cdots, \breve g_{n}))| \mathcal D \right] - \mathbb{E}\left[ f_{n,y}(\mathcal{G}_n(\breve \eta_{1},\cdots,\breve \eta_{k-1},\breve g_k,\cdots, \breve g_{n})) | \mathcal D\right] \right|,
	\label{eq:telescope}
\end{align}
where we define $\mathcal{G}_n(\breve g_{1},\cdots,\breve g_{n}, \breve \eta_{n+1}) \equiv \mathcal{G}_n(\breve g_{1},\cdots,
\breve g_{n})$ and $\mathcal{G}_n(\breve g_{0},\breve \eta_{1},\cdots,\breve \eta_{n}) \equiv \mathcal{G}_n(\breve \eta_{1},\cdots,\breve \eta_{n})$. 

Then, by letting $x = s_k$, $v = \mathcal{S}_k$ and $\breve{\mathcal{S}}_k$ in \eqref{eq:char_1}, we have
\begin{align}
	& \bigg\vert  \mathbb{E}(f_{n,y}(\mathcal{G}_n(\breve \eta_{1},\cdots,\breve \eta_k, \breve g_{k+1},\cdots, \breve g_{n}))|\mathcal D) - \mathbb{E}(f_{n,y}(\mathcal{G}_n(\breve \eta_{1},\cdots,\breve \eta_{k-1},\breve g_k,\cdots, \breve g_{n}))|\mathcal D) \notag \\
	& - \frac{1}{2} \sum_{k \in [n]} \mathbb E\left( 2 \partial^2 f_{n,y}(s_k) \left[\frac{\sum_{i <k}  \Xi_{\lambda,ki} \breve \eta_{i} + \sum_{i >k}  \Xi_{\lambda,ki} \breve g_{i}}{\sqrt{K_\lambda}}  \right]^2 \mid \mathcal D  \right)(\breve e_k^2(\beta_0) - \breve \sigma_k^2(\beta_0)) \bigg\vert \notag \\
	& \leq \frac{\mathbb E(|\mathcal{S}_k|^3 + |\widetilde{ \mathcal{S}}_k|^3|\mathcal D)}{ C_h h_n^3}.
	\label{eq:strong_approx_helper}
\end{align}

Define 
\begin{align}\label{eq:Hky}
	H_{k,y} =  \mathbb E\left( \partial^2 f_{n,y}(s_k) \left[\frac{\sum_{i <k}  \Xi_{\lambda,ki} \breve \eta_{i} + \sum_{i >k}  \Xi_{\lambda,ki} \breve g_{i}}{\sqrt{K_\lambda}}  \right]^2 \mid \mathcal D  \right)    
\end{align}
and $\mathcal E_k$ be the sigma field generated by $\breve e_{1}(\beta_0),\cdots,\breve e_k(\beta_0)$. Then, we have $H_{k,y} \in \mathcal E_{k-1}$ and 
\begin{align}\label{eq:f1}
	& \sup_{y \in \Re}  \left| \mathbb{E}(f_{n,y}(\breve Q^*(\beta_0) )|\mathcal D) - \mathbb{E}(f_{n,y}(Q^*(\beta_0)) |\mathcal D)  \right| \notag \\
	& \leq \sup_{y \in \Re} \left| \sum_{k \in [n]} H_{k,y} (\breve e_k^2(\beta_0) - \breve \sigma_k^2(\beta_0))\right| +  \sum_{k\in [n]}\frac{\mathbb E(|\mathcal{S}_k|^3 +|\breve{\mathcal{S}}_k|^3 |\mathcal D)}{ C_h h_n^3}.
\end{align}
In the following, we aim to bound the two terms on the RHS of the \eqref{eq:f1}. 

\medskip

\noindent \textbf{\Large{Step 2.1: Bound $\sup_{y \in \Re}\left| \sum_{k \in [n]} H_{k,y} (\breve e_k^2(\beta_0) - \breve \sigma_k^2(\beta_0))\right|$}}

We note that  $\{H_{k,y} (\breve e_k^2(\beta_0) - \breve \sigma_k^2(\beta_0))\}_{i \in [n]}$ is a martingale difference sequence (MDS) w.r.t. the filtration $\{\mathcal E_k\}_{k \in [n]}$. For some sufficiently large constant $C_1>0$, define
\begin{align*}
	H_{k,y,\leq} = H_{k,y}1\{ \max_{i \in [k-1]} \breve e_{i}^2(\beta_0) \leq C_1M_n \}    ,
\end{align*}
$\breve e_{k,\leq}^2(\beta_0) = \breve e_k^2(\beta_0)1\{\breve e_k^2(\beta_0) \leq C_1 M_n\}$, $\breve e_{k,>}^2(\beta_0) = \breve e_k^2(\beta_0) - \breve e_{k,\leq}^2(\beta_0)$, $\breve \sigma^2_{k,\leq} (\beta_0) = \mathbb E\left(\breve e_{k,\leq}^2(\beta_0)  \right)$ and $\breve \sigma^2_{k,>} (\beta_0)= \mathbb E\left(\breve e_{k,>}^2(\beta_0)  \right)$. Then, for the sequence $t_n$ defined at the beginning of the section, we have
\begin{align}\label{eq:etilde^2-sigmatilde^2}
	& \mathbb P\left(\sup_{y \in \Re} \left| \sum_{k \in [n]} H_{k,y} (\breve e_k^2(\beta_0) - \breve \sigma_k^2(\beta_0))\right| \geq 4 C_1^3 t_n \right) \notag \\
	& \leq  \mathbb P\left(\sup_{y \in \Re} \left| \sum_{k \in [n]} H_{k,y,\leq} (\breve e_k^2(\beta_0) - \breve \sigma_k^2(\beta_0))\right| \geq 3 C_1^3 t_n \right) \notag \\
	& + \mathbb P\left(\sup_{y \in \Re} \left| \sum_{k \in [n]} (H_{k,y}-H_{k,y,\leq}) (\breve e_k^2(\beta_0) - \breve \sigma_k^2(\beta_0))\right| \geq C_1^3 t_n \right)  \notag \\
	& \leq  \mathbb P\left(\sup_{y \in \Re} \left| \sum_{k \in [n]} H_{k,y,\leq} (\breve e_{k,\leq}^2(\beta_0) -  \breve \sigma_{k,\leq}^2(\beta_0))\right| \geq C_1^3 t_n \right) \notag \\
	& + \mathbb P\left(\sup_{y \in \Re} \left\vert \sum_{k \in [n]} H_{k,y,\leq} \breve e_{k,>}^2(\beta_0) \right\vert > C_1^3 t_n \right) \notag \\
	& + \mathbb P\left(\sup_{y \in \Re} \left\vert  \sum_{k \in [n]} H_{k,y,\leq}  \breve \sigma_{k,>}^2(\beta_0) \right\vert \geq C_1^3 t_n \right) \notag \\
	& + \mathbb P\left(\sup_{y \in \Re} \left| \sum_{k \in [n]} (H_{k,y}-H_{k,y,\leq}) (\breve e_k^2(\beta_0) - \breve \sigma_k^2(\beta_0))\right| > C_1^3 t_n \right) \notag \\
	& \leq  \mathbb P\left(\sup_{y \in \Re} \left| \sum_{k \in [n]} H_{k,y,\leq} (\breve e_{k,\leq}^2(\beta_0) - \breve \sigma_{k,\leq}^2(\beta_0))\right| \geq C_1^3 t_n \right) \notag \\
	& + \mathbb P\left(\sup_{y \in \Re} \left\vert \sum_{k \in [n]} H_{k,y,\leq}  \breve \sigma_{k,>}^2(\beta_0) \right\vert  \geq C_1^3 t_n \right) +   2\mathbb P(\max_{i \in [n]} \breve e_i^2(\beta_0) \geq C_1 M_n ) \notag \\
	& \leq  \mathbb P\left(\sup_{y \in \Re} \left| \sum_{k \in [n]} H_{k,y,\leq} (\breve e_{k,\leq}^2(\beta_0) - \breve \sigma_{k,\leq}^2(\beta_0))\right| \geq C_1^3 t_n \right) \notag \\
	& + 1\{ C \geq C_1^{q+1} t_n M_n^{q-2} h_n^2\} +  \frac{C n}{C_1^q M_n^q}  \notag  \\
	& = \mathbb P\left(\sup_{y \in \Re} \left| \sum_{k \in [n]} H_{k,y,\leq} (\breve e_{k,\leq}^2(\beta_0) - \breve \sigma_{k,\leq}^2(\beta_0))\right| \geq C_1^3 t_n \right) +  \frac{C n}{C_1^q M_n^q} \notag \\
	& = \mathbb P\left(\sup_{y \in \Re} \left| \sum_{k \in [n]} H_{k,y,\leq} (\breve e_{k,\leq}^2(\beta_0) - \breve \sigma_{k,\leq}^2(\beta_0))\right| \geq C_1^3 t_n \right) +  \frac{C}{C_1^q}, 
\end{align}
where the second last inequality holds because if $\max_{i \in [n]} \breve e_i^2(\beta_0) \leq C_1M_n $, then 
\begin{align*}
	& \sup_{y \in \Re}  \sum_{k \in [n]} H_{k,y,\leq} \breve e_{k,>}^2(\beta_0)  = 0 \quad \text{and}\\
	&    \sup_{y \in \Re} \left| \sum_{k \in [n]} (H_{k,y}-H_{k,y,\leq}) (\breve e_k^2(\beta_0) - \breve \sigma_k^2(\beta_0))\right| = 0,
\end{align*}
the last inequality holds by \eqref{eq:f''}, 
\begin{align*}
	\breve \sigma_{k,>}^2(\beta_0)  = \mathbb E   \breve e_k^{2}(\beta_0)1\{\breve e_k^{2}(\beta_0) > C_1M_n \} \leq \mathbb E  \frac{ \breve e_k^{2q}(\beta_0)}{(C_1M_n)^{q-1} } \leq \frac{C}{C_1^{q-1} M_n^{q-1}},
\end{align*}
\begin{align*}
	& \sup_{y \in \Re} \left\vert \sum_{k \in [n]} H_{k,y,\leq}  \breve \sigma_{k,>}^2(\beta_0) \right\vert  \\
	& \leq \sup_{y \in \Re} \sum_{k \in [n]} \left\vert  H_{k,y,\leq} \right\vert \frac{C}{C_1^{q-1} M_n^{q-1}} \\
	& \leq C \sum_{k \in [n]} h_n^{-2} \mathbb E \left( \left[\frac{\sum_{i <k-1}  \Xi_{\lambda,ki} \breve \eta_{i} + \sum_{i >k}  \Xi_{\lambda,ki} \breve g_{i}}{\sqrt{K_\lambda}}  \right]^2 \mid \mathcal D  \right)1\{ \max_{i \in [k-1]} \breve e_{i}^2(\beta_0) \leq C_1 M_n \} \frac{C}{C_1^{q-1} M_n^{q-1}}  \notag \\
	& \leq C \sum_{k \in [n]} \left( \frac{\sum_{i <k-1} \Xi_{\lambda,ki}^2 \breve e_{i}^2(\beta_0) + \sum_{i >k} \Xi_{\lambda,ki}^2 \breve \sigma_{i}^2(\beta_0) }{K_\lambda h_n^2} \right) 1\{ \max_{i \in [k-1]} \breve e_{i}^2(\beta_0) \leq C_1 M_n \} \frac{1}{C_1^{q-1} M_n^{q-1}}\notag \\
	& \leq C C_1^{2-q} M_n^{2-q} h_n^{-2},
\end{align*}
and
\begin{align*}
	\mathbb P(\max_{i \in [n]} \breve e_i^2(\beta_0) \geq C_1 M_n )  \leq \mathbb P(\max_{i \in [n]} \breve e_i^{2q}(\beta_0) \geq C_1^q M_n^q )  
	\leq \frac{n \mathbb E \breve e_i^{2q}(\beta_0)}{C_1^q M_n^q} \leq  \frac{C n}{C_1^q M_n^q},
\end{align*}
and the second last equality on the RHS of \eqref{eq:etilde^2-sigmatilde^2} holds because $p_n \geq 1/n$ and 
\begin{align*}
	t_n M_n^{q-2} h_n^2 \geq p_n n \log (n) \geq \log(n) \rightarrow \infty.
\end{align*}

For any $\eps>0$, we can choose $C_1 \geq (C/\eps)^{1/q}$ where the constant $C$ is the one on the RHS of \eqref{eq:etilde^2-sigmatilde^2} so that 
\begin{align}\label{eq:etilde^2-sigmatilde^2'}
	&   \mathbb P\left(\sup_{y \in \Re} \left| \sum_{k \in [n]} H_{k,y} (\breve e_k^2(\beta_0) - \breve \sigma_k^2(\beta_0))\right| \geq 4 C_1^3 t_n \right) \notag \\
	& \leq \mathbb P\left(\sup_{y \in \Re} \left| \sum_{k \in [n]} H_{k,y,\leq} (\breve e_{k,\leq}^2(\beta_0) - \breve \sigma_{k,\leq}^2(\beta_0))\right| \geq C_1^3 t_n \right) +  \eps. 
\end{align}

To bound the first term on the RHS of \eqref{eq:etilde^2-sigmatilde^2'}, we partition the real line $\Re$ into $\{|y| \leq T_n\}$ and $\{|y| > T_n \}$, where $T_n = C_1^2 \log(n) M_n$. Then, 
\begin{align}\label{eq:H_I+II}
	& \mathbb P\left(\sup_{y \in \Re} \left| \sum_{k \in [n]} H_{k,y,\leq} (\breve e_{k,\leq}^2(\beta_0) - \breve \sigma_{k,\leq}^2(\beta_0))\right| \geq C_1^3 t_n \right) \notag \\
	& \leq \mathbb P\left(\sup_{|y| > T_n} \left| \sum_{k \in [n]} H_{k,y,\leq} (\breve e_{k,\leq}^2(\beta_0) - \breve \sigma_{k,\leq}^2(\beta_0))\right| \geq C_1^3 t_n \right) \notag \\
	& + \mathbb P\left(\sup_{|y| \leq T_n} \left| \sum_{k \in [n]} H_{k,y,\leq} (\breve e_{k,\leq}^2(\beta_0) - \breve \sigma_{k,\leq}^2(\beta_0))\right| \geq C_1^3 t_n \right) \notag \\
	& := I+II. 
\end{align}

\textbf{Bound Term $I$ on the RHS of \eqref{eq:H_I+II}.} Recall the definitions of $H_{k,y}$ in \eqref{eq:Hky} and $H_{k,y,\leq}$, in which 
$g_{n,y}(x)= \max \left(0, 1- \frac{d(x,A_y^{3 \delta_n})}{\delta_n} \right)$ and $f_{n,y}(x) = \mathbb{E} g_{n,y}(x + h_n \mathcal{N})$. Also recall the definition of $ \partial^2 f_{n,y}(x)$ in \eqref{eq:f''}. 

When $ y < - T_n$, we have
\begin{align*}
	|  \partial^2 f_{n,y}(x)| & \leq h_n^{-2} \left(2\cdot1\{|x| \geq T_n/2\} + \mathbb E g_{n,y}(x+h_n \N)(\N^2 -1) 1\{|x| < T_n/2\}\right) \\
	& \leq h_n^{-2} \left(2 \cdot 1\{|x| \geq T_n/2\} +  \mathbb E 1\{x+h_n \N \leq y + 3\delta_n\}(\N^2 +1) 1\{|x| < T_n/2\}\right) \\
	& \leq h_n^{-2} \left(2\cdot1\{|x| \geq T_n/2\} +  \mathbb E 1\{h_n \N \leq -T_n/2 + 3\delta_n\}(\N^2 +1) 1\{|x| < T_n/2\}\right) \\
	& \leq C h_n^{-2} \left(1\{|x| \geq T_n/2\} + \exp\left(- \frac{(T_n/2 - 3\delta_n)^2}{4h_n^2} \right)\right),
\end{align*}
where the first inequality uses the fact that $|g_{n,y}(x)| \leq 1$ and the last inequality uses the facts that $T_n/2 > 3\delta_n = 3C_h h_n$\footnote{This is because $T_n$ diverges to infinity while $h_n = o(1)$.} and 
\begin{align*}
	& \mathbb E 1\{h_n \N \leq -T_n/2 + 3\delta_n\}(\N^2 +1) \\
	& = \int_{-\infty}^{(-T_n/2 + 3\delta_n)/h_n} (u^2 + 1)\frac{1}{\sqrt{2\pi}}\exp(-u^2/2)du \\
	& \leq \int_{-\infty}^{(-T_n/2 + 3\delta_n)/h_n} (u^2 + 1)\frac{1}{\sqrt{2\pi}}\exp(-u^2/4)du \exp\left(-\frac{(T_n/2 - 3\delta_n)^2}{4h_n^2}\right) \\
	& \leq C\exp\left(-\frac{(T_n/2 - 3\delta_n)^2}{4h_n^2}\right).
\end{align*}
Similarly, when $y > T_n$, we have
\begin{align*}
	|  \partial^2 f_{n,y}(x)| & \leq h_n^{-2} \left(2\cdot1\{|x| \geq T_n/2\} + \mathbb E g_{n,y}(x+h_n \N)(\N^2 -1) 1\{|x| < T_n/2\}\right) \\
	& = h_n^{-2} \left(2\cdot1\{|x| \geq T_n/2\} + \mathbb E \left[1-g_{n,y}(x+h_n \N)\right](\N^2 -1) 1\{|x| < T_n/2\}\right) \\
	& \leq h_n^{-2} \left(2\cdot1\{|x| \geq T_n/2\} +  \mathbb E 1\{x+h_n \N > y \}(\N^2 +1) 1\{|x| < T_n/2\}\right) \\
	& \leq h_n^{-2} \left(2\cdot1\{|x| \geq T_n/2\} +  \mathbb E 1\{h_n \N \geq T_n/2\}(\N^2 +1) 1\{|x| < T_n/2\}\right) \\
	& \leq C h_n^{-2} \left(1\{|x| \geq T_n/2\} + \exp\left(- \frac{T_n^2}{16h_n^2} \right)\right),
\end{align*}
where we use the fact that 
\begin{align*}
	\mathbb E g_{n,y}(x+h_n \N)(\N^2 -1) =  \mathbb E \left[1-g_{n,y}(x+h_n \N)\right](\N^2 -1)
\end{align*}
and 
\begin{align*}
	|1-g_{n,y}(x+h_n \N)| \leq 1\{x+h_n \N \geq y\}. 
\end{align*}
Therefore, we have
\begin{align*}
	\sup_{ |y| > T_n}  |  \partial^2 f_{n,y}(x)| \leq C h_n^{-2} \left(1\{|x| \geq T_n/2\} + \exp\left(- \frac{(T_n/2 - 3\delta_n)^2}{4h_n^2} \right)\right). 
\end{align*}
Denote $\mathbb I_{k-1} = 1\{ \max_{i \in [k-1]} \breve e_{i}^2(\beta_0) \leq C_1 M_n \}$ with $\mathbb I_{0} = 1$, we have
\begin{align}\label{eq:H}
	& \sup_{|y| > T_n} |H_{k,y,\leq}|  \notag \\
	& \leq   Ch_n^{-2} \mathbb E \left(1\{|s_k| \geq T_n/2\}\left[\frac{\sum_{i <k}  \Xi_{\lambda,ki} \breve \eta_{i} + \sum_{i >k}  \Xi_{\lambda,ki} \breve g_{i}}{\sqrt{K_\lambda}}  \right]^2 \mid \mathcal D  \right) \mathbb I_{k-1}  \notag \\
	& + Ch_n^{-2} \exp\left(- \frac{(T_n/2 - 3\delta_n)^2}{4h_n^2} \right)\mathbb E\left( \left[\frac{\sum_{i <k}  \Xi_{\lambda,ki} \breve \eta_{i} + \sum_{i >k}  \Xi_{\lambda,ki} \breve g_{i}}{\sqrt{K_\lambda}}  \right]^2 \mid \mathcal D  \right) \mathbb I_{k-1} \notag \\
	& \leq  Ch_n^{-2} \left[\mathbb P(|s_k| \geq T_n/2 \mid \mathcal D) \right]^{1/3}\left[ \mathbb E \left(\left\vert \frac{\sum_{i <k}  \Xi_{\lambda,ki} \breve \eta_{i} + \sum_{i >k}  \Xi_{\lambda,ki} \breve g_{i}}{\sqrt{K_\lambda}}  \right\vert^3 \mid \mathcal D  \right) \right]^{2/3} \mathbb I_{k-1} \notag \\
	& + Ch_n^{-2} \exp\left(- \frac{(T_n/2 - 3\delta_n)^2}{4h_n^2} \right)\mathbb E\left( \left[\frac{\sum_{i <k}  \Xi_{\lambda,ki} \breve \eta_{i} + \sum_{i >k}  \Xi_{\lambda,ki} \breve g_{i}}{\sqrt{K_\lambda}}  \right]^2 \mid \mathcal D  \right) \mathbb I_{k-1} \notag \\
	& \leq \frac{C C_1 M_n (\sum_{i \in [n], i \neq k} \Xi_{\lambda,ik}^2) }{h_n^2K_\lambda}\left\{ \left[\mathbb P(|s_k| \geq T_n/2 \mid \mathcal D) \right]^{1/3} + \exp\left(- \frac{(T_n/2 - 3\delta_n)^2}{4h_n^2} \right)\right\}\mathbb I_{k-1}, 
\end{align}
where the second inequality is by the H\"{o}lder's inequality and the third inequality is by \eqref{eq:theta^3} proved below.

Define $ \Xi_{\lambda}$ as an $n \times n$ matrix so that its  $(i,j)$th entry is just $\Xi_{\lambda,ij}$ if $i \neq j$ and its diagonal elements take value zero. 
In addition, let 
\begin{align*}
	& \Lambda_k = \diag( \breve e_{1}(\beta_0),\cdots,  \breve e_k(\beta_0),  \breve \sigma_{k+1}(\beta_0),\cdots,\breve \sigma_{n}(\beta_0) ), \\ 
	& v_k = (\eta_{1},\cdots,\eta_{k-1},0,g_{k+1},\cdots, g_{n})^\top, \quad \text{and} \quad \mathcal A_k =  \Lambda_k  \Xi_{\lambda}  \Lambda_k.
\end{align*}

With these definitions, we have 
\begin{align*}
	s_k = v_k^\top \mathcal A_k v_k 
\end{align*}
and when $\mathbb I_{k-1} = 1$, 
\begin{align*}
	||\mathcal A_k||_F^2    \leq C_1^2 M_n^2 \frac{\sum_{j \in [n]}\sum_{i \in [n], i \neq j} \Xi_{\lambda,ij}^2}{K_\lambda} =  C_1^2 M_n^2 . 
\end{align*}

Then, by the Hanson-Wright inequality (\citet[Theorem 6.2.1]{V18}) with $T_n = C_1^2 \log(n) M_n$ for some sufficiently large $C_1>0$, when  $\mathbb I_{k-1} = 1$, there exists a sufficiently large constant $C'>0$ such that  
\begin{align}\label{eq:HW}
	\mathbb P(|s_k| \geq T_n/2 \mid \mathcal D) & \leq 2 \exp\left[- c\min\left(\frac{T_n^2}{4 C ||\mathcal A_k||_F^2},\frac{T_n}{2C ||\mathcal A_k||_{op}} \right) \right] \notag \\
	& \leq 2 \exp\left[- c\min\left(\frac{T_n^2}{4 C ||\mathcal A_k||_F^2},\frac{T_n}{2C ||\mathcal A_k||_{F}} \right) \right] \notag \\
	& \leq 2 \exp\left[- c\min\left(\frac{C_1^4 M_n^2 \log^2 (n)}{4 C C_1^2 M_n^2 },\frac{C_1^2 M_n \log (n) }{2C C_1 M_n } \right) \right] \notag \\
	& = 2 n^{- cC_1}.
\end{align}
In addition, we have

\begin{align*}
	\exp\left(- \frac{(T_n/2 - 3\delta_n)^2}{4h_n^2} \right) \leq \exp(- c C_1^2 \log^2(n)) \leq n^{-C_1}    
\end{align*}
for some fixed but sufficiently large $C_1$ and all sufficiently large $n$'s. 

Therefore, by  \eqref{eq:H}, we have 
\begin{align*}
	&    \sup_{|y| >T_n} |H_{k,y,\leq}| \leq \frac{C_1 M_n (\sum_{i \in [n], i \neq k} \Xi_{\lambda,ik}^2) }{h_n^2K_\lambda n^{cC_1}} \leq \frac{C_1 p_n n^{1/q}}{ (p_n n^{3/q})^{2/(7-\zeta)} n^{cC_1}}
\end{align*}
and
\begin{align*}
	\sup_{|y| >T_n} \left| \sum_{k \in [n]} H_{k,y,\leq} (\breve e_{k,\leq}^2(\beta_0) - \breve \sigma_{k,\leq}^2(\beta_0))\right|  & \leq \sum_{k \in [n]}  \sup_{|y| >T_n} \left| H_{k,y,\leq} \right| C_1 M_n \\
	& \leq \frac{C C_1^2 p_n n^{1+2/q}}{ (p_n n^{3/q})^{2/(7-\zeta)} n^{cC_1}} \leq \frac{C C_1^2 n^{1 + \frac{2(4-\zeta)}{q(7-\zeta)} }}{n^{cC_1}}.    
\end{align*}

By choosing a sufficiently large but fixed $C_1$, we have 
\begin{align*}
	C_1^2 t_n & = C_1^2 \left[(M_n^2 h_n^{-4} p_n \log(n))^{1/2} + p_n M_n^2 h_n^{-2} \log(n)\right] \\
	& > C_1^2 p_n M_n^2 h_n^{-2} \log(n) \\
	& = C_1^2 \left(p_n n^{\frac{8-2 \zeta}{q(5-\zeta)}}\right)^{\frac{5-\zeta}{7-\zeta}} \log (n) \\
	& \geq C_1^2 \left(n^{ \frac{8-2 \zeta}{q(5-\zeta)} -1 }\right)^{\frac{5-\zeta}{7-\zeta}} \log (n) \\
	& \geq \frac{C C_1^2 n^{1 + \frac{2(4-\zeta)}{q(7-\zeta)} }}{n^{cC_1}},
\end{align*}
where we use the fact that $p_n \geq 1/n$. This implies, for some sufficiently large $C_1$, we have 
\begin{align}\label{eq:I_final}
	\text{$I$ on the RHS of \eqref{eq:H_I+II}} = 0. 
\end{align}

\textbf{Bound Term $II$ on the RHS of \eqref{eq:H_I+II}.} We can cover $[-T_n,T_n]$ by small intervals with center $y_l$ and length $\ell_n = \min(h_n^3t_n^2 /M_n^2,\delta_n)$. The total number of such small intervals needed to cover $[-T_n,T_n]$ is $L_n = 2\lceil T_n/\ell_n \rceil$, which grows in a polynomial rate in $n$ in the sense that $L_n = O(n^C)$ for some constant $C>0$.  Then, we have 
\begin{align}\label{eq:II}
	& \text{$II$ on the RHS of \eqref{eq:H_I+II}} \notag \\
	& \leq \mathbb P\left(\sup_{|y-y'| \leq \ell_n} \left| \sum_{k \in [n]} (H_{k,y,\leq} - H_{k,y',\leq}) (\breve e_{k,\leq}^2(\beta_0) - \breve \sigma_{k,\leq}^2(\beta_0))\right| \geq C_1^3 t_n/2 \right) \notag \\
	&+ \mathbb P\left(\max_{l \in [L_n]} \left| \sum_{k \in [n]} H_{k,y_l,\leq} (\breve e_{k,\leq}^2(\beta_0) - \breve \sigma_{k,\leq}^2(\beta_0))\right| \geq C_1^3 t_n/2 \right) \notag \\
	& : = II_1 + II_2.
\end{align}
To bound $II_1$ on the RHS of \eqref{eq:II}, we first note that, for any $(y_1,y_2)$ such that $y_1 \leq y_2\leq y_1 + \delta_n$, we have
\begin{align*}
	&  0 \leq g_{n,y_2}(a) - g_{n,y_1}(a) \\
	& = \frac{x - (y_1 + 3\delta_n)}{\delta_n} 1\{a \in (y_1+ 3\delta_n,y_2 + 3\delta_n)\} + \frac{y_2-y_1}{\delta_n}1\{x \in (y_2 + 3\delta_n,y_1 + 4\delta_n)\} \\
	& + \frac{y_2 + 4\delta_n - x}{\delta_n}1\{a \in (y_1 + 4\delta_n,y_2 + 4\delta_n)\}, 
\end{align*}
which implies, for $a = x + h_n \mathcal N$, $y_1' = y_1 + 3\delta_n - x$, and $y_2' = y_2 + 3\delta_n - x$,
\begin{align*}
	& |        \partial^2 f_{n,y_1}(x) -     \partial^2 f_{n,y_2}(x)| \\
	& \leq  h_n^{-2} \mathbb E |g_{n,y_2}(x+h_n \N) - g_{n,y_1}(x+h_n \N)|(\N^2 + 1) \\
	& \leq \frac{1}{h_n^2 \delta_n}  \mathbb E \left[(h_n\N - y_1')1\{h_n\N \in (y_1',y_2')\}\right](\N^2+1) \\
	& + \frac{1}{h_n^2 \delta_n}  \mathbb E \left[(y_2'-y_1')1\{h_n\N \in (y_2',y_1'+ \delta_n)\}\right](\N^2+1) \\
	& + \frac{1}{h_n^2 \delta_n}  \mathbb E \left[(y_2'+\delta_n - h_n \N)1\{h_n\N \in (y_1'+\delta_n,y_2'+\delta_n')\}\right](\N^2+1) \\
	& \leq \frac{C(y_2-y_1)}{h_n^3},
\end{align*}
where we use the fact that $\exp(-u^2/2)(u^2+1)$ is bounded. This implies 
\begin{align*}
	& \sup_{|y-y'| \leq \ell_n}    |H_{k,y,\leq} - H_{k,y',\leq}| \\
	& \leq C \sup_{|y-y'| \leq \ell_n}\mathbb E\left( \left| \partial^2 f_{n,y}(s_k) - \partial^2 f_{n,y'}(s_k) \right|\left[\frac{\sum_{i <k}  \Xi_{\lambda,ki} \breve \eta_{i} + \sum_{i >k}  \Xi_{\lambda,ki} \breve g_{i}}{\sqrt{K_\lambda}}  \right]^2  \mid \mathcal D \right)\mathbb I_{k-1} \\
	& \leq C \frac{\ell_n}{h_n^3} \left[\frac{\sum_{i <k} \Xi_{\lambda,ki}^2 \tilde e^2_{i}(\beta_0) + \sum_{i >k} \Xi_{\lambda,ki}^2 \tilde \sigma_{i}^2(\beta_0) }{K_\lambda}  \right]\mathbb I_{k-1} \\
	& \leq \frac{C C_1 \ell_n M_n}{h_n^3} \left[\frac{\sum_{i \in [n], i \neq k} \Xi_{\lambda,ki}^2}{K_\lambda}  \right],
\end{align*}
and thus, 
\begin{align*}
	& \sup_{|y-y'| \leq \ell_n} \left| \sum_{k \in [n]} (H_{k,y,\leq} - H_{k,y',\leq}) (\breve e_{k,\leq}^2(\beta_0) - \breve \sigma_{k,\leq}^2(\beta_0))\right| \\
	& \leq   \frac{C C_1 \ell_n M_n}{h_n^3} \sum_{k  \in [n]}\left[\frac{\sum_{i \in [n], i \neq k} 
		\Xi_{\lambda,ki}^2}{K_\lambda} \left|(\breve e_{k,\leq}^2(\beta_0) - \breve \sigma_{k,\leq}^2(\beta_0)) \right| \right]  \\
	& \leq \frac{C C_1^2 \ell_n M_n^2}{h_n^3} \leq C C_1^2 t_n^2,
\end{align*}
where the last inequality is by the definition of $\ell_n$.

Because $t_n \rightarrow 0$, we have 
\begin{align}\label{eq:II_1_final}
	& \text{$II_1$ on the RHS of \eqref{eq:II}} = 0. 
\end{align}

Last, we turn to $II_2$ on the RHS of \eqref{eq:II}. we note that, for any $l \in [L_n]$,  $H_{k,y_l,\leq} \in \mathcal E_{k-1}$, where $\mathcal E_{k-1}$ is the sigma field generated by $\breve e_{1}(\beta_0),\cdots,\breve e_{k-1}(\beta_0)$. Therefore, we have
$$\{H_{k,y_l,\leq}(\breve e_{k,\leq}^2(\beta_0) - \breve \sigma_{k,\leq}^2(\beta_0)), \mathcal E_k\}_{k \in [n]}$$ 
forms a martingale difference sequence. In addition, we have 
\begin{align*}
	& \max_{k \in [n]}  \left|H_{k,y_l,\leq}(\breve e_{k,\leq}^2(\beta_0) - \breve \sigma_{k,\leq}^2(\beta_0)) \right| \\
	& \leq \max_{k \in [n]}\left( \frac{\sum_{i <k-1} \Xi_{\lambda,ki}^2 \breve  e_{i}^2 (\beta_0) + \sum_{i >k} \Xi_{\lambda,ki}^2  \breve \sigma_{i}^2 (\beta_0) }{K_\lambda h_n^2} \right)1\{ \max_{i \in [k-1]} \breve e_{i}^2(\beta_0) \leq C_1 M_n \} 2 C_1M_n \\
	& \leq 2C_1^2 p_n M_n^2 h_n^{-2}
\end{align*}
and 
\begin{align*}
	V & \equiv \sum_{k \in [n]} \mathbb E \left[\left( H_{k,y_l,\leq}(\breve e_{k,\leq}^2(\beta_0) - \breve \sigma_{k,\leq}^2(\beta_0))\right)^2 \mid \mathcal E_{k-1}\right] \\
	& \leq C \sum_{k \in [n]}  H_{k,y,\leq}^2 \\
	& \leq C  \sum_{k \in [n]} \left( \frac{\sum_{i <k-1} \Xi_{\lambda,ki}^2 \breve e_{i}^2(\beta_0)  + \sum_{i >k} \Xi_{\lambda,ki}^2 \breve \sigma_{i}^2(\beta_0)  }{K_\lambda h_n^2} \right)^2 1\{ \max_{i \in [k-1]} \breve e_{i}^2(\beta_0) \leq C_1 M_n \} \\
	& \leq C C_1^2 M_n^2 h_n^{-4} p_n,
\end{align*}
where we use the fact that when $\max_{i \in [k-1]} \breve e_{i}^2(\beta_0) \leq C_1M_n$,
\begin{align*}
	& \sum_{k \in [n]}    \left( \sum_{i <k-1} \Xi_{\lambda,ki}^2 \tilde e_{i}^2 (\beta_0) + \sum_{i >k} \Xi_{\lambda,ki}^2 \tilde \sigma_{i}^2(\beta_0)  \right)^2  \\
	& \leq C C_1^2 M_n^2 \sum_{k \in [n]}    \left( \sum_{i \in [n], i \neq k} \Xi_{\lambda,ki}^2 \right)^2  \leq C C_1^2 M_n^2 \left(\max_{i \in [n]} \sum_{k \neq i} \Xi_{\lambda,ki}^2 \right) K_\lambda.
\end{align*}

Therefore, by Freedman's inequality (also known as Bernstein's inequality for the martingale difference sequence, \citet[Theorem 1.6]{F75}), we have
\begin{align}\label{eq:II_2_final}
	& \text{$II_2$ on the RHS of \eqref{eq:II}} \notag \\
	& \leq \sum_{l \in [L_n]} \mathbb P\left( \left| \sum_{k \in [n]} H_{k,y_l,\leq} (\breve e_{k,\leq}^2(\beta_0) - \breve \sigma_{k,\leq}^2(\beta_0))\right| \geq C_1^3 t_n, V \leq C C_1^2 M_n^2 h_n^{-4} p_n  \right) \notag \\
	& \leq 2 \exp\left( \log(L_n)- \frac{C_1^6 t_n^2}{2C C_1^2 M_n^2 h_n^{-4} p_n + 4p_n C_1^5 M_n^2 h_n^{-2} t_n/3 } \right)  \leq n^{-c}
\end{align}
for some constant $c>0$. 

Combining \eqref{eq:H_I+II}, \eqref{eq:I_final}, \eqref{eq:II_1_final}, and \eqref{eq:II_2_final}, for a sufficiently large but fixed $C_1$, we have
\begin{align*}
	\mathbb P\left(\sup_{y \in \Re} \left| \sum_{k \in [n]} H_{k,y,\leq} (\breve e_{k,\leq}^2(\beta_0) - \breve \sigma_{k,\leq}^2(\beta_0))\right| \geq t_n \right) \leq n^{-c}    
\end{align*}
for some constant $c>0$.

Therefore, for a sufficiently large $n$ such that $n^{-c} \leq \eps$, following \eqref{eq:etilde^2-sigmatilde^2'}, we have 
\begin{align*}
	\mathbb P\left(\sup_{y \in \Re} \left| \sum_{k \in [n]} H_{k,y} (\breve e_k^2(\beta_0) - \breve \sigma_k^2(\beta_0))\right| \geq 4C_1^3 t_n \right) \leq 2\eps.
\end{align*}
This implies that 
\begin{align}\label{eq:f2}
	\sup_{y \in \Re} \left| \sum_{k \in [n]} H_{k,y} (\breve e_k^2(\beta_0) - \breve \sigma_k^2(\beta_0))\right| = O_P(t_n).
\end{align}

\medskip

\noindent \textbf{\Large{Step 2.2: Bound on $\sum_{k\in [n]}\frac{\mathbb E(|\mathcal{S}_k|^3 +|\breve{\mathcal{S}}_k|^3 |\mathcal D)}{ h_n^3}$}}

Recall $\breve \eta_i = \eta_i \breve e_i(\beta_0)$ and $\breve g_i = g_i \tilde \sigma_i(\beta_0)$, where $\breve e_i(\beta_0) = \tilde e_i + \Delta(\Pi_i + \tilde v_i)$, $\breve \sigma_i^2(\beta_0) = \mathbb E \breve e_i^2(\beta_0)$, $\{\eta_i\}_{i\in [n]}$ is an i.i.d. sequence of random variables with zero mean and unit variance, and $\{g_i\}_{i \in [n]}$ is an i.i.d. sequence of standard normal random variables. Let 
\begin{align*}
	\theta_{k,i} = \begin{cases}
		\Xi_{\lambda, ki} \breve \eta_{i}  \quad i < k\\
		\Xi_{\lambda, ki} \breve g_{i}  \quad i > k.
	\end{cases}
\end{align*}
Then, we have
\begin{align*}
	\mathcal{S}_k = 2 \breve \eta_k \frac{\sum_{i \in [n], i \neq k}  \theta_{k,i}}{\sqrt{K_\lambda}} \quad \text{and} \quad     \mathbb E (|\mathcal{S}_k|^3 \mid \mathcal D) \leq \frac{C |\breve e_k(\beta_0)|^3}{K_\lambda^{3/2}} \mathbb E\left( \left|\sum_{i \in [n],i \neq k}  \theta_{k,i} \right|^3 \mid \mathcal D \right).
\end{align*}
Conditionally on data ($\mathcal D$), $\{ \theta_{k,i}\}_{i \in [n],i \neq k}$ is a sequence of independent mean zero random variables. By Marcinkiewicz-Zygmund inequality, on  $\{ \max_{i \in [n]} \breve e_i^2(\beta_0) \leq C_1 M_n\}$, we have 
\begin{align} \label{eq:theta^3}
	\mathbb{E} \left(\left|\sum_{i \in [n], i \neq k}  \theta_{k,i} \right|^3 \mid \mathcal D\right) & \leq  \mathbb{E} \left[ \left( \sum_{i \in [n], i \neq k}  \theta_{k,i}^2 \right)^{3/2} \mid \mathcal D\right]  \notag \\
	& \leq \mathbb{E} \left( (\sum_{i \in [n], i \neq k}  \theta_{k,i}^2)^2 \mid \mathcal D \right)^{3/4} \notag  \\
	& =  \left[\sum_{ i \in [n], i \neq k} \sum_{j \in [n], j \neq k} \mathbb E( \theta_{k,i}^2\theta_{k,j}^2 | \mathcal D)\right]^{3/4}  \notag \\
	& \leq \left[C_1^2M_n^2 (\sum_{i \in [n], i \neq k} \Xi_{\lambda,ik}^2)^2\right]^{3/4}. 
\end{align}
% \begin{align*}
	%     \mathbb{E} \left( (\sum_{i \in [n], i \neq k}  \theta_{k,i}^2)^2 \mid \mathcal D \right)  & =  \sum_{ i \in [n], i \neq k} \sum_{j \in [k], j \neq k} \mathbb E( \theta_{k,i}^2\theta_{k,j}^2 | \mathcal D) \\
	%     \leq & M_n (\sum_{i \in [n], i \neq k} P_{\lambda,ik}^2)^2 + M_n^2 (\sum_{i \in [n], i \neq k}P_{\lambda,i_{i},k,l}^4) \\
	%     & \leq M_n (\sum_{i \in [n], i \neq k} P_{\lambda,ik}^2)^2 + M_n^2 (\sum_{i \in [n], i \neq k}P_{\lambda,ik}^2) .
	% \end{align*}

This implies, on $\{ \max_{i \in [n]} \breve e_i^2 (\beta_0) \leq C_1 M_n\}$,  
\begin{align*}
	\sum_{k \in [n]}      \mathbb E (|\mathcal{S}_k|^3 \mid \mathcal D) & \leq \sum_{k \in [n]} \frac{C C_1^{3/2}|\breve e_k(\beta_0)|^3}{K_\lambda^{3/2}} \left[M_n^2 (\sum_{i \in [n], i \neq k} \Xi_{\lambda,ik}^2)^2 \right]^{3/4} \\
	& \leq CC_1^{3/2} p_n^{1/2} M_n^{3/2} \sum_{k \in [n]} \frac{(\sum_{i \in [n], i \neq k} \Xi_{\lambda,ik}^2) |\breve e_k(\beta_0)|^3}{K_\lambda} \\
	& \leq CC_1^{3/2}p_n^{1/2} M_n^{3/2} \left(C + \left|\sum_{k \in [n]} \frac{(\sum_{i \in [n], i \neq k} \Xi_{\lambda,ik}^2) (|\breve e_k(\beta_0)|^3 - \mathbb E |\breve e_k(\beta_0)|^3)}{K_\lambda} \right| \right).
\end{align*}
In addition, because 
\begin{align*}
	& Var\left( \sum_{k \in [n]} \frac{(\sum_{i \in [n], i \neq k} \Xi_{\lambda,ik}^2) (|\breve e_k(\beta_0)|^3 - \mathbb E |\breve e_k(\beta_0)|^3)}{K_\lambda} \right) \\
	& \leq C \frac{\sum_{k \in [n]} (\sum_{i \in [n], i \neq k} \Xi_{\lambda,ik}^2)^2  }{K^2_\lambda } \leq C p_n.
\end{align*}
Therefore, for any $\eps'>0$, there exist sufficiently large constants $\tilde C>0$ and $C_1>0$ such that 
\begin{align*}
	&   \mathbb P \left(  \sum_{k \in [n]}      \mathbb E (|\mathcal{S}_k|^3 \mid \mathcal D) > \tilde C ( p_n^{1/2} n^{3/(2q)})  \right) \\
	& \leq \mathbb P \left(  \sum_{k \in [n]}      \mathbb E (|\mathcal{S}_k|^3 \mid \mathcal D) > \tilde C ( p_n^{1/2} n^{3/(2q)}), \max_{i \in [n]} \breve e_i^2 (\beta_0) \leq C_1 M_n  \right) + \mathbb P \left(  \max_{i \in [n]} \breve e_i^2 (\beta_0) > C_1 M_n \right) \\
	& \leq \mathbb P \left(  CC_1^{3/2} \left(C + \left|\sum_{k \in [n]} \frac{(\sum_{i \in [n], i \neq k} \Xi_{\lambda,ik}^2) (|\breve e_k(\beta_0)|^3 - \mathbb E |\breve e_k(\beta_0)|^3)}{K_\lambda} \right| \right) > \tilde C  \right) + \frac{C}{C_1^q} \\
	& \leq \frac{Var\left( \sum_{k \in [n]} \frac{(\sum_{i \in [n], i \neq k} \Xi_{\lambda,ik}^2) (|\breve e_k(\beta_0)|^3 - \mathbb E |\breve e_k(\beta_0)|^3)}{K_\lambda} \right) }{ C \tilde C} + \frac{C}{C_1^q} \\
	& \leq \frac{C P_n}{ \tilde C}  + \frac{C}{C_1^q} \leq \eps'.
\end{align*}
This implies 
\begin{align*}
	\sum_{k \in [n]}      \mathbb E (|\mathcal{S}_k|^3 \mid \mathcal D) = O_P( p_n^{1/2} n^{3/(2q)}).
\end{align*}
Similarly, we have 
\begin{align}
	&  \sum_{k \in [n]}      \mathbb E (|\breve{\mathcal{S}}_k|^3 \mid \mathcal D) = O_P( p_n^{1/2} n^{3/(2q)}), \quad \text{and thus,}  \notag \\
	&     \sum_{k\in [n]}\frac{\mathbb E(|\mathcal{S}_k|^3 +|\breve{\mathcal{S}}_k|^3 |\mathcal D)}{ h_n^3} =  O_P\left( \frac{p_n^{1/2} n^{3/(2q)}}{ h_n^3}\right) = O_P\left( (p_n n^{3/q})^{\frac{1-\zeta}{2(7 - \zeta)}}\right). \label{eq:f3}
\end{align}

\medskip

\noindent \textbf{ \Large{Step 2.3: Concluding Step 2 }}

By Lemma \ref{lem:anti-concentration} in the Supplemental Appendix, we have
\begin{align}\label{eq:term3}
	\sup_{y \in \Re} \mathbb P(|Q^*(\beta_0)-y| \leq 3\delta_n ) \leq  C_\zeta 3^{(1-\zeta)/2} C_h^{(1-\zeta)/2} h_n^{(1-\zeta)/2}
\end{align}
for any $\zeta \in (0,1)$ and $C_{\zeta} \in (0,\infty)$ that only depends on $\zeta$ and $\underline c$ in Assumption \ref{ass:reg}.3. 

Then, combining \eqref{eq:Qstar-Q} and \eqref{eq:f1}, for $\eps$ used in \eqref{eq:Qstar-Q}, we have
\begin{align*}
	& \mathbb P \left( \sup_{y \in \Re} \left\vert  \mathbb P(\breve Q^*(\beta_0) \leq y | \mathcal D) - \mathbb P(Q^*(\beta_0) \leq y ) \right\vert > 4 \eps \right) \\
	& \leq \mathbb P \left( \begin{pmatrix}
		& (1+\eps) \sup_{y \in \Re} \left| \mathbb E( f_{n,y}(\breve Q^*(\beta_0)) | \mathcal D) - \mathbb E( f_{n,y}(Q^*(\beta_0)) )\right| \notag \\
		& + \eps + \sup_{y \in \Re} \mathbb P(|Q^*(\beta_0)-y| \leq 3\delta_n)
	\end{pmatrix}    > 4\eps \right) \\
	& \leq \mathbb P \left(\sup_{y \in \Re}\left| \sum_{k \in [n]} H_{k,y} (\breve e_k^2(\beta_0) - \breve \sigma_k^2(\beta_0))\right| > \frac{\eps}{1+\eps} \right) + \mathbb P \left( \sum_{k\in [n]}\frac{\mathbb E(|\mathcal{S}_k|^3 +|\breve{\mathcal{S}}_k|^3 |\mathcal D)}{ h_n^3} > \frac{C_h\eps}{1+\eps} \right)  \\
	& + 1\{ \sup_{y \in \Re} \mathbb P(|Q^*(\beta_0)-y| \leq 3\delta_n) > \eps \}.
\end{align*}
Taking $\limsup_{n \rightarrow \infty}$, we have
\begin{align*}
	& \limsup_{n \rightarrow \infty}     \mathbb P \left( \sup_{y \in \Re} \left(  \mathbb P(\breve Q^*(\beta_0) \leq y | \mathcal D) - \mathbb P(Q^*(\beta_0) \leq y ) \right) > 4 \eps \right) \\
	& \leq \limsup_{n \rightarrow \infty} \mathbb P \left(\sup_{y \in \Re}\left| \sum_{k \in [n]} H_{k,y} (\breve e_k^2(\beta_0) - \breve \sigma_k^2(\beta_0))\right| > \frac{\eps}{1+\eps} \right) \\
	& + \limsup_{n \rightarrow \infty} \mathbb P \left( \sum_{k\in [n]}\frac{\mathbb E(|\mathcal{S}_k|^3 +|\breve{\mathcal{S}}_k|^3 |\mathcal D)}{  h_n^3} > \frac{C_h \eps}{1+\eps} \right) \\
	& = 0,
\end{align*}
where the first inequality holds by \eqref{eq:term3} and that $h_n = o(1)$ so that for sufficiently large $n$,  
\begin{align*}
	\sup_{y \in \Re} \mathbb P(|Q^*(\beta_0)-y|  \leq 3\delta_n) \leq C_\zeta 3^{(1-\zeta)/2} C_h^{(1-\zeta)/2} h_n^{(1-\zeta)/2} < \eps, 
\end{align*}
and the equality is by \eqref{eq:f2} and \eqref{eq:f3}. This implies 
\begin{align}\label{eq:step2_final*}
	\sup_{y \in \Re} \left\vert   \mathbb P(\breve Q^*(\beta_0) \leq y | \mathcal D) - \mathbb P(Q^*(\beta_0) \leq y ) \right\vert = o_P(1).
\end{align}

\medskip

\noindent \textbf{\Large{Step 3: Concluding the Entire Proof}}

Combining \eqref{eq:step2_final*} with \eqref{eq:Fhat-F2}, we have 
\begin{align}\label{eq:step1_final}
	\sup_{y \in \Re}|\mathbb P(\widehat{Q}^*(\beta_0) \leq y | \mathcal D) - \mathbb P(\breve Q^*(\beta_0)\leq y | \mathcal D)| = o_P(1).
\end{align}
Then, combining \eqref{eq:Fhat-F} with \eqref{eq:step2_final*} and \eqref{eq:step1_final}, we have the desired result that 
\begin{align*}
	\sup_{y \in \Re}|\mathbb P(\widehat{Q}^*(\beta_0) \leq y | \mathcal D) - \mathbb P(Q^*(\beta_0)\leq y)| = o_P(1). 
\end{align*}

\section{Proof of Theorem \ref{cor:size}}\label{sec:pf_size}
Recall that $\widehat{\mathcal C}^*_{\alpha}(\beta_0) = \inf\{y \in \Re: 1-\alpha \leq \hat F_{\beta_0}^*(y)\}$, where 
\begin{align*}
	\hat F_{\beta_0}^*(y) = \mathbb P ( \widehat Q^*(\beta_0) \leq y| \mathcal D).
\end{align*}
and $F_{\beta_0}(y) = \mathbb P( Q(\beta_0) \leq y)$ and $\mathcal C_{\alpha}(\beta_0) = \inf\{y \in \Re: 1-\alpha \leq F_{\beta_0}(y)\}$. 

Further denote 
\begin{align*}
	\eta_n =     \sup_{y \in \Re}\left|\hat F_{\beta_0}^*(y) - F_{\beta_0}(y)\right|, \quad \text{and} \quad \eta_n' = \left|\mathbb P \left(\widehat Q(\beta_0) \leq y\right) - F_{\beta_0}(y) \right|. 
\end{align*}

By Theorems \ref{thm:main_null} and \ref{thm:Fhat-F}, and the definition of $Q(\beta_0)$ and $Q^*(\beta_0)$ in \eqref{eq:Q}-\eqref{eq:Q^*}, under the null, we have $\eta_n = o_p(1)$ and $\eta_n' = o(1)$. 

Then, for any $y_0$ and any $\eps>0$ such that $1-\alpha \leq \hat F_{\beta_0}^*(y_0)$ and $\eta_n \leq \eps$, we have
\begin{align*}
	1- \alpha  \leq \hat F_{\beta_0}^*(y_0) \leq F_{\beta_0}(y_0) + \sup_{y \in \Re}|\hat F_{\beta_0}^*(y) - F_{\beta_0}(y)| \leq F_{\beta_0}(y_0) + \eps.
\end{align*}
Therefore, when $\eta_n \leq \eps$, we have
\begin{align*}
	\mathcal C_{\alpha+\eps}(\beta_0) \leq \widehat{\mathcal C}^*_{\alpha}(\beta_0). 
\end{align*}
% In addition, by definition, we have
% \begin{align*}
	%     \widehat{\mathcal C}^*_{\alpha}(\beta_0) \leq y \quad \text{if and only if} \quad \alpha \leq \hat F_{\beta_0}^{*}(y).
	% \end{align*}
Then, we have
\begin{align*}
	\mathbb P \left( \widehat Q(\beta_0) \geq \widehat{\mathcal C}^*_{\alpha}(\beta_0)\right) & \leq \mathbb P \left(\widehat Q(\beta_0) \geq \widehat{\mathcal C}^*_{\alpha}(\beta_0), \eta_n \leq \eps\right)  + \mathbb P(\eta_n >\eps) \\
	& \leq \mathbb P \left( \widehat Q(\beta_0) \geq \mathcal C_{\alpha+\eps}(\beta_0)\right) + \mathbb P(\eta_n >\eps) \\
	& \leq \mathbb P \left( Q(\beta_0) \geq \mathcal C_{\alpha+\eps}(\beta_0)\right) + \eta_n' + \mathbb P(\eta_n >\eps) \\
	& = \alpha + \eps + \eta_n' + \mathbb P(\eta_n >\eps),
\end{align*}
where the last inequality holds by Assumption \ref{ass:den}.

Similarly, for any $y_0$ such that $1-(\alpha-\eps) \leq F_{\beta_0}(y_0)$, we have
\begin{align*}
	1-   (\alpha-\eps) \leq F_{\beta_0}(y_0) \leq \hat F_{\beta_0}^*(y_0) + \sup_{y \in \Re}|\hat F_{\beta_0}^*(y) - F_{\beta_0}(y)| \leq \hat F_{\beta_0}^*(y_0) + \eta_n,
\end{align*}
which implies, when $\eta_n \leq \eps$, 
\begin{align*}
	\widehat{\mathcal C}^*_{\alpha}(\beta_0) \leq  \mathcal C_{\alpha-\eps}(\beta_0). 
\end{align*}

Therefore, we have
\begin{align*}
	\mathbb P \left( \widehat Q(\beta_0) \leq \widehat{\mathcal C}^*_{\alpha}(\beta_0)\right) & \leq \mathbb P \left( \widehat Q(\beta_0) \leq \mathcal C_{\alpha - \eps}(\beta_0) \right) + \mathbb P(\eta_n > \eps)\\
	& \leq \mathbb P \left( Q(\beta_0) \leq \mathcal C_{\alpha - \eps}(\beta_0)\right) +  \eta_n' +  \mathbb P(\eta_n >\eps) \\
	& \leq 1-(\alpha -\eps) +  \eta_n' + \mathbb P(\eta_n >\eps)  ,
\end{align*}
which implies
\begin{align*}
	\mathbb P \left( \widehat Q(\beta_0) \geq \widehat{\mathcal C}^*_{\alpha}(\beta_0)\right) \geq \alpha -\eps - \mathbb P(\eta_n >\eps)  - \eta_n'.
\end{align*}

Therefore, we have
\begin{align*}
	\left\vert   \mathbb P \left( \widehat Q(\beta_0) \geq \widehat{\mathcal C}^*_{\alpha}(\beta_0)\right) -\alpha \right\vert \leq \eps+\mathbb P(\eta_n >\eps)  + \eta_n'. 
\end{align*}
By letting $n\rightarrow  \infty$ followed by $\eps \rightarrow 0$, we obtain the desired result.

\section{Proof of Theorem \ref{thm:power_divK}}\label{sec: proof_power_divK}
By Theorem \ref{thm:main_null}, we have
\begin{align*}
	& \sup_{y \in \Re} \left|\mathbb P \left( \widehat Q(\beta_0) \leq y\right) - \mathbb P \left(\frac{\sum_{i \in [n]} \sum_{j \in [n], j \neq i} (g_i \tilde \sigma_i(\beta_0) + \Delta \Pi_i) \Xi_{\lambda,ij} (g_j \tilde \sigma_j(\beta_0) + \Delta \Pi_j) }{\sqrt{K_\lambda} } + C(\Delta) \leq y\right)\right| \\
	&= o_P(1), 
\end{align*}
where $\{g_i\}_{i \in [n]}$ is a sequence of i.i.d. standard normal random variables. Note that 
\begin{align}\label{eq:Q(beta0)}
	& \frac{\sum_{i \in [n]} \sum_{j \in [n], j \neq i} (g_i \tilde \sigma_i(\beta_0) + \Delta \Pi_i) \Xi_{\lambda,ij} (g_j \tilde \sigma_j(\beta_0) + \Delta \Pi_j) }{\sqrt{K_\lambda} } + C(\Delta) \notag \\
	&  = \frac{\sum_{i \in [n]} \sum_{j \in [n], j \neq i} g_i \tilde \sigma_i(\beta_0)  \Xi_{\lambda,ij} g_j \tilde \sigma_j(\beta_0) }{\sqrt{K_\lambda} } \notag \\
	& + \frac{\sum_{i \in [n]} 2 g_i \tilde \sigma_i(\beta_0) \left(\sum_{j \in [n], j \neq i}  \Xi_{\lambda,ij} \Delta \Pi_j\right) }{\sqrt{K_\lambda} } +  \frac{ \sum_{i \in [n]} \sum_{j \in [n], j \neq i} \Pi_i P_{\lambda,ij} \Pi_j \Delta^2 }{\sqrt{K_\lambda} }. 
\end{align}

To analyze the first term on the RHS of \eqref{eq:Q(beta0)}, we denote $(\varpi_1,\cdots,\varpi_n)$ as the eigenvalues of matrix 
\begin{align*}
	\diag(\tilde \sigma_1(\beta_0),\cdots,\tilde \sigma_n(\beta_0)) \Xi_{\lambda} \diag(\tilde \sigma_1(\beta_0),\cdots,\tilde \sigma_n(\beta_0)).
\end{align*}
Then, we have
\begin{align*}
	\frac{\sum_{i \in [n]} \sum_{j \in [n], j \neq i} g_i \tilde \sigma_i(\beta_0) \Xi_{\lambda,ij} g_j \tilde \sigma_j(\beta_0) }{\sqrt{K_\lambda} } \stackrel{d}{=} \sum_{i=1}^n g_i^2 \varpi_i/\sqrt{K_{\lambda}} = \sum_{i=1}^n (g_i^2 -1)\varpi_i/\sqrt{K_{\lambda}},
\end{align*}
where the second equality is by the fact that 
\begin{align*}
	\sum_{i=1}^n \varpi_i &= \operatorname{tr} \left( \operatorname{diag}(\tilde \sigma_1^2 (\beta_0), \dots, \tilde \sigma_n^2 (\beta_0)) \Xi_\lambda \right) = 0.
\end{align*}

Let $\Psi(\beta_0) = Var(\sum_{i=1}^n g_i^2 \varpi_i/\sqrt{K_{\lambda}}) = \frac{2 \sum_{i \in [n]} \sum_{j \in [n], j \neq i}  \tilde \sigma^2_i(\beta_0) \Xi_{\lambda,ij}^2 \tilde \sigma_j^2(\beta_0) }{K_\lambda }$. Then, we have
\begin{align*}
	\sum_{i\in [n]} \mathbb E \frac{\left((g_i^2 -1)\varpi_i/\sqrt{K_{\lambda}} \right)^4}{\Psi^2(\beta_0) } & \lesssim \frac{\max_i \varpi_i^2}{\sum_{i \in [n]} \sum_{j \in [n], j \neq i}  \tilde \sigma^2_i(\beta_0) \Xi_{\lambda,ij}^2 \tilde \sigma_j^2(\beta_0) } \\
	& \lesssim \frac{\max_i \varpi_i^2}{ K_{\lambda}} = o(1),
\end{align*}
where the last inequality holds because  
\begin{align*}
	\max_i \varpi_i^2 \leq \left\Vert     \diag(\tilde \sigma_1(\beta_0),\cdots,\tilde \sigma_n(\beta_0)) \Xi_{\lambda} \diag(\tilde \sigma_1(\beta_0),\cdots,\tilde \sigma_n(\beta_0)) \right\Vert_{op}^2 \leq C
\end{align*}
for some constant $C<\infty$.  This verifies the Lyapunov's condition.

Therefore, by CLT, we have 
\begin{align*}
	\Psi^{-1/2}(\beta_0)   \frac{\sum_{i \in [n]} \sum_{j \in [n], j \neq i} g_i \tilde \sigma_i(\beta_0) \Xi_{\lambda,ij} g_j \tilde \sigma_j(\beta_0) }{\sqrt{K_\lambda} }  \convD \N(0,1).
\end{align*}

For the second term on the RHS of \eqref{eq:Q(beta0)}, we note that 
\begin{align*}
	& Var\left(     \frac{\sum_{i \in [n]} 2 g_i \tilde \sigma_i(\beta_0) \left(\sum_{j \in [n], j \neq i}  \Xi_{\lambda,ij} \Delta \Pi_j\right) }{\sqrt{K_\lambda} } \right) \lesssim \frac{ \sum_{i \in [n]}\left(\sum_{j \in [n], j \neq i}  \Xi_{\lambda,ij} \Delta \Pi_j\right)^2   }{K_\lambda} \\
	& \lesssim \frac{\Delta^2 \Pi^\top  \Xi_\lambda^2 \Pi}{K_\lambda} 
	\lesssim \frac{1}{\sqrt{K_{\lambda} }} \frac{||\Pi||_2^2 \Delta^2}{\sqrt{K_{\lambda}}} = o(1). 
\end{align*}

Last, we have $\Psi(\beta_0) \geq c$ and  $\Psi^{-1/2}(\beta_0) \frac{ \sum_{i \in [n]} \sum_{j \in [n], j \neq i} \Pi_i P_{\lambda,ij} \Pi_j \Delta^2 }{\sqrt{K_\lambda} } \rightarrow \mu(\beta_0)$. Therefore, we have 
\begin{align*}
	\Psi^{-1/2}(\beta_0)   \left[ \frac{\sum_{i \in [n]} \sum_{j \in [n], j \neq i} (g_i \tilde \sigma_i(\beta_0) + \Delta \Pi_i) \Xi_{\lambda,ij} (g_j \tilde \sigma_j(\beta_0) + \Delta \Pi_j) }{\sqrt{K_\lambda} } + C(\Delta) \right] \convD \N(\mu(\beta_0),1), 
\end{align*}
which, combined with Theorem \ref{thm:main_null}, implies 
\begin{align*}
	\Psi^{-1/2}(\beta_0) \left[\widehat Q(\beta_0) + C(\Delta) \right] \convD \N(\mu(\beta_0),1). 
\end{align*}

Similar to the analysis of the first term on the RHS of \eqref{eq:Q(beta0)}, we can show that 
\begin{align*}
	\tilde \Psi^{-1/2}(\beta_0)  Q^*(\beta_0) \convD \N(0,1), 
\end{align*}
where 
\begin{align*}
	\breve \Psi (\beta_0) =  \frac{2 \sum_{i \in [n]} \sum_{j \in [n], j \neq i}  (\tilde \sigma^2_i(\beta_0) + \Delta^2 \Pi_i^2) \Xi_{\lambda,ij}^2 (\tilde \sigma^2_j(\beta_0) + \Delta^2 \Pi_j^2) }{K_\lambda }.
\end{align*}
In addition, we have $\Psi(\beta_0) \geq c$ for some constant $c>0$ and 
\begin{align*}
	\left\vert \frac{   \breve \Psi(\beta_0) - \Psi(\beta_0) }{\Psi(\beta_0) } \right\vert & \lesssim \frac{ \sum_{i \in [n]} \sum_{j \in [n], j \neq i}  \Delta^2 \Pi_i^2 \Xi_{\lambda,ij}^2 }{K_\lambda } + \frac{\sum_{i \in [n]} \sum_{j \in [n], j \neq i}   \Delta^2 \Pi_i^2 \Xi_{\lambda,ij}^2  \Delta^2 \Pi_j^2 }{K_\lambda } \\
	& \lesssim \frac{ \sum_{i \in [n]} \sum_{j \in [n], j \neq i}  \Delta^2 \Pi_i^2 \Xi_{\lambda,ij}^2 }{K_\lambda }  \lesssim \frac{1}{\sqrt{K_{\lambda} }} \frac{||\Pi||_2^2 \Delta^2}{\sqrt{K_{\lambda}}} = o(1), 
\end{align*}
where the second inequality is by the fact that $|\Delta|$ and $|\Pi_j|$ are assumed to be bounded and the last inequality is by the fact that 
\begin{align*}
	\sum_{j \in [n], j \neq i}\Xi_{\lambda,ij}^2 \lesssim  \sum_{j \in [n], j \neq i} (P_{\lambda,ij}^2 + B_{\lambda,ij}^2)  \lesssim P_{\lambda,ii} + P_{W,ii} \lesssim 1.    
\end{align*}
This implies 
\begin{align}\label{eq:normal*}
	\Psi^{-1/2}(\beta_0)  Q^*(\beta_0) \convD \N(0,1). 
\end{align}

Next, we consider the limit of the bootstrap critical value. Recall that $\widehat{\mathcal C}^*_{\alpha}(\beta_0) = \inf\{y \in \Re: 1-\alpha \leq \hat F_{\beta_0}^*(y)\}$, where 
\begin{align*}
	\hat F_{\beta_0}^*(y) = \mathbb P ( \widehat Q^*(\beta_0) \leq y| \mathcal D).
\end{align*}
and $F_{\beta_0}^*(y) = \mathbb P( Q^*(\beta_0) \leq y)$ and $\mathcal C_{\alpha}^*(\beta_0) = \inf\{y \in \Re: 1-\alpha \leq F_{\beta_0}^*(y)\}$. 

Further denote 
\begin{align*}
	\eta_n =     \sup_{y \in \Re}\left|\hat F_{\beta_0}^*(y) - F_{\beta_0}^*(y)\right|.
\end{align*}

% \quad \text{and} \quad \eta_n' = \left|\mathbb P \left(\widehat Q(\beta_0) \leq y\right) - F_{\beta_0}(y) \right|. 

By Theorem \ref{thm:Fhat-F}, we have $\eta_n = o_p(1)$. Then, for any $y_0$ and any $\eps>0$ such that $1-\alpha \leq \hat F_{\beta_0}^*(y_0)$ and $\eta_n \leq \eps$, we have
\begin{align*}
	1- \alpha  \leq \hat F_{\beta_0}^*(y_0) \leq F_{\beta_0}^*(y_0) + \sup_{y \in \Re}|\hat F_{\beta_0}^*(y) - F_{\beta_0}^*(y)| \leq F_{\beta_0}^*(y_0) + \eps.
\end{align*}
Therefore, when $\eta_n \leq \eps$, we have
\begin{align*}
	\mathcal C_{\alpha+\eps}^*(\beta_0) \leq \widehat{\mathcal C}^*_{\alpha}(\beta_0). 
\end{align*}
% In addition, by definition, we have
% \begin{align*}
	%     \widehat{\mathcal C}^*_{\alpha}(\beta_0) \leq y \quad \text{if and only if} \quad \alpha \leq \hat F_{\beta_0}^{*}(y).
	% \end{align*}
% Then, we have
% \begin{align*}
	% \mathbb P \left( \widehat Q(\beta_0) \geq \widehat{\mathcal C}^*_{\alpha}(\beta_0)\right) & \leq \mathbb P \left(\widehat Q(\beta_0) \geq \widehat{\mathcal C}^*_{\alpha}(\beta_0), \eta_n \leq \eps\right)  + \mathbb P(\eta_n >\eps) \\
	% & \leq \mathbb P \left( \widehat Q(\beta_0) \geq \mathcal C_{\alpha+\eps}(\beta_0)\right) + \mathbb P(\eta_n >\eps) \\
	% & \leq \mathbb P \left( Q(\beta_0) \geq \mathcal C_{\alpha+\eps}(\beta_0)\right) + \eta_n' + \mathbb P(\eta_n >\eps) \\
	% & = \alpha + \eps + \eta_n' + \mathbb P(\eta_n >\eps),
	% \end{align*}
% where the last inequality holds by Assumption \ref{ass:den}.  

Similarly, for any $y_0$ such that $1-(\alpha-\eps) \leq F_{\beta_0}^*(y_0)$, we have
\begin{align*}
	1-   (\alpha-\eps) \leq F_{\beta_0}^*(y_0) \leq \hat F_{\beta_0}^*(y_0) + \sup_{y \in \Re}|\hat F_{\beta_0}^*(y) - F_{\beta_0}^*(y)| \leq \hat F_{\beta_0}^*(y_0) + \eta_n,
\end{align*}
which implies, when $\eta_n \leq \eps$, 
\begin{align*}
	\widehat{\mathcal C}^*_{\alpha}(\beta_0) \leq  \mathcal C_{\alpha-\eps}^*(\beta_0). 
\end{align*}
Therefore, for any $\eps>0$, we have
\begin{align}\label{eq:etan}
	\{\eta_n \leq \eps\} \subset \left\{ \mathcal C_{\alpha+\eps}^*(\beta_0) \leq  \widehat{\mathcal C}^*_{\alpha}(\beta_0) \leq  \mathcal C_{\alpha-\eps}^*(\beta_0) \right\}.
\end{align}

In addition, by \eqref{eq:normal*}, we have
\begin{align}\label{eq:normal2}
	\Psi^{-1/2}(\beta_0) \mathcal C_{\alpha+\eps}^*(\beta_0) \convP z_{\alpha+\eps}  \quad \text{and} \quad \Psi^{-1/2}(\beta_0) \mathcal C_{\alpha-\eps}^*(\beta_0) \convP z_{\alpha-\eps}.    
\end{align}

Denote $f_N(\cdot)$ as the standard normal PDF. Then, for any $\eps'>0$, we can choose a sufficiently small $\eps$ such that $0< \eps \leq \min(\alpha/2,f_N(z_{\alpha/2})\eps')$ which implies 
\begin{align} \label{eq:zeps}
	& |z_{\alpha-\eps} - z_\alpha| \leq \eps/f_N(z_{\alpha-\eps}) \leq  \eps/f_N(z_{\alpha/2}) \leq \eps' \quad \text{and} \notag \\
	& |z_{\alpha+\eps} - z_\alpha| \leq \eps/f_N(z_{\alpha}) \leq  \eps/f_N(z_{\alpha/2}) \leq \eps'.
\end{align}

Then, we have
\begin{align*}
	&    \mathbb P \left(  \left| \Psi^{-1/2}(\beta_0) \widehat{\mathcal C}^*_{\alpha}(\beta_0) - z_\alpha  \right| > 2\eps' \right) \\
	& \leq     \mathbb P \left(  \left| \Psi^{-1/2}(\beta_0) \widehat{\mathcal C}^*_{\alpha}(\beta_0) - z_\alpha  \right| > 2\eps', \mathcal C_{\alpha+\eps}^*(\beta_0) \leq  \widehat{\mathcal C}^*_{\alpha}(\beta_0) \leq  \mathcal C_{\alpha-\eps}^*(\beta_0) \right)  + \mathbb P \left( \eta_n > \eps \right) \\
	& \leq    \mathbb P \left(  \left| \Psi^{-1/2}(\beta_0) \mathcal C_{\alpha+\eps}^*(\beta_0) - z_\alpha  \right| > 2\eps'\right)  + \mathbb P \left(  \left| \Psi^{-1/2}(\beta_0) \mathcal C_{\alpha-\eps}^*(\beta_0) - z_\alpha  \right| > 2\eps'\right)  + \mathbb P \left( \eta_n > \eps \right) \\
	& \leq    \mathbb P \left(   \left| z_{\alpha+\eps} - z_{\alpha}  \right| + \left| \Psi^{-1/2}(\beta_0) \mathcal C_{\alpha+\eps}^*(\beta_0) - z_{\alpha+\eps}  \right| > 2\eps'\right) \\ 
	& + \mathbb P \left(  \left| z_{\alpha-\eps} - z_{\alpha}  \right| + \left| \Psi^{-1/2}(\beta_0) \mathcal C_{\alpha-\eps}^*(\beta_0) - z_{\alpha-\eps}  \right| > 2\eps'\right) + \mathbb P \left( \eta_n > \eps \right) \\
	& \leq   \mathbb P \left(  \left| \Psi^{-1/2}(\beta_0) \mathcal C_{\alpha+\eps}^*(\beta_0) - z_{\alpha+\eps}  \right| > \eps'\right) + \mathbb P \left(  \left| \Psi^{-1/2}(\beta_0) \mathcal C_{\alpha-\eps}^*(\beta_0) - z_{\alpha-\eps}  \right| > \eps'\right) + \mathbb P \left( \eta_n > \eps \right),
\end{align*}
where the first inequality is by \eqref{eq:etan} and the last equality is by \eqref{eq:zeps}. Taking $\limsup_{n \rightarrow \infty}$ on both sides of the above display, we have
\begin{align*}
	& \limsup_{n \rightarrow \infty}    \mathbb P \left(  \left| \Psi^{-1/2}(\beta_0) \widehat{\mathcal C}^*_{\alpha}(\beta_0) - z_\alpha  \right| > 2\eps' \right) \\
	& \leq \limsup_{n \rightarrow \infty}  \mathbb P \left(  \left| \Psi^{-1/2}(\beta_0) \mathcal C_{\alpha+\eps}^*(\beta_0) - z_{\alpha+\eps}  \right| > \eps'\right) \\
	& + \limsup_{n \rightarrow \infty}  \mathbb P \left(  \left| \Psi^{-1/2}(\beta_0) \mathcal C_{\alpha-\eps}^*(\beta_0) - z_{\alpha-\eps}  \right| > \eps'\right) +  \lim_{n \rightarrow \infty}  \mathbb P \left( \eta_n > \eps \right) = 0, 
\end{align*}
where the equality holds by \eqref{eq:normal2} and the fact that $\eta_n = o_P(1)$. This implies 
\begin{align*}
	\Psi^{-1/2}(\beta_0) \widehat{\mathcal C}^*_{\alpha}(\beta_0) \convP z_\alpha,     
\end{align*} 
and thus, 
\begin{align*}
	\mathbb P(\widehat Q(\beta_0) >  \widehat C^*_\alpha(\beta_0)) \rightarrow \mathbb P \left( \N(\mu(\beta_0),1) >  z_\alpha \right). 
\end{align*}

\section{Proof of Theorem \ref{thm:power_divK_fixedKlambda}}\label{sec:pf_thm_power_divK_fixedKlambda}
By Theorem \ref{thm:main_null}, we have
\begin{align*}
	\mathbb P(\widehat Q(\beta_0) >  \widehat C^*_\alpha(\beta_0)) & =  \mathbb P( Q(\beta_0) + C(\Delta)>  \widehat C^*_\alpha(\beta_0)) + o(1) \\
	& =  \mathbb P(\left( \Psi(\beta_0) \right)^{-1/2} \left( Q(\beta_0) + C(\Delta)\right) > \left( \Psi(\beta_0) \right)^{-1/2} \widehat C^*_\alpha(\beta_0)) + o(1) 
\end{align*}

Following the argument in the proof of Theorem \ref{thm:power_divK}, we have
\begin{align*}
	Q(\beta_0) + C(\Delta) & = \frac{\sum_{i \in [n]} \sum_{j \in [n], j \neq i} (g_i \tilde \sigma_i(\beta_0) + \Delta \Pi_i) \Xi_{\lambda,ij} (g_j \tilde \sigma_j(\beta_0) + \Delta \Pi_j) }{\sqrt{K_\lambda} } + C(\Delta) \\
	& = \frac{\sum_{i=1}^n (g_i^2 -1)\varpi_i}{\sqrt{K_{\lambda}}} + \frac{ \sum_{i \in [n]} \sum_{j \in [n], j \neq i} \Pi_i P_{\lambda,ij} \Pi_j \Delta^2 }{\sqrt{K_\lambda} } + o_P(1) \\
	& = \frac{\sum_{i \in [R]} (g_i^2 -1)\varpi_i}{\sqrt{K_{\lambda}}} + \frac{\sum_{i=R+1}^n (g_i^2 -1)\varpi_i}{\sqrt{K_{\lambda}}} + \frac{ \sum_{i \in [n]} \sum_{j \in [n], j \neq i} \Pi_i P_{\lambda,ij} \Pi_j \Delta^2 }{\sqrt{K_\lambda} } + o_P(1). 
\end{align*}
Recall $\Psi(\beta_0) = \frac{2 \sum_{i \in [n]} \sum_{j \in [n], j \neq i}  \tilde \sigma^2_i(\beta_0) \Xi_{\lambda,ij}^2 \tilde \sigma_j^2(\beta_0) }{K_\lambda }$, which implies 
$ \sum_{i \in [n]} \varpi_i^2 = \Psi(\beta_0) K_\lambda/2$. Then, we have
\begin{align*}
	\left( \Psi(\beta_0) \right)^{-1/2} \frac{\sum_{i \in [R]} (g_i^2 -1)\varpi_i}{\sqrt{K_{\lambda}}} \convD \sum_{i \in [R]} (g_i^2 -1) r_i/\sqrt{2}.
\end{align*}

In addition, the rest of the eigenvalues satisfy the Lindeberg-type condition. Following the same argument in the proof of Theorem \ref{thm:power_divK}, we have
\begin{align*}
	\left( \Psi(\beta_0) \right)^{-1/2} \frac{\sum_{i=R+1}^n (g_i^2 -1)\varpi_i}{\sqrt{K_{\lambda}}}  \convD N\left(0, (1-\sum_{i \in [R]}r_i^2)\right). 
\end{align*}

Because $\{g_i\}_{i \in [R]}$ is independent of $\{g_i\}_{i > R}$, we have
\begin{align*}
	\left( \Psi(\beta_0) \right)^{-1/2} \left(Q(\beta_0) + C(\Delta)\right) \convD  \chi(\{r_i\}_{i \in [R]}) + \mu(\beta_0). 
\end{align*}

Similarly, we can show that 
\begin{align*}
	\left(\breve \Psi(\beta_0)  \right)^{-1/2} Q^*(\beta_0) \convD   \chi(\{r_i^*\}_{i \in [R^*]}),
\end{align*}
where 
\begin{align*}
	\breve \Psi(\beta_0) = \frac{2 \sum_{i \in [n]} \sum_{j \in [n], j \neq i}  \breve \sigma^2_i(\beta_0) \Xi_{\lambda,ij}^2 \breve \sigma_j^2(\beta_0) }{K_\lambda }. 
\end{align*}
This implies 
\begin{align*}
	\left( \Psi(\beta_0) \right)^{-1/2}   Q^*(\beta_0) \convD  \psi^{1/2}(\beta_0) \chi(\{r_i^*\}_{i \in [R^*]})
\end{align*}

The distribution of $\psi^{1/2}(\beta_0) \chi(\{r_i^*\}_{i \in [R^*]})$ is continuous and satisfies our Assumption \ref{ass:den} automatically. Then, we can follow the same argument in the proof of Theorem \ref{cor:size} and show that, for any $\eps>0$, with probability approaching one, 
\begin{align*}
	\psi^{1/2}(\beta_0) \mathcal C_{\alpha + \eps}(\{r_i^*\}_{i \in [R^*]}) \leq \Psi^{-1/2}(\beta_0) \widehat C^*_\alpha(\beta_0) \leq \psi^{1/2}(\beta_0) \mathcal C_{\alpha - \eps}(\{r_i^*\}_{i \in [R^*]}).
\end{align*}
This implies 
$\Psi^{-1/2}(\beta_0) \widehat C^*_\alpha(\beta_0) \convP  \psi^{1/2}(\beta_0) \mathcal C_{\alpha}(\{r_i^*\}_{i \in [R^*]}) $, and thus,

\begin{align*}
	\mathbb P(\widehat Q(\beta_0) >  \widehat C^*_\alpha(\beta_0)) \rightarrow  \mathbb P \left(\chi(\{r_i\}_{i \in [R]}) + \mu(\beta_0) >   \psi^{1/2}(\beta_0) \mathcal C_\alpha (\{r_i^*\}_{i \in [R^*]}) \right).
\end{align*}

\section{Proof of Theorem \ref{thm:power_fixK}}\label{sec:pf_thm_power_fixK}
By Theorem \ref{thm:main_null}, we have
\begin{align*}
	& \sup_{y \in \Re} \left|\mathbb P \left( \widehat Q(\beta_0) \leq y\right) - \mathbb P \left(\frac{\sum_{i \in [n]} \sum_{j \in [n], j \neq i} (g_i \tilde \sigma_i(\beta_0) + \Delta \Pi_i) \Xi_{\lambda,ij} (g_j \tilde \sigma_j(\beta_0) + \Delta \Pi_j) }{\sqrt{K_\lambda} } + C(\Delta) \leq y\right)\right| \\
	& = o_P(1), 
\end{align*}
where $\{g_i\}_{i \in [n]}$ is a sequence of i.i.d. standard normal random variables.  

In addition, we have
\begin{align*}
	& \frac{\sum_{i \in [n]} \sum_{j \in [n], j \neq i} (g_i \tilde \sigma_i(\beta_0) + \Delta \Pi_i) \Xi_{\lambda,ij} (g_j \tilde \sigma_j(\beta_0) + \Delta \Pi_j) }{\sqrt{K_\lambda} } + C(\Delta) \\
	& = \frac{\sum_{i \in [n]} \sum_{j \in [n], j \neq i} (g_i \tilde \sigma_i(\beta_0)) \Xi_{\lambda,ij} (g_j \tilde \sigma_j(\beta_0)) }{\sqrt{K_\lambda} } + \frac{2\sum_{i \in [n]} \sum_{j \in [n], j \neq i} (g_i \tilde \sigma_i(\beta_0)) \Xi_{\lambda,ij} \Pi_j \Delta}{\sqrt{K_\lambda} } \\
	& + \frac{\sum_{i \in [n]} \sum_{j \in [n], j \neq i} \Pi_i P_{\lambda,ij} \Pi_j \Delta^2 }{\sqrt{K_\lambda}}\\
	& = \frac{\sum_{i \in [n]} \sum_{j \in [n], j \neq i} (g_i \tilde \sigma_i(\beta_0)) P_{\lambda,ij} (g_j \tilde \sigma_j(\beta_0)) }{\sqrt{K_\lambda} } + \frac{\sum_{i \in [n]} \sum_{j \in [n], j \neq i} (g_i \tilde \sigma_i(\beta_0)) (\Xi_{\lambda,ij} - P_{\lambda,ij}) (g_j \tilde \sigma_j(\beta_0)) }{\sqrt{K_\lambda} } \\
	& + \frac{2\sum_{i \in [n]} \sum_{j \in [n], j \neq i} (g_i \tilde \sigma_i(\beta_0)) \Xi_{\lambda,ij} \Pi_j \Delta}{\sqrt{K_\lambda} } + \frac{\sum_{i \in [n]} \sum_{j \in [n], j \neq i} \Pi_i P_{\lambda,ij} \Pi_j \Delta^2 }{\sqrt{K_\lambda}}\\
	& = \frac{\sum_{i \in [n]} \sum_{j \in [n], j \neq i} (g_i \tilde \sigma_i(\beta_0) + \Delta \Pi_i) P_{\lambda,ij} (g_j \tilde \sigma_j(\beta_0) + \Delta \Pi_j) }{\sqrt{K_\lambda} } \\
	& +  \frac{\sum_{i \in [n]} \sum_{j \in [n], j \neq i} (g_i \tilde \sigma_i(\beta_0)) (\Xi_{\lambda,ij} - P_{\lambda,ij}) (g_j \tilde \sigma_j(\beta_0)) }{\sqrt{K_\lambda} } \\
	& +  \frac{2\sum_{i \in [n]} \sum_{j \in [n], j \neq i} (g_i \tilde \sigma_i(\beta_0)) (\Xi_{\lambda,ij} - P_{\lambda,ij} ) \Pi_j \Delta}{\sqrt{K_\lambda} } \\
	& = \frac{\sum_{i \in [n]} \sum_{j \in [n], j \neq i} (g_i \tilde \sigma_i(\beta_0) + \Delta \Pi_i) P_{\lambda,ij} (g_j \tilde \sigma_j(\beta_0) + \Delta \Pi_j) }{\sqrt{K_\lambda} } + o_P(1), 
\end{align*}
where the last equality is by the facts that
\begin{align}\label{eq:power_fixed1}
	& Var\left[  \frac{\sum_{i \in [n]} \sum_{j \in [n], j \neq i} (g_i \tilde \sigma_i(\beta_0)) (\Xi_{\lambda,ij} - P_{\lambda,ij}) (g_j \tilde \sigma_j(\beta_0)) }{\sqrt{K_\lambda} }  \right] \notag  \\
	&  \lesssim \frac{\sum_{i \in [n]} \sum_{j \in [n], j \neq i} (\Xi_{\lambda,ij} - P_{\lambda,ij})^2 }{K_\lambda} \notag \\
	& \lesssim  \frac{ \left(\max_{i \in [n]}P_{\lambda,ii} \right) \sum_{i \in [n]} \sum_{j \in [n], j \neq i} P_{W,ij}^2 }{K_\lambda} + \frac{  ||B_{\lambda}||_F^2 }{K_\lambda} \notag \\
	& \lesssim \frac{ \left(\max_{i \in [n]}P_{\lambda,ii} \right) ||P_W||_F^2 }{K_\lambda} \notag \\
	& \lesssim \frac{ \left(\max_{i \in [n]}P_{\lambda,ii} \right) d_w }{K_\lambda}  = o(1)
\end{align}
and 
\begin{align*}
	& Var\left[\frac{\sum_{i \in [n]} \sum_{j \in [n], j \neq i} (g_i \tilde \sigma_i(\beta_0)) (\Xi_{\lambda,ij} - P_{\lambda,ij} ) \Pi_j \Delta}{\sqrt{K_\lambda} } \right]    \\
	& \lesssim \frac{\sum_{i \in [n]} \left( \sum_{j \in [n], j \neq i} (\Xi_{\lambda,ij} - P_{\lambda,ij} ) \Pi_j \Delta\right)^2  }{K_\lambda} \\
	& \lesssim \frac{\sum_{i \in [n]} \left( \sum_{j \in [n], j \neq i} (\Xi_{\lambda,ij} - P_{\lambda,ij} )^2 \right) ||\Pi||_2^2 \Delta^2}{K_\lambda}  = o(1). 
\end{align*}

Next, we have
\begin{align*}
	& \frac{\sum_{i \in [n]} \sum_{j \in [n], j \neq i} (g_i \tilde \sigma_i(\beta_0) + \Delta \Pi_i) P_{\lambda,ij} (g_j \tilde \sigma_j(\beta_0) + \Delta \Pi_j) }{\sqrt{K_\lambda} } \\
	& = \frac{\sum_{i \in [n]} \sum_{j \in [n]} (g_i \tilde \sigma_i(\beta_0) + \Delta \Pi_i) P_{\lambda,ij} (g_j \tilde \sigma_j(\beta_0) + \Delta \Pi_j) }{\sqrt{K_\lambda} } - \frac{\sum_{i \in [n]} (g_i \tilde \sigma_i(\beta_0) + \Delta \Pi_i)^2 P_{\lambda,ii}}{\sqrt{K_\lambda} } \\
	& = \frac{\sum_{i \in [n]} \sum_{j \in [n]} (g_i \tilde \sigma_i(\beta_0) + \Delta \Pi_i) P_{\lambda,ij} (g_j \tilde \sigma_j(\beta_0) + \Delta \Pi_j) }{\sqrt{K_\lambda} } - \frac{\sum_{i \in [n]} \breve \sigma_i^2(\beta_0) P_{\lambda,ii}}{\sqrt{K_\lambda} } + o_P(1),
\end{align*}
where the second equality follows from the fact that $\breve \sigma_i^2(\beta_0) = \mathbb E (g_i \tilde \sigma_i(\beta_0) + \Delta \Pi_i)^2$ and 
\begin{align}\label{eq:power_fixed2}
	Var \left( \frac{\sum_{i \in [n]} (g_i \tilde \sigma_i(\beta_0) + \Delta \Pi_i)^2 P_{\lambda,ii} }{\sqrt{K_\lambda} } \right) \lesssim \frac{\sum_{i \in [n]} P_{\lambda, ii}^2}{ K_{\lambda}} \lesssim \left(\max_{\lambda, ii} P_{\lambda,ii}\right) K = o(1),
\end{align}
where the last inequality holds by the fact that $\sum_{i \in [n]} P_{\lambda,ii} = tr(P_{\lambda}) \leq \min(K,n) = K.$ 

In addition, consider the singular value decomposition of $Z$ as 
$Z = \mathcal U \mathcal S \mathcal V^\top,$
where $\mathcal U \in \Re^{n \times n}$, $\mathcal U^\top \mathcal U = I_n$, $\mathcal S = [ S_0, 0_{K, n-K} ]^\top$, $S_0$ is a diagonal matrix of non-zero singular values, $0_{K,n-K} \in \Re^{K \times (n-K)}$ is a matrix of zeros, $\mathcal V \in \Re^{K \times K}$, and $\mathcal V^\top \mathcal V = I_K$. Further denote $\mathcal U = [\mathcal U_1,\mathcal U_2]$ such that $\mathcal U_1 \in \Re^{n \times K}$, $\mathcal U_2 \in \Re^{n \times (n-K)}$, $\mathcal U_1^\top \mathcal U_1 = I_K$, $\mathcal U_1^\top \mathcal U_2 = 0_{K,n-K}$, and $\mathcal U_2^\top \mathcal U_2 = I_{n-K}$. 

Then, we have
\begin{align*}
	P_\lambda & = \mathcal U \mathcal S \mathcal V^\top (\mathcal V (\mathcal S^\top \mathcal S + \lambda I_K) \mathcal V^\top)^{-1} \mathcal V \mathcal S^\top \mathcal U^\top \\
	& =  \mathcal U \mathcal S (S_0^2 + \lambda I_K)^{-1} \mathcal S^\top \mathcal U^\top \\
	& =  \mathcal U \begin{pmatrix}
		S_0 (S_0^2 + \lambda I_K)^{-1} S_0 & 0_{K,n-K} \\
		0_{n-K,K} & 0_{n-K,n-K}
	\end{pmatrix} \mathcal U^\top \\
	& =  \mathcal U_1 
	S_0 (S_0^2 + \lambda I_K)^{-1} S_0 \mathcal U_1^\top.
\end{align*}

Denote $\overline g = (g_1,\cdots,g_n)^\top$,  
$\Omega(\beta_0) = \mathcal U_1^\top  \diag(\tilde \sigma_1^2(\beta_0),\cdots,\tilde \sigma_n^2(\beta_0) ) \mathcal U_1,$
and 
$$\tilde \nu(\beta_0) = \lim_{n \rightarrow \infty} \Omega^{-1/2}(\beta_0) \Delta \mathcal U_1^\top \Pi.$$
Then, we have
\begin{align*}
	\mathcal U_1^\top   \begin{pmatrix}
		g_1 \tilde \sigma_1(\beta_0) + \Delta \Pi_1 \\
		\vdots\\
		g_n \tilde \sigma_n(\beta_0) + \Delta \Pi_n
	\end{pmatrix} & = \mathcal U_1^\top  \left( \Delta \Pi +  \diag(\tilde \sigma_1(\beta_0),\cdots,\tilde \sigma_n(\beta_0) ) \overline g \right) \\
	& = \Omega^{1/2}(\beta_0) (\tilde \nu(\beta_0) +  \tilde {\mathcal G}),
\end{align*}
where 
$\tilde {\mathcal G} = \Omega^{-1/2}(\beta_0)  \mathcal U_1^\top \diag(\tilde \sigma_1(\beta_0),\cdots,\tilde \sigma_n(\beta_0) ) \overline g$,
and $\tilde {\mathcal G}$ follows a $K$-dimensional standard normal distribution. 

Further, consider the eigenvalue decomposition 
\begin{align*}
	A = \lim_{n \rightarrow \infty} \frac{   \left(\Omega^{1/2}(\beta_0) S_0 (S_0^2 + \lambda I_K)^{-1} S_0  \Omega^{1/2}(\beta_0)\right) }{\sqrt{K_\lambda}} = \mathbb U \diag(\omega_1,\cdots,\omega_K) \mathbb U^\top, 
\end{align*}
where $\mathbb U \in \Re^{K \times K}$, $\mathbb U^\top \mathbb U = I_K$, $\{\omega_k\}_{k \in [K]}$ are $K$ non-negative eigenvalues. Let $\mathcal G = \mathbb U^\top \tilde {\mathcal G} $, $\nu(\beta_0) = \mathbb U^\top \tilde \nu(\beta_0)$, $\nu_k(\beta_0)$ be the $k$-th element of $\nu(\beta_0)$, and $\mathcal G_k$ be the $k$-th element of $\mathcal G$ so that they are i.i.d. standard normal random variables.  Then, we have
\begin{align*}
	& \frac{\sum_{i \in [n]} \sum_{j \in [n]} (g_i \tilde \sigma_i(\beta_0) + \Delta \Pi_i) P_{\lambda,ij} (g_j \tilde \sigma_j(\beta_0) + \Delta \Pi_j) }{\sqrt{K_\lambda}} \\
	& = \frac{1}{\sqrt{K_\lambda}} \begin{pmatrix}
		g_1 \tilde \sigma_1(\beta_0) + \Delta \Pi_1 \\
		\vdots\\
		g_n \tilde \sigma_n(\beta_0) + \Delta \Pi_n
	\end{pmatrix}^\top  \mathcal U_1 
	S_0 (S_0^2 + \lambda I_K)^{-1} S_0 \mathcal U_1^\top   \begin{pmatrix}
		g_1 \tilde \sigma_1(\beta_0) + \Delta \Pi_1 \\
		\vdots\\
		g_n \tilde \sigma_n(\beta_0) + \Delta \Pi_n
	\end{pmatrix} \\
	& = (\tilde \nu(\beta_0) + \tilde {\mathcal G})^\top \frac{   \left(\Omega^{1/2}(\beta_0) S_0 (S_0^2 + \lambda I_K)^{-1} S_0  \Omega^{1/2} (\beta_0)\right) }{\sqrt{K_\lambda}} (\tilde \nu(\beta_0) +  \tilde {\mathcal G}) \\
	& \convP (\tilde \nu(\beta_0) + \tilde {\mathcal G})^\top  \mathbb U \diag(\omega_1,\cdots,\omega_K) \mathbb U^\top (\tilde \nu(\beta_0) +  \tilde {\mathcal G}) \\
	& = \sum_{k\in [K]} \omega_k (\nu_k(\beta_0) + \mathcal G_k)^2  = \sum_{k\in [K]} \omega_k \chi_k^2(\nu_k^2(\beta_0)),
\end{align*}
where $\chi^2_k(\nu_k^2(\beta_0)) = (\nu_k(\beta_0) + \mathcal G_k)^2$ is a sequence of independent chi-squared random variable with one degree of freedom and noncentrality parameter $\nu_k^2(\beta_0)$. 

Similarly, we have 
\begin{align*}
	Q^*(\beta_0) & =   \frac{\sum_{i \in [n]} \sum_{j \in [n], j \neq i} g_{i}\breve \sigma_{i}(\beta_0) P_{\lambda,ij} g_{j} \breve \sigma_{j}(\beta_0)}{\sqrt{K_\lambda}} + o_P(1) \\
	& = \frac{\sum_{i \in [n]} \sum_{j \in [n]} g_{i}\breve \sigma_{i}(\beta_0) P_{\lambda,ij} g_{j} \breve \sigma_{j}(\beta_0)}{\sqrt{K_\lambda}} - \frac{\sum_{i \in [n]} \breve \sigma_{i}^2(\beta_0) P_{\lambda,ii}}{\sqrt{K_\lambda}} + o_P(1),
\end{align*}
where the first and second equalities are by the same arguments as \eqref{eq:power_fixed1} and \eqref{eq:power_fixed2}, respectively.  

Let
\begin{align*}
	\breve \Omega(\beta_0)  = \mathcal U_1^\top  \diag(\breve \sigma_1^2(\beta_0),\cdots,\breve \sigma_n^2(\beta_0) ) \mathcal U_1.
\end{align*}
We have
\begin{align*}
	& \left\Vert \frac{   \left(\Omega^{1/2}(\beta_0) S_0 (S_0^2 + \lambda I_K)^{-1} S_0  \Omega^{1/2}(\beta_0)\right) }{\sqrt{K_\lambda}} -  \frac{   \left(\breve \Omega^{1/2}(\beta_0) S_0 (S_0^2 + \lambda I_K)^{-1} S_0 \breve \Omega^{1/2}(\beta_0)\right) }{\sqrt{K_\lambda}} \right\Vert_{op}  \\
	& \lesssim \frac{\left \Vert \breve \Omega^{1/2}(\beta_0) - \Omega^{1/2}(\beta_0) \right\Vert_{op} }{\sqrt{K_\lambda}} \lesssim \frac{ \left \Vert \breve \Omega(\beta_0) - \Omega(\beta_0) \right\Vert_{op} }{\sqrt{K_\lambda}}  \lesssim \left\Vert \frac{1}{ \sqrt{K_\lambda}} \sum_{i \in [n]} \mathcal U_{1,i}\mathcal U_{1,i}^\top \Pi_i^2 \Delta^2 \right\Vert_{op} \\
	& \lesssim \left(\max_{i \in [n]} ||\mathcal{U}_{1,i}||_2^2\right) \frac{\Pi^\top \Pi \Delta^2}{\sqrt{K_\lambda}} = o(1),
\end{align*}
where the second inequality is by the fact that $||A^{1/2}-B^{1/2}||_S \leq ||A-B||_S^{1/2}$ for symmetric and positive semidefinite matrices $A$ and $B$ (see \citet[Theorem X.1.1]{bhatia2013} with $f(u) = u^{1/2}$), and the last equality is by the fact that $\Pi^\top \Pi \Delta^2/\sqrt{K_{\lambda}} = O(1)$ and $\max_{i \in [n]}||\mathcal{U}_{1,i}||_2 = o(1)$. 

Therefore, we have
\begin{align*}
	Q^*(\beta_0)  \convD \sum_{k \in [K]} \omega_k \chi^2_k, \quad \widehat C^*_\alpha(\beta_0)) \convP \mathcal C_{\omega}(1-\alpha) , 
\end{align*}
and
\begin{align*}
	\mathbb P(\widehat Q(\beta_0) >  \widehat C^*_\alpha(\beta_0)) \rightarrow \mathbb P \left(\sum_{k\in [K]} \omega_k \chi_k^2(\nu_k^2(\beta_0)) > \mathcal C_{\omega}(1-\alpha) \right), 
\end{align*}
where $\mathcal C_{\omega}(1-\alpha)$ is the $(1-\alpha)$ quantile of $\sum_{k \in [K]} \omega_k \chi^2_k$.

\section{Proof of Theorem \ref{cor:admissible}}\label{sec: proof_admissible}
We follow the same notation in above section. We have
\begin{align*}
	\widehat {\mathcal G} (\beta_0) & = \mathbb U \hat \Omega^{-1/2}(\beta_0) \mathcal {U}_1^\top e(\beta_0) \\
	& = \mathbb U \Omega^{-1/2}(\beta_0) \mathcal {U}_1^\top e(\beta_0) + o_P(1) \\
	& = \mathbb U \Omega^{-1/2}(\beta_0) \mathcal {U}_1^\top \left[\tilde e(\beta_0) + \Pi \Delta - P_W \tilde e(\beta_0) \right] + o_P(1) \\
	& = \mathbb U \left[ \Omega^{-1/2}(\beta_0) \mathcal {U}_1^\top \tilde e(\beta_0) +  \tilde \nu(\beta_0)\right]  + o_P(1) \\
	& \convD  \mathbb U  (\tilde {\mathcal G} + \tilde \nu(\beta_0)) \stackrel{d}{=} \mathcal G + \nu(\beta_0), 
\end{align*}
where $\tilde {\mathcal G}$, $\mathcal G$, $\tilde \nu(\beta_0)$, and $\nu(\beta_0)$ are defined in the proof of Theorem \ref{thm:power_fixK} above, the second equality is by the consistency of $\hat \Omega(\beta_0)$, the third equality is by the definition of $e(\beta_0)$, the fourth equality is by $\mathcal U_1^\top W = 0$, the convergence in distribution is by standard CLT induced by the fact that $\max_{i \in [n]} ||\mathcal U_{1,i}||_2 = o(1)$.  This implies 
\begin{align*}
	\left\{     \widehat {\mathcal G}_k^2 (\beta_0) \right\}_{k \in [K]} \convD \left\{ \chi_k^2(\nu_k^2(\beta_0)  \right\}_{k \in [K]}, 
\end{align*}
and thus, 
\begin{align*}
	& \left    (\phi^*(\widehat {\mathcal G}_1^2 (\beta_0),\cdots,\widehat {\mathcal G}_K^2 (\beta_0)),\phi_0 \right) \\
	& \convD \left( \phi^*(\chi_1^2(\nu_1^2(\beta_0)),\cdots,\chi_K^2(\nu_K^2(\beta_0)) ), 1\{ \sum_{k\in [K]} \omega_k \chi_k^2(\nu_k^2(\beta_0)) > \mathcal C_{\omega}(1-\alpha)\} \right).
\end{align*}
Given that both $\phi^*(\widehat {\mathcal G}_1^2 (\beta_0),\cdots,\widehat {\mathcal G}_K^2 (\beta_0))$ and $\phi_0$ are bounded, we have
\begin{align*}
	&    \left    (\mathbb E \phi^*(\widehat {\mathcal G}_1^2 (\beta_0),\cdots,\widehat {\mathcal G}_K^2 (\beta_0)), \mathbb E\phi_0 \right) \\
	& \rightarrow \left( \mathbb E \phi^*(\chi_1^2(\nu_1^2(\beta_0)),\cdots,\chi_K^2(\nu_K^2(\beta_0)) ), \mathbb P\left( \sum_{k\in [K]} \omega_k \chi_k^2(\nu_k^2(\beta_0)) > \mathcal C_{\omega}(1-\alpha)\right) \right)
\end{align*}

In addition, we note that the acceptance region of test $1\{\sum_{k\in [K]} \omega_k \chi_k^2(\nu_k^2(\beta_0)) > \mathcal C_{\omega}(1-\alpha)\}$ is $\mathbb A = \{\mathcal X_1, \cdots, \mathcal X_K: \sum_{k\in [K]} \omega_k \mathcal X_k^2 \leq C_{\omega}(1-\alpha  \}$, which is closed, convex, and monotone decreasing in the sense that if $(\mathcal X_1,\cdots,\mathcal X_K) \in \mathbb {A}$ and $0\leq \mathcal X_1' \leq \mathcal X_1, \cdots, 0 \leq \mathcal X_K' \leq  \mathcal X_K$, then $(\mathcal X_1',\cdots,\mathcal X_K') \in \mathbb{A}$. Then, the desired result follows {\citet[Theorem 1]{A16}}, which is a direct consequence of results in \cite{MS76} and \cite{KP78}.

\section{An Anti-Concentration Inequality}\label{sec: anti}
\begin{lem}
	Suppose Assumption \ref{ass:reg} holds. Then, for any $t >0$ and any $\zeta \in (0,1)$, there exists a constant $C_{\zeta}>0$ that only depends on $\underline c$ in Assumption \ref{ass:reg}.3 and $\zeta$ such that 
	\begin{align*}
		\sup_{y \in \Re}    \mathbb P( |Q(\beta_0) - y| \leq t  ) 
		& \leq C_{\zeta}t^{(1-\zeta)/2}
	\end{align*}
	and 
	\begin{align*}
		\sup_{y \in \Re}    \mathbb P( |Q^*(\beta_0) - y| \leq t) 
		& \leq C_{\zeta}t^{(1-\zeta)/2}.
	\end{align*} 
	\label{lem:anti-concentration}
\end{lem}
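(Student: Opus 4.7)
The plan is to treat $Q(\beta_0)$ and $Q^*(\beta_0)$ as polynomials of degree $d=2$ in the standard Gaussian vector $g=(g_1,\dots,g_n)^\top$ and then invoke the Carbery--Wright anti-concentration inequality, which for any polynomial $p$ of degree $d$ in a Gaussian vector gives a bound of the form $\mathbb{P}(|p(g)|\le t)\lesssim d\,(t/\|p\|_{L^2})^{1/d}$.

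First I would put $Q(\beta_0)$ in canonical form. Letting $\tilde P_\lambda := P_\lambda - \mathrm{diag}(P_\lambda)$, $D_{\breve\sigma} := \mathrm{diag}(\breve\sigma_1(\beta_0),\dots,\breve\sigma_n(\beta_0))$, $A := D_{\breve\sigma}\tilde P_\lambda D_{\breve\sigma}/\sqrt{K_\lambda}$, $b := \Delta D_{\breve\sigma}\tilde P_\lambda \Pi/\sqrt{K_\lambda}$, and $c := \Delta^2\Pi^\top\tilde P_\lambda\Pi/\sqrt{K_\lambda}$, one checks that
$$Q(\beta_0) \;=\; g^\top A g + 2 b^\top g + c,$$
a quadratic polynomial in $g$. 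An analogous (pure-quadratic) expression holds for $Q^*(\beta_0)$ with $D_{\breve\sigma}$ replaced by $D_{\tilde\sigma}=\mathrm{diag}(\tilde\sigma_1(\beta_0),\dots,\tilde\sigma_n(\beta_0))$.

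Second, I would apply Carbery--Wright with $d=2$ to the degree-$2$ polynomial $p(g) := Q(\beta_0) - y$, yielding
$$\mathbb{P}(|Q(\beta_0)-y|\le t)\;\le\; C\,t^{1/2}\big/\|Q(\beta_0)-y\|_{L^2}^{1/2},$$
and since $\|Q(\beta_0)-y\|_{L^2}^2 = \mathrm{Var}(Q(\beta_0)) + (\mathbb{E}Q(\beta_0)-y)^2 \ge \mathrm{Var}(Q(\beta_0))$, it suffices to lower bound $\mathrm{Var}(Q(\beta_0))$ uniformly. Using that the linear and quadratic parts in $g$ are uncorrelated,
$$\mathrm{Var}(Q(\beta_0)) \;=\; 2\|A\|_F^2 + 4\|b\|_2^2 \;\ge\; \frac{2}{K_\lambda}\sum_{i\neq j}\breve\sigma_i^2(\beta_0)\,P_{\lambda,ij}^2\,\breve\sigma_j^2(\beta_0),$$
which together with the identity $\sum_{i\ne j}P_{\lambda,ij}^2=K_\lambda$ and the variance lower bound of Assumption \ref{ass:reg}.4 yields $\mathrm{Var}(Q(\beta_0)) \ge c>0$ for a constant $c$ depending only on $\underline c$. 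The same calculation applied to $Q^*(\beta_0)$, using $\tilde\sigma_i^2(\beta_0) \ge \tilde\sigma_i^2 \ge \underline c$, gives $\mathrm{Var}(Q^*(\beta_0)) \ge c>0$. Combining yields $\sup_{y}\mathbb{P}(|Q(\beta_0)-y|\le t) \le C\,t^{1/2}$, and similarly for $Q^*(\beta_0)$; bounding $t^{1/2}\le C_\zeta t^{(1-\zeta)/2}$ (with $C_\zeta$ absorbing the case $t\ge 1$ where the statement is trivial) delivers the claimed bound for any $\zeta\in(0,1)$.

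The hard part, and where the $\zeta$ slack in the exponent is useful, is the uniform lower bound on the variance: $\breve\sigma_i^2(\beta_0) = \tilde\sigma_i^2+2\Delta\tilde\tau_i+\Delta^2\tilde\varsigma_i^2$ need not be bounded away from zero for every $i$ at arbitrary $\Delta$, even though $\tilde\sigma_i^2\ge\underline c$. To handle this I would split into cases according to whether $\Delta$ is small (in which case $\breve\sigma_i^2(\beta_0)$ is close to $\tilde\sigma_i^2$ and the direct bound works) or moderately large (in which case either the linear contribution $4\|b\|_2^2$ in $\mathrm{Var}(Q(\beta_0))$ becomes non-negligible via the concentration parameter, or the additional $\Delta^2\Pi_i^2$ floor in $\tilde\sigma_i^2(\beta_0) = \breve\sigma_i^2(\beta_0)+\Delta^2\Pi_i^2$ keeps $\|A^*\|_F^2$ bounded below for $Q^*(\beta_0)$); the $\zeta$-margin in the exponent then absorbs any polynomial dependence of the variance lower bound on auxiliary quantities in these intermediate regimes.
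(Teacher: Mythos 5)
Your argument is correct in substance but takes a genuinely different route from the paper's. You view $Q(\beta_0)-y$ as a degree-two polynomial in the Gaussian vector $g$ and invoke Carbery--Wright, reducing the lemma to a lower bound on $\operatorname{Var}(Q(\beta_0)) = 2\Vert A\Vert_F^2 + 4\Vert b\Vert_2^2$. The paper instead diagonalizes the quadratic form, writes $Q(\beta_0) \stackrel{d}{=} \sum_i \omega_i \chi_i^2(\nu_i^2)$ as a weighted sum of independent noncentral chi-squares, proves by hand a $\sqrt{t}$ anti-concentration bound for a single noncentral $\chi^2_1$ that is uniform in the noncentrality parameter, and then applies \citet[Theorem 1.5]{RV15} to the linear image $(\omega_1,\dots,\omega_n)$ of these independent coordinates (stable rank one), normalizing by $\Vert \mathbb A\Vert_F$; the $(1-\zeta)$ loss in the exponent is exactly the price of that theorem. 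Your route is arguably cleaner and slightly stronger: Carbery--Wright delivers the exponent $1/2$ directly (so $t^{(1-\zeta)/2}$ follows trivially, absorbing $t\ge 1$), and the mean-shift term $\Delta\Pi$ is handled automatically through the linear part $2b^\top g$, which can only increase the variance, whereas the paper needs the uniformity in $\nu_i$ of its one-dimensional bound for the same purpose. What the paper's route buys is elementariness (a hand-computed one-dimensional bound plus a concentration-function result for sums of independent variables, rather than the polynomial anti-concentration machinery). Both proofs ultimately hinge on the same quantity, $\Vert A\Vert_F^2 = K_\lambda^{-1}\sum_{i\neq j}P_{\lambda,ij}^2\breve\sigma_i^2(\beta_0)\breve\sigma_j^2(\beta_0)$, being bounded below.

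Two caveats. The inequality $\tilde\sigma_i^2(\beta_0)\ge\tilde\sigma_i^2$ you use for $Q^*(\beta_0)$ is false in general: $\tilde\sigma_i^2(\beta_0)=\tilde\sigma_i^2+2\Delta\tilde\tau_i+\Delta^2(\tilde\varsigma_i^2+\Pi_i^2)$ and the cross term can be negative; what is true is $\tilde\sigma_i^2(\beta_0)=\breve\sigma_i^2(\beta_0)+\Delta^2\Pi_i^2\ge\breve\sigma_i^2(\beta_0)$. Second, you correctly note that $\breve\sigma_i^2(\beta_0)\ge c>0$ is not implied by Assumption \ref{ass:reg}.4 for arbitrary $\Delta$; but the paper's own proof simply asserts $\Vert\mathbb A\Vert_F^2\ge\underline c^2$ from that assumption, i.e., it makes the same leap, and under the null ($\Delta=0$) both versions are exact. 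Your closing case-analysis sketch is therefore not needed to match the paper, and as written it would not rescue genuinely degenerate alternatives (e.g., $\tilde e_i=-\Delta\tilde v_i$ with $\Pi=0$, where the conclusion of the lemma itself fails), so your argument stands on the same footing as the paper's.
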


\begin{proof}
	Recall 
	\begin{align*}
		Q(\beta_0)  = \frac{\sum_{i \in [n]} \sum_{j \in [n], j \neq i} (g_i\tilde \sigma_i(\beta_0) + \Delta \Pi_i) \Xi_{\lambda,ij} (g_j \tilde \sigma_j(\beta_0) + \Delta \Pi_j)}{\sqrt{K_\lambda}},
	\end{align*}
	where  $\{g_i\}_{i \in [n]}$ is a sequence of i.i.d. standard normal random variables.
	
	Further define $\mathbb A = \Lambda(\beta_0)^{1/2} \Xi_{\lambda} \Lambda(\beta_0)^{1/2}/\sqrt{K_\lambda}$, where $\Lambda(\beta_0) = \diag(\tilde \sigma_{1}^2(\beta_0),\cdots, \tilde \sigma_n^2(\beta_0))$ and $ \tilde P_{\lambda}$ is a $n \times n$ matrix so that 
	\begin{align*}
		\Xi_{\lambda,ij} =    P_{\lambda,ij} + (P_{\lambda,ii} + P_{\lambda,jj}) P_{W,ij} - B_{\lambda,ij} \quad \text{if } i \neq j \quad \text{and} \quad   \Xi_{\lambda,ij} = 0 \quad \text{if } i = j.
	\end{align*}
	
	Because $\Xi_{\lambda}$ is symmetric, we have 
	\begin{align*}
		Q(\beta_0) = \overline g^\top(\beta_0) \mathbb A  \overline g(\beta_0) \stackrel{d}{=} \sum_{i \in [n]} \omega_i \chi_i^2\left( \nu_i^2 \right),
	\end{align*}
	where $\overline g(\beta_0) = (g_1 + \nu_1,\cdots,g_n+ \nu_n)^\top$, $\nu_i = \Delta \Pi_i/\tilde \sigma_i(\beta_0)$, $\omega_1,\cdots,\omega_n$ are the $n$ eigenvalues of $\mathbb A$, and $\chi_{1}^2(\nu_1^2),\cdots,\chi_n^2(\nu_n^2)$ are $n$ i.i.d. chi-squared random variables with one degree of freedom and noncentrality parameters $\nu_i^2$. In addition, for any $z>0$ and $t>0$, we have
	\begin{align*}
		\mathbb P( |\chi^2(\nu^2) - z| \leq t) & = \mathbb P ( \max(0,z-t) \leq (g+ \nu)^2 \leq z+ t) \\
		& = \mathbb P ( \sqrt{\max(0,z-t)} \leq g+ \nu \leq  \sqrt{ z+ t} ) \\
		& + \mathbb P ( -\sqrt{ z+ t}  \leq g+ \nu \leq - \sqrt{\max(0,z-t)} ) \\
		& \leq \frac{2}{\sqrt{2\pi^2}} (\sqrt{ z + t} - \sqrt{\max(0,z-t)} ) \\
		&  = \frac{2}{\sqrt{2\pi^2}} \frac{( z+ t ) - \max(0,z-t) }{\sqrt{ z+ t} + \sqrt{\max(0,z-t)}}   \leq \frac{2\sqrt{2} \sqrt{t}}{\pi}, 
	\end{align*}
	where $g$ is a standard normal variable, the first inequality is by the fact that standard normal PDF is bounded by $1/\sqrt{2\pi^2}$ and the second inequality is by the fact that when $t,z>0$, we have
	\begin{align*}
		( z+ t ) - \max(0,z-t)  \leq 2t \quad \text{and} \quad  \sqrt{ z+ t} + \sqrt{\max(0,z-t)} \geq \sqrt{t}.   
	\end{align*}
	Taking $\sup_{z \in \Re}$ on both sides, we have
	\begin{align*}
		\sup_{z \in \Re} \mathbb P( |\chi^2(\nu^2) - z| \leq t) \leq \frac{2\sqrt{2} \sqrt{t}}{\pi},
	\end{align*}
	which verifies the condition in \citet[Theorem 1.5]{RV15}. Then, by \citet[Theorem 1.5]{RV15} with their $A$, $X$, $p$, $t$ being $(\omega_1,\cdots,\omega_n)$, $(\chi_{1}^2(\nu_1^2),\cdots,\chi_n^2(\nu_n^2))$, $ \frac{2\sqrt{2} \sqrt{t}}{\pi}$, and $t$, respectively. Then, for any $t>0$ and $\zeta \in (0,1)$, we have
	\begin{align*}
		& \sup_{z \in \Re}    \mathbb P( | Q(\beta_0) - z| \leq t ||\mathbb A||_F) \\
		& = \sup_{z \in \Re}    \mathbb P( |\sum_{i \in [n]} \omega_i \chi_i^2(\nu_i^2) - z| \leq t ||\mathbb A||_F ) 
		\leq C_{\zeta}t^{(1-\zeta)/2},
	\end{align*}
	where we use the fact that $r(A)=1$ in \cite{RV15} and $||A||_{HS}$ in \citeauthor{RV15}'s (\citeyear{RV15}) notation is just $\sqrt{\sum_{i=1}^n \omega_i^2} = ||\mathbb A||_F$ in our notation. By Assumption \ref{ass:reg}.3, we have
	\begin{align*}
		||\mathbb A||_F^2 = \frac{\sum_{i \in [n]} \sum_{j \in [n], j \neq i} \Xi_{\lambda,ij}^2 \tilde \sigma_i^2(\beta_0) \tilde \sigma_j^2(\beta_0)}{K_\lambda} \geq \underline c^2>0. 
	\end{align*}
	Therefore, for any $t >0$, we have
	\begin{align*}
		\sup_{z \in \Re}    \mathbb P( | Q(\beta_0) - z| \leq t  ) & = \sup_{z \in \Re}    \mathbb P\left( | Q(\beta_0) - z| \leq \frac{t}{||\mathbb A||_F} ||\mathbb A||_F \right)  \\
		& \leq \sup_{z \in \Re}    \mathbb P\left( |Q(\beta_0) - z| \leq \frac{t}{\underline c} ||\mathbb A||_F \right)  \leq C_{\zeta} \left(\frac{t}{\underline c}\right)^{(1-\zeta)/2}.
	\end{align*}
	Then, the desired result holds if we take on both sides of the above display. 
	
	For the second result, we note that 
	\begin{align*}
		Q^*(\beta_0)  = \frac{\sum_{i \in [n]} \sum_{j \in [n], j \neq i} (g_i \breve \sigma_i(\beta_0)) \Xi_{\lambda,ij} (g_j \breve \sigma_j(\beta_0))}{\sqrt{K_\lambda}}. 
	\end{align*}
	Then, we can derive the result following the same argument above with $\nu_i$ and $\tilde \sigma_i(\beta_0)$ replaced by $0$ and $\breve \sigma_i(\beta_0)$. 
\end{proof}

\section{Additional Lemmas}\label{sec: add_lemmas}
\begin{lem} \label{lem:gamma}
	Suppose Assumption \ref{ass:reg} holds and $||\Pi||_2^2 \Delta^2 /\min\left(K_\lambda^{1/2},K_\lambda^{2/3}\right)$ is bounded. Recall $\kappa = (M_W \circ M_W)^{-1}$, 
	\begin{align*}
		& A_{\lambda,ii} = 2P_{W,ii} P_{\lambda,ii} - B_{\lambda,ii}, \quad  B_{\lambda,ii} = [P_W D_{\lambda} P_W]_{ii}, \quad \text{and} 
		& \Xi_{\lambda,ij} =     P_{\lambda,ij} + (P_{\lambda,ii} + P_{\lambda,jj}) P_{W,ij} - B_{\lambda,ij},
	\end{align*}
	where $D_{\lambda} = \diag(P_{\lambda,11}, \cdots, P_{\lambda,nn})$.     Then, we have 
	\begin{enumerate}
		\item[(1)] $\max_{i \in [n]} A_{\lambda,ii} = o(1)$;
		\item[(2)] $\max_{i \in [n]} A_{\lambda,ii}^2/\sqrt{K_\lambda} = o(1) $;
		\item[(3)] $\sum_{i \in [n]} A_{\lambda,ii}^2/K_\lambda = o(1)$; 
		\item[(4)] 
		\begin{align*}
			\frac{\sum_{i,j \in [n]^2} \kappa_{ij} e_j^2(\beta_0) A_{\lambda,ii}}{\sqrt{K_\lambda}} = \frac{\sum_{i \in [n]} \tilde \sigma_i^2(\beta_0) A_{\lambda,ii}}{\sqrt{K_\lambda}} + o_P(1);
		\end{align*}
		\item[(5)] \begin{align*}
			\frac{2\sum_{i \in [n]} \sum_{j \in [n], j \neq i} \Pi_i \Delta \left(P_{\lambda,ij} - \Xi_{\lambda,ij}\right) \tilde e_j(\beta_0) }{\sqrt{K_\lambda}}  = o_P(1).
		\end{align*}
		% \item[(6)] 
		% \begin{align*}
			% \frac{\max_{i \in [n]}    \sum_{j \in [n],j \neq i} \Xi_{\lambda,ij}^2}{K_{\lambda}} \lesssim p_n + {p_n'}. 
			% \end{align*}
		% \item[(6)] \begin{align*}
			%     \frac{\sum_{i \in [n]} \sum_{j \in [n], j \neq i} \Pi_i  \left(P_{\lambda,ij} - \Xi_{\lambda,ij}\right) \Pi_j \Delta^2 }{\sqrt{K_\lambda}} = o_P(1).
			% \end{align*}
	\end{enumerate}

\end{lem}
\begin{proof}
	For the first claim, we have 
	\begin{align*}
		\max_{j \in [n]}    A_{\lambda, jj} \lesssim \max_{j \in [n]} P_{W,jj} P_{\lambda,jj} + \max_{j \in [n]} \sum_{i \in [n]} P_{W,ij}^2 P_{\lambda,ii} \lesssim \max_{j \in [n]} P_{W,jj} = o(1).  
	\end{align*}
	
	For the second claim, we have
	\begin{align*}
		\max_{i \in [n]} A_{\lambda,ii}^2/\sqrt{K_\lambda} & \lesssim \max_{i \in [n]} \frac{P_{W,ii}^2 P_{\lambda,ii}^2 }{\sqrt{K_\lambda} } +  \max_{i \in [n]} \frac{\sum_{j \in [n]} P_{W,ij}^2 P_{\lambda,jj} }{\sqrt{K_\lambda}} \\
		& \lesssim \left( \max_{j \in [n]} \frac{P_{\lambda,jj}}{\sqrt{K_\lambda} } \right) \max_{i \in [n]} P_{W,ii} = o(1). 
	\end{align*}
	
	For the third claim, we have
	\begin{align*}
		\frac{\sum_{i \in [n]} A_{\lambda,ii}^2 }{K_\lambda}    & \lesssim \frac{\sum_{i \in [n]} P_{W,ii}^2 P_{\lambda,ii}^2 }{K_\lambda} + \frac{\sum_{i \in [n]} B_{\lambda,ii}^2 }{K_\lambda}\\
		& \lesssim {p_n'} \left( \sum_{i \in [n]} P_{W,ii}^2 \right) + \frac{\sum_{i \in [n]} (\sum_{j \in [n]}P_{W,ij}^2 P_{\lambda,jj})^2 }{K_\lambda} \\
		& \lesssim {p_n'} \left( \sum_{i \in [n]} P_{W,ii}^2 \right) = o(1). 
	\end{align*}

	For the fourth claim, we note that 
	\begin{align*}
		e_j(\beta_0) = \sum_{k\in [n]}M_{W,jk} \tilde e_k(\beta_0) + \Pi_j \Delta, 
	\end{align*}
	and thus, 
	\begin{align*}
		\mathbb E \left( \frac{\sum_{i,j \in [n]^2} \kappa_{ij} e_j^2(\beta_0) A_{\lambda,ii}}{\sqrt{K_\lambda}} \right) & =  \mathbb E \left( \frac{\sum_{i,j \in [n]^2} \kappa_{ij}  (\sum_{k\in [n]}M_{W,jk} \tilde e_k(\beta_0))^2 A_{\lambda,ii}}{\sqrt{K_\lambda}} \right) \\
		& + \left( \frac{\sum_{i,j \in [n]^2} \kappa_{ij}   \Pi_j^2 \Delta^2 A_{\lambda,ii}}{\sqrt{K_\lambda}} \right)\\
		& =  \mathbb E \left( \frac{\sum_{i,j,k \in [n]^3} \kappa_{ij}  M_{W,jk}^2 \tilde \sigma_k^2(\beta_0) A_{\lambda,ii}}{\sqrt{K_\lambda}} \right) + \left( \frac{\sum_{i,j \in [n]^2} \kappa_{ij}   \Pi_j^2 \Delta^2 A_{\lambda,ii}}{\sqrt{K_\lambda}} \right) \\
		& =  \mathbb E \left( \frac{\sum_{i \in [n]}  \tilde \sigma_i^2(\beta_0) A_{\lambda,ii}}{\sqrt{K_\lambda}} \right) + \left( \frac{\sum_{i,j \in [n]^2} \kappa_{ij}   \Pi_j^2 \Delta^2 A_{\lambda,ii}}{\sqrt{K_\lambda}} \right),
	\end{align*}
	where last inequality is by construction that 
	\begin{align*}
		\sum_{j \in [n]} \kappa_{ij} M_{W,jk}^2 = 1\{i=k\}. 
	\end{align*}
	
	In addition, by Theorem 1 of \cite{varah75}, we have
	\begin{align}\label{eq:kappa}
		\max_{j \in [n]} \sum_{i \in [n]}|\kappa_{ij}| \leq 1/(1/2 - \max_{i \in [n]} P_{W,ii}) \lesssim 1, 
	\end{align}
	which implies 
	\begin{align*}
		\left( \frac{\sum_{i,j \in [n]^2} \kappa_{ij}   \Pi_j^2 \Delta^2 A_{\lambda,ii}}{\sqrt{K_\lambda}} \right)  \lesssim (\max_{i \in [n]} A_{\lambda,ii}) \left( \max_{j \in [n]} \sum_{i \in [n]}|\kappa_{ij}| \right) ||\Pi||_2^2 \Delta^2/\sqrt{K_\lambda} = o(1).
	\end{align*}
	Therefore, we have
	\begin{align*}
		\mathbb E \left( \frac{\sum_{i,j \in [n]^2} \kappa_{ij} e_j^2(\beta_0) A_{\lambda,ii}}{\sqrt{K_\lambda}} \right) = \mathbb E \left( \frac{\sum_{i \in [n]}  \tilde \sigma_i^2(\beta_0) A_{\lambda,ii}}{\sqrt{K_\lambda}} \right) + o(1). 
	\end{align*}
	
	In addition, we have
	\begin{align*}
		Var \left( \frac{\sum_{i,j \in [n]^2} \kappa_{ij} e_j^2(\beta_0) A_{\lambda,ii}}{\sqrt{K_\lambda}} \right) & \lesssim \underbrace{ Var \left( \frac{ \sum_{j \in [n]}\left(\sum_{i \in [n]} \kappa_{ij} A_{\lambda,ii} \right) (\sum_{k\in [n]}M_{W,jk} \tilde e_k(\beta_0))^2 }{\sqrt{K_\lambda}} \right) }_{R_1}\\
		& +  \underbrace{ Var \left( \frac{ \sum_{j \in [n]}\left(\sum_{i \in [n]} \kappa_{ij} A_{\lambda,ii} \right) (\sum_{k\in [n]}M_{W,jk} \tilde e_k(\beta_0))\Pi_j \Delta }{\sqrt{K_\lambda}} \right), }_{R_2}
	\end{align*}
	where 
	\begin{align*}
		R_1 & \lesssim  \frac{ \sum_{k,l \in [n]^2} \left[\sum_{j \in [n]}\left(\sum_{i \in [n]} \kappa_{ij} A_{\lambda,ii} \right) (M_{W,jk}M_{W,jl})\right]^2 }{K_\lambda} \\
		& = \frac{ \sum_{k,l, i,i',j,j' \in [n]^6} \left[\kappa_{ij} A_{\lambda,ii} M_{W,jk}M_{W,jl} \right]\left[\kappa_{i'j'} A_{\lambda,i'i'} M_{W,j'k}M_{W,j'l} \right]  }{K_\lambda} \\
		& = \frac{ \sum_{ i,i',j,j' \in [n]^4} \kappa_{ij} A_{\lambda,ii} M_{W,jj'}^2 \kappa_{i'j'} A_{\lambda,i'i'} }{K_\lambda} \\
		& \lesssim \frac{\sum_{j \in [n]} \left(\sum_{i \in [n]} \kappa_{ij} A_{\lambda,ii}\right)^2}{K_\lambda}\\
		& = \frac{\sum_{i,k \in [n]^2} A_{\lambda,ii} \left(\sum_{j \in [n]} \kappa_{ij}\kappa_{kj} \right)  A_{\lambda,kk}}{K_\lambda}\\
		& \lesssim \frac{\sum_{i \in [n]} A_{\lambda,ii}^2 }{K_\lambda} = o(1),
	\end{align*}
	where the second inequality is by  $||M_W \circ M_W||_{op} \leq 1$ and the last inequality is due to the fact that by Section 3 in the Appendix of \cite{CJN18}, we have
	\begin{align*}
		|| \kappa^2 ||_{op} & = \left[\lambda_{\min} \left( M_W \circ M_W \right)\right]^{-2} \\
		&\lesssim \frac{1}{2 \min_{i \in [n]}(M_{W,ii}(M_{W,ii} - 1/2))} \\
		& =  \frac{1}{2 \left[ (1-\max_{i\in [n]}P_{W,ii})(1/2 -\max_{i\in [n]}P_{W,ii} )\right]} \lesssim 1. 
	\end{align*}
	
	Next, we have
	\begin{align*}
		R_2 & \lesssim \sum_{k\in [n]} \frac{ (\sum_{i,j \in [n]^2}\kappa_{ij} A_{\lambda,ii} M_{W,jk} \Pi_j \Delta)^2 }{K_\lambda} \\
		& = \frac{ \sum_{i,j,i',j' \in [n]^4}\kappa_{ij} A_{\lambda,ii} \Pi_j M_{W,jj'}  \kappa_{i'j'} A_{\lambda,i'i'} \Pi_{j'} \Delta^2 }{K_\lambda} \\
		& \lesssim  \frac{ \sum_{j \in [n]}(\sum_{i \in [n]}\kappa_{ij} A_{\lambda,ii})^2  \Pi_j^2 \Delta^2 }{K_\lambda} \\
		& \lesssim \frac{\max_{i \in [n]} A_{\lambda,ii}^2}{ \sqrt{K_\lambda }} \frac{ ||\Pi||_2^2 \Delta^2 }{\sqrt{K_\lambda }} = o(1),
	\end{align*}
	where we use \eqref{eq:kappa}. This leads to the desired result. 
	
	For the fifth claim, we have
	\begin{align*}
		& Var\left(   \frac{\sum_{i \in [n]} \sum_{j \in [n], j \neq i} \Pi_i \Delta \left(P_{\lambda,ij} - \Xi_{\lambda,ij}\right) \tilde e_j(\beta_0) }{\sqrt{K_\lambda}} \right) \\
		& = Var\left(   \frac{\sum_{i \in [n]} \sum_{j \in [n], j \neq i} \Pi_i \Delta \left( (P_{\lambda,ii} + P_{\lambda,jj}) P_{W,ij} - B_{\lambda,ij}\right) \tilde e_j(\beta_0) }{\sqrt{K_\lambda}} \right) \\
		& \lesssim \sum_{j \in [n]} \frac{ \left[\sum_{ i \in [n], i \neq j} \Pi_i \left( (P_{\lambda,ii} + P_{\lambda,jj}) P_{W,ij} - B_{\lambda,ij}\right) \right]^2 \Delta^2 }{K_\lambda}\\
		& \lesssim \sum_{j \in [n]} \frac{ \left[\sum_{ i \in [n]} \Pi_i \left( (P_{\lambda,ii} + P_{\lambda,jj}) P_{W,ij} - B_{\lambda,ij}\right) \right]^2 \Delta^2 }{K_\lambda} \\
		& + \sum_{j \in [n]} \frac{ \Pi_j^2 \left( (P_{\lambda,jj} + P_{\lambda,jj}) P_{W,jj} - B_{\lambda,jj}\right)^2 \Delta^2 }{K_\lambda} \\
		& \lesssim \sum_{j \in [n]} \frac{ \left(\sum_{ i \in [n]} \Pi_i P_{\lambda,ii} P_{W,ij} \right)^2 \Delta^2 }{K_\lambda} + \sum_{j \in [n]} \frac{ \left(\sum_{ i \in [n]} \Pi_i P_{\lambda,jj} P_{W,ij}\right)^2 \Delta^2 }{K_\lambda} \\
		& + \sum_{j \in [n]} \frac{ \left(\sum_{ i \in [n]} \Pi_i B_{\lambda,ij}\right)^2 \Delta^2 }{K_\lambda} + {p_n'}^{1/2}\sum_{j \in [n]} \frac{ \Pi_j^2 \Delta^2 }{\sqrt{K_\lambda}} \\
		& \lesssim \sum_{i,k \in [n]^2} \frac{  \Pi_i P_{\lambda,ii} P_{W,ik} \Pi_k P_{\lambda,kk} \Delta^2 }{K_\lambda} + \sum_{i,k \in [n]^2} \frac{ \Pi_i \left(\sum_{j \in [n]} P_{\lambda,jj}^2 P_{W,ij}P_{W,kj} \right) \Pi_k  \Delta^2 }{K_\lambda} \\
		& + \sum_{i,k \in [n]^2} \frac{ \Pi_i \left(\sum_{j \in [n]} B_{\lambda,ij}B_{\lambda,kj} \right) \Pi_k  \Delta^2 }{K_\lambda} + o(1) \\
		& = \sum_{i,k \in [n]^2} \frac{  \Pi_i [D_{\lambda} P_{W} D_{\lambda} ]_{i,k} \Pi_k \Delta^2 }{K_\lambda} + \sum_{i,k \in [n]^2} \frac{ \Pi_i [P_{W} D_{\lambda}^2 P_W ]_{i,k} \Pi_k  \Delta^2 }{K_\lambda} \\
		& + \sum_{i,k \in [n]^2} \frac{ \Pi_i [B_{\lambda}B_{\lambda}]_{i,k} \Pi_k  \Delta^2 }{K_\lambda} + o(1) \\
		& \lesssim \frac{ ||D_{\lambda} P_{W} D_{\lambda} ||_{op} +  ||P_{W} D_{\lambda}^2 P_W ||_{op} + ||B_{\lambda}B_{\lambda}||_{op}}{ \sqrt{K_{\lambda}}} \frac{||\Pi||_2^2 \Delta^2}{ \sqrt{K_{\lambda}}}  + o(1)\\
		& \lesssim \frac{{p_n'}^{1/2}||\Pi||_2^2 \Delta^2}{ \sqrt{K_{\lambda}}} + o(1) = o(1),
	\end{align*}
	where we repeated use the fact that 
	\begin{align*}
		\max_{i \in [n]} B_{\lambda,ii} \leq ||B_\lambda||_{op} \leq ||D_\lambda||_{op} = \max_{i \in [n]} P_{\lambda,ii}. 
	\end{align*}

	% For the sixth claim, we have
	% \begin{align*}
		% &    \frac{\sum_{i \in [n]} \sum_{j \in [n], j \neq i} \Pi_i  \left(P_{\lambda,ij} - \Xi_{\lambda,ij}\right) \Pi_j \Delta^2 }{\sqrt{K_\lambda}}\\
		% & =  \frac{\sum_{i \in [n]} \sum_{j \in [n], j \neq i} \Pi_i  \left((P_{\lambda,ii} + P_{\lambda,jj}) P_{W,ij} - B_{\lambda,ij}\right) \Pi_j \Delta^2 }{\sqrt{K_\lambda}}
		% \end{align*}

	This leads to the desired result.

\end{proof}

\section{Additional Simulation Results for Section \ref{subsec: simulation_hausman}}
\label{section:additional_simulation_results}

\subsection{Simulations under $K=2$}

Figure \ref{figure:Hausman_K_1} shows the power curves for the eleven tests under $K=2$ for the DGP based on \cite{Haus2012}.

\begin{figure}[H]    
	\begin{center}
		\includegraphics[width=\textwidth,height = 14cm]{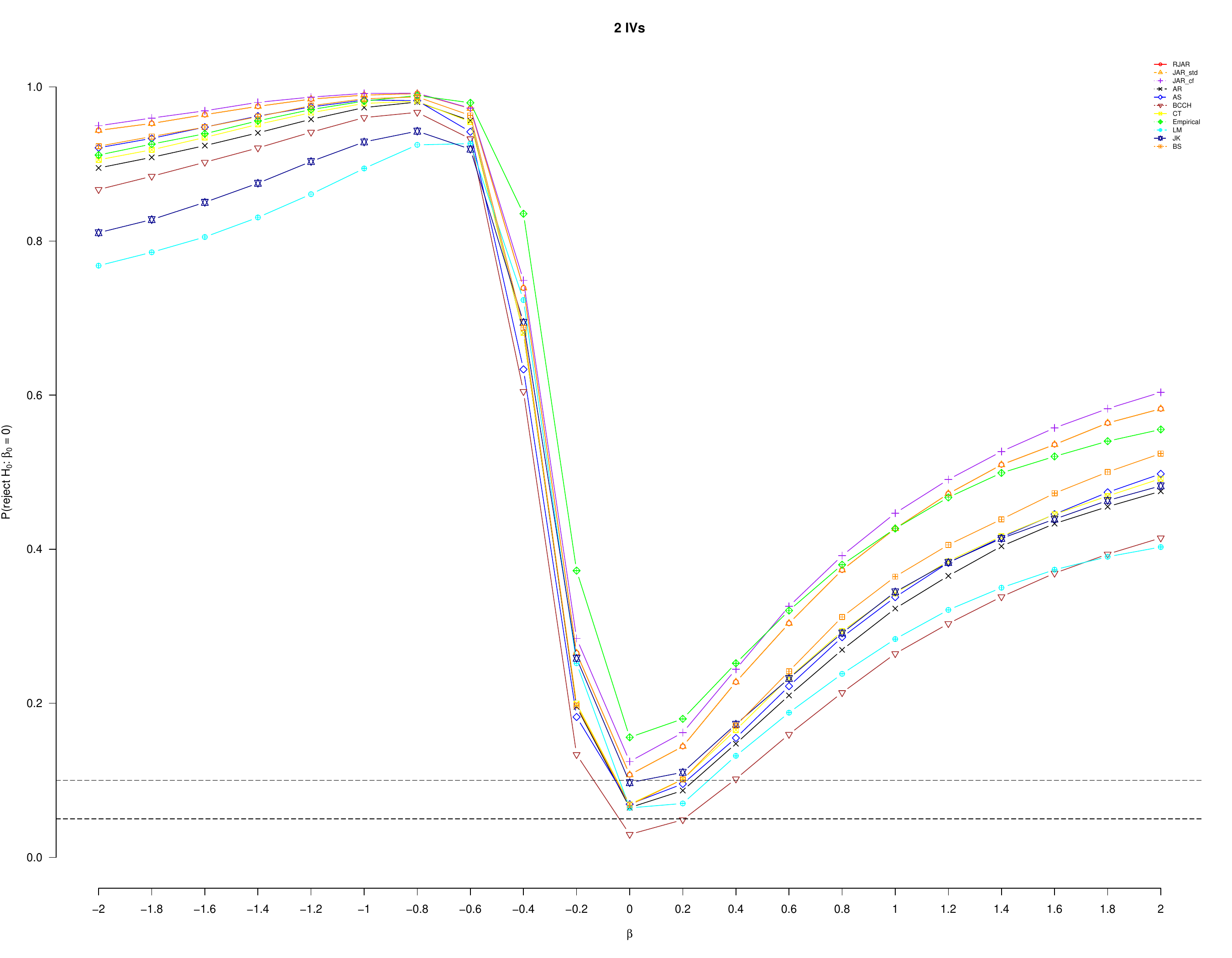}
		\caption{Power curves for $2$ IVs, $\mu^2 = 72$. }
		\raggedright{\footnotesize{\textbf{Note:} The red curve with a hollow circle represents $\rm RJAR$; the orange curve with an upward triangle represents $\rm JAR_{\rm std}$; the purple curve with a cross represents $\rm JAR_{\rm cf}$; the black curve with X represents $\rm AR$; the blue curve with diamond represents $\rm AS$; the brown curve with inverted triangle represents $\rm BCCH$; the yellow curve with a filled square represents $\rm CT$; the green curve with a filled diamond represents $\rm Empirical$; the cyan curve with a filled circle represents $\rm LM$; the dark-blue curve with hexagram represents $\rm JK$; the dark-orange curve with the $+$ in the square-box represents $\rm BS$.    
				The horizontal dotted black lines represent the $5\%$ and $10\%$ levels.}}
		\label{figure:Hausman_K_1}
	\end{center}
\end{figure}

\subsection{Simulations for varying $c_1,c_2$ }\label{sec: simu_alter_c1_c2}
Our bootstrap test requires specifying $c_1$ and $c_2$; in the main text, we suggested using $(c_1,c_2) = (0.1,1.1)$. In this section, we examine the sensitivity of our test to these choices through simulations. Specifically, we vary $c_1 \in \{0.05, 0.1, 0.2\}$ and $c_2 \in \{0.5, 1, 2\}$, yielding $3 \times 3 = 9$ combinations. The corresponding power curves are reported in Figures~\ref{Hausman_K_10_c1_0.05_c2_0.5}--\ref{Hausman_K_10_c1_0.2_c2_2}. The results show that the performance of our proposed test is robust to the choice of $c_1$ and $c_2$.

\begin{figure}[H]   \includegraphics[width=1\textwidth,height = 8cm]{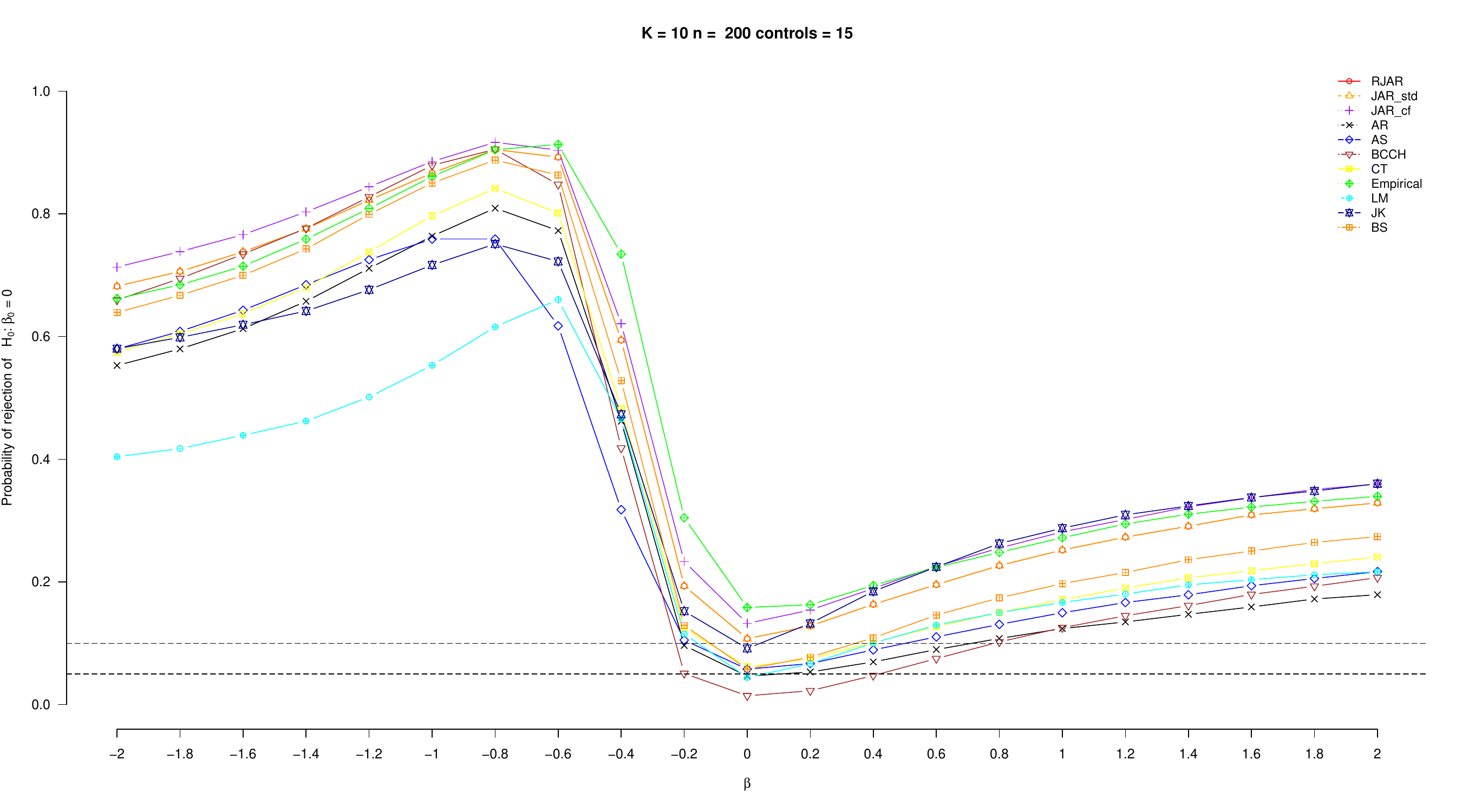}
	\caption{Plot with $(c_1,c_2) = (0.05,0.5)$ and $K = 10$}
	\textbf{Note:} We run 5{,}000 replications with 200 observations and 15 controls, using \cite{Haus2012}'s DGP as in our main paper. The orange circle line labeled `\texttt{RJAR}' is the test by \cite{dovi-kock-mavroeidis(2023)}. The red upward triangle labeled `\texttt{JAR\_standard}' is the test by \cite{crudu2021}. The purple cross labeled `\texttt{JAR\_cf}' is the test by \cite{MS22}. The green \texttt{x} labeled `\texttt{AR\_fixed}' is the classical AR test as given in the main paper. The blue diamond labeled `\texttt{AS}' is the test by \cite{Anatolyev-Solvsten(2023)}. The brown downward triangle labeled `\texttt{BCCH}' is the test by \cite{belloni2012}. The yellow box labeled `\texttt{CT}' is the test by \cite{carrasco2016}. The dark brown star labeled `\texttt{empirical}' is the bootstrap test using the empirical distribution of residuals. The cyan circle labeled `\texttt{LM\_MO}' is the test by \cite{Matsushita2020}. The darkblue hexagram labeled `\texttt{JK}' is the test by \cite{N23}. The orange box labeled `\texttt{BS\_new}' is our bootstrap test given in the main text. 
	\label{Hausman_K_10_c1_0.05_c2_0.5}
\end{figure}

\begin{figure}[H]   \includegraphics[width=1\textwidth,height = 8cm]{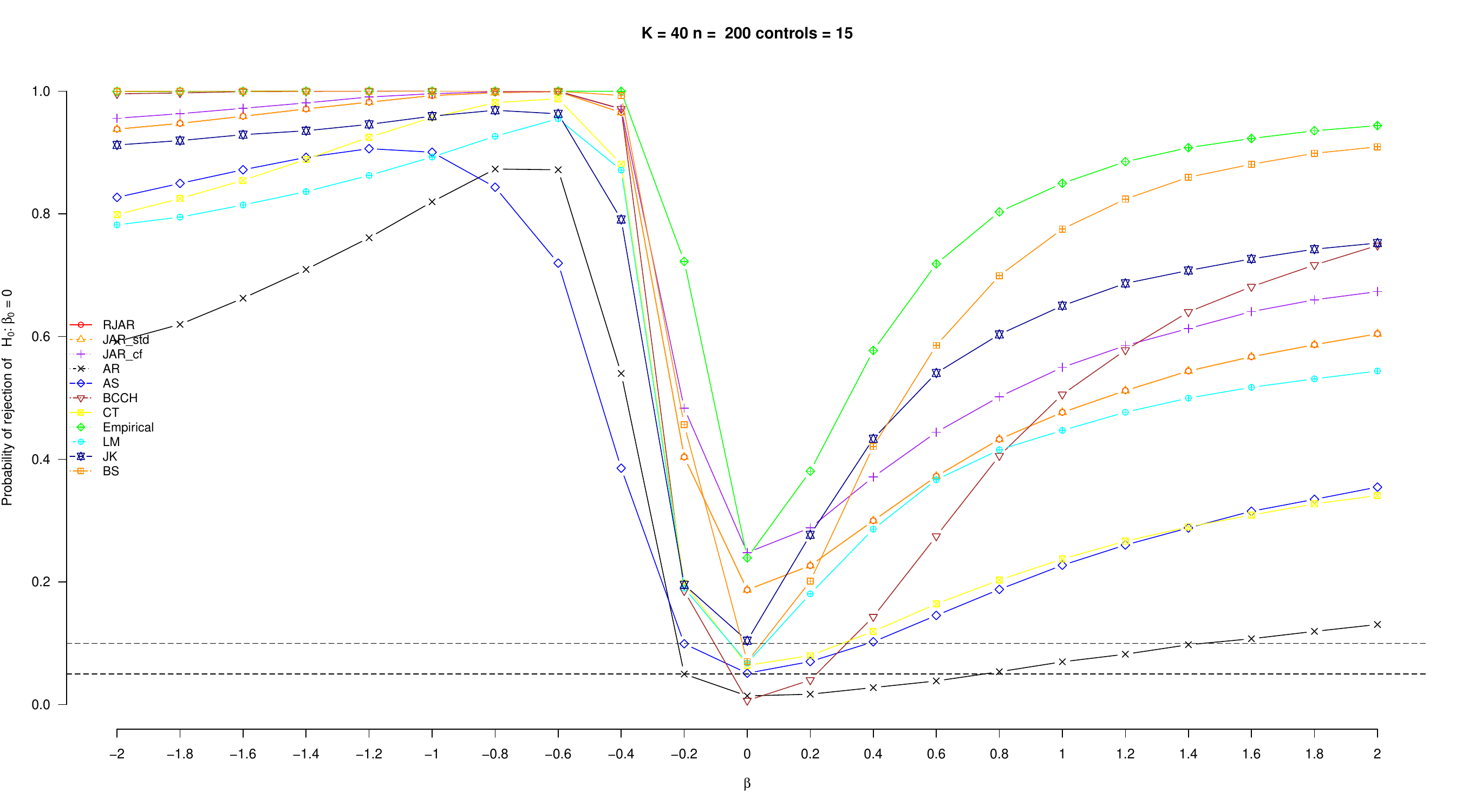}
	\caption{Plot with $(c_1,c_2) = (0.05,0.5)$ and $K = 40$}
	\textbf{Note:} We run 5,000 replications with 200 observations and 15 controls, using \cite{Haus2012}'s DGP as in our main paper. 
\end{figure}

\begin{figure}[H]   \includegraphics[width=1\textwidth,height = 8cm]{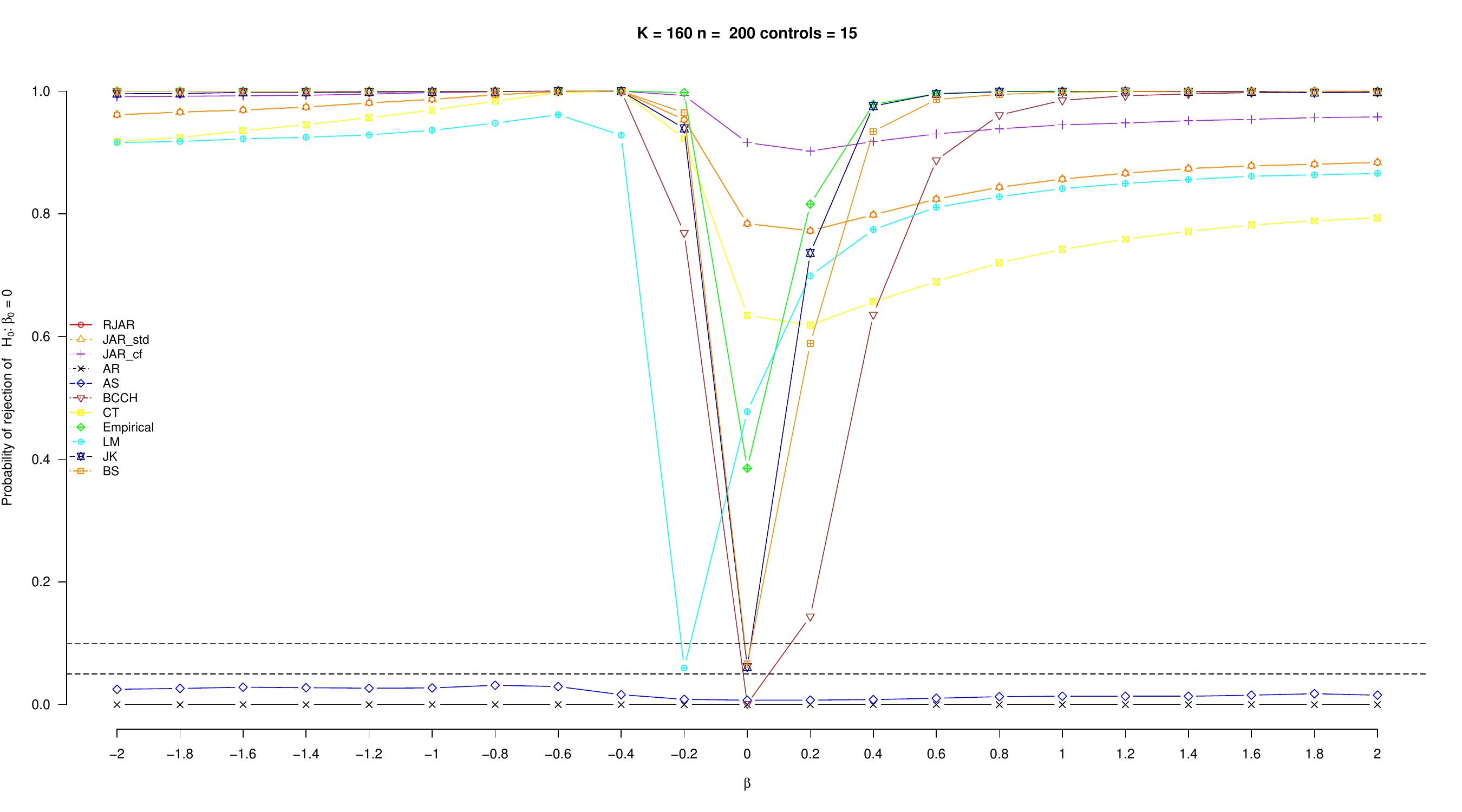}
	\caption{Plot with $(c_1,c_2) = (0.05,0.5)$ and $K = 160$}
	\textbf{Note:} We run 5,000 replications with 200 observations and 15 controls, using \cite{Haus2012}'s DGP as in our main paper. 
\end{figure}

%%%
\begin{figure}[H]   \includegraphics[width=1\textwidth,height = 8cm]{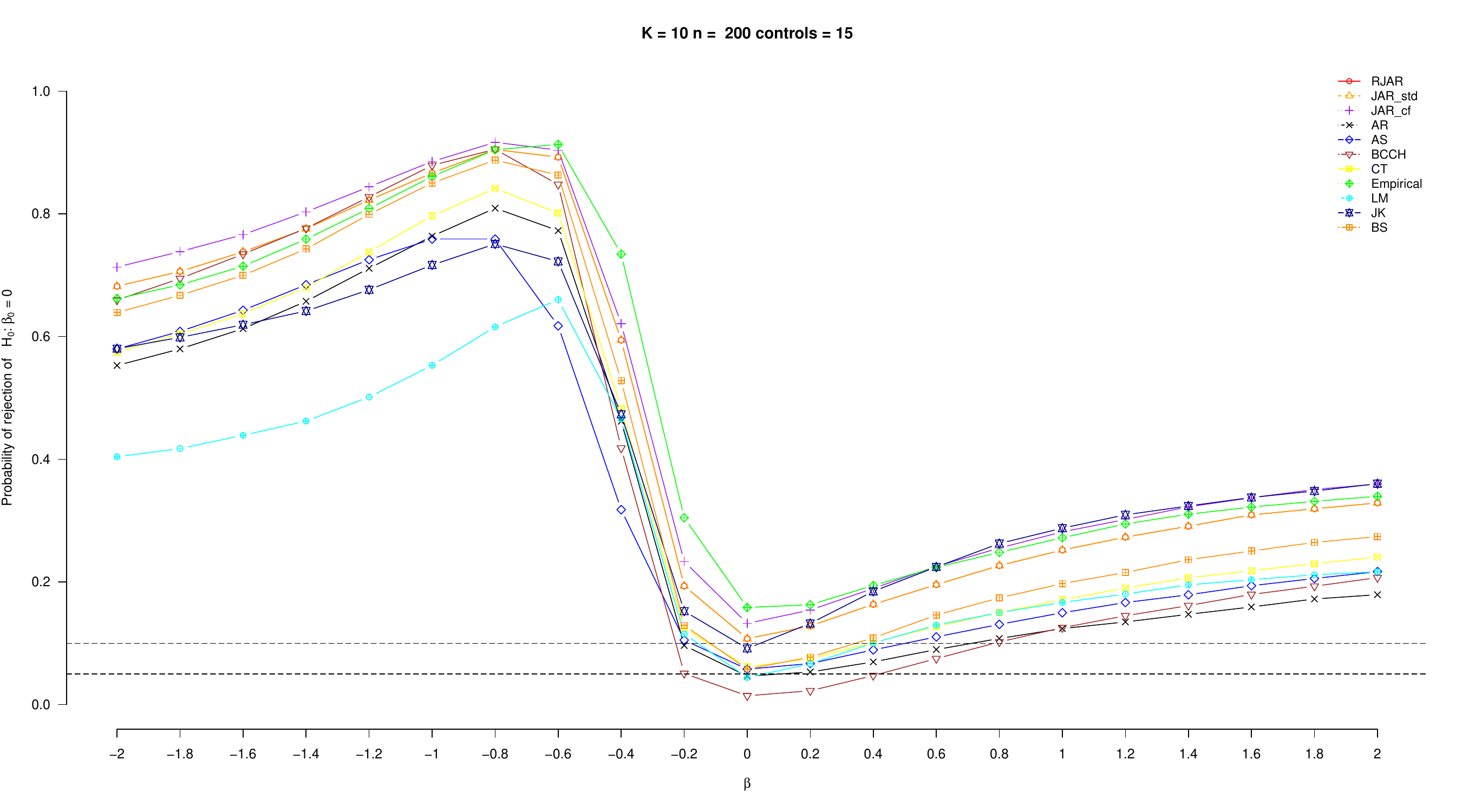}
	\caption{Plot with $(c_1,c_2) = (0.05,1)$ and $K = 10$}
	\textbf{Note:} We run 5,000 replications with 200 observations and 15 controls, using \cite{Haus2012}'s DGP as in our main paper. 
\end{figure}

\begin{figure}[H]   \includegraphics[width=1\textwidth,height = 8cm]{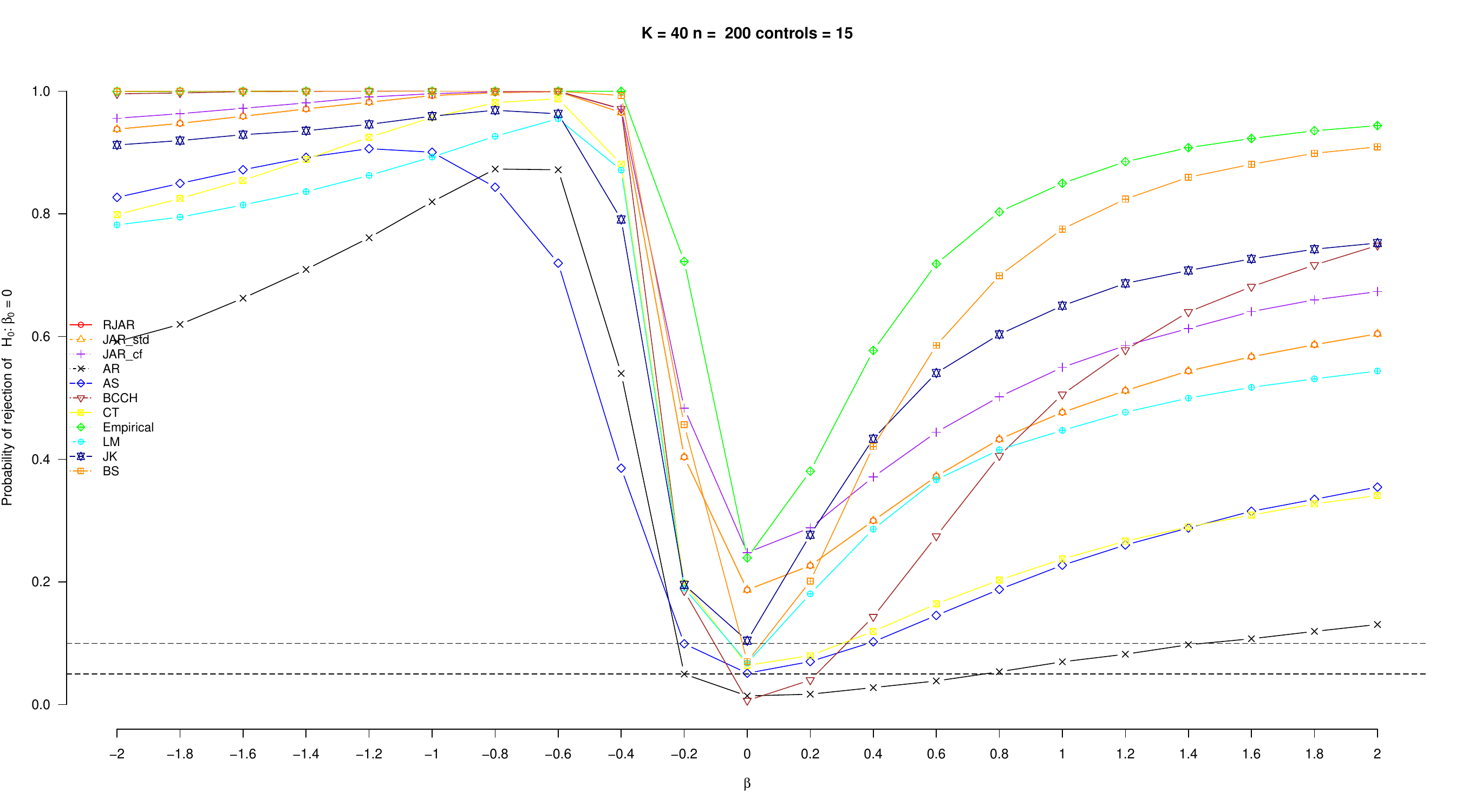}
	\caption{Plot with $(c_1,c_2) = (0.05,1)$ and $K = 40$}
	\textbf{Note:} We run 5,000 replications with 200 observations and 15 controls, using \cite{Haus2012}'s DGP as in our main paper. 
\end{figure}

\begin{figure}[H]   \includegraphics[width=1\textwidth,height = 8cm]{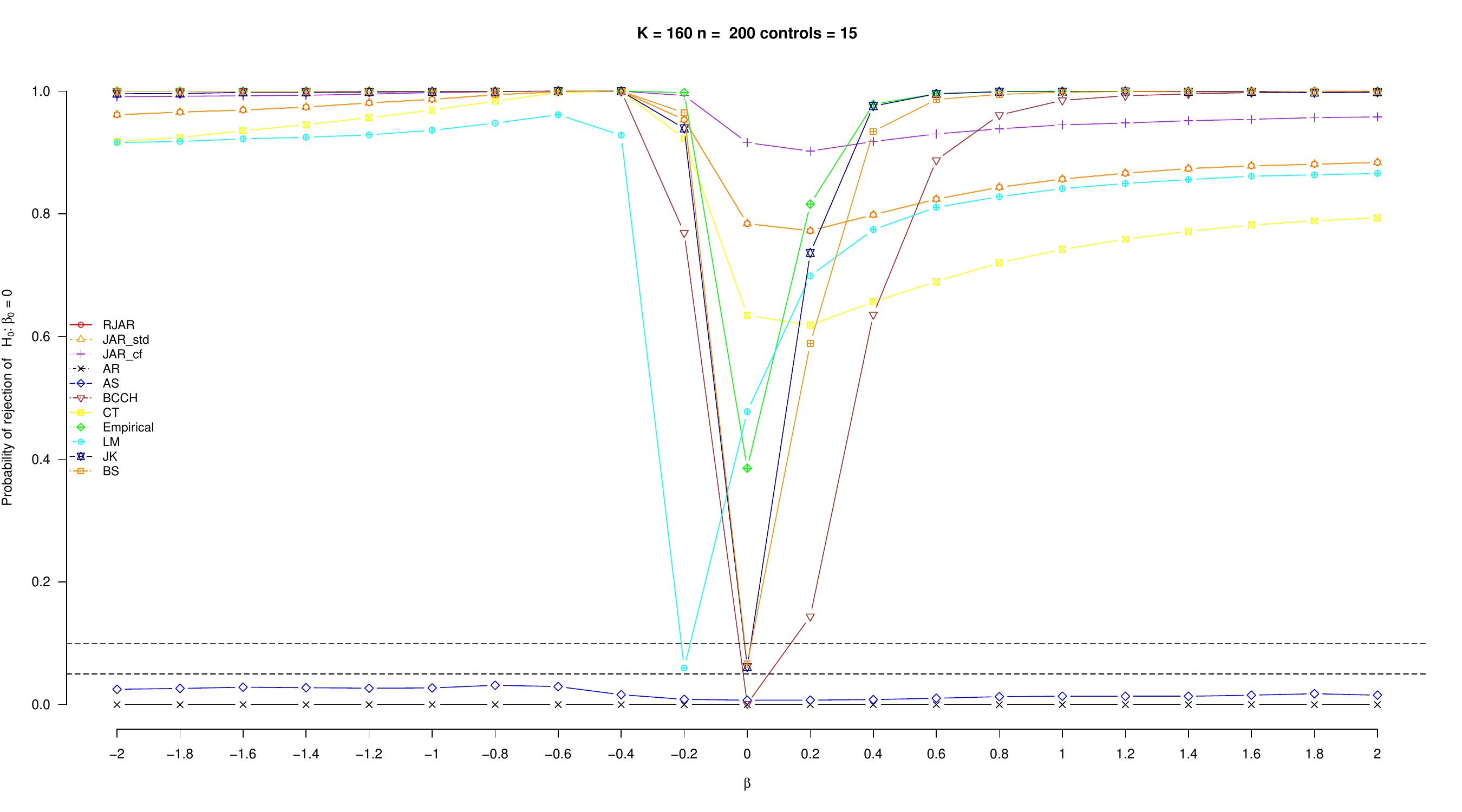}
	\caption{Plot with $(c_1,c_2) = (0.05,1)$ and $K = 160$}
	\textbf{Note:} We run 5,000 replications with 200 observations and 15 controls, using \cite{Haus2012}'s DGP as in our main paper.   
\end{figure}

%%%
\begin{figure}[H]   \includegraphics[width=1\textwidth,height = 8cm]{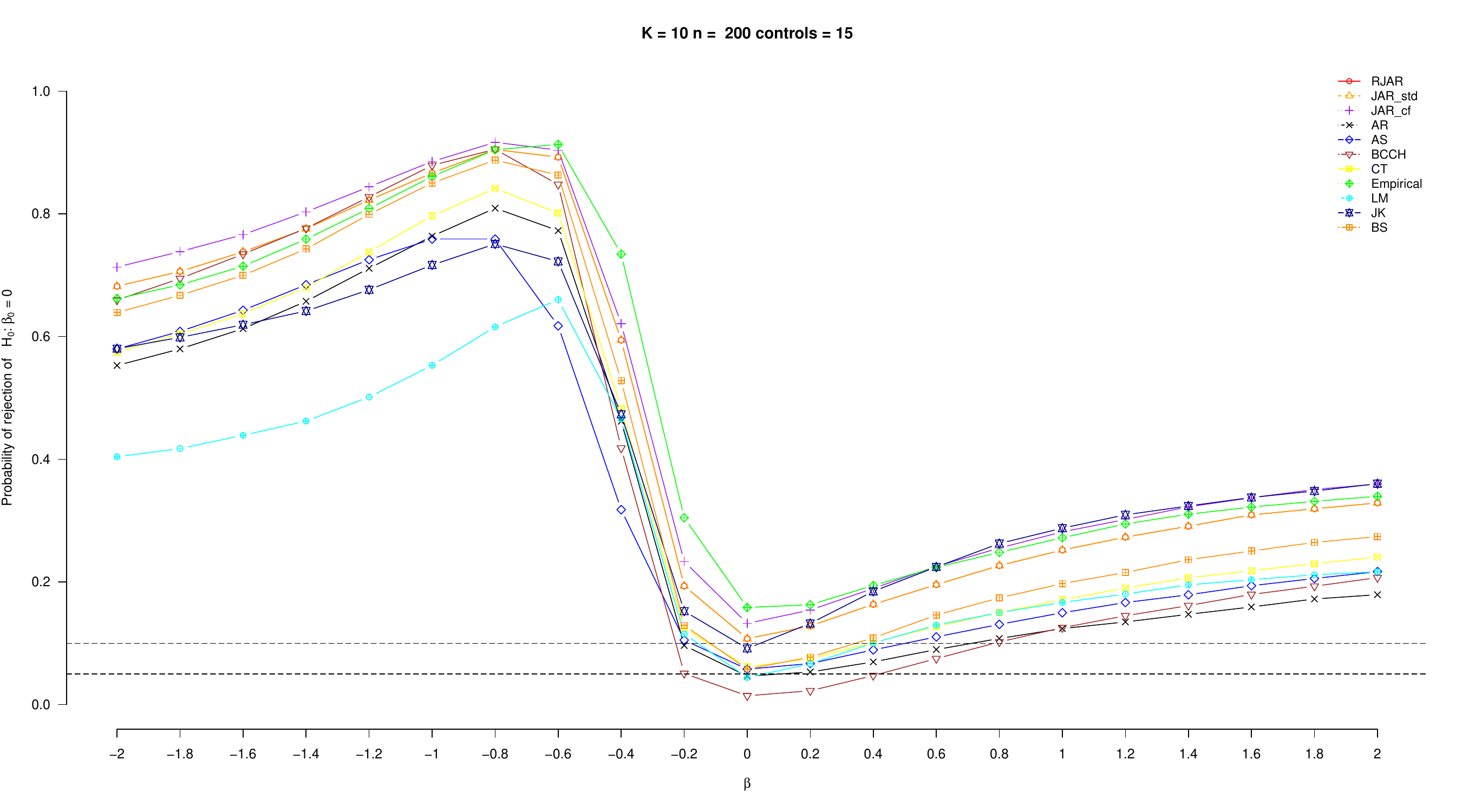}
	\caption{Plot with $(c_1,c_2) = (0.05,2)$ and $K = 10$}
	\textbf{Note:} We run 5,000 replications with 200 observations and 15 controls, using \cite{Haus2012}'s DGP as in our main paper.  
\end{figure}

\begin{figure}[H]   \includegraphics[width=1\textwidth,height = 8cm]{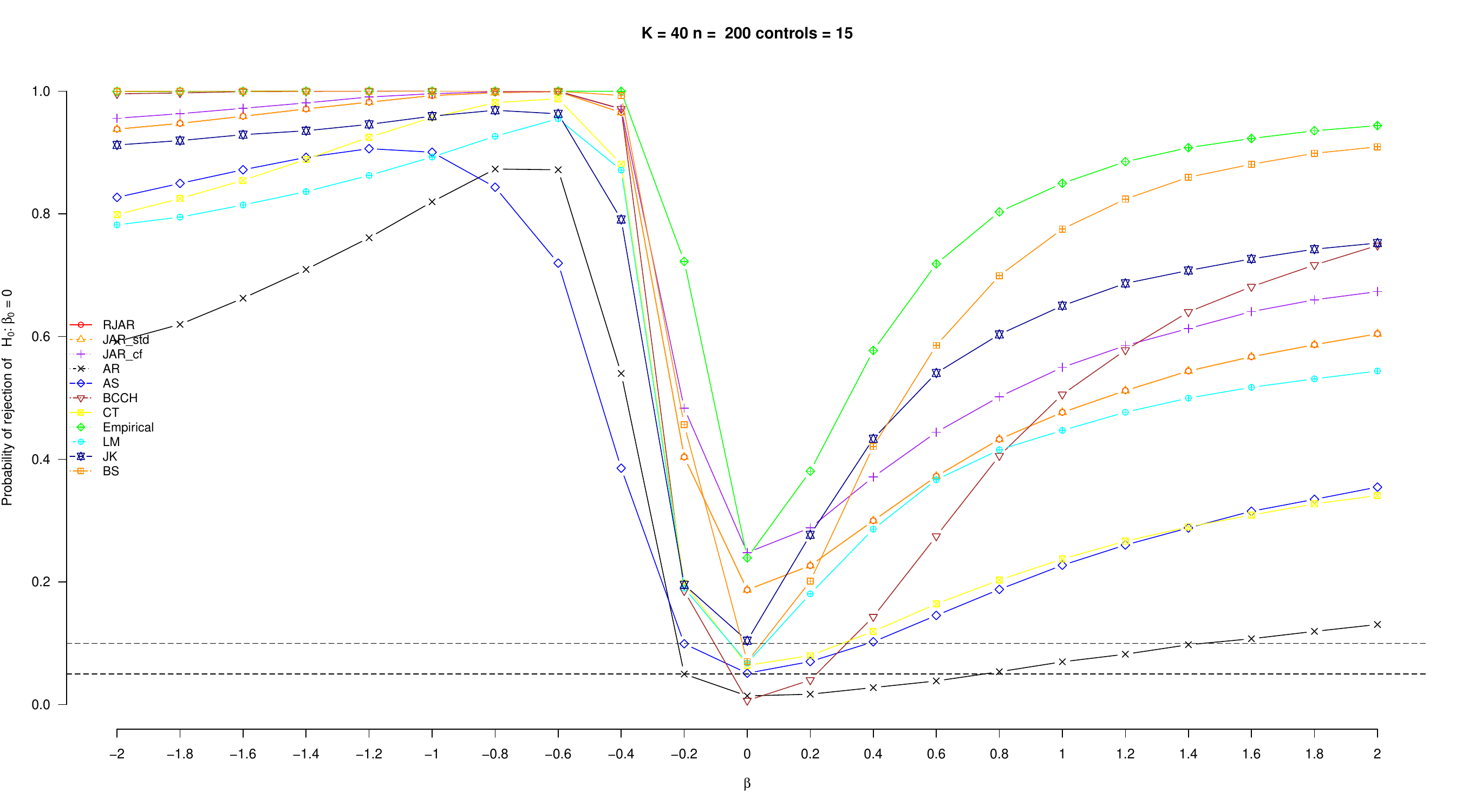}
	\caption{Plot with $(c_1,c_2) = (0.05,2)$ and $K = 40$}
	\textbf{Note:} We run 5,000 replications with 200 observations and 15 controls, using \cite{Haus2012}'s DGP as in our main paper. 
\end{figure}

\begin{figure}[H]   \includegraphics[width=1\textwidth,height = 8cm]{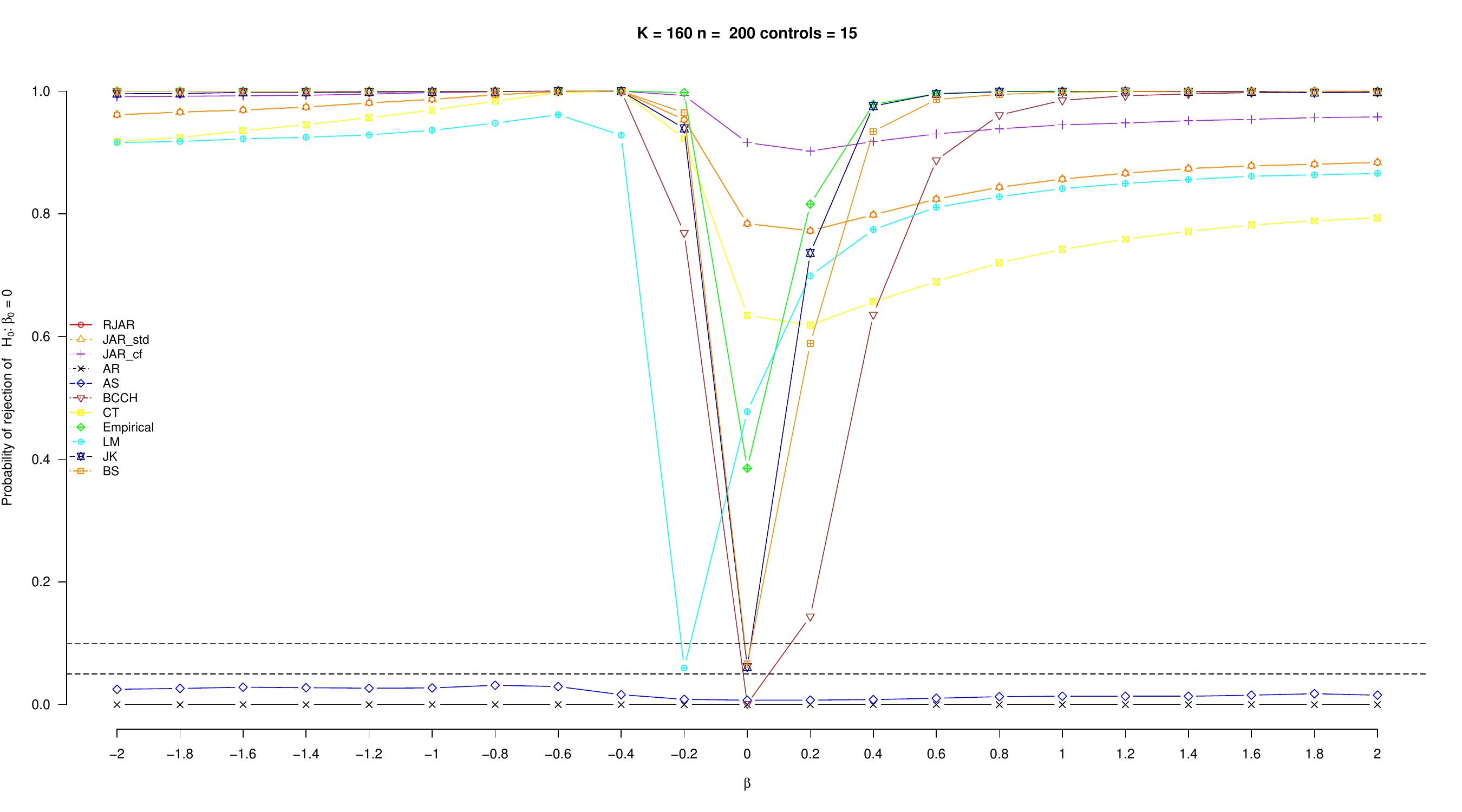}
	\caption{Plot with $(c_1,c_2) = (0.05,2)$ and $K = 160$}
	\textbf{Note:} We run 5,000 replications with 200 observations and 15 controls, using \cite{Haus2012}'s DGP as in our main paper. 
\end{figure}

%%%
\begin{figure}[H]   \includegraphics[width=1\textwidth,height = 8cm]{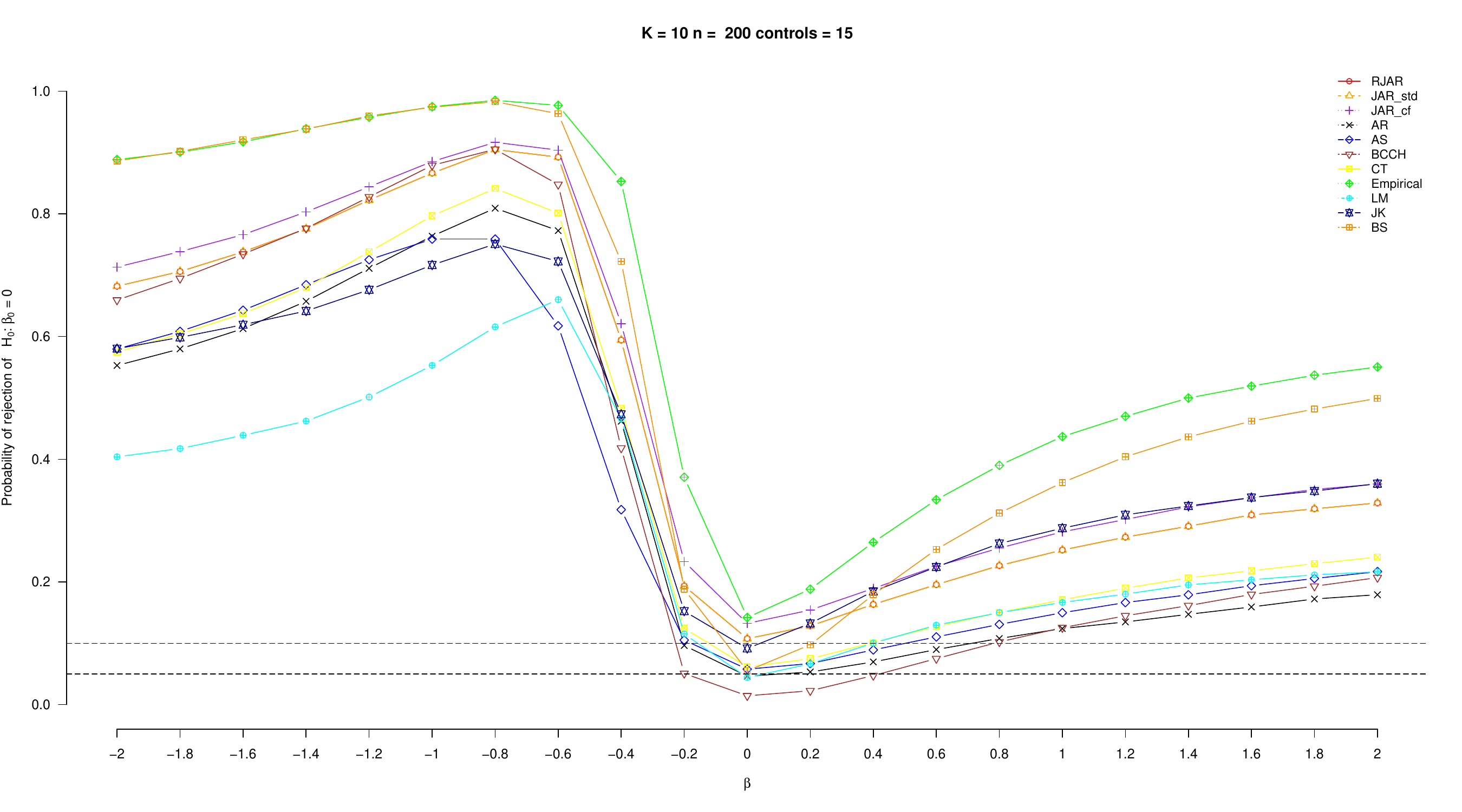}
	\caption{Plot with $(c_1,c_2) = (0.1,0.5)$ and $K = 10$}
	\textbf{Note:} We run 5,000 replications with 200 observations and 15 controls, using \cite{Haus2012}'s DGP as in our main paper.  
\end{figure}

\begin{figure}[H]   \includegraphics[width=1\textwidth,height = 8cm]{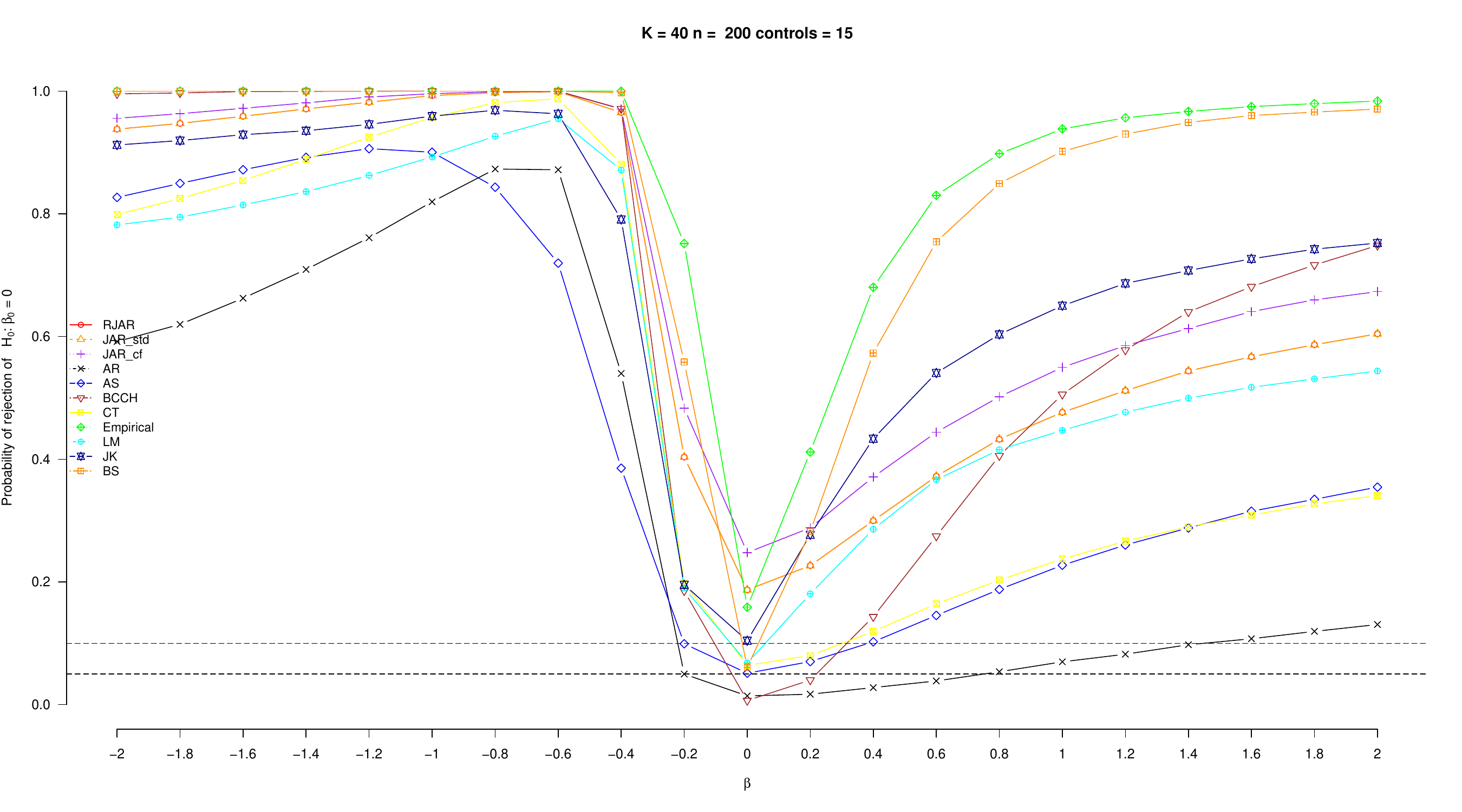}
	\caption{Plot with $(c_1,c_2) = (0.1,0.5)$ and $K = 40$}
	\textbf{Note:} We run 5,000 replications with 200 observations and 15 controls, using \cite{Haus2012}'s DGP as in our main paper. 
\end{figure}

\begin{figure}[H]   \includegraphics[width=1\textwidth,height = 8cm]{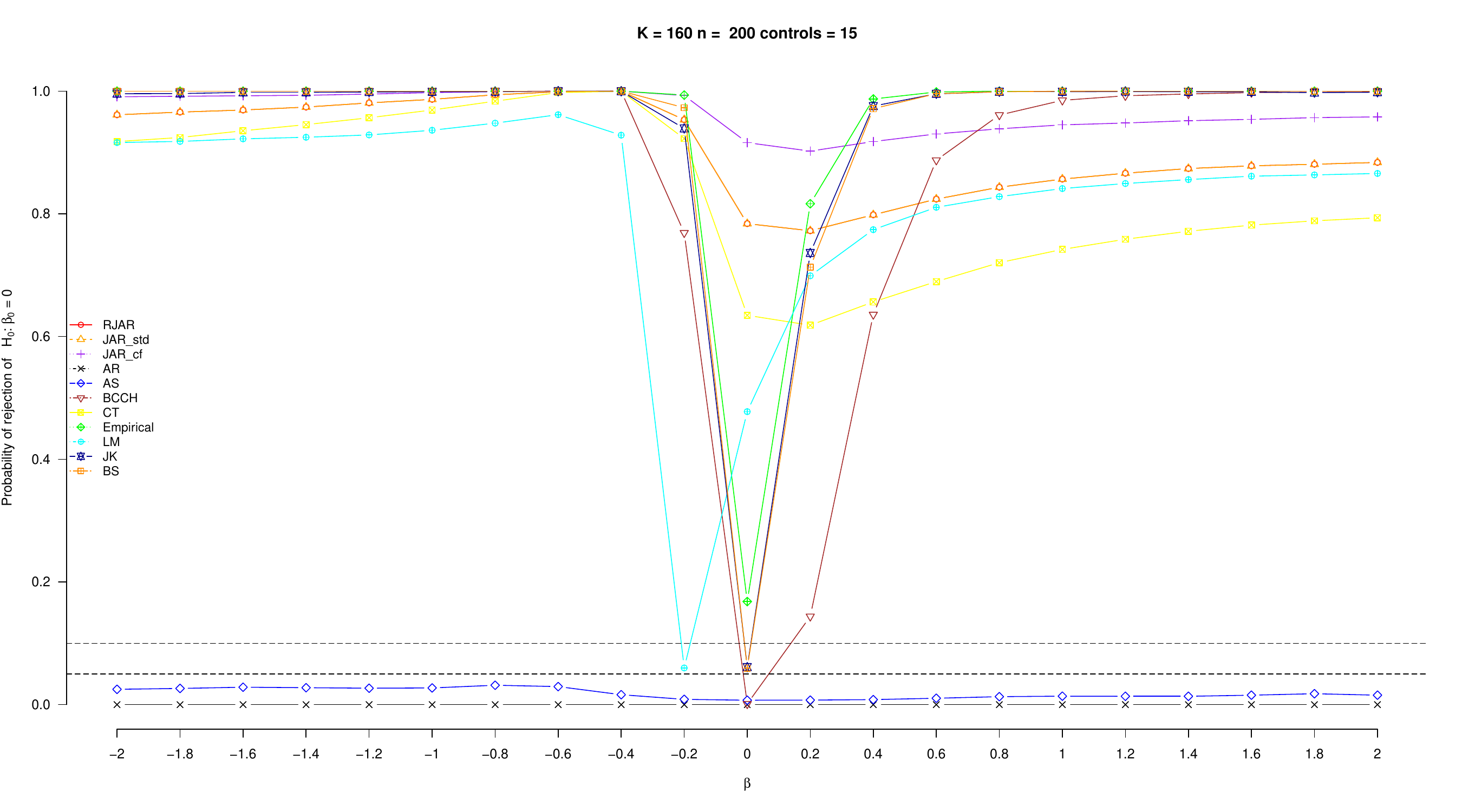}
	\caption{Plot with $(c_1,c_2) = (0.1,0.5)$ and $K = 160$}
	\textbf{Note:} We run 5,000 replications with 200 observations and 15 controls, using \cite{Haus2012}'s DGP as in our main paper.  
\end{figure}

%%%
\begin{figure}[H]   \includegraphics[width=1\textwidth,height = 8cm]{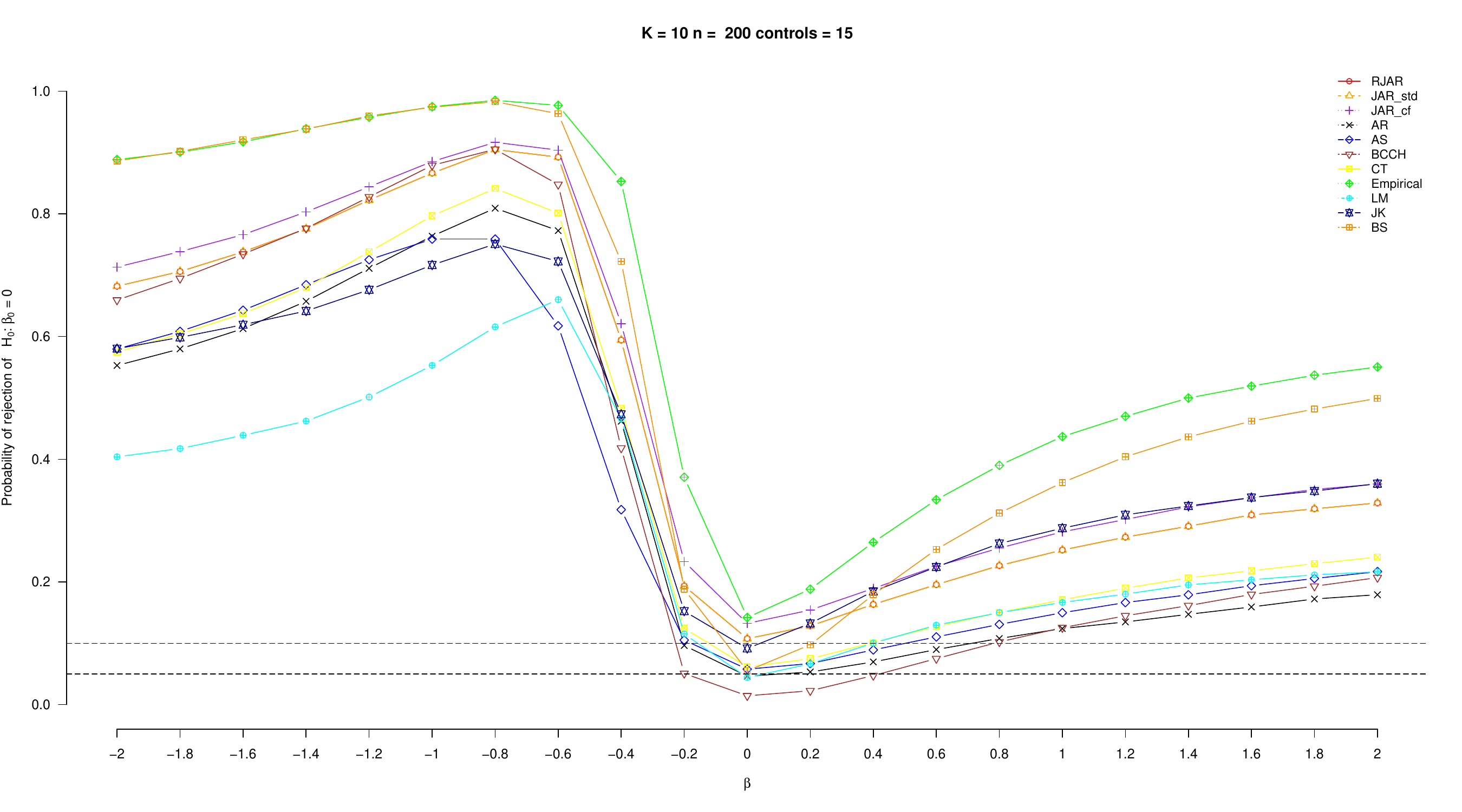}
	\caption{Plot with $(c_1,c_2) = (0.1,1)$ and $K = 10$}
	\textbf{Note:} We run 5,000 replications with 200 observations and 15 controls, using \cite{Haus2012}'s DGP as in our main paper. 
\end{figure}

\begin{figure}[H]   \includegraphics[width=1\textwidth,height = 8cm]{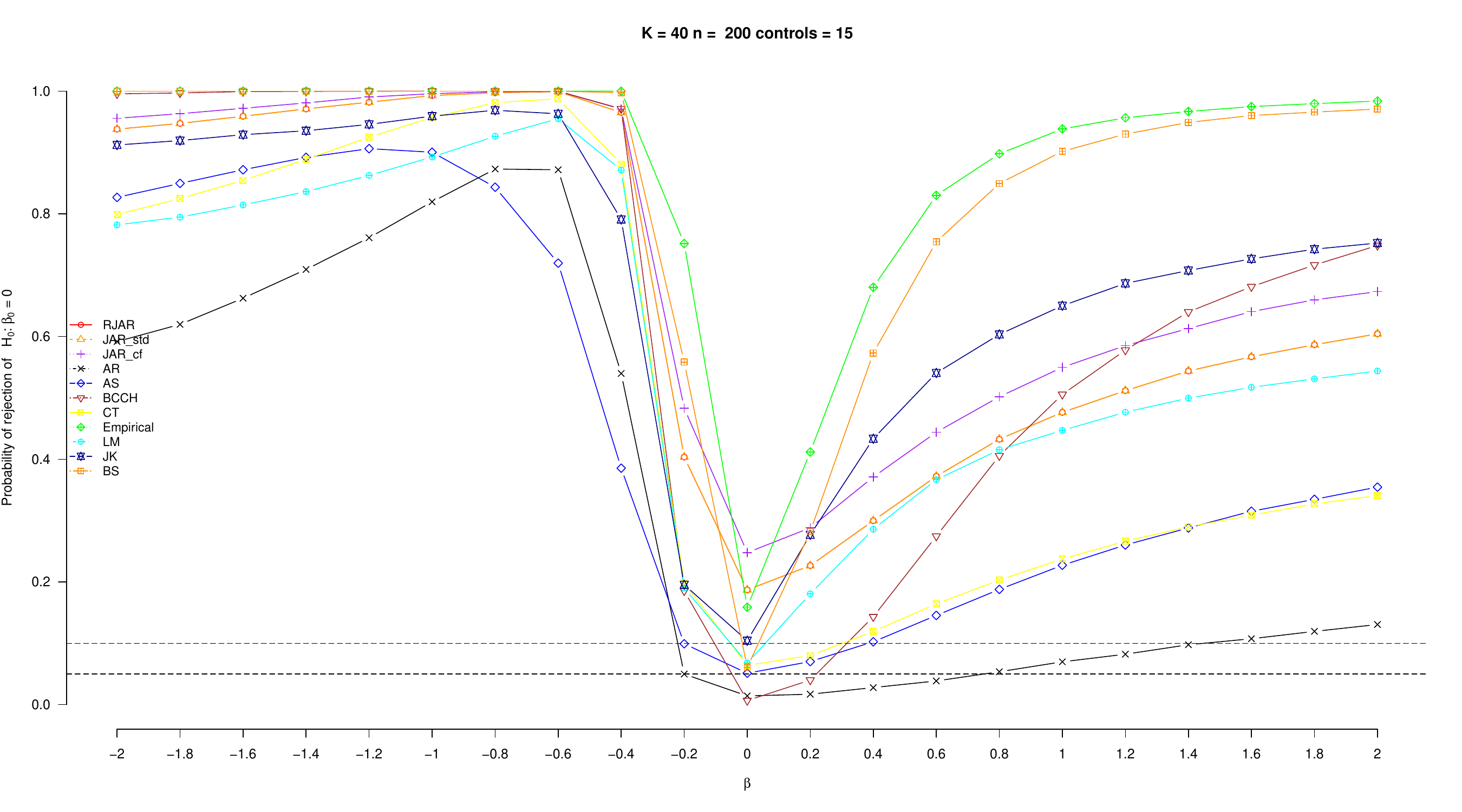}
	\caption{Plot with $(c_1,c_2) = (0.1,1)$ and $K = 40$}
	\textbf{Note:} We run 5,000 replications with 200 observations and 15 controls, using \cite{Haus2012}'s DGP as in our main paper. 
\end{figure}

\begin{figure}[H]   \includegraphics[width=1\textwidth,height = 8cm]{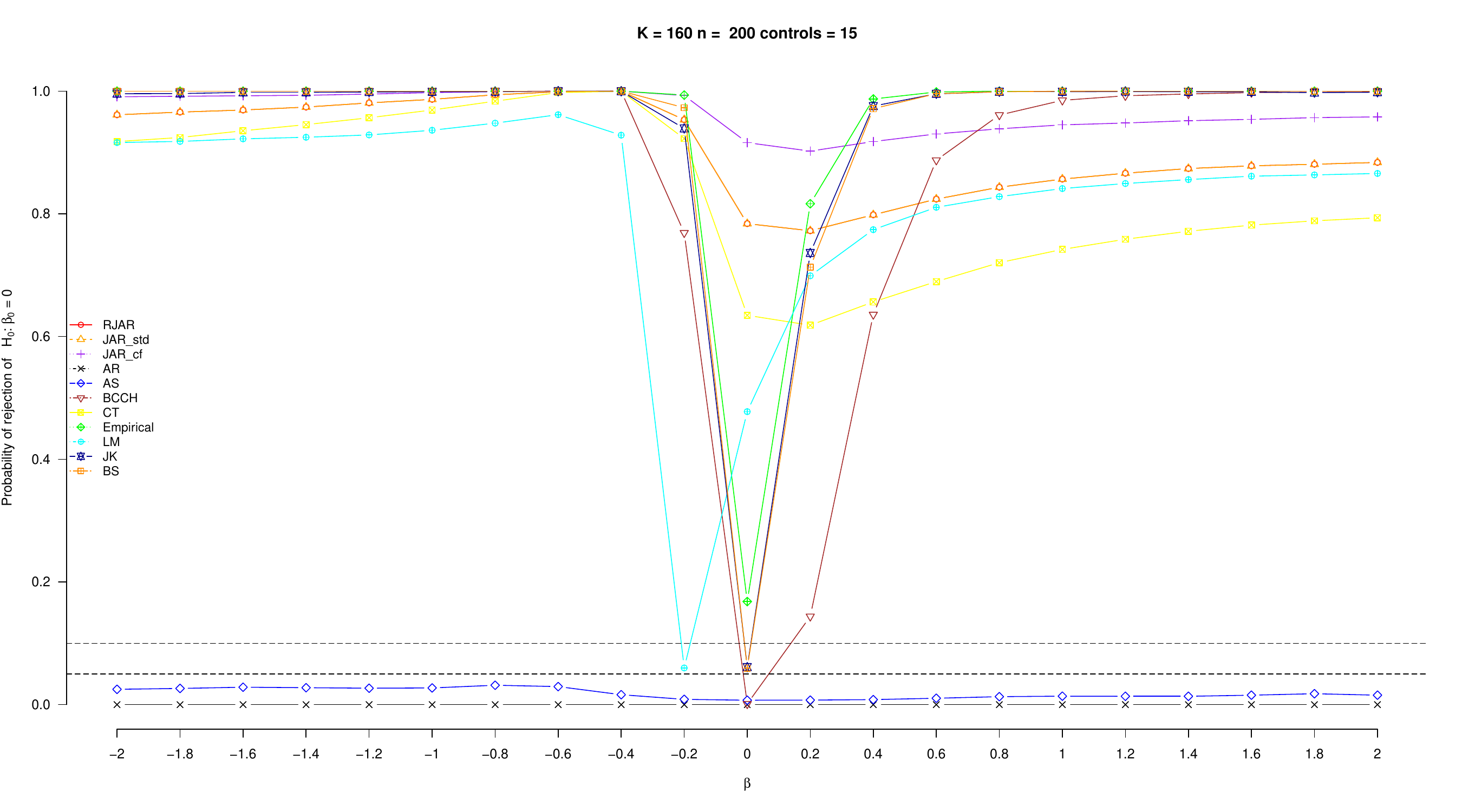}
	\caption{Plot with $(c_1,c_2) = (0.1,1)$ and $K = 160$}
	\textbf{Note:} We run 5,000 replications with 200 observations and 15 controls, using \cite{Haus2012}'s DGP as in our main paper. 
\end{figure}

%%%
\begin{figure}[H]   \includegraphics[width=1\textwidth,height = 8cm]{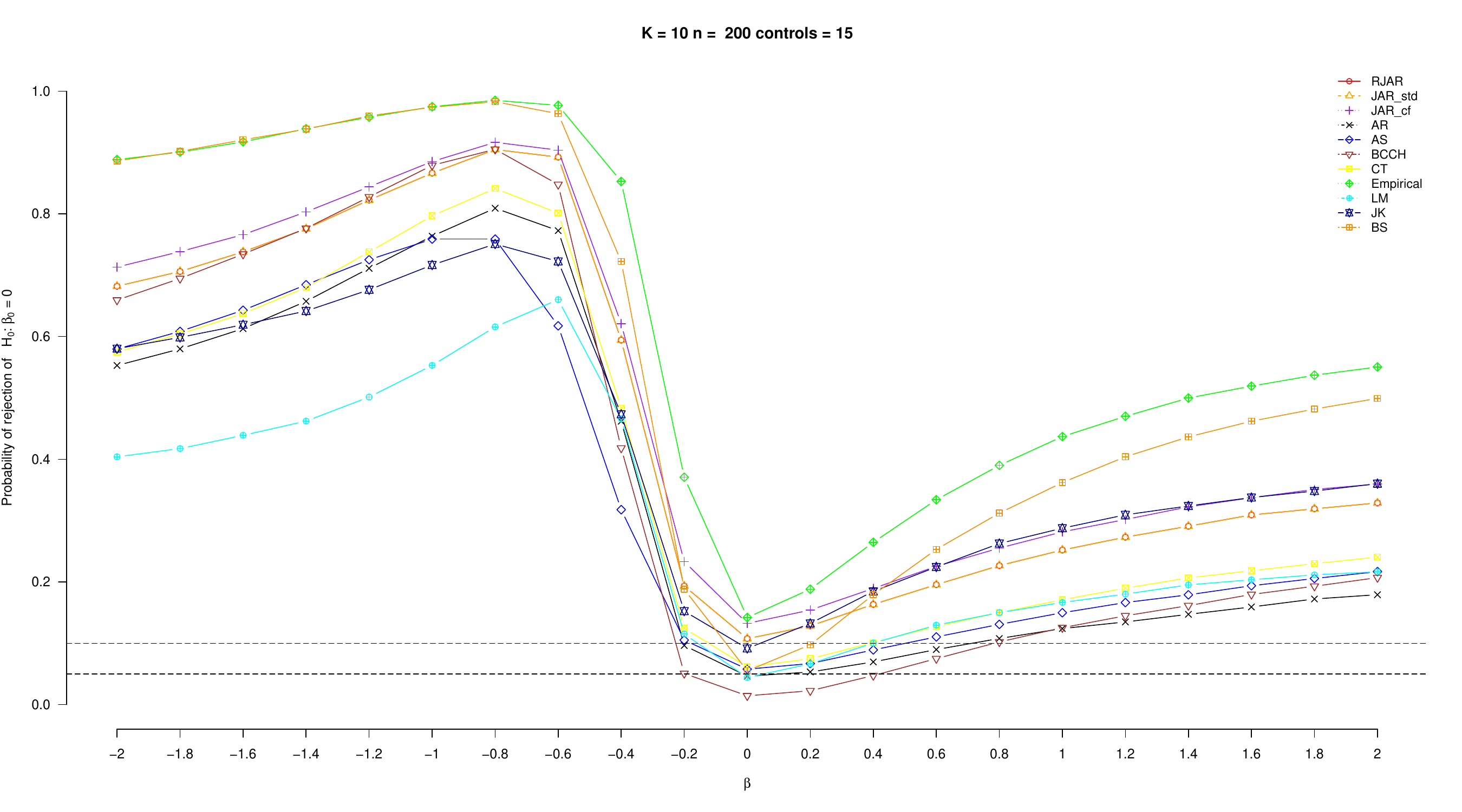}
	\caption{Plot with $(c_1,c_2) = (0.1,2)$ and $K = 10$}
	\textbf{Note:} We run 5,000 replications with 200 observations and 15 controls, using \cite{Haus2012}'s DGP as in our main paper.  
\end{figure}

\begin{figure}[H]   \includegraphics[width=1\textwidth,height = 8cm]{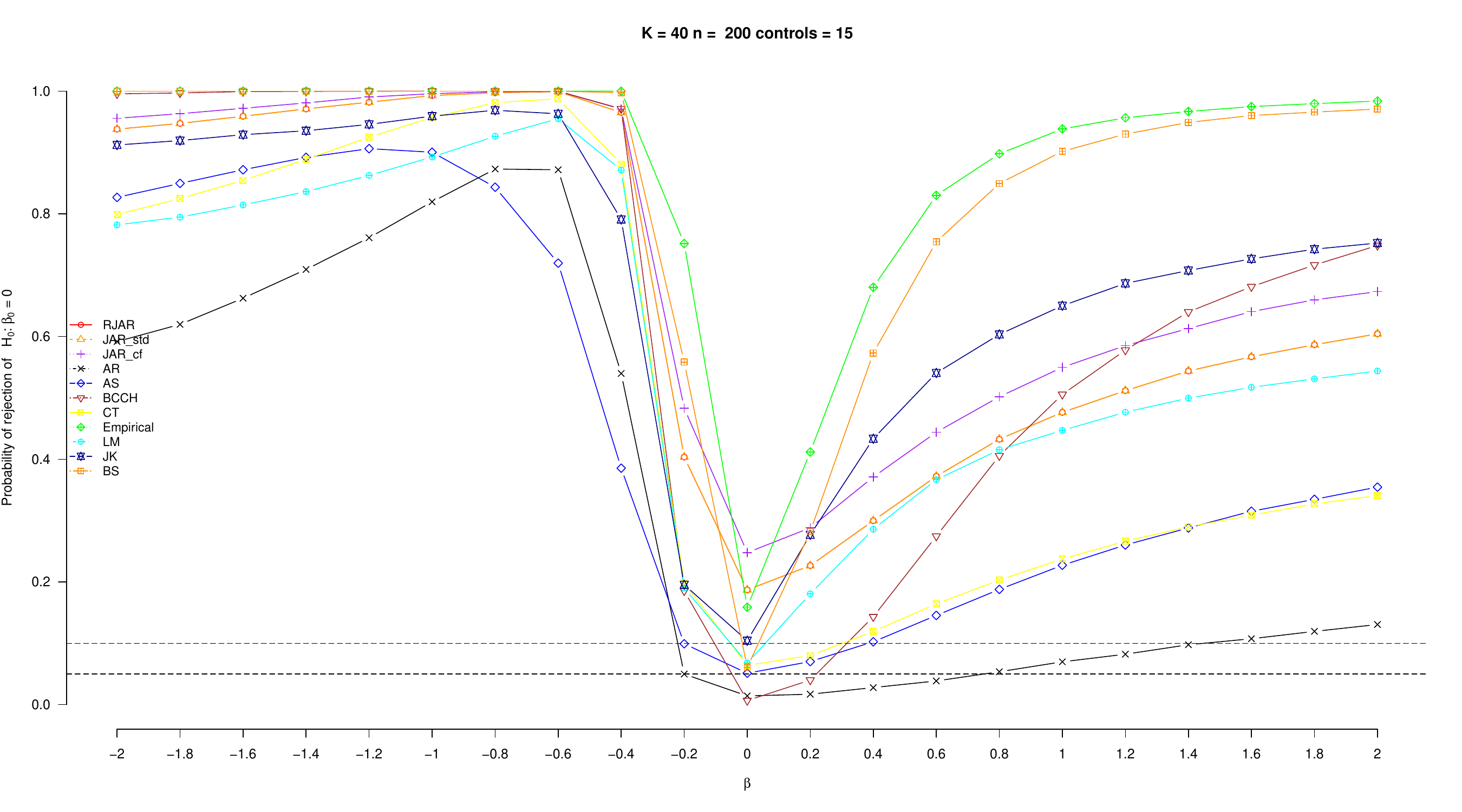}
	\caption{Plot with $(c_1,c_2) = (0.1,2)$ and $K = 40$}
	\textbf{Note:} We run 5,000 replications with 200 observations and 15 controls, using \cite{Haus2012}'s DGP as in our main paper. 
\end{figure}

\begin{figure}[H]   \includegraphics[width=1\textwidth,height = 8cm]{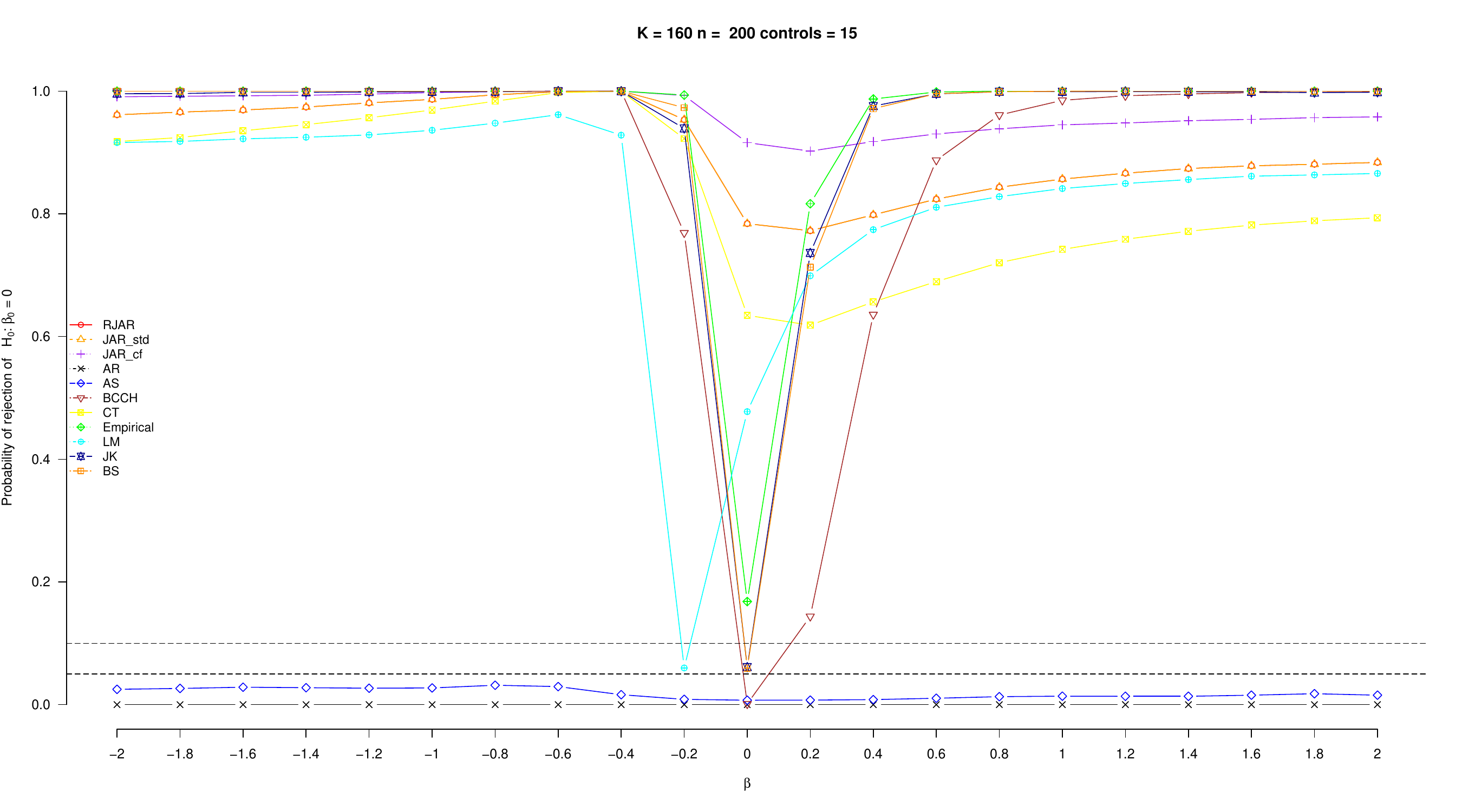}
	\caption{Plot with $(c_1,c_2) = (0.1,2)$ and $K = 160$}
	\textbf{Note:} We run 5,000 replications with 200 observations and 15 controls, using \cite{Haus2012}'s DGP as in our main paper. 
\end{figure}

%%%
\begin{figure}[H]   \includegraphics[width=1\textwidth,height = 8cm]{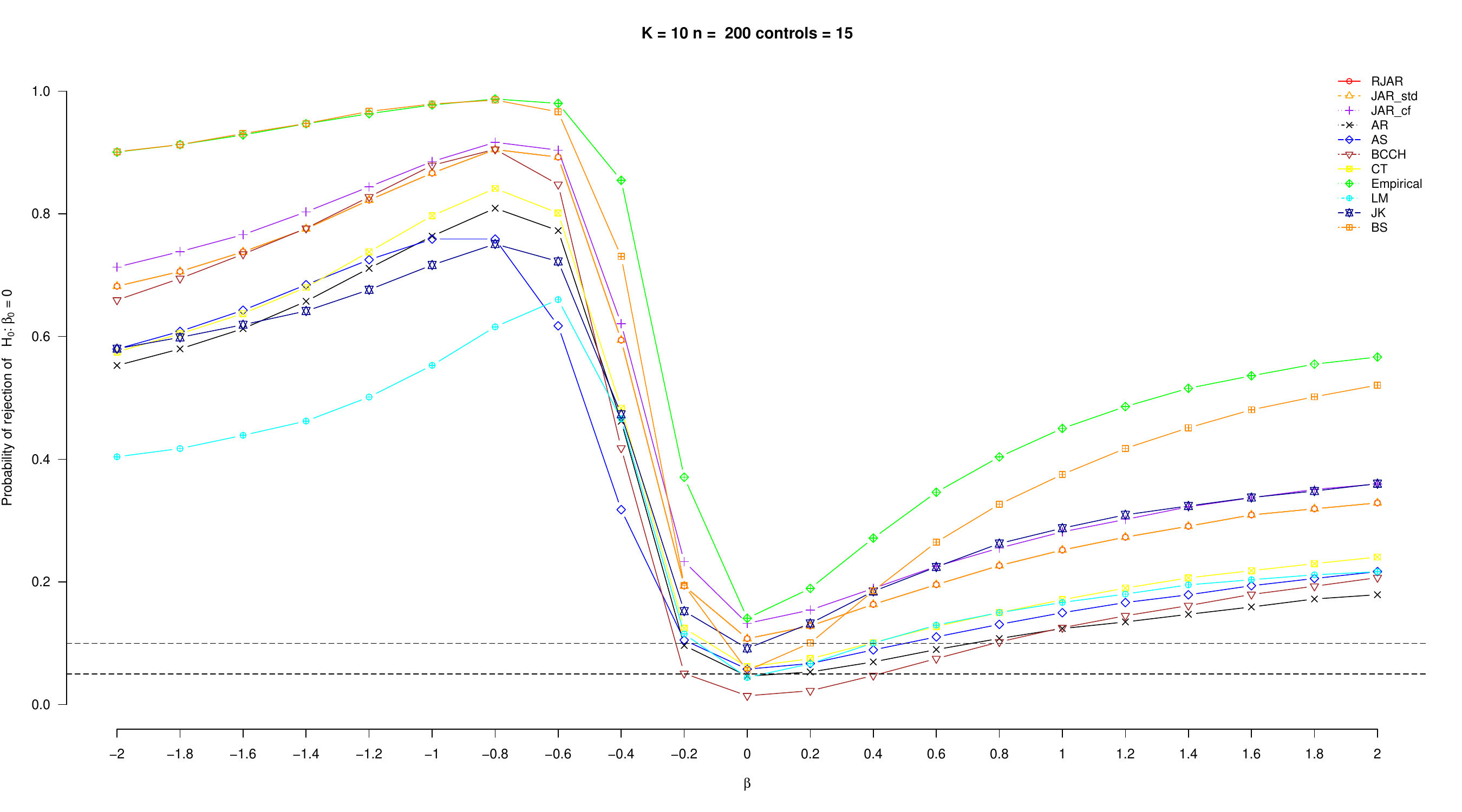}
	\caption{Plot with $(c_1,c_2) = (0.2,0.5)$ and $K = 10$}
	\textbf{Note:} We run 5,000 replications with 200 observations and 15 controls, using \cite{Haus2012}'s DGP as in our main paper. 
\end{figure}

\begin{figure}[H]   \includegraphics[width=1\textwidth,height = 8cm]{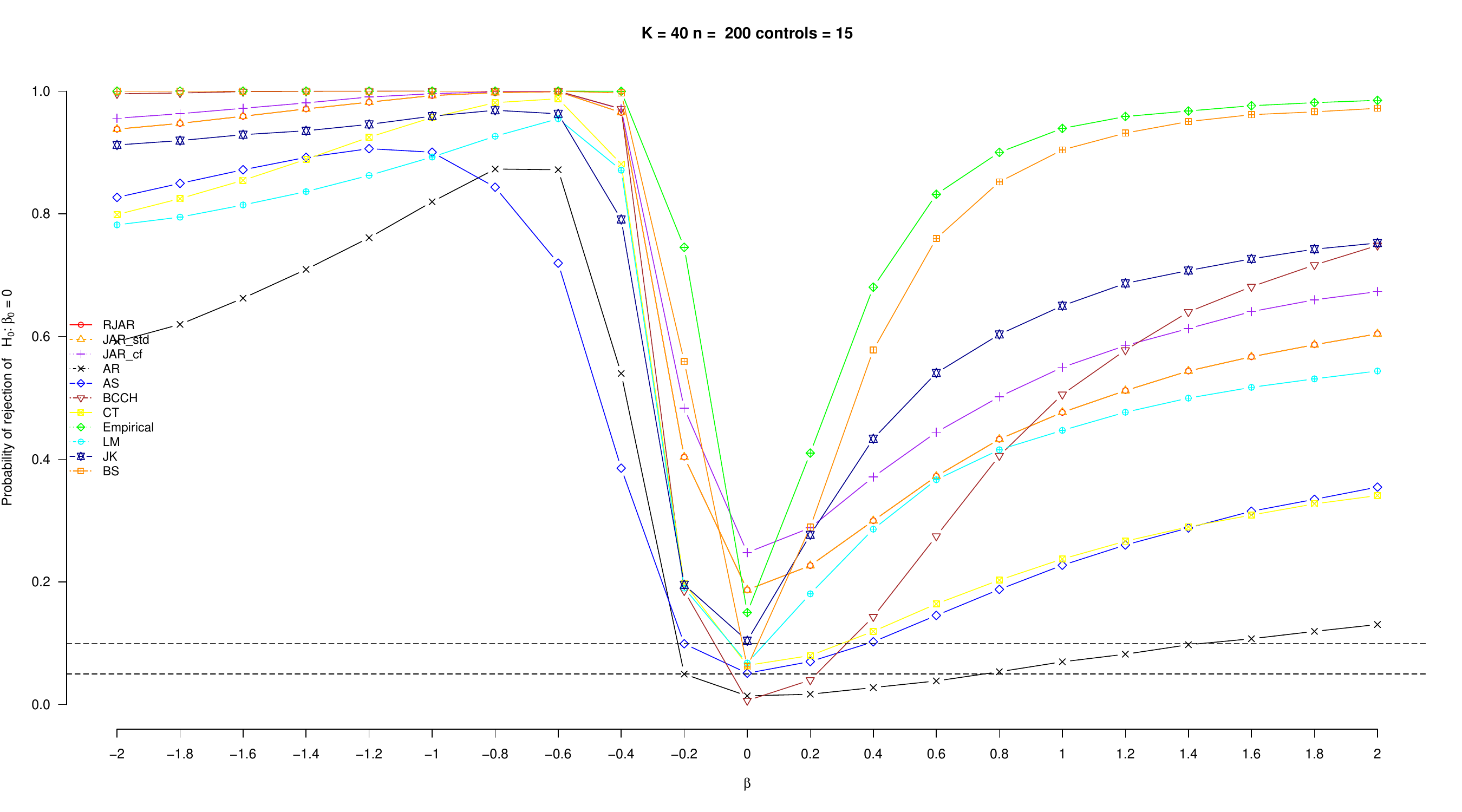}
	\caption{Plot with $(c_1,c_2) = (0.2,0.5)$ and $K = 40$}
	\textbf{Note:} We run 5,000 replications with 200 observations and 15 controls, using \cite{Haus2012}'s DGP as in our main paper.  
\end{figure}

\begin{figure}[H]   \includegraphics[width=1\textwidth,height = 8cm]{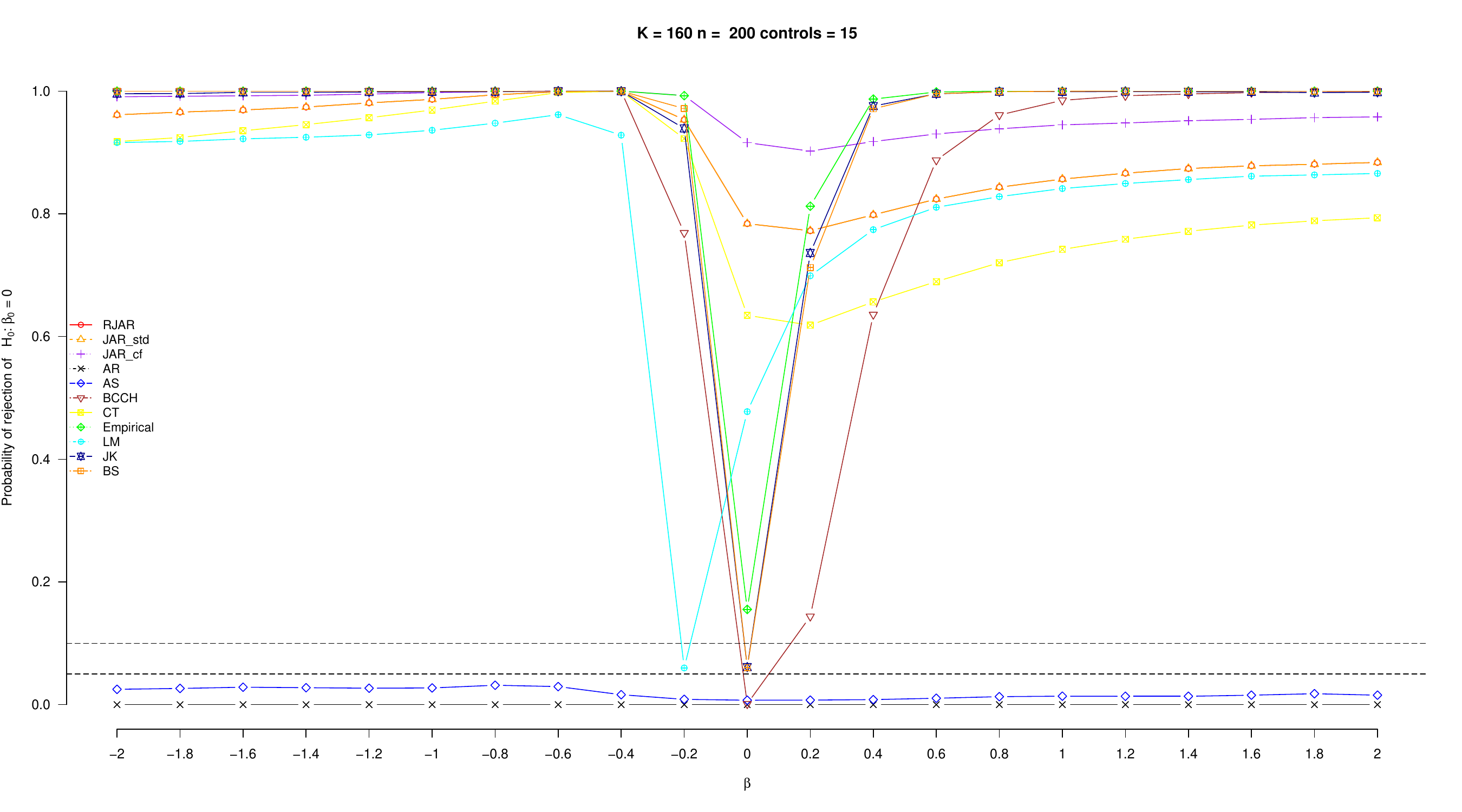}
	\caption{Plot with $(c_1,c_2) = (0.2,0.5)$ and $K = 160$}
	\textbf{Note:} We run 5,000 replications with 200 observations and 15 controls, using \cite{Haus2012}'s DGP as in our main paper. 
\end{figure}

%%%
\begin{figure}[H]   \includegraphics[width=1\textwidth,height = 8cm]{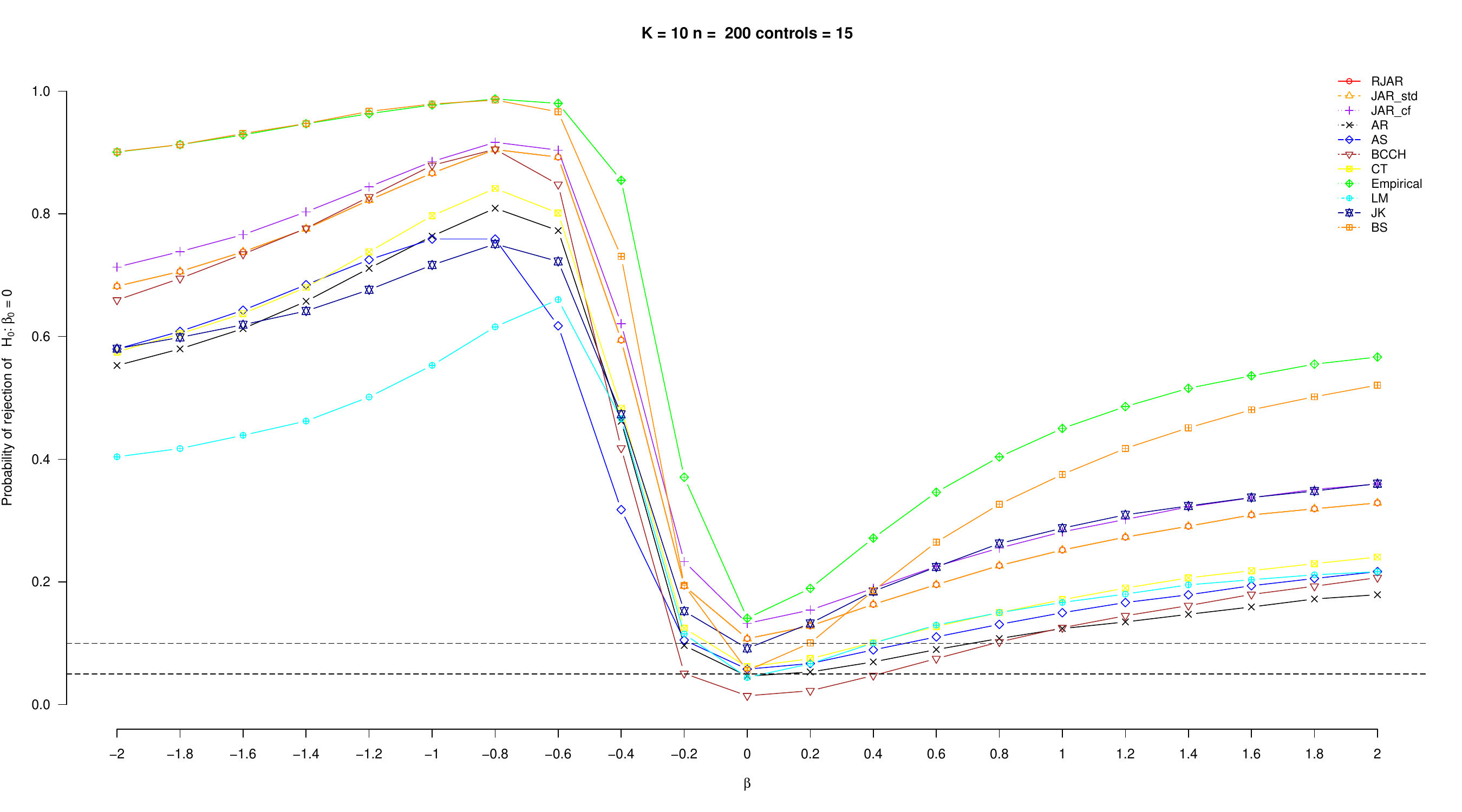}
	\caption{Plot with $(c_1,c_2) = (0.2,1)$ and $K = 10$}
	\textbf{Note:} We run 5,000 replications with 200 observations and 15 controls, using \cite{Haus2012}'s DGP as in our main paper. 
\end{figure}

\begin{figure}[H]   \includegraphics[width=1\textwidth,height = 8cm]{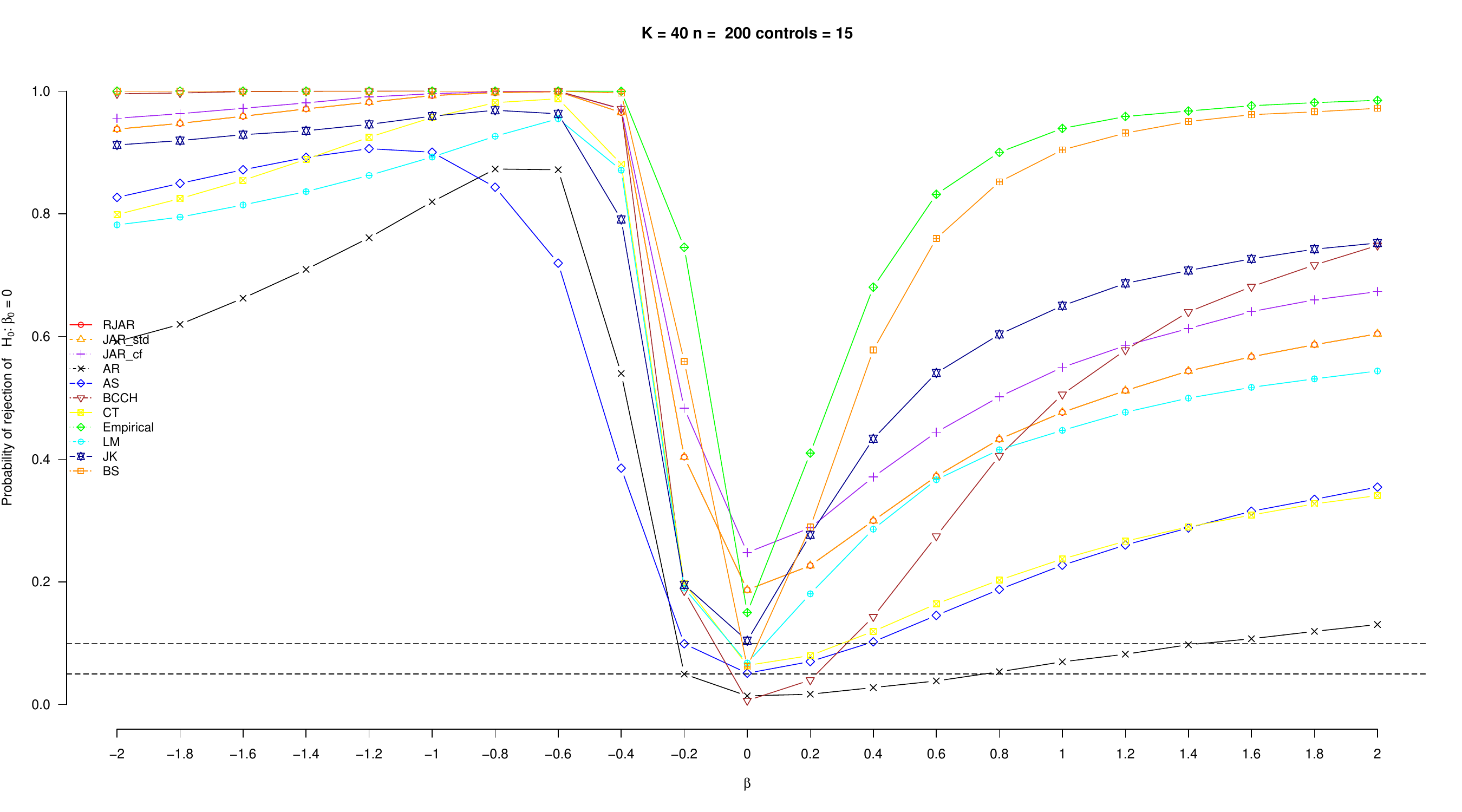}
	\caption{Plot with $(c_1,c_2) = (0.2,1)$ and $K = 40$}
	\textbf{Note:} We run 5,000 replications with 200 observations and 15 controls, using \cite{Haus2012}'s DGP as in our main paper. 
\end{figure}

\begin{figure}[H]   \includegraphics[width=1\textwidth,height = 8cm]{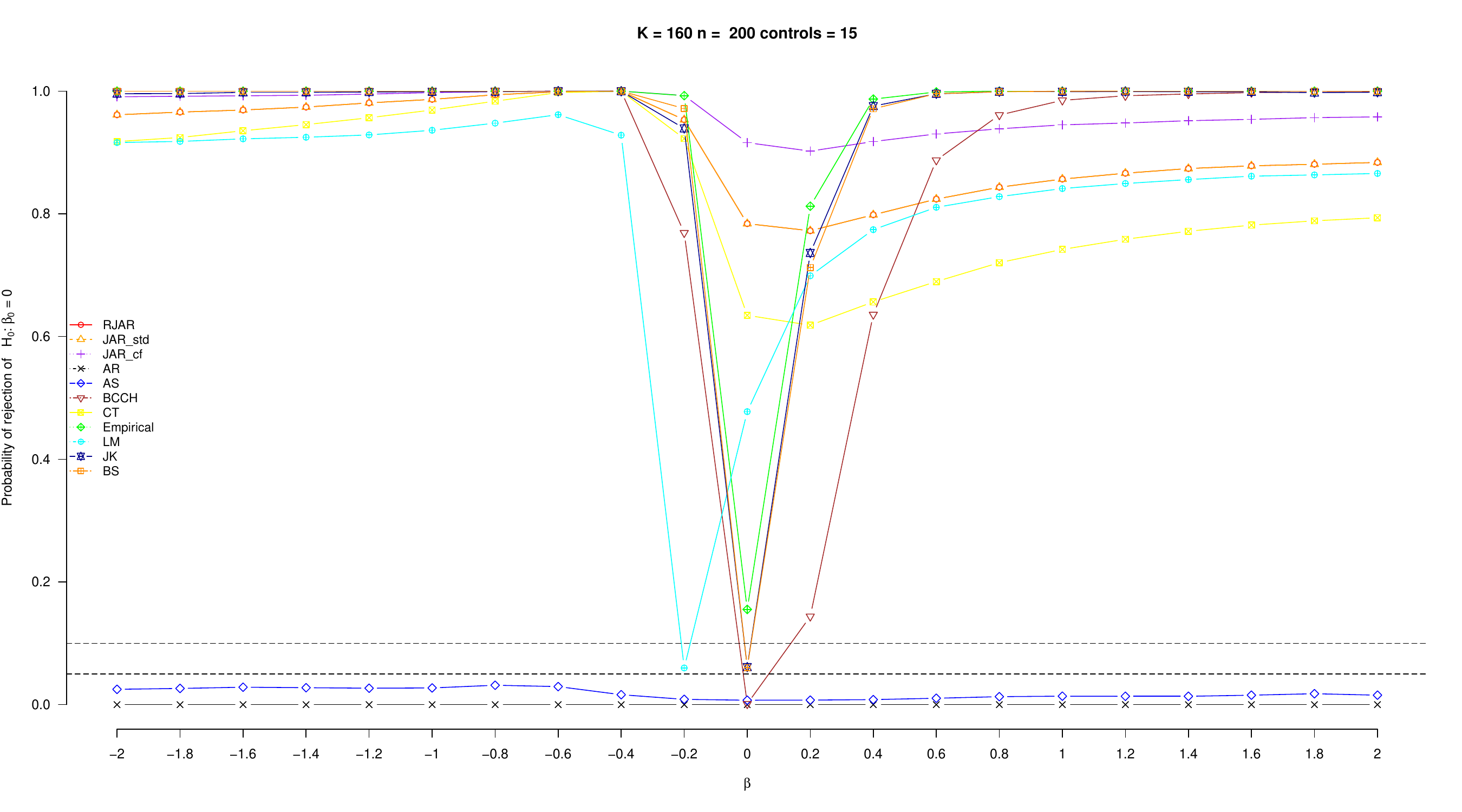}
	\caption{Plot with $(c_1,c_2) = (0.2,1)$ and $K = 160$}
	\textbf{Note:} We run 5,000 replications with 200 observations and 15 controls, using \cite{Haus2012}'s DGP as in our main paper. 
\end{figure}

%%%
\begin{figure}[H]   \includegraphics[width=1\textwidth,height = 8cm]{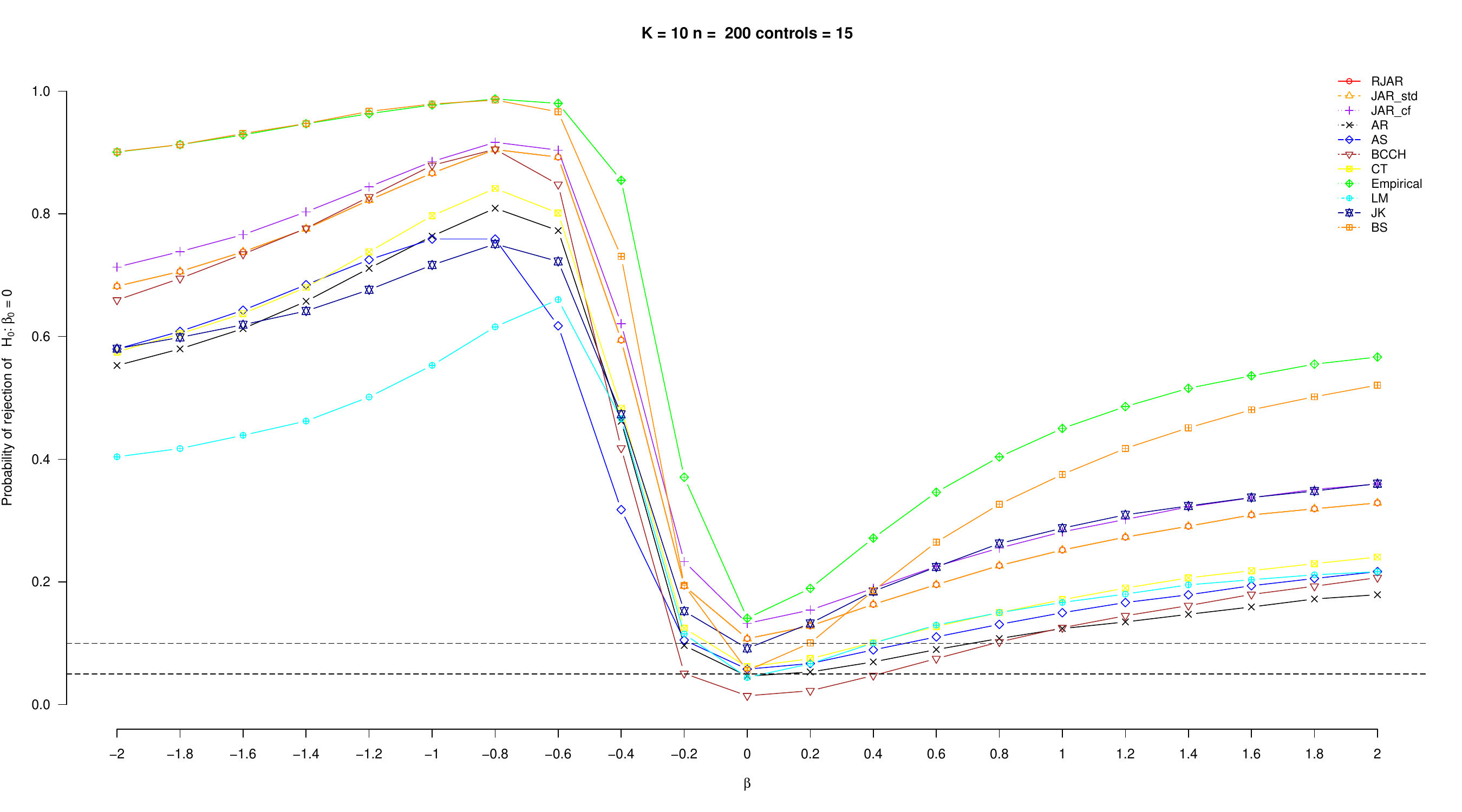}
	\caption{Plot with $(c_1,c_2) = (0.2,2)$ and $K = 10$}
	\textbf{Note:} We run 5,000 replications with 200 observations and 15 controls, using \cite{Haus2012}'s DGP as in our main paper.  
\end{figure}

\begin{figure}[H]   \includegraphics[width=1\textwidth,height = 8cm]{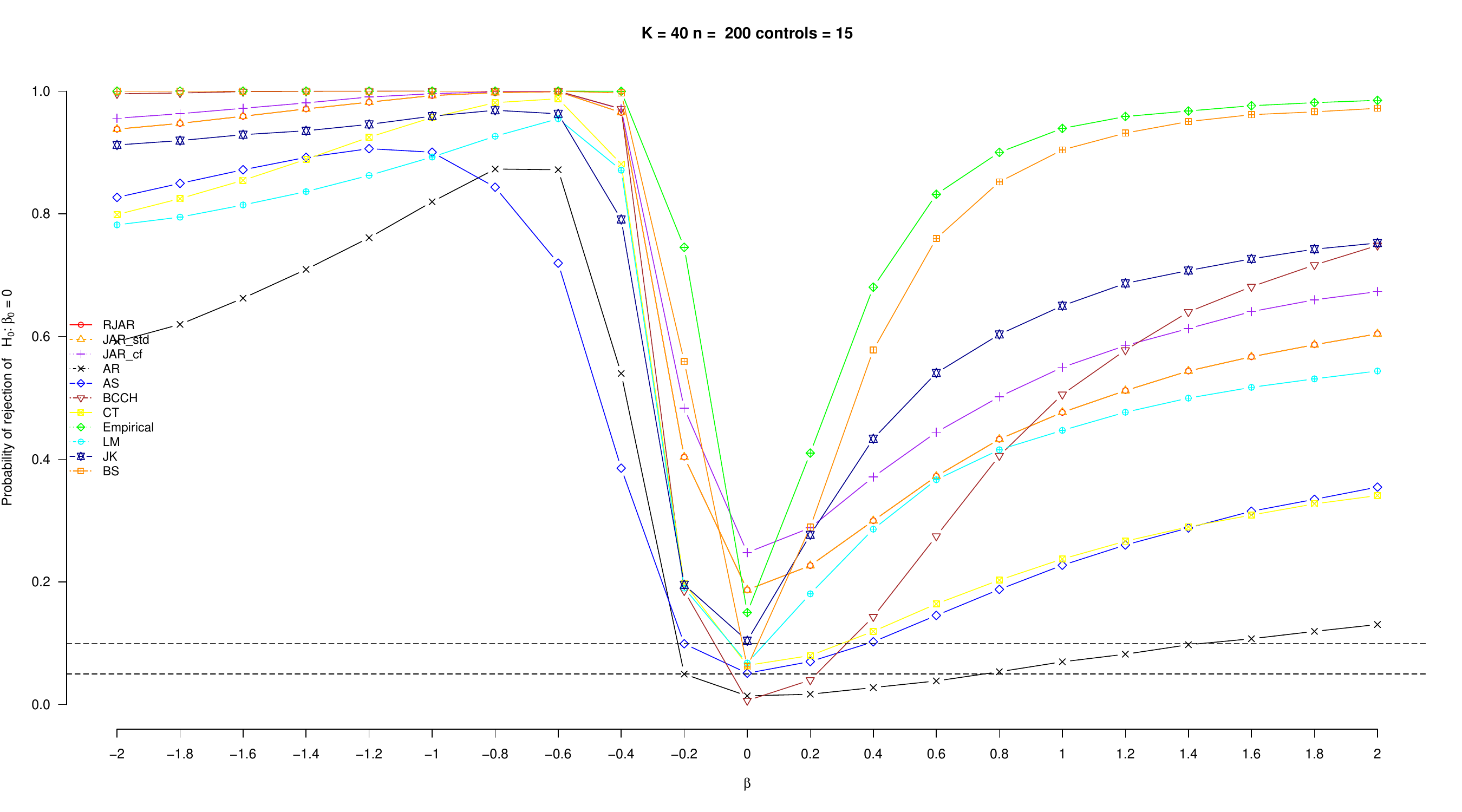}
	\caption{Plot with $(c_1,c_2) = (0.2,2)$ and $K = 40$}
	\textbf{Note:} We run 5,000 replications with 200 observations and 15 controls, using \cite{Haus2012}'s DGP as in our main paper. 
\end{figure}

\begin{figure}[H]   \includegraphics[width=1\textwidth,height = 8cm]{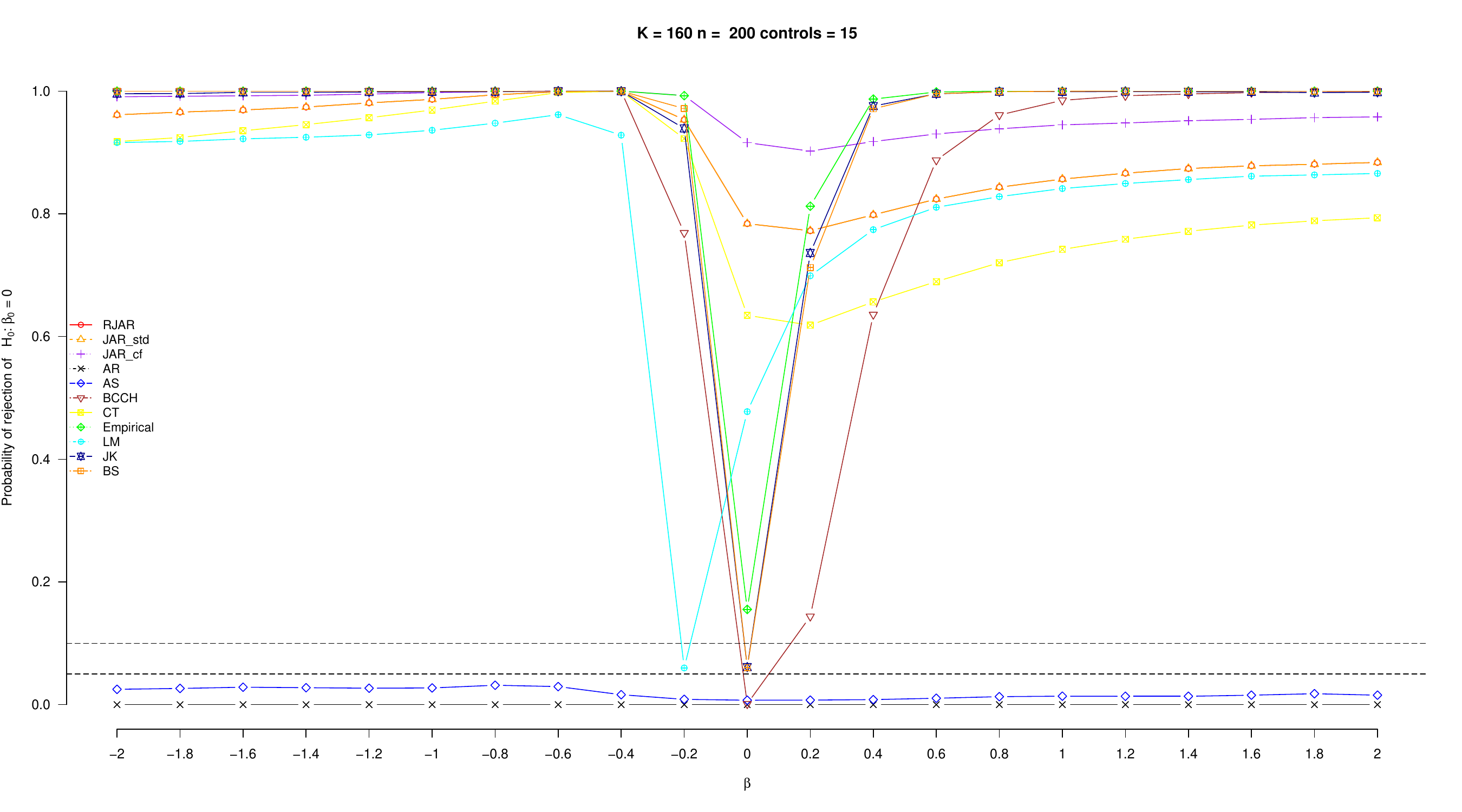}
	\caption{Plot with $(c_1,c_2) = (0.2,2)$ and $K = 160$}
	\textbf{Note:} We run 5,000 replications with 200 observations and 15 controls, using \cite{Haus2012}'s DGP as in our main paper. 
	\label{Hausman_K_10_c1_0.2_c2_2}
\end{figure}

%\bibliographystyle{chicago}
%\singlespacing
%\bibliography{mybibliography}

\end{document}